\newcommand {\mm}[1] {\ifmmode{#1}\else{\mbox{\(#1\)}}\fi}
\newcommand{\ignore}[1]{}
\newsavebox{\smallProofsym}                 
\long\def\@makecaption#1#2{%
  \vskip\abovecaptionskip
  \sbox\@tempboxa{\small #1: #2}%
  \ifdim \wd\@tempboxa >\hsize
    \small #1: #2\par
  \else
    \global \@minipagefalse
    \hb@xt@\hsize{\hfil\box\@tempboxa\hfil}%
  \fi
  \vskip\belowcaptionskip}
\newtheorem{claim}{Claim}[section]
\newtheorem{invariant}{Invariant}
\crefname{invariant}{Invariant}{Invariants}
\Crefname{invariant}{Invariant}{Invariants}
\crefname{@theorem}{Theorem}{Theorems}
\Crefname{@theorem}{Theorem}{Theorems}
\newcommand{\Rspace}        {\mm{{\mathbb R}}}
\newcommand{\UpTree}[1]     {\mm{\rm Up}{({#1})}}
\newcommand{\DnTree}[1]     {\mm{\rm Dn}{({#1})}}
\newcommand{\Point}[2]      {\mm{{\rm Point}{({#1},{#2})}}}
\newcommand{\Arrow}[2]      {\mm{{\rm Arrow}{({#1},{#2})}}}
\newcommand{\Birth}[1]      {\mm{{\rm Bth}{({#1})}}}
\newcommand{\Low}[1]        {\mm{{\sf low}{({#1})}}}
\newcommand{\Dth}[1]        {\mm{{\sf dth}{({#1})}}}
\newcommand{\Dn}[1]         {\mm{{\sf dn}{({#1})}}}
\newcommand{\Up}[1]         {\mm{{\sf up}{({#1})}}}
\newcommand{\In}[1]         {\mm{{\sf in}{({#1})}}}
\newcommand{\Mid}[1]        {\mm{{\sf mid}{({#1})}}}
\newcommand{\Prv}[1]        {\mm{{\sf prv}{({#1})}}}
\newcommand{\Nxt}[1]        {\mm{{\sf nxt}{({#1})}}}
\newcommand{\T}[1]          {\mm{{\tt {#1}}}}
\DeclareMathOperator{\UpOp}{\mathsf{up}}                   
\DeclareMathOperator{\InOp}{\mathsf{in}}
\newcommand{\sign}[1]       {\mm{{\rm sgn}{({#1})}}}
\newcommand{\Dgm}[2]        {\mm{\sf Dgm}_{#1}{({#2})}}
\newcommand{\DgmR}[2]       {\mm{{\sf D}\overrightarrow{\sf gm}}_{#1}{({#2})}}
\newcommand{\Ord}[2]        {\mm{\sf Ord}_{#1}{({#2})}}
\newcommand{\Rel}[2]        {\mm{\sf Rel}_{#1}{({#2})}}
\newcommand{\Ess}[2]        {\mm{\sf Ess}_{#1}{({#2})}}
\newcommand{\Window}[2]     {\mm{W}{({#1},{#2})}}
\newcommand{\Path}[2]       {\mm{\mm{P}{({#1},{#2})}}}
\newcommand{\Lup}           {\mm{L_{\rm up}}}
\newcommand{\Mup}           {\mm{M_{\rm up}}}
\newcommand{\Rup}           {\mm{R_{\rm up}}}
\newcommand{\Ldn}           {\mm{L_{\rm dn}}}
\newcommand{\Mdn}           {\mm{M_{\rm dn}}}
\newcommand{\Rdn}           {\mm{R_{\rm dn}}}
\newcommand{\Stack}         {\mm{Stack}}
\newcommand{\ee}            {\mm{\varepsilon}}
\newcommand{\glueplow}{\hat{p}} %
\newcommand{\gluepbth}{p}       %
\newcommand{\algcancelendpoint} {\texttt{cancel-or-slide-endpoint}}
\newcommand{\alganticancel}     {\texttt{anticancel}}
\newcommand{\algmaxinterchange} {\texttt{max-interchange}}
\newcommand{\algmininterchange} {\texttt{min-interchange}}
\definecolor{blue-green}{rgb}{0.0, 0.87, 0.87}
\newcommand{\ERCLogo}{%
\begin{wrapfigure}{r}{3.1cm}
        \centering
        \vspace{-0.75cm}
        \includegraphics[width=3cm]{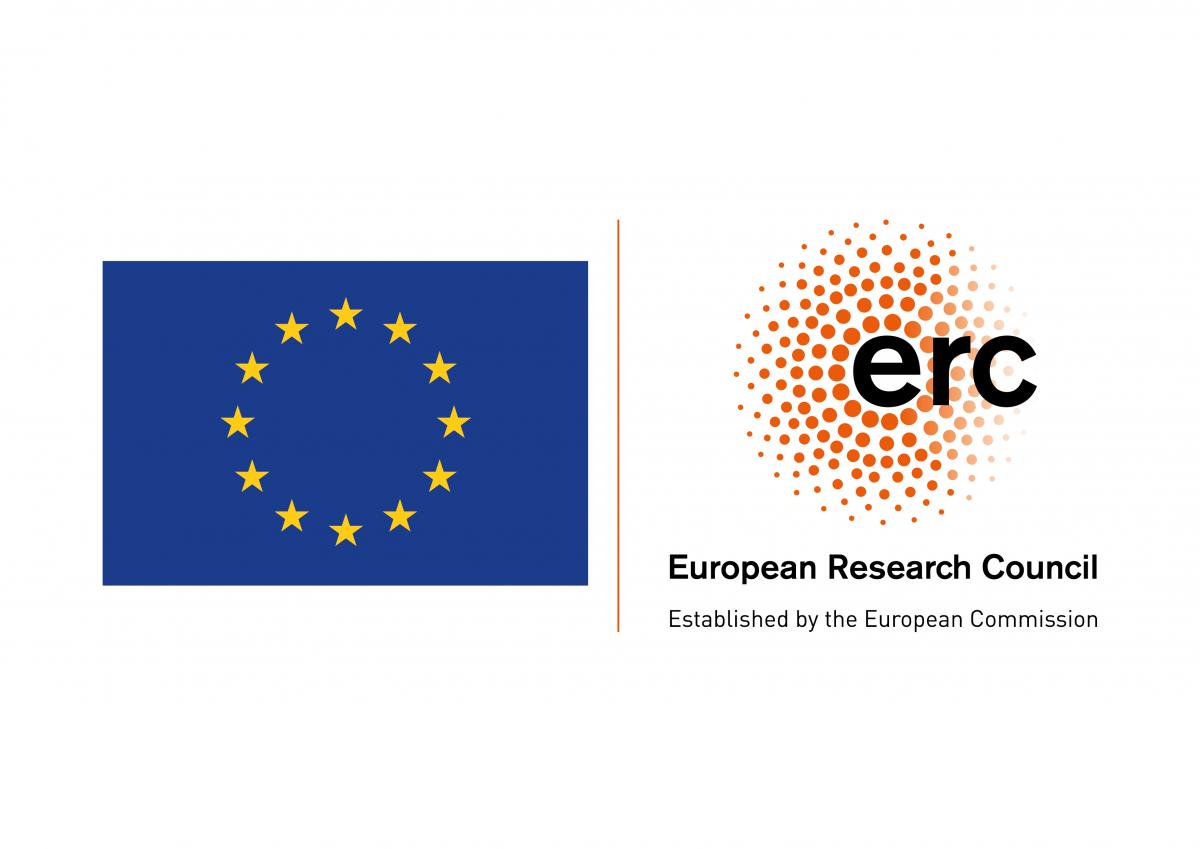}
        \vspace{-1cm}
\end{wrapfigure}
}
\newcounter{algcounter}
\newcounter{alglineno}[algcounter]
\newcommand{\clbegin}{\refstepcounter{alglineno}\scriptsize{\thealglineno}}
\newcommand{\clbeginlbl}[1]{\refstepcounter{alglineno}\label{#1}\scriptsize{\thealglineno}}
\newcommand{\clreset}{\refstepcounter{algcounter}}
\crefname{alglineno}{Line}{Lines}
\Crefname{alglineno}{Line}{Lines}
\title{Dynamically Maintaining the Persistent Homology of Time Series}
\author{
Sebastiano Cultrera di Montesano\thanks{ISTA (Institute of Science and Technology Austria), Klosterneuburg, Austria}
\and
Herbert Edelsbrunner\footnotemark[1]%
\and
Monika Henzinger\footnotemark[1]%
\and
Lara Ost\thanks{Department of Computer Science, University of Vienna, Vienna Austria}
}
\date{}
\newcommand{\thefunding}{
  \ERCLogo

  The first and second authors are funded by the  European Research Council under the European Union's Horizon 2020 research and innovation  programme, ERC grant no.~788183, ``Alpha Shape Theory Extended (Alpha)'',
  by the Wittgenstein Prize, FWF grant no.~Z~342-N31, 
  and by the DFG Collaborative Research Center TRR~109, 
  FWF grant no.~I~02979-N35.
  The third author %
  received funding by the European Research Council under the European Union's Horizon 2020 research and innovation programme, ERC grant no.~101019564, 
  ``The Design of Modern Fully Dynamic Data Structures (MoDynStruct)'',
  and by the Austrian Science Fund through the Wittgenstein Prize with FWF grant no.\ Z 422-N, and also by FWF grant no.~I~5982-N, and
  by FWF grant no.~P~33775-N, with additional funding from the \textit{netidee SCIENCE Stiftung}, 2020--2024. 
  The fourth author is funded by the Vienna Graduate School on Computational Optimization, FWF project no.~W1260-N35.
}
\begin{document}
\maketitle

\fancyfoot[R]{\scriptsize{Copyright \textcopyright\ 2024 by SIAM\\
Unauthorized reproduction of this article is prohibited}}

\begin{abstract}
  We present a dynamic data structure for maintaining the persistent homology of a time series of real numbers.
  The data structure supports local operations, including the insertion and deletion of an item and the cutting and concatenating of lists, each in time $O(\log n + k)$, in which $n$ counts the critical items and $k$ the changes in the augmented persistence diagram.
  To achieve this, we design a tailor-made tree structure with an unconventional representation, referred to as banana tree, which may be useful in its own right.
\end{abstract}

\section{Introduction}
\label{sec:1}

Persistent homology is an algebraic method aimed at the topological analysis of data; see e.g.\ \cite{Car09,EdHa10}.
It applies to low-dimensional geometric as well as to high-dimensional abstract data.
In a nutshell, the method is the embodiment of the idea that features exist on many scale levels, and rather than preferring one scale over another, it quantifies the features in terms of the range of scales during which they appear. 
More precisely, the goal of persistent homology is to compute a representation of the features in a book-keeping data structure, called the \emph{persistence diagram} \cite{EdHa10}.

\smallskip
Importantly, persistent homology has fast algorithms that support the application to large data sets.
Specifically, for arbitrary dimensional inputs,
persistence is generally computed 
by analyzing an underlying complex, with worst-case time cubic in the size of the complex. 
These algorithms are, however,  observed to run much faster in applications; see e.g.\ \cite{Ott17} for a survey on popular implementations.
We restrict ourselves to one-dimensional input data, i.e., a list of $m$ points (or \emph{items}) in an interval of $\mathbb{R}$ with each item $i$ being assigned a \emph{value} $f(i)$. 
Persistent homology has been applied to such time series data in multiple contexts, for example to heart-rate data \cite{CHLW21,GG21}, gene expression data \cite{Deq08,PDHH15}, and financial data \cite{GiKa18}.

The persistent homology of one-dimensional data can be derived from the merge tree \cite{SmMo19}, which records more detailed information about the structure of the persistence diagram, called the \emph{history of the connected components in the filtration of sublevel sets}.
Without recovering this history, the 
persistence information
can be computed in $O(m)$ time \cite{Gli23}.
To the best of our knowledge, the new $O(m)$ time algorithm in this paper is the first linear-time algorithm that can also recover the history, which we store in the \emph{augmented persistence diagram of the filtration of sublevel sets}.
This diagram is the extended persistence diagram of \cite{CEH09} together with a relation that encodes the merge tree.
It is defined formally in Section \ref{sec:2.3}.

\smallskip
As the data may change, it is an interesting question whether persistent homology can be maintained efficiently under update operations.
The historically first such algorithm \cite{CEM06} takes time linear in the complex size per swap in the ordering of the simplices; see also \cite{LuNe21,PiPe21}. 
For one-dimensional data this corresponds to a change of the value of an item, which reduces to a sequence of interchanges of $f$-values, each costing time linear in the size of the persistence diagram. 
More recently, \cite{DeHoPa23} has shown that this can be strengthened to logarithmic time if the input complex is a graph.
The current paper is the first to maintain the persistent homology of dynamically changing one-dimensional input with a tailor-made data structure under a larger suite of update operations, which includes the \emph{insertion} of a new item, the \emph{deletion} of an item, the \emph{adjustment} of the value of an item, the \emph{cutting} of a list of items into two, and the \emph{concatenation} of two lists into one.
The running time per operation is $O(\log n + k)$, in which $n$ is the current number of critical items, and $k$ is the number of changes to the augmented persistence diagram caused by the operation.

\smallskip
Our novel dynamic data structure is based on the characterization of the items in the persistence diagram through \emph{windows}, as recently established in \cite{BCES21}.
The main data structure is a \emph{binary tree}
ordered by position as well as value (see \cite{Vui80} for the introduction of such a binary tree),  a \emph{path-decomposition} of this tree dictated by persistent homology such that each path represents a window, and a final relaxation obtained by splitting each path into a \emph{left trail} of nodes with right children on the path and a \emph{right trail} of nodes with left children on the path.
This split of each path into two trails is crucial for our results, and without it, the update time would have a linear dependence on the depth of the tree, which might be $\Theta(n)$.

\medskip \noindent \textbf{Outline.}
Section~\ref{sec:2} provides the background needed for this paper: lists and maps, persistent homology, and the hierarchy of windows that characterizes the augmented persistent diagram.
In Section~\ref{sec:2A} we present a technical overview of our technique.
Section~\ref{sec:3} introduces the data structures we use to represent a linear list: a doubly-linked list, two dictionaries, and two path-decomposed ordered binary trees whose paths are stored as pairs of trails.
Section~\ref{sec:4} presents the algorithms for maintaining the augmented persistence diagram of a linear list.
Section~\ref{sec:5} concludes the paper.

\section{Background}
\label{sec:2}

This section explains how a linear list can be viewed as a continuous map on a closed interval.
Looking at the sub- and superlevel sets of such a map, we define its augmented persistence diagram and review the hierarchical characterization in terms of windows, as proved in \cite{BCES21}.
A modest amount of topology suffices to describe these diagrams, and we refer to \cite{EdHa10} for a more comprehensive treatment needed to place the results of this paper within a larger context.

\subsection{Lists Viewed as Maps.}
\label{sec:2.1}

By a \emph{linear list} we mean a finite sequence of real numbers, $c_1, c_2, \ldots, c_m$.
To view the list as a continuous map, we set $f(i) = c_i$, for $1 \leq i \leq m$, and linearly interpolate between consecutive values.
The result is a piece-wise linear map on a closed interval, $f \colon [1,m] \to \Rspace$, with \emph{items} $i$ and \emph{values} $c_i$, for $1 \leq i \leq m$.
Important about the items is their ordering and not their precise positions along the interval, so we use consecutive integers for convenience.
To simplify discussions, we will often assume that the map is \emph{generic}, by which we mean that its items have distinct values.
This is no loss of generality since a small perturbation may be simulated and implemented by appropriate tie-breaking rules.

\smallskip
Given $t \in \Rspace$, the \emph{sublevel set} of $f$ at $t$, denoted $f_t = f^{-1} (-\infty, t]$, is the set of points in domain of $f$ such that its value $f(x)$ is not larger than any fixed point $t \in \Rspace$.
Since $f$ is defined on a closed interval, the homology of $f_t$ is fully characterized by the number of connected components.
A point $x$ in the closed interval, $[1,m]$ is a \emph{(homological) critical point} of $f$ if the number of connected components of $f_t$ changes at the moment $t$ passes $f(x)$.
Assuming genericity, a critical point is necessarily an item of $f$, and the only two types in the interior of $[1,m]$ are \emph{minima} and \emph{maxima}.
When $t$ passes the value of a minimum from below, then the number of connected components of $f_t$ increases by $1$, and if $t$ passes the value of a maximum from below, the number of connected components decreases by $1$.
The endpoints of $[1,m]$ are special: the number of connected components changes at an \emph{up-type endpoint} and it remains unchanged at a \emph{down-type} endpoint.
Symmetrically, we call $f^t = f^{-1} [t,\infty)$ the \emph{superlevel set} of $f$ at $t$.
Observe that $f^t$ is the sublevel set of $-f$ at $-t$, and that the minima and maxima of $-f$ are the maxima and minima of $f$.
Similarly, the endpoints swap type.

\subsection{Extended Persistent Homology.}
\label{sec:2.2}

Persistent homology tracks the evolution of the connected components while the sublevel set of $f$ grows, and formally defines when a component is born and when it dies.
Complementing this with the same information for the superlevel sets of $f$, we get what is formally referred to as \emph{extended persistent homology}, which we explain next (see \cite{CEH09} for more details).

\begin{figure}[htb]
  \centering \vspace{0.1in}
  \resizebox{!}{1.45in}{\input{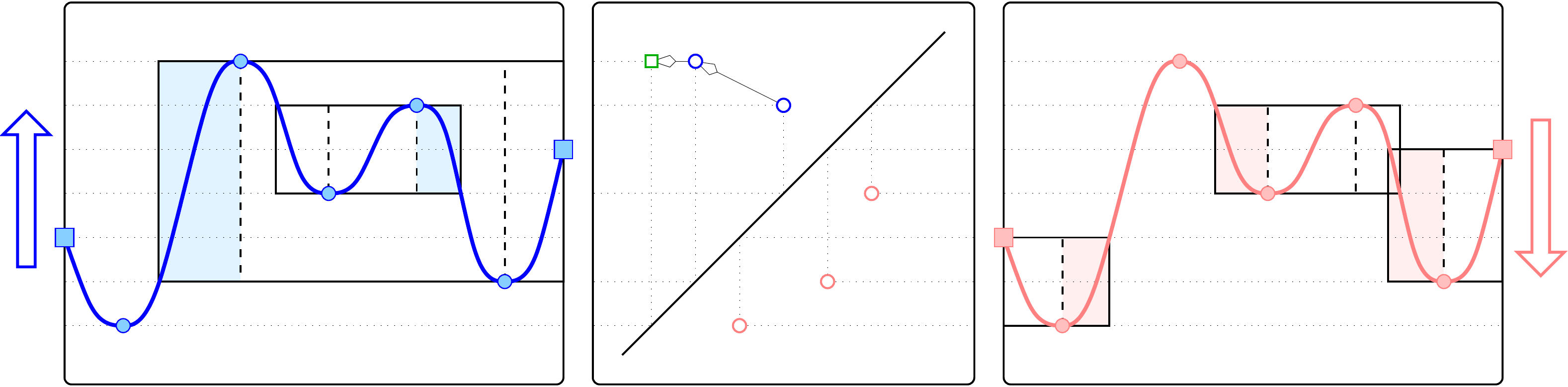_t}}
  \caption{\emph{Left:} a real-valued map on a closed interval, $f$, with three minima and two maxima.
  \emph{Right:} the map $-f$ drawn upside-down.
  \emph{Middle:} the augmented persistence diagram with two (\emph{blue}) points in the ordinary subdiagram $\Ord{}{f}$ above the diagonal, three (\emph{pink}) points in the relative subdiagram $\Rel{}{f}$ below the diagonal, and one (\emph{green}) point in the essential subdiagram $\Ess{}{f}$.}
  \label{fig:diagram}
\end{figure}

\smallskip
In \emph{Phase~One}, we track the connected components of the sublevel set, $f_t$, as $t$ increases from $-\infty$ to $\infty$.
A component is \emph{born} at the smallest value of $t$ at which a point of the component belongs to $f_t$.
This point is necessarily a minimum in the interior of $[1,m]$, or an up-type endpoint. 
The component \emph{dies} when it merges with another component that was born earlier.
The point at which the two components merge is necessarily a maximum in the interior of $[1,m]$.
The \emph{ordinary subdiagram} of $f$, denoted $\Ord{}{f}$, records the birth and death of every component with a point in the plane whose abscissa and ordinate are those values of $t$ at which the component is born and dies, respectively; see Figure~\ref{fig:diagram}.

\smallskip
In \emph{Phase~Two}, we track the connected components of the superlevel set, $f^t$, as $t$ decreases from $\infty$ to $- \infty$.
Birth and death are defined accordingly, and the components are recorded in the \emph{relative subdiagram}, denoted $\Rel{}{f}$.
By construction, the points in $\Ord{}{f}$ lie above and those of $\Rel{}{f}$ lie below the diagonal; see Figure~\ref{fig:diagram}.
The component born at the global minimum of $f$ is special because it does not die during Phase~One.
Instead, it dies at the global minimum of $-f$, which is the global maximum of $f$.
In topological terms, this happens because the one connected component still alive at the beginning of Phase~Two dies in relative homology when its first point enters the superlevel set.\footnote{In the original formulation in \cite{CEH09}, Phase~Two in the construction of the (extended) persistence diagram tracks the $1$-dimensional relative homology groups of the pairs $(f_\infty, f^t)$, which are isomorphic to $0$-dimensional homology groups of the $f^t$.
To appreciate the guiding hand of algebraic topology, we need to understand how the absolute and relative homology groups of different dimensions relate to each other.
However, for the purpose of this paper, this is not necessary and we can track the connected components of the superlevel set in lieu of the relative cycles in the interval modulo the superlevel set.}
This class is represented by the sole point in the \emph{essential subdiagram}, denoted $\Ess{}{f}$; see again Figure~\ref{fig:diagram}.
The \emph{extended persistence diagram}, denoted $\Dgm{}{f}$, is the disjoint union of the three subdiagrams.
Hence, $\Dgm{}{f}$ is a multi-set of points in $\Rspace^2$, and so are the three subdiagrams, unless the map is generic, in which case the diagram is a set.

\subsection{Windows and Augmented Persistence Diagram.}
\label{sec:2.3}

The points of the persistence diagram can be characterized using the concept of windows recently introduced in \cite{BCES21}.
Let $a$ and $b$ be two (homological) critical points of $f$, with values $A = f(a) < f(b) = B$.
Note that $f^{-1} [A,B]$ contains all points in the interval whose values lie between $A$ and $B$.
It consists of one or more connected components, and it is quite possible that $a$ and $b$ belong to different components.
However, if they belong to the same component, then we denote by $[x,y]$ the connected component of $f^{-1} [A,B]$ that contains both $a$ and $b$
and we call
$[x,y] \times [A,B]$  the \emph{frame} with \emph{support} $[x,y]$ spanned by $a$ and $b$.
There are two orientations of the frame: from left to right if $a < b$, and from right to left if $b < a$.
In the former case, we call $x$ the \emph{mirror} of $b$ and $\Window{a}{b}$ a \emph{triple-panel window} if $f(y) = A$.
In the latter case, we call $y$ the \emph{mirror} of $b$ and $\Window{a}{b}$ a \emph{triple-panel window} if $f(x) = A$.
We say $a$ and $b$ \emph{span} $\Window{a}{b}$.
We distinguish a special type of window, referred to as a \emph{global window}, whose support covers the entire interval, $[x,y] = [1,m]$, and which is exempt of the requirement at $x$ or $y$.

\smallskip
In \cite{BCES21}, the authors prove that there is a bijection between the windows and the points in $\Dgm{}{f}$.
Furthermore, each critical point in the interior of the interval spans exactly one window in each phase---even though the pairing may be different in the two phases---while each endpoint spans a window in only one phase.
In Figure~\ref{fig:diagram}, there are five triple-panel windows, two on the left and three on the right.
There is always exactly one global window spanned by the global minimum, $\alpha$, and the global maximum, $\beta$, of $f$.
What follows is a special case of a more general characterization of persistence proved in \cite{BCES21}.
\begin{proposition}[Persistence in Terms of Windows \cite{BCES21}]
  \label{prop:persistence_in_terms_of_windows}
  Let $f \colon [1,m] \to \Rspace$ be a generic piecewise linear map on a closed interval, and let $a, b$ be homological critical points of $f$, or of $-f$, with $f(a) = A$ and $f(b) = B$.
  Then
  \smallskip \begin{enumerate}[(i)]
    \item $(A, B) \in \Ord{}{f}$ iff $\Window{a}{b}$ is a triple-panel window of $f$,
    \item $(B, A) \in \Rel{}{f}$ iff $\Window{b}{a}$ is a triple-panel window of $-f$,
    \item $(A, B) \in \Ess{}{f}$ iff $\Window{a}{b}$ is the global window of $f$.
  \end{enumerate} \medskip
\end{proposition}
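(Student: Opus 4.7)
The plan is to derive all three parts by specializing the general window characterization of persistence established in \cite{BCES21} to the setting of a generic piecewise linear map on a closed interval.

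For part (i), I would first recall that $\Ord{}{f}$ records the birth--death pairs of connected components in the sublevel-set filtration, with births at minima (and at the up-type endpoint, if any) and deaths at maxima by the elder rule. The triple-panel condition on $\Window{a}{b}$ expresses this pairing geometrically: the support $[x,y]$, being the connected component of $f^{-1}[A,B]$ that contains both $a$ and $b$, is precisely the region inside which the component born at $a$ lives until it merges with another component at $b$; and the mirror condition ($f(y)=A$ if $a<b$, or $f(x)=A$ if $b<a$) certifies the presence of a deeper minimum on the opposite side of $b$---which, by the elder rule, is the older component that absorbs the one born at $a$. Both implications reduce to this correspondence.

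Part (ii) follows by applying (i) to $-f$, using that sublevel sets of $-f$ are superlevel sets of $f$, that minima and maxima swap, and that $\Rel{}{f}$ is obtained from $\Ord{}{-f}$ by reflection across the diagonal. The coordinate swap $(A,B) \leftrightarrow (B,A)$ and the role swap between $a$ and $b$ yield the window $\Window{b}{a}$ of $-f$ as stated. Part (iii) is a definitional check: the sole essential class is born at the global minimum $\alpha$ at value $A$ and dies at the global maximum $\beta$ at value $B$ when its component first reappears in the shrinking superlevel set; the window it spans has support $[1,m]$, which by definition is the unique global window, exempt of the mirror condition.

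The main obstacle is establishing (i) rigorously, in particular verifying that the mirror condition correctly singles out the younger of the two components that merge at $b$ when several extrema of $f$ lie inside $[x,y]$. This is handled by the fact that $[x,y]$ is a \emph{connected component} of $f^{-1}[A,B]$, so $f$ takes a value in $\{A,B\}$ at each boundary point unless that point coincides with an endpoint of $[1,m]$; combined with the observation that $b$ separates $[x,y]$ into two sides, only one of which reaches down to the level $A$ on its boundary, this side is precisely the younger one and dies at $b$ by the elder rule, matching the triple-panel condition.
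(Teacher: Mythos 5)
The paper does not prove this proposition; it is quoted from \cite{BCES21} (the text preceding it says it is ``a special case of a more general characterization of persistence proved in \cite{BCES21}''), and no proof is given in this paper. So your sketch is an independent reconstruction rather than something that can be compared line-by-line against the paper's own argument.

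As a reconstruction it is largely on the right track, with one technical slip. For part~(i), your identification of the mirror condition as the geometric certificate of the elder rule is correct: because $[x,y]$ is a connected \emph{component} of $f^{-1}[A,B]$, the boundary values at $x$ and $y$ lie in $\{A,B\}$ (barring coincidence with an endpoint of $[1,m]$), and $f(y)=A$ (when $a<b$) exactly says the side of $b$ away from $a$ reaches level $A$, so the component born at $a$ is the younger one absorbed at $b$. The obstacle you point out is indeed the right thing to nail down in a full proof. For part~(ii), the reduction to (i) via $-f$ is correct in spirit, but the claim that $\Rel{}{f}$ is obtained from $\Ord{}{-f}$ by reflection across the diagonal is not right: a superlevel-set component of $f$ born at $B$ and dying at $A$ contributes $(B,A)$ to $\Rel{}{f}$ and $(-B,-A)$ to $\Ord{}{-f}$, so the two diagrams are related by negating both coordinates (a point reflection through the origin), not by $(x,y)\mapsto(y,x)$. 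The role-swap of $a$ and $b$ comes only from renaming which of the two is the ``minimum'' in the hypotheses of (i) when it is applied to $-f$. With that corrected, the bookkeeping in your sketch still delivers (ii). Part~(iii) is a correct definitional check. One further small point your sketch glosses over: the proposition allows $a,b$ to be critical points of $f$ \emph{or of} $-f$ precisely because up-type and down-type endpoints swap under negation, so an endpoint that spans a window in Phase~Two may fail to be a homological critical point of $f$ itself; a careful write-up should account for this.
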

\begin{figure}[htb]
  \centering \vspace{0.0in}
  \resizebox{!}{1.8in}{\input{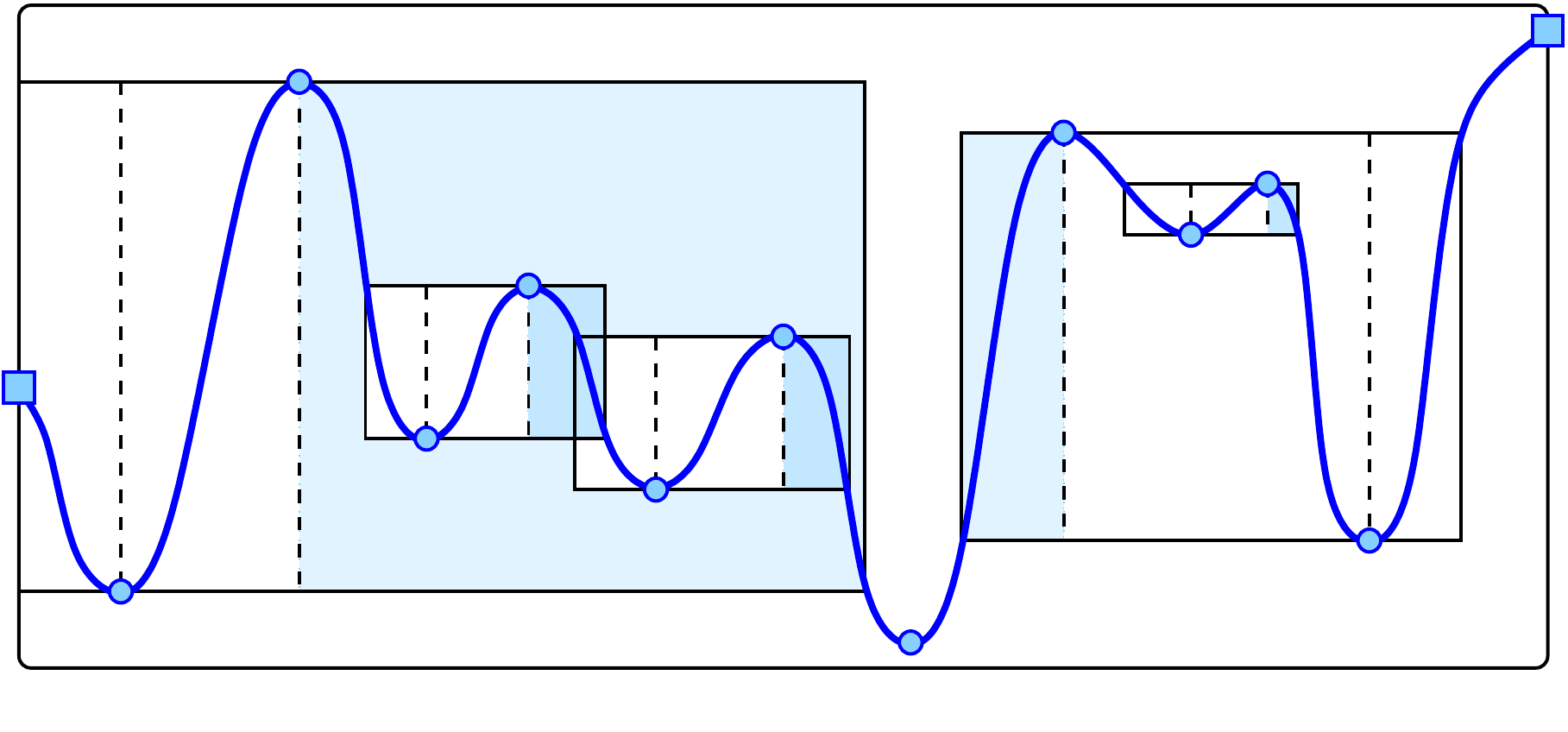_t}}
  \caption{The graph of a generic map on a closed interval.
  All windows shown are with simple wave, except for the \emph{leftmost} window, whose wave is short.
  The global window as well as the (tiny) windows caused by the hooks (which will be introduced in \cref{sec:3}) are not shown.
  The \emph{light-blue shaded} out-panels are part of the triple- but not of the double-panel windows.}
  \label{fig:maponly}
\end{figure}
Suppose $\Window{a}{b}$ is a triple-panel window, with support $[x,y]$.
Unless $a$ is an endpoint of $[1,m]$, cutting $[x,y]$ at $a$ and $b$ produces three segments, $[x,a]$, $[a,b]$, $[b,y]$, in which we assume $a < b$.
We call the corresponding products with $[A,B]$ the \emph{in-panel}, \emph{mid-panel}, \emph{out-panel},
and the graph within the window a \emph{wave}.
This wave is \emph{simple} or \emph{short} depending on whether the function value at the mirror is equal to or smaller that at the maximum: $f(x) = B$ or $f(x) < B$.
In Figure~\ref{fig:maponly}, we have four simple waves (defined by $\T{f}, \T{g}$, by $\T{h}, \T{i}$, by $\T{l}, \T{m}$, and by $\T{n}, \T{k}$) and one short wave (defined by $\T{d}, \T{e}$).
Note that a simple wave of $f$ is also a simple wave of $-f$, albeit upside-down.
A similar statement does not hold for short waves, which explains the violations of symmetry in the persistence diagram; see again Figure~\ref{fig:diagram}.

\smallskip
The \emph{double-panel} version of a triple-panel window consists of the in-panel and the mid-panel but drops the out-panel.
Any two double-panel windows of $f$ are either disjoint or nested, and all these windows are nested inside the global window \cite[Lemma~3.3]{BCES21}.
We use this property to augment the persistence diagram with the merge tree information; that is: we draw an arrow between two points if the corresponding double-panel windows are nested without any other window nested between them.
The result is the \emph{augmented persistence diagram}, denoted $\DgmR{}{f}$. 

\smallskip
In this paper, we consider operations that maintain the augmented persistence diagram to reflect the change from a map, $f$, to any other map, $g$.
We quantify this difference by the number of points and arrows that change or, more formally, belong to the symmetric difference between the two diagrams.
What are typical differences?
Every local change in the function reduces to a sequence of \emph{interchanges} between two minima or two maxima, possibly ending in a \emph{cancellation} or beginning with an \emph{anti-cancellation}.
Every cancellation removes a point from the diagram, and every anti-cancellation adds a point to the diagram, so the diagrams before and after differ by a constant amount.
An interchange may or may not swap two critical values between points (two coordinates, which are values of two critical items) or the reversal of an arrow, and if it does not, then this should not incur any cost.
On the other hand, if it does, then this can cost a constant amount of time.
Our declared goal is to have operations that run in time $O(\log n + k)$, in which $n$ is the number of critical points and $k$ measures the difference between the augmented persistence diagrams before and after the operations.

\section{Technical Overview}
\label{sec:2A}

To achieve this goal, we propose a novel collection of data structures---some classic and some new---for maintaining nested windows.
We show that using these data structures allows us to maintain the augmented persistence diagram to reflect the change from one map to another in the desired time bound.
We maintain the following data structures:
\smallskip \begin{enumerate}
  \item[(1)] a \emph{doubly-linked list} of all items, critical or not, ordered by their positions in the interval;
  \item[(2)] two \emph{binary search trees}, called \emph{dictionaries}, one storing the minima and the other storing the maxima, both ordered by their positions in the interval;
  \item[(3)] two \emph{banana trees} (described next) representing the information in the augmented persistence diagram by storing the minima and maxima while reflecting their ordering by position as well as by function value.
\end{enumerate} \smallskip
Note that the minima and maxima are subsets of all items and are therefore represented in all of the above data structures.
To reduce the special cases in the algorithms, we add two artificial items, called \emph{hooks}, at the very beginning and the very end of the interval.
They make sure that the formerly first and last items are proper minima or maxima, but we ignore them and the technicalities involved in this overview and give slightly simplified descriptions of the data structures and the algorithms.

\medskip \noindent \textbf{Banana Trees.}
We introduce these trees in three stages.
In the \emph{first stage}, we organize all windows in a full binary tree, whose leaves are the minima and whose internal nodes are the maxima, such that (i) the in-order traversal of the tree visits the nodes in increasing order of their positions in the interval, and (ii) the nodes along any path from a leaf to the root are ordered by increasing function value.
Such a tree always exists and it is unique.
In the following, we do not distinguish between a node in the tree and the critical item it represents.

\smallskip
In the \emph{second stage}, we add a \emph{special root} labeled with value larger than the global maximum whose only child is the previous root.
Call the resulting tree $T$ and note that it has an equal number of leaves and internal nodes.
We then form paths, each starting at a leaf, $a$, and ending at an internal node, $b$, such that $a$ and $b$ span a window, $\Window{a}{b}$.
Call this path $\Path{a}{b}$.
Based on structural properties of $T$, this leads to a partition of $T$ into edge-disjoint (but not vertex-disjoint) paths; see the left drawing in Figure~\ref{fig:bananatree}.
The node $b$ ending the path that starts at $a$ is locally determined: it is the first internal node encountered while walking up from $a$, such that $a$ and the descending leaf with minimum function value lie on different sides (in different subtrees) of this internal node.
Hence, every maximum on the path from $a$ to $b$ spans a window that is immediately nested in $\Window{a}{b}$.
It follows that every maximum, $b$, belongs to two paths: $\Path{a}{b}$ and $\Path{p}{q}$ such that $\Window{a}{b}$ is immediately nested in $\Window{p}{q}$.
This even holds for the root (of the full binary tree), which spans a path and also belongs to the path that ends at the special root.
Given a map, $f$, the tree, $T$, and its partition into paths are unique.
The strict dependence on $f$ may force $T$ to be unbalanced, and indeed have linear depth, so that efficient maintenance algorithms are challenging.
This is why we need another modification.
\begin{figure}%
  \centering \vspace{0.0in}
  \resizebox{!}{2.2in}{\input{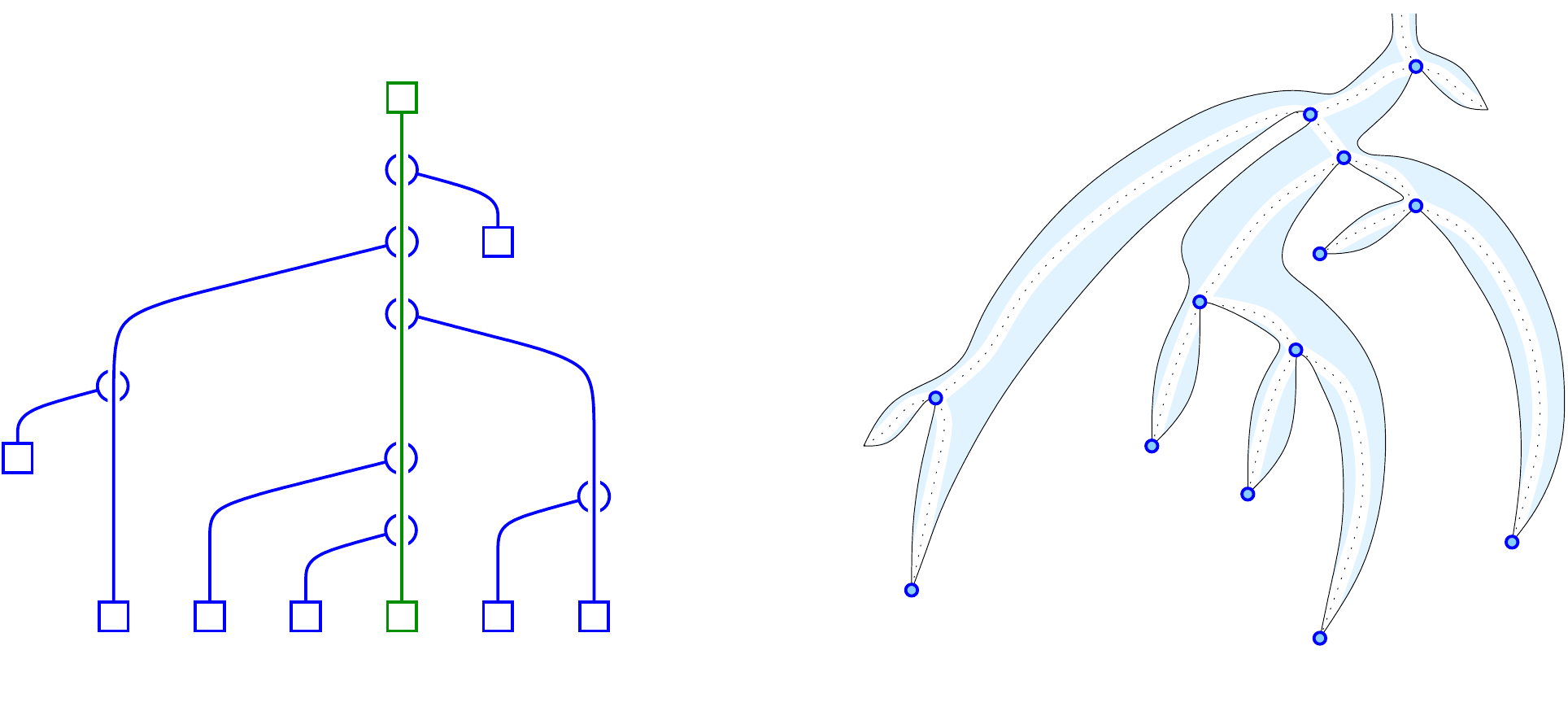_t}}
  \caption{The path-decomposed binary tree associated to the map in Figure \ref{fig:maponly} with special root, $\beta$, on the \emph{left}, and the corresponding banana tree on the \emph{right}.}
  \label{fig:bananatree}
\end{figure}

\smallskip
In the \emph{third stage}, we split each path into two trails; see the right drawing in Figure~\ref{fig:bananatree}.
The \emph{left trail} of $\Path{a}{b}$ contains $a$ and every maximum $u$ on $\Path{a}{b}$ with $u \leq a$, while the \emph{right trail} contains $a$ and every maximum $u$ on $\Path{a}{b}$ with $a \leq u$.
A node $v$ on $\Path{a}{b}$ is the right child of its parent, $u$, in the binary tree iff $u$ is on the left trail.
Furthermore, $v$ is the left child of $u$ iff $u$ is on the right trail.
Thus, to which trail $u$ belongs to can be decided based on local information at $u$.
To simplify language, we also give a second name to the trails.
If $a < b$, then $\Window{a}{b}$ consists of the in-panel on the left, the mid-panel in the middle, and the out-panel on the right.
In this case, $u$ belongs to the left trail iff the window it spans is nested inside the in-panel, so we alternatively call the left trail the \emph{in-trail}, and we alternatively call the right trail the \emph{mid-trail}.
If $b<a$, then the right trail is the in-trail and the left trail is the mid-trail.
So what happened to the out-trail?
It is indeed needed, but only if $\Window{a}{b}$ is with simple wave.
Such a window corresponds to $\Path{a}{b}$ in the banana tree of $f$ and to $\Path{b}{a}$ in the banana tree of $-f$.
The out-panel of $\Window{a}{b}$ is the in-panel of $\Window{b}{a}$, so we can get information about the out-panel from the in-panel of the same window in the other banana tree when we need it.

\smallskip
To speed up the algorithms, we maintain various pointers, such as between the occurrences of the same critical item in the banana trees, the dictionaries, and the doubly-linked list, for example.
Importantly, every internal node, $b$, stores a pointer to the descending leaf with minimum function value, $\Low{b}$, and every leaf, $a$, stores a pointer to the endpoint of its path, $\Dth{a}$.
Observe that $q = \Dth{\Low{b}}$ is the maximum that spans the window in which $\Window{a}{b}$ is immediately nested, so this window can be obtained in constant time.

\medskip \noindent \textbf{Construction.}
While the main focus of this paper is the maintenance of the data structures through local updates, we also consider the construction from scratch.
It is straightforward to derive the augmented persistence diagram during a single traversal of the banana trees in linear time, so the question we study is how fast this diagram can be constructed from a given 
one-dimensional input list.
Assuming all non-critical items have been removed and we are given the remaining sequence of $n$ critical items, there are standard algorithms that can be adapted to construct the banana trees in $O(n \log n)$ time.
There is also an $O(n)$ time algorithm for computing the persistence diagram \cite{Gli23}, but this algorithm does not extend to the augmented persistence diagram.
To the best of our knowledge  our algorithm is the first to construct the augmented persistence diagram in $O(n)$ time.

\smallskip
The main structure of the algorithm is a left-to-right scan of the data.
We interpret the item $i$ with value $c_i = f(i)$ as the point $(i, c_i)$ in the plane and maintain a decreasing staircase such that all processed items are points on or below the staircase.
Each step of the staircase corresponds to an unfinished banana.
One of the difficulties is that before a banana is completed, we do not know whether it will be attached to a left or a right trail.
We tentatively assume it will be attached to a left trail but are prepared to move the banana to the other side when this turns out to be necessary.
When we process the next item, we may remove any number of steps, turning each into a finished banana, but we can add at most one new step.
Since a step that is removed was added earlier, this proves that the algorithm runs in $O(1)$ amortized time per item, and therefore in $O(n)$ time altogether.

\medskip \noindent \textbf{Local Maintenance.}
Given a list of $m$ items with real function values, we consider the operations that \emph{insert} a new item, \emph{delete} an item, and \emph{change the value} of an item.
All three operations reduce to a sequence of \emph{interchanges}---which can be between two maxima or between two minima---possibly preceded by an \emph{anti-cancellation} or a \emph{slide}, and possibly succeeded by a \emph{cancellation} or a \emph{slide}.
In a slide, a minimum or maximum next to a non-critical item becomes non-critical, and the non-critical item becomes a minimum or maximum, respectively.
Similarly in a cancellation, a minimum and a neighboring maximum simultaneously become non-critical.
Here we will focus on the interchanges, because they are most common as well as most interesting, and on the anti-cancellations, because they pose an unexpected challenge.

\smallskip
Consider two maxima, $b$ and $q$, of $f$, and assume $f(b) < f(q)$ before the interchange.
To avoid confusing language, we write $g(b)$ and $g(q)$ for the values after the operation but assume that $f$ and $g$ agree on all items except for $b$.
Furthermore, we assume that $g(b) > g(q)$ and that $b$ and $q$ are the only two items for which the ordering by $f$-value differs from the ordering by $g$-value.
In many cases, the interchange of $b$ and $q$ does not affect the banana trees.
Indeed, only if $b$ is a child of $q$ is it necessary to update the order of the two nodes.
And even if $b$ and $q$ are consecutive maxima on a path, there is no structural change unless $b$ and $q$ also belong to a common trail.
This is the main reason for splitting each path into two trails as explained in the third stage of the introduction of the banana trees: to avoid any cost to occur for interchanges that have no structural affect on the augmented persistence diagram.
When $b$ and $q$ interchange while belonging to different trails, then the banana tree is oblivious to this change and requires no update.
On the other hand, if $b$ and $q$ belong to the same trail, then they swap positions along this trail, and there is a change of the augmented persistence diagram that pays for the time it takes to update the banana tree.

\smallskip
The interchange of two minima, $a$ and $p$, is quite different because it does not affect the ordered binary tree at the first stage of the banana tree.
However, the interchange affects the path-decomposition, so some of the bananas may have to be updated.
To be specific, assume $f(a) > f(p)$ before and $g(a) < g(p)$ after the operation.
As before, we also assume that $f$ and $g$ agree on all items except for $a$, and that $a$ and $p$ are the only two items for which the ordering by $f$-value differs from the ordering by $g$-value.
Let $b$ and $q$ be the internal nodes so that $\Path{a}{b}$ and $\Path{p}{q}$ are paths in the decomposition of the banana tree of $f$.
The interchange of $a$ and $p$ has no effect on the banana tree, unless $b$ is a node on $\Path{p}{q}$.
If $b$ lies on $\Path{p}{q}$, then we extend $\Path{a}{b}$ to $\Path{a}{q}$ and we shorten $\Path{p}{q}$ to $\Path{p}{b}$.
The nodes $u$ on the path from $b$ to the child of $q$ change their pointer from $\Low{u} = p$ to $\Low{u} = a$.
There can be arbitrarily many such nodes, but each change causes the adjustment of an arrow in the extended persistence diagram, to which it can be charged in the running time analysis.

\smallskip
This shows that it is possible to perform an interchange of two minima within the desired time bound, but it is not clear how to find them.
Considering the scenario in which $a$ decreases its value continuously, it may cause a sequence of interchanges with other minima, but since these minima are not sorted by function value, it is not clear how to find them, and how to ignore the ones without structural consequences.
Here is where the relation between the banana trees of $f$ and $-f$ becomes important.
The minima of $f$ are the maxima of $-f$, so the interchange of two minima in the banana tree of $f$ corresponds to the interchange of two maxima in the banana tree of $-f$.
We already know how to find the interchanges of two maxima and how to ignore the ones without structural consequences, so we use them to identify the interchanges of minima.
More precisely, we prove that the interchange of the minima, $a$ and $p$ of $f$, affects the structure of the banana tree of $f$ only if the interchange of the maxima, $a$ and $p$ of $-f$, affect the structure of the banana tree of $-f$.
The converse does not hold, but the implication suffices since the interchange of the maxima of $-f$ leads to a change in the extended persistence diagram which can be charged for the interchange of the minima, which costs only $O(1)$ time if there are no structural adjustments.

\medskip
Next, we sketch what happens in the remaining operations.
A \emph{slide} occurs if a maximum decreases its value so that it becomes non-critical, while a neighboring non-critical item becomes a maximum.
However, if this neighboring item is a maximum, then both items become non-critical at the same time, in which case the operation is called a \emph{cancellation}.
There are also the symmetric operations in which a minimum increases its value and becomes non-critical, while a neighboring item changes from non-critical to minimum (a \emph{slide}) or from maximum to non-critical (a \emph{cancellation}).
The corresponding updates are easily performed within the required time bounds.

\smallskip
A more delicate operation is the \emph{anti-cancellation}, in which two neighboring non-critical items become critical at the same moment in time.
Let $a$ and $b$ be these two items and assume $a$ is a minimum and $b$ is a maximum after the operation, so $g(a) < g(b)$ and, by assumption, $f(a') > f(a) > f(b) > f(b')$, in which $a', a, b, b'$ are four consecutive items in the list.
Hence, $a$ and $b$ are not present in the banana tree of $f$, but $\Path{a}{b}$ is a path in the path-decomposed tree of $g$, so $a, b$ form a minimal banana in the banana tree of $g$.
All we need to do is find the correct place to attach this banana, but this turns out to be difficult.
We first explain why it is difficult, then present an algorithm that works within the current data structure, and finally sketch a modification of the data structure that accelerates the anti-cancellation to $O(\log n)$ time.
While this leads to a speed-up in the worst case for this type of operation, it slows down other operations by a logarithmic factor.

We use mirrors to explain in what situation an anti-cancellation is difficult.
Their representation in the banana tree is indirect: the maximum is the upper end of a mid-trail, and its mirror is the upper end of the matching in-trail.
In the tree, the two upper ends are the same node, but if we traverse the trails of the banana trees in sequence, we visit them at different times, and these times correspond to the positions along the interval, which are different for the maximum and its mirror.
Every maximum has at most one mirror, so adding all mirrors to the list increases it to less than double the original size.
Nevertheless, it is easy to construct a case in which there is a long subsequence of mirrors, and these mirrors are consecutive with the exception of $a$ and $b$ appearing somewhere in their midst.
In this situation, finding the correct attachment of $\Path{a}{b}$ in the banana tree of $g$ needs the mirrors immediately to the left and right of $b$.
The data structure as described provides no fast search mechanisms among the mirrors, so we find the correct attachment by linear search starting from the first maximum to the left of $a$.
Suppose we pass $k$ mirrors before we find this attachment.
Then we have a nested sequence of $k$ windows, and $\Window{a}{b}$ is nested inside all of them.
Hence, there is one new immediate nesting pair, while the transitive closure of the nesting relation gains $k$ new pairs.
The cost of the anti-cancellation can be charged to the change of this transitive closure.
In many cases, the change in the transitive closure will be comparable in size to that of the transitive reduction, but it can also be significantly larger, like in the case we just described.
We thus entertain alternatives.

There is a modification of the data structure that allows for fast searches among subsequences of mirrors:  for any consecutive min-max pair in the list, store the mirrors between them in a binary search tree.
In practice, there will be many very small such trees, but it is possible that the mirrors accumulate and produce a few large trees.
With this modification, an anti-cancellation can be performed in $O(\log n)$ time.
Note however, that these binary search trees have to be maintained, which adds a factor of $O(\log n)$ to the time of every operation, in particular to every interchange, of which there can be many.

\medskip \noindent
\textbf{Topological Maintenance.}
We call an operation \emph{topological} if it cuts the list of points into two, or it concatenates two lists into one.
More challenging topological operations, such as gluing the two ends of a list to form a cyclic list, or gluing several lists to form a geometric tree or more complicated geometric network are feasible but beyond the scope of this paper.

\smallskip
When we \emph{cut} the list of data into two, we \emph{split} the banana trees of the map and its negative into two each.
We mention both banana trees since we need information from the other to be able to split one within the desired time bound.
Splitting one banana tree is superficially similar to splitting a binary search tree, but more involved.
Let $f \colon [1,m] \to \Rspace$ be the map before the operation and $g \colon [0,\ell] \to \Rspace$ and $h \colon [\ell+1, m] \to \Rspace$ the maps after the operation.
We write $z = \ell + \frac{1}{2}$ for the position at which the time series is cut.
Since the banana trees are determined by the maps, we need to understand the difference between the windows of $f$ and those of $g$ and $h$.
Whether or not a frame of $f$ is also a window depends solely on the restriction of $f$ to the support of the frame.
To decide about the future of a window of $f$, let $[x,y]$ be its support.
This window is also a window of $g$ if $y < z$, and it is also a window of $h$ if $z < x$.
The windows that need attention are the ones with $x < z < y$.
A triple-panel window consists of three panels, so we distinguish between three cases:
\smallskip \begin{itemize}
  \item $\Window{a}{b}$ suffers an \emph{injury} if $z$ cuts through the in-panel.
  Then $a$ and $b$ lie on the same side of $z$ and they still span a window, albeit with short wave.
  \item $\Window{a}{b}$ suffers a \emph{fatality} if $z$ cuts through the mid-panel.
  Then $a$ and $b$ lie on different sides of $z$ and need to find new partnering critical items to span new windows.
  \item $\Window{a}{b}$ suffers a \emph{scare} if $z$ cuts through the out-panel.
  These windows are difficult to find in the banana tree of $f$, but they are easy to find in the banana tree of $-f$, in which $\Window{b}{a}$ suffers an injury.
\end{itemize} \smallskip
The splitting of the banana tree proceeds in three steps: first, find the smallest banana that suffers an injury, fatality, or scare; second, find the remaining such bananas; and third, split the banana tree of $f$ into the banana trees of $g$ and $h$.
We address all three steps and highlight the most interesting feature in each.

\smallskip
The first step is difficult because of the lack of an appropriate search mechanism in the banana tree.
To explain this, we recall that the banana tree stores the mirrors implicitly, as the upper ends of the in-trails.
Like in the case of an anti-cancellation, the challenging case is when mirrors accumulate and we have to locate $z$ in their midst. 
As before, we locate $z$ by linear search, scanning the mirrors in the order of decreasing function value.
In contrast to the anti-cancellation, we can now charge the cost for the search to the changes in the extended persistence diagram.
Indeed, every mirror we pass belongs to a window that experiences a scare.
Such a window of $f$ is neither a window of $g$ nor of $h$, so its spanning critical items will re-pair and span new windows after the operation.

The second step traverses a path upward from the smallest affected banana we just identified.
In each step of the traversal, we determine the corresponding window and push it onto the stacks of windows that experience an injury, fatality, or scare.
A window that experiences an injury remains a window, but now with short instead of simple wave.
Whether this window with short wave belongs to the banana tree of $g$ or that of $h$ depends on whether the spanning minimum is to the right or the left of the spanning maximum.
A window that experiences a fatality falls apart, with one of the two spanning critical items in $g$ and the other in $h$.
Finally, a window that experiences a scare stops to be a window, in spite of having both critical points in $g$ or in $h$.
As mentioned earlier, such a window is difficult to find in the banana tree of $f$, but it is easy to find in the banana tree of $-f$, where it experiences an injury.
We thus process both banana trees simultaneously, and distributed the windows or their corresponding bananas as needed.
The upward traversal halts when we reach the \emph{spine} of the tree, by which we mean the sequence of left children that start at the root, or the sequence of right children that start at the  root.
This is important because moving further until we reach the root is not necessary and can be costly because the spine can be arbitrarily long and the steps towards the root cannot be charged to changes in the extended persistence diagram.

The third step re-pairs the critical items of the windows that experience a fatality or scare.
All new windows are with short wave, for else they would be windows of $f$ before the splitting, which is a contradiction.
Hence, their bananas belong to the spines of the banana trees of $g$ and $h$.
The bananas in the spines have the particularly simple structure that their critical items come in sequence.
For the right spine of the banana tree of $g$, this sequence increases from right to left, for the left spine of the banana tree of $h$, the sequence increases from left to right, and both are consistent with the sequence in which we collect the corresponding windows in the second step.
It follows that the available critical items can be paired up in sequence, which takes $O(1)$ time per item.

\smallskip
Corresponding to the concatenation of two lists, we pairwise \emph{glue} the banana trees of the two maps.
The operation is the inverse of splitting, so we omit further details.
A final word about the cost paid by the changes in the augmented persistence diagram.
How do we compare one diagram with two, which we get after splitting?
To deal with this issue, we consider the maps $g$ and $h$ to be \emph{one} map, namely a map on two intervals.
Then we still have one augmented persistence diagram and the symmetric difference between the diagrams before and after the operation is well defined.

\section{Data Structures}
\label{sec:3}

\medskip
We use five inter-connected data structures to represent a linear list or, equivalently, the implied piecewise linear map: a doubly-linked list of all items, two dictionaries storing the minima and maxima for quick access, and two ordered trees representing the information in the augmented persistence diagram for quick update.
The novel aspect is the implementation of the ordered trees as \emph{banana trees}, to be described shortly.
There are two such trees, storing the homological critical points of the map and its negation, which are subsets of all items.
Indeed, the non-critical items are stored only in the doubly-linked list, but they enter the dictionaries and banana trees when they become critical.
The difference between critical and non-critical items is defined in terms of the piecewise linear map implied by the items.
Let $m \geq 2$ and write $c_1, c_2, \ldots, c_m$ for the list of values, which we assume are distinct.
We construct the map, $f \colon [0,m+1] \to \Rspace$, by setting $f(i) = c_i$, for $1 \leq i \leq m$, and adding artifical ends by setting
\begin{align}
    f(0)  &=  c_1 + \ee \cdot \sign{c_2 - c_1} , \\
    f(m+1)  &=  c_{m} + \ee \cdot \sign{c_{m-1} - c_{m}} ,
\end{align}
in which $\sign{c} = \pm 1$ depending on whether $c > 0$ or $c < 0$, and $\ee > 0$ is arbitrarily small and in any case smaller than the absolute difference between any two of the $c_i$.
We call these artificial ends \emph{hooks}; they make sure that items $1$ and $m$ are proper minima or maxima.

\subsection{Dictionaries.}
\label{sec:3.1}

We maintain the items in a \emph{doubly-linked list} ordered according to their position in  the interval.
For reasons that will become clear shortly, we additionally store the minima and maxima in a \emph{dictionary} each, both ordered like the doubly-linked list.
Besides searching, the dictionaries support the retrieval of the minimum or maximum immediately to the left or the right of a given $x \in \Rspace$.
It will be convenient to write $n$ for the number of maxima so that $n-1$, $n$, or $n+1$ is the number of minima.
In addition, we will pretend that the items---and sometimes just the minima and maxima---are at consecutive integer locations along the real line.
Obviously, this cannot be maintained as we insert and delete items, but it makes sense locally and simplifies the notation and discussions without causing any confusion.

\smallskip
An opportune data structure for the dictionary is a binary search tree, which supports \texttt{access}, \texttt{insertion}, and \texttt{deletion} of an item in $O(\log n)$ time each.
Similarly it supports the \texttt{cutting} of a dictionary into two, and the \texttt{concatenation} of two dictionaries into one---provided all items in one dictionary are smaller than all items in the other---again in $O(\log n)$ time.
For comparison, doubly-linked lists can be \texttt{cut} and \texttt{concatenated} in constant time, provided the nodes where the cutting and concatenation is to happen are given.

\subsection{Banana Trees.}
\label{sec:3.2}

Let $f \colon [0,m+1] \to \Rspace$ be a generic piecewise linear map constructed from a list of $m$ distinct values.
The main data structure consists of two trees, one for $f$ and the other for $-f$, which we explain in three steps.
In order to ensure that all critical items are represented in both trees, we require homological critical points of $f$ to also be homological critical points of $-f$, and vice versa.
This affects how we treat up- and down-type items: up-type items in $f$ are already homological critical points; in $-f$, however, they are down-type items, which are not homological critical points.
We use the hooks at down-type items to treat them as proper maxima, and ignore the hooks at up-type when constructing the trees.
To describe the tree for $f$, let $a_0 < b_1 < a_1 < \ldots < b_n < a_n$ be the homological critical points of $f$, and note that the $a_i$ are
minima---with the possible exception of $a_0$ and $a_n$, which may be up-type endpoints that are artificially added as hooks.
On the other hand, all $b_i$ are proper maxima.

\medskip \noindent \textbf{First step:}
we construct a full binary tree whose nodes are the homological critical values.
Among the many choices, we arrange these values such that
\smallskip \begin{description}
  \item[{\sf I.1}] the in-order sequence of the nodes in the tree is the ordering of the homological critical points in the linear list, i.e.\ of their position in the interval;
  \item[{\sf I.2}] the nodes along any path from a leaf to the root are ordered by increasing values.
\end{description} \smallskip
It is not difficult but important to see that there is a unique full binary tree that satisfies Conditions~{\sf I.1} and {\sf I.2}.
Indeed, the largest value is a maximum, $b_j$, which is necessarily the root of the tree.
The left subtree is the recursively defined tree of $a_0, b_1, \ldots, a_{j-1}$, and the right subtree is the recursively defined tree of $a_j, b_{j+1}, \ldots, a_n$.
By induction, the two subtrees are unique, so the entire tree is unique.

\medskip \noindent \textbf{Second step:}
we decompose the tree into edge-disjoint paths, each connecting a leaf to an internal node.
Since there is one extra leaf, we add a \emph{special root}, $\beta$, whose only child is the root, as an extra internal node.
We define $\beta$ to be greater than all items, both in terms of function value and along the interval; that is:
$f(i) < f(\beta)$ and $i < \beta$ for all items $i$.
With this small change, the paths define a bijection between the leaves and the internal nodes.
Again there are many choices, and we pick the paths such that
\smallskip \begin{description}
  \item[{\sf II}] each path connects a leaf, $a$, to the nearest ancestor, $b$, for which $a$ does not have the smallest value in the subtree of $b$, and to the special root, if no such ancestor exists.
\end{description} \smallskip
After fixing the tree in the first step, Condition~{\sf II} implies a unique partition of the edges into $n+1$ paths; see Figure~\ref{fig:updowntrees}.
Comparing with the windows introduced in Section~\ref{sec:2}, we note that each path corresponds to a double-panel window:
the lowest and highest nodes are the minimum and maximum spanning the window, and all other nodes of the path are maxima of nested windows in the in-panel or the mid-panel of the double-panel window.
The out-panel is indirectly represented by the requirement that its highest node is interior to another path whose lowest node has smaller value than the lowest node of the current path.
If the window is with simple wave, then the triple-panel window is also represented in the down-tree, except that in- and out-panels are exchanged. 

\begin{figure}[htb]
  \centering \vspace{0.1in}
  \resizebox{!}{1.9in}{\input{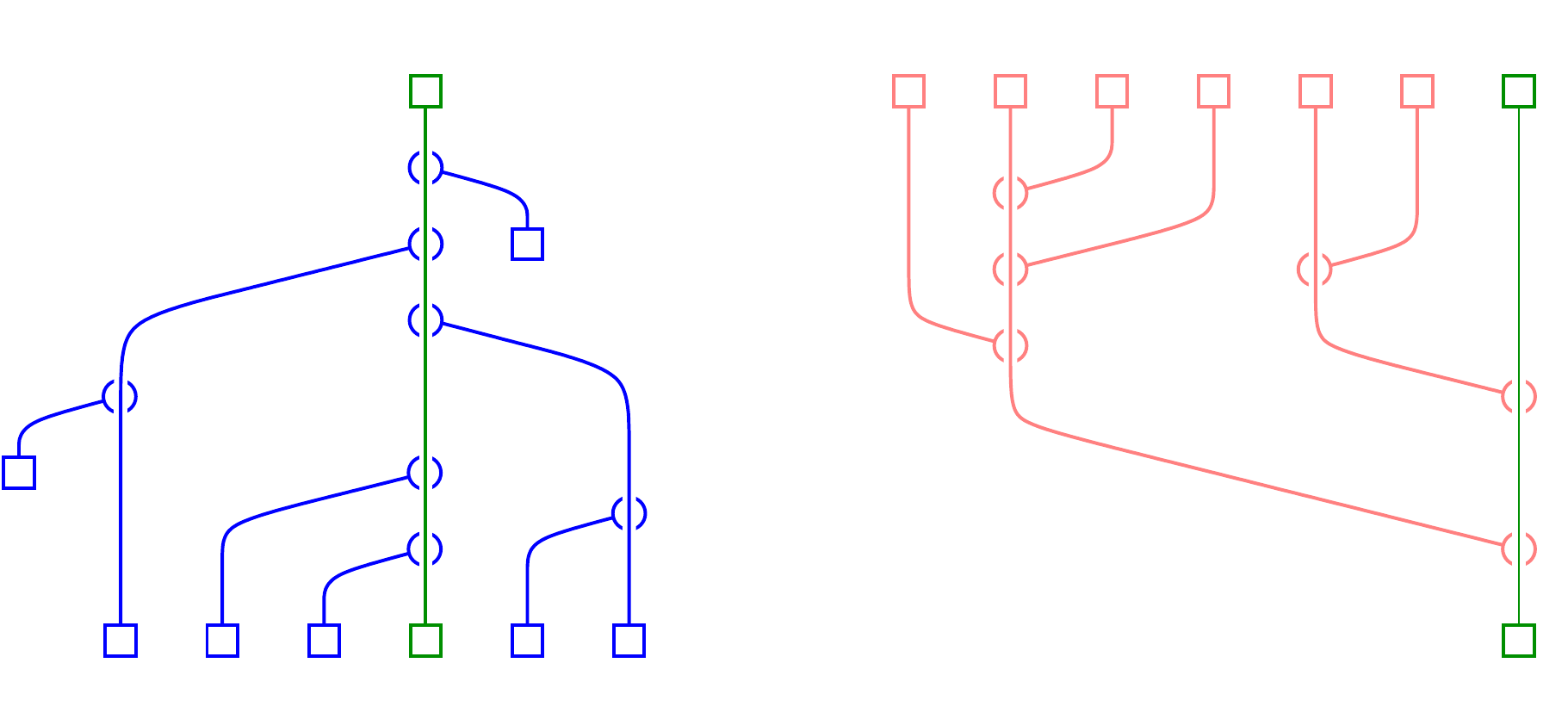_t}}
  \caption{\emph{Left:} the path-decomposition of the binary tree for the map, $f$, displayed in Figure~\ref{fig:maponly}.
  The nodes in its spine (to be defined shortly) are $\T{c}$, $\T{e}$, $\beta$, and $\T{o}$.
  \emph{Right:} upside-down drawing of the path-decomposition of the binary tree for $-f$.
  The nodes in its spine are $\T{d}$, $\T{j}$, and $\beta$.}
  \label{fig:updowntrees}
\end{figure}

\medskip \noindent \textbf{Third step:}
we split every path into two parallel trails.
Let $a = q_0, q_1, \ldots, q_{\ell-1}, q_\ell = b$ be such a path, and recall that $f(q_i) < f(q_{i+1})$ for $0 \leq i \leq \ell-1$.
Assuming $a < b$, the \emph{in-trail} consists of $a$, every $q_i$ whose right child is $q_{i-1}$, and $b$, and the \emph{mid-trail} consists of $a$, every $q_i$ whose left child is $q_{i-1}$, and $b$.
The items in the in-trail are smaller than those in the mid-trail, which motivates the alternative terminology of the \emph{left trail} for the former and the \emph{right trail} for the latter.
If $a > b$, the in-trail is the right trail and the mid-trail is the left trail, so the in-trail corresponds to the in-panel and the mid-trail to the mid-panel in either case.
Both trails \emph{start} at $a$ and \emph{end} at $b$.
All other $q_i$ are \emph{interior} to the trails.
For the special root, $\beta$, and the corresponding lower end of its path, $\alpha$, the order is not defined, so we make an arbitrary choice and call the left trail the \emph{in-trail} and the right trail the \emph{mid-trail}.
We call a trail \emph{empty} if it contains no interior nodes.
In both trails, the items as well as their values are sorted.
To state this formally, assume again that $a < b$:
\smallskip \begin{description}
  \item[{\sf III.1}] writing $a = u_0, u_1, \ldots, u_j = b$ for the nodes along the in-trail,
                we have $u_i > u_{i+1}$ for $0 \leq i \leq j-2$
                and $f(u_i) < f(u_{i+1})$, for $0 \leq i \leq j-1$;
  \item[{\sf III.2}] writing $a = v_0, v_1, \ldots, v_k = b$ for the nodes along the mid-trail,
                we have $v_i < v_{i+1}$ and $f(v_i) < f(v_{i+1})$, for $0 \leq i \leq k - 1$.
\end{description} \smallskip
The respective first inequalities in {\sf III.1} and {\sf III.2} imply that the nodes of the in-trail precede the nodes of the mid-trail.
In contrast, there is no particular order between the values of nodes in different trails.
We draw the trails roughly parallel, like the outline of a banana.
Letting $a, b$ be the lower and upper ends, we call the pair of trails the \emph{banana} spanned by $a, b$.
Except if $b$ is the special root, $b$ is also internal to another trail, so $b$ is where the banana spanned by $a,b$ connects to another banana; see Figure~\ref{fig:conversion}.
We summarize the crucial properties of bananas in a lemma whose proof follows directly from the construction and is thus omitted.
\begin{lemma}[Bananas and Windows]
  \label{lem:bananas_and_windows}
  Let $a, b$ be critical points of a generic map, $f$, with $f(a) < f(b)$.
  The two points span a banana in $\UpTree{f}$ iff $\Window{a}{b}$ is a window of $f$.
  Furthermore, another window, $\Window{p}{q}$, is immediately nested in the in-panel or mid-panel of $\Window{a}{b}$
  iff its maximum, $q$, is an interior node of the in-trail or mid-trail of the banana spanned by $a$ and $b$, respectively.
\end{lemma}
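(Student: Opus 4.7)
My plan is to reduce the assertion to the classical Phase One persistence pairing, by exploiting that Conditions~{\sf I.1} and {\sf I.2} force the binary tree to be the unique Cartesian tree of the critical values of $f$, which in one dimension realises the merge tree of the sublevel set filtration. The bridge from persistence pairs to windows is then supplied by \cref{prop:persistence_in_terms_of_windows}.

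For the first assertion, the bananas of $\UpTree{f}$ are by construction in bijection with the paths of the decomposition, and each path is determined by its leaf $a$ and its upper end $b = \Dth{a}$. Because Conditions~{\sf I.1} and {\sf I.2} uniquely pin down the tree, the subtree rooted at any internal node $b$ consists exactly of the critical items in the maximal positional interval containing $b$ on which $f$ does not exceed $f(b)$; hence the two sublevel-set components that merge at $b$ are precisely those carried by $b$'s left and right subtrees. By the elder rule of persistence, $b$ is paired with the leaf whose value equals the larger of the two subtree minima, and Condition~{\sf II} says exactly this, identifying that leaf as $a$. The special root absorbs the essential class in the standard way. An appeal to \cref{prop:persistence_in_terms_of_windows} then converts these pairs into the spanning pairs of windows of $f$.

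For the second assertion, assume $a < b$ so the in-trail is the left trail; the other cases are symmetric. Let $q$ be an interior node of the in-trail. Since the next lower node on the path is the right child of $q$, the leaf $a$ lies in $q$'s right subtree and hence $q < a$. Because $q \neq \Dth{a}$, Condition~{\sf II} forces $a$ to remain the minimum-valued leaf of $q$'s subtree, so $\Low{q} = a$. Let $p$ be the leaf that starts the path ending at $q$; then $\Window{p}{q}$ exists by Part~1. Applying the elder rule one level up, the window immediately containing $\Window{p}{q}$ is spanned by $\Low{q}$ and $\Dth{\Low{q}}$, which equals $(a, b)$, so $\Window{p}{q}$ is immediately nested in $\Window{a}{b}$. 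Since $p, q < a$, the support of $\Window{p}{q}$ lies to the left of $a$ and therefore inside the in-panel $[x, a] \times [A, B]$. The converse reverses the same chain of equivalences: any window immediately nested in the in-panel of $\Window{a}{b}$ has an upper end $q$ satisfying $\Low{q} = a$ and $\Dth{\Low{q}} = b$, which places $q$ on the path from $a$ to $b$, and the positional inequality $q < a$ then puts $q$ on the in-trail rather than the mid-trail.

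The main obstacle is verifying that the support of $\Window{p}{q}$ really fits inside the in-panel. On the right this follows from $f(a) < f(p)$ and the continuity of the piecewise linear $f$, so the level $f(p)$ is crossed strictly before position $a$; on the left the containment is supplied by the defining property of $[x, y]$ that $f \geq f(a)$ throughout $[x, a]$. These are local checks that become routine once the tree-theoretic identification $\Low{q} = a$ is in hand, so the entire argument really hinges on this single observation about interior nodes of the path.
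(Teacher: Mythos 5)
The paper omits this proof, asserting only that it ``follows directly from the construction,'' so there is no official argument to compare against. Your reconstruction is correct and takes the natural route: Conditions~{\sf I.1}--{\sf I.2} pin down the Cartesian tree, which in one dimension is the merge tree of the sublevel-set filtration, and Condition~{\sf II} is exactly the elder rule, so \cref{prop:persistence_in_terms_of_windows} converts the path endpoints into spanning pairs of (triple-panel or global) windows. The nesting claim is then settled by precisely the facts you isolate, namely $\Low{q} = a$ and $\Dth{\Low{q}} = b$ for any interior node $q$ of the $(a,b)$-banana, with the positional inequality $q < a$ (resp.\ $q > a$) deciding in-panel versus mid-panel. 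The only step I would tighten is the final support-containment: the cleaner observation is that the positional interval covered by $q$'s subtree lies entirely inside $[x,y]$, because there $A \leq f \leq f(q) < B$, and entirely on one side of $a$, because $f \geq A$ there with equality only at $a$; this places the support of $\Window{p}{q}$ in the correct panel in one stroke, whereas the appeal to $f \geq f(a)$ on $[x,a]$ alone does not by itself bound that support away from $x$. With that minor polish the argument is complete and matches what the authors evidently had in mind.
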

\begin{figure}[b!]
  \centering \vspace{0.0in}
  \resizebox{!}{2.2in}{\input{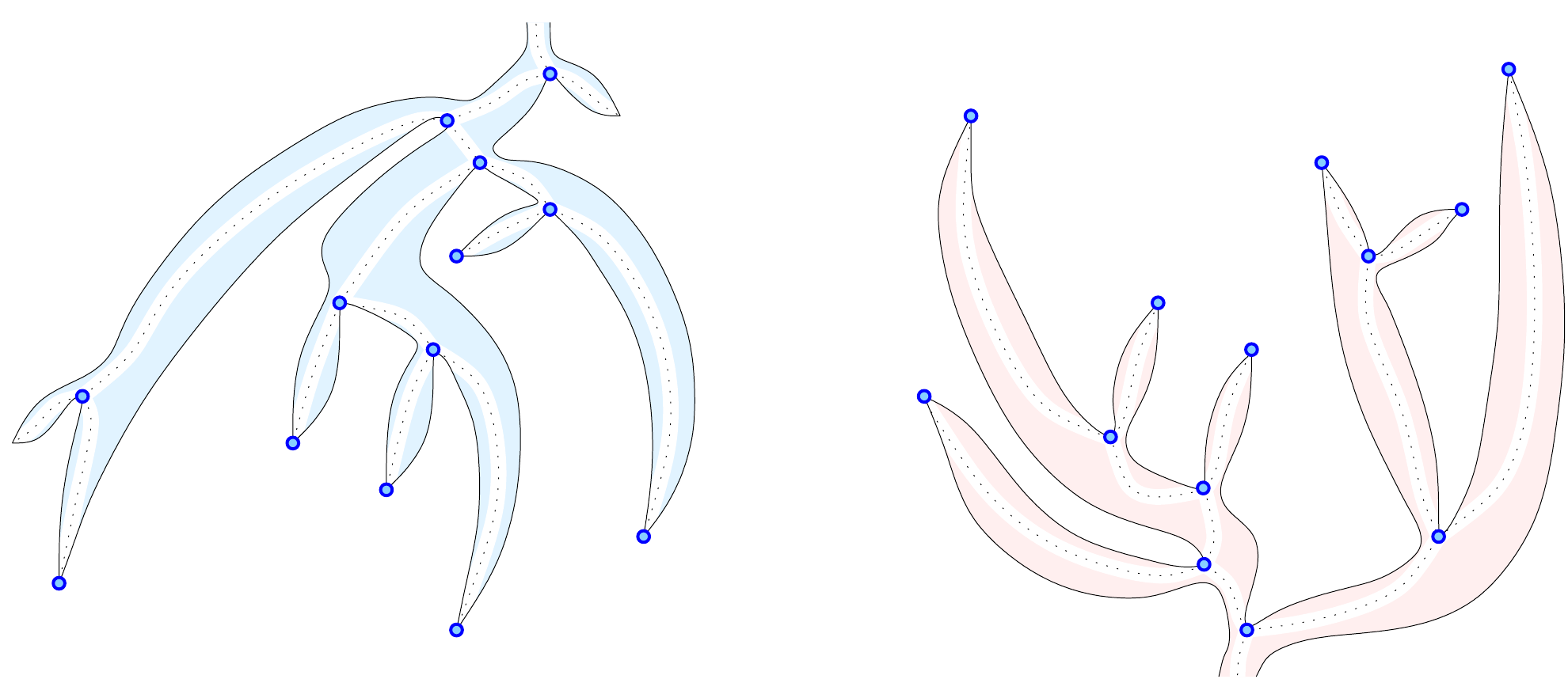_t}}
  \caption{\emph{Left:} the banana tree of the map in Figure~\ref{fig:maponly}, with dotted curves showing the tree before splitting its paths; compare with the \emph{left} drawing in Figure~\ref{fig:updowntrees}.
  \emph{Right:} upside-down drawing of the banana tree for the negated map; compare with the \emph{right} drawing in Figure~\ref{fig:updowntrees}.}
  \label{fig:conversion}
\end{figure}
Each node stores pointers to its neighbors along the trails:
for a node $p$, $\In{p}$ and $\Mid{p}$ point to the first node on the in- and mid-trails beginning at $p$, respectively; a maximum $q$ has additional pointers $\Up{q}$ and $\Dn{q}$, which point to the node above and below $q$ on the same trail.
The banana tree uses additional pointers to connect the ends of its trails:
letting $a$ and $b$ be the lower and upper ends of a trail, we store $\Dth{a} = b$ and $\Low{p} = a$, in which $p$ is any node in the banana other than $b$.
For convenience, this includes $p = a$, which thus stores a pointer to itself.
To obtain the lower end of a banana from its upper end we define $\Birth{b} = \Low{\In{b}} = \Low{\Mid{b}}$.
Observe that each of the two banana trees stores exactly one node per critical item and thus requires $O(n)$ space. The dictionaries and the doubly-linked list require additional $O(m)$ space and hence the total space used by our data structure is linear in the total number of items.

\smallskip
We conclude this section with the assertion that the banana tree of a linear list is unique.
This will be useful in proving some of the algorithms in the subsequent section correct.
\begin{lemma}[Uniqueness of Banana Tree]
  \label{lem:uniqueness_of_banana_tree}
  Given a linear list of distinct values, there is a unique path-decomposed binary tree satisfying Conditions~{\sf I.1}, {\sf I.2}, and {\sf II}, and a resulting unique banana tree satisfying Conditions~{\sf III.1} and {\sf III.2}.
\end{lemma}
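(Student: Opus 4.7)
The plan is to establish uniqueness in three stages, matching the three-stage construction that precedes the lemma. Throughout, I rely on the fact that the values are distinct, so there are no ties to resolve.

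First, I would prove by induction on $n$ that Conditions~{\sf I.1} and {\sf I.2} force a unique full binary tree on the homological critical points. By {\sf I.2}, values strictly increase along any leaf-to-root path, so the root must carry the globally largest value; since the largest value belongs to a maximum, the root is uniquely determined as some $b_j$. By {\sf I.1}, the in-order traversal agrees with the position order, so the left subtree must contain exactly $a_0, b_1, \ldots, a_{j-1}$ and the right subtree must contain exactly $a_j, b_{j+1}, \ldots, a_n$. Each subtree has strictly fewer nodes and must itself satisfy {\sf I.1} and {\sf I.2}, so the inductive hypothesis yields uniqueness of both subtrees and hence of the whole tree.

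Second, once the binary tree is fixed (and the special root $\beta$ is attached above the root), Condition~{\sf II} determines the path decomposition uniquely. For each leaf $a$, the endpoint of the path starting at $a$ is specified deterministically as the nearest proper ancestor $b$ for which $a$ is not the minimum of the subtree rooted at $b$, defaulting to $\beta$ if no such ancestor exists. The path $\Path{a}{b}$ is then just the sequence of ancestors of $a$ up to $b$ in the binary tree, which is unique. I would also briefly verify that the union of these paths, one per leaf, exhausts the edges of the augmented tree and does so without overlap: the leaf-to-internal-node bijection produced by {\sf II} matches the equal number of leaves and internal nodes after adding $\beta$, and every edge $(u,\text{parent}(u))$ lies on exactly the path whose leaf is the minimum of the subtree rooted at $u$.

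Third, with the binary tree and its path decomposition fixed, the split of each path into an in-trail and a mid-trail is forced node by node. For every interior node $q_i$ of a path, $q_i$ is either the left or the right child of its parent $q_{i+1}$ in the binary tree, and the third-step definition assigns $q_i$ to the mid-trail or the in-trail accordingly. The orderings demanded by {\sf III.1} and {\sf III.2} then follow automatically: the value inequality $f(u_i) < f(u_{i+1})$ in each trail is just {\sf I.2} restricted to the trail, while the position inequalities come from {\sf I.1} combined with the left-child/right-child alternation that defines the two trails. The only subtlety, which I would flag explicitly, is the outermost banana spanned by the global minimum $\alpha$ and the special root $\beta$: since $\beta$ has only one child, the child structure does not select between in- and mid-trail, and uniqueness here is obtained by the explicit tie-breaking convention stated in the construction. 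The main obstacle is precisely this bookkeeping around $\beta$ and around verifying that the leaf-to-ancestor map of {\sf II} indeed partitions the edges; both are handled by careful case analysis, after which uniqueness of the banana tree is immediate.
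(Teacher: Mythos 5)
Your argument is correct, and its first two stages coincide with the paper's: the same root-and-subtrees induction for the binary tree (the paper states this in the paragraph immediately after Condition~{\sf I.2}), followed by the observation that Condition~{\sf II} fixes the decomposition into paths, one per leaf. Where you diverge is the third stage. The paper argues by contradiction: it supposes a second banana tree $B'$ satisfying the conditions, merges the two trails of each banana by value (the interleaving being forced by {\sf III.1}, {\sf III.2}, and {\sf I.2}), observes that the merged path-decomposed tree $B_2'$ must equal the unique $B_2$, and concludes $B' = B$. That last implication quietly uses injectivity of the merge map, and the witness for injectivity is precisely the observation you make explicit: each interior node of a path is assigned to the in- or mid-trail according to whether it is a left or right child of the next node up, with the ambiguity at $\beta$ (which has only one child) resolved by the stated convention. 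Your forward, direct argument therefore surfaces where the determinism actually comes from, and closes a step the paper's contrapositive leaves tacit; the paper's merge-back formulation has the mild convenience of reusing the uniqueness of the path-decomposed tree as a black box, but yours is the cleaner version of the third stage.
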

\begin{proof}
  The algorithm that constructs the banana tree of a linear list is deterministic and thus computes a unique such tree, which we denote $B$.
  It therefore suffices to prove that no other banana tree satisfies Conditions \textsf{I.1}, \textsf{I.2}. \textsf{II}, \textsf{III.1}, and \textsf{III.2}.
  We have already seen that there is a unique ordered binary tree that satisfies \textsf{I.1} and \textsf{I.2}.
  Similarly, there is a unique decomposition of this tree into paths that satisfies \textsf{II}.

  To get a contradiction, assume $B' \neq B$ is another banana tree that satisfies \textsf{I.1} to \textsf{III.2}.
  Write $B_2$ and $B_2'$ for the path-decomposed ordered binary tree we get by merging the two trails of each banana in $B$ and $B'$, respectively.
  To satisfy Condition \textsf{I.2}, the merging must preserve the ordering by value, and by Conditions \textsf{III.1} and \textsf{III.2} this is indeed possible.
  But the step from $B'$ to $B_2'$ is deterministic, and so is the step from $B$ to $B_2$.
  We have already established that $B_2' = B_2$, which implies $B' = B$, as claimed.
\end{proof}

The reverse of Lemma~\ref{lem:uniqueness_of_banana_tree} does not hold for the trivial reason that the banana trees do not store the non-critical items.
There is also the more subtle reason that different path-decomposed ordered binary trees can map to the same banana tree.
Indeed, the banana trees do not specify the order of values between parallel trails, so merging them can result in different ordered paths.

\subsection{String and Spine.}
\label{sec:3.3}

It is possible to connect the trails of a banana tree into a single curve such that the homological critical points are listed from left to right according to their positions in the interval.
Considering a banana tree as a graph, the minima and the special root have two neighbors, and the maxima have four neighbors each, so the maxima would appear twice, but one appearance is the mirror of the maximum, namely the upper end of the in-trail.
We list the maximum when we reach the upper end of the mid-trail.
Equivalently, we adopt the following rule for a banana spanned by $a, b$:
if $b < a$, we first list $b$, then walk down the interior nodes of the left trail, then list $a$, and finally walk up the interior nodes of the right trail, and if $b > a$, we do the same except that we list $b$ at the end rather than at the beginning.
The sub-bananas are listed recursively when their upper ends are encountered.

\smallskip
As an example consider the banana tree sketched in Figure~\ref{fig:conversion} on the left:  after starting at the special root, we first encounter the left hook, then $c$, $d$, $e$ and so on until $n$, $o$, and finally the right hook before returning to the special root.
We call this the \emph{string} of the banana tree.
It is not difficult to see that this is also the in-order traversal of the tree after the first step of the banana tree construction.

\smallskip
We continue with the definition of an important subset of nodes in a banana tree.
The \emph{left spine} consists of the special root, $\beta$, as well as the first interior node of every left trail with upper end in the left spine.
Symmetrically, the \emph{right spine} consists of $\beta$ as well as the first interior node of every right trail with upper end in the right spine.
The two overlap in $\beta$, and the \emph{spine} is the union of the left and the right spines; see Figure~\ref{fig:updowntrees} for an example.
The nodes in the spine can also be characterized in terms of the windows they span.
\begin{lemma}[Spines and Windows]
  \label{lem:spines_and_windows}
  Let $\UpTree{f}$ be the banana tree of the map $f$.
  Then
  \smallskip \begin{enumerate}[(i)]
    \item the special root $\beta$ and the global minimum span the unique global window of $f$;
    \item a node $b \neq \beta$ of the left spine spans a window with wave that is short on the left;
    \item a node $b \neq \beta$ of the right spine spans a window with wave that is short on the right;
    \item all other internal nodes of $\UpTree{f}$ span windows with simple waves.
  \end{enumerate}
\end{lemma}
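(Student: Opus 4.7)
\medskip\noindent\textbf{Proof plan.}
The plan is to treat (i) directly from the construction of the banana tree and to derive (ii)--(iv) from a single structural claim that identifies the mirror of each non-$\beta$ internal node with either an interior peak of its parent banana (yielding a simple wave) or with the parent banana's own mirror (inheriting a short wave on the corresponding side).

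For (i), I would apply Condition \textsf{II} to the global minimum $\alpha$. The path beginning at $\alpha$ can stop only at a strict ancestor in whose subtree $\alpha$ fails to be the minimum-valued leaf; since $\alpha$ is the global minimum of $f$, no such ancestor exists in the original tree, so the path rises all the way to the special root and $(\alpha, \beta)$ spans a banana. The support associated to this banana is the connected component of $f^{-1}[f(\alpha), f(\beta)]$ containing $\alpha$, which equals the whole domain $[0, m+1]$ because $f(\beta)$ strictly exceeds every value of $f$. By Proposition~\ref{prop:persistence_in_terms_of_windows}(iii) this is the unique global window.

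For (ii)--(iv), let $b \ne \beta$ be an internal node of $\UpTree{f}$, let $a = \Birth{b}$, and let $(p, q)$ be the parent banana in which $(a, b)$ is immediately nested; existence is guaranteed by Lemma~\ref{lem:bananas_and_windows}. Then $b$ is interior to exactly one of $(p, q)$'s two trails. The structural claim I would prove is the following: the wave of $\Window{a}{b}$ is simple unless $b$ is the interior node of that trail adjacent to $q$ (the ``extreme'' interior node, highest in value and extremal in position by Conditions~\textsf{III.1}, \textsf{III.2}); and in the extreme case, the mirror of $b$ coincides with the mirror of $\Window{p}{q}$ on the corresponding side, so $\Window{a}{b}$ inherits the wave-type of $\Window{p}{q}$. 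For non-extreme $b$, the next-higher interior node on the same trail acts as a taller peak adjacent to $b$ on the mirror side, forcing the function value at the mirror to equal $f(b)$.

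With the structural claim in hand, (ii)--(iv) follow by induction on the chain from $\beta$. A node $b \ne \beta$ lies on the left spine iff it is the extreme interior node of the left trail at every step of the chain $\beta = q_0, q_1, \dots, q_k = b$; iterating the claim, the mirror of $b$ equals the left-side mirror of $\Window{\alpha}{\beta}$, which is the domain endpoint $0$, where $f$ takes the hook value, strictly less than $f(b)$. Thus $\Window{a}{b}$ has short wave on the left, proving (ii); the right-spine case is symmetric, proving (iii). For any other internal $b$, the chain of extreme choices breaks at some step, so the structural claim places its mirror strictly inside a parent panel adjacent to a taller peak, giving a simple wave and establishing (iv). I expect the main obstacle to be the structural claim itself: one has to verify that the extreme interior node of a trail is exactly the node whose window support reaches the parent's mirror, which requires translating the position-monotonicity of Conditions \textsf{III.1} and \textsf{III.2} into a statement about connected components of $f^{-1}[A, B]$ and ruling out encroachment by more deeply nested bananas.
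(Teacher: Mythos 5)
Your proposal is a genuinely different route from the paper's. The paper dispatches (ii)--(iv) in a few lines: it asserts (appealing to the binary-tree construction, without detailed proof) that a node $b$ is on the left spine iff $f(b) > f(q)$ for every item $q < b$, observes that this condition is exactly what makes the window's support unconstrained on the left (so the mirror sits at the start of the domain and the wave is short), and then runs the same equivalence in the other direction to close off (iv). You instead propose an induction along the chain of nested bananas, propagating the mirror downward from $\beta$, which ties the short-wave property directly to the recursive definition of the spine rather than to a global ``running maximum'' characterization. That is a legitimate alternative, and in fact the inductive organization interfaces more cleanly with the recursive way the paper defines the spine.

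The one substantive gap is in the structural claim, and you correctly flagged it as the crux. As stated, the claim asserts mirror inheritance for the extreme interior node of \emph{either} trail. That is right on the in-trail, i.e.\ the trail lying on the parent's mirror side: for $p < q$ the topmost in-trail node $b = \In{q}$ has the same left boundary as $q$ does, so $b$'s mirror lands in the same gap and the wave type (simple versus short) is inherited. But it is wrong on the mid-trail. There the extreme node $b = \Mid{q}$ has $\Birth{b}$ on the opposite side of $b$, so its mirror is the \emph{mid-panel} boundary, and that boundary is $q$ itself — a real maximum with $f(q) > f(b)$ — which forces $f$ at the mirror to equal $f(b)$, i.e.\ a simple wave, no matter what the wave type of $\Window{p}{q}$ is. The only exception is when $q = \beta$, the special root, which is not a real item and so the support runs to the domain endpoint, giving short wave; this is exactly the first step of the right-spine chain. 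So the claim should be split: inheritance on the in-trail; automatic simple wave on the mid-trail unless $q = \beta$. With that repair the induction goes through, because after leaving $\beta$ every banana on the left-spine chain has $\Birth{q_i} < q_i$ and the chain stays on in-trails, while the right-spine chain takes one mid-trail step out of $\beta$ (covered by the exception) and then stays on in-trails with $\Birth{q_i} > q_i$. You also need the ``coincides'' in the claim to be read as ``is bounded by the same item, hence has the same wave type'' rather than literal equality of the mirror coordinates; the mirrors sit at different heights in $[A,B]$ between the same pair of adjacent items.

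One further remark in your favor: the running-maximum condition $f(b) > f(q)$ for all $q < b$ (which the paper uses) can hold for nodes that sit on the leftmost path of the underlying binary tree but whose pairing minimum lies to their right ($\Birth{b} > b$), and such nodes are interior to a mid-trail, not to a left in-trail, so they are not on the left spine under the recursive definition and their wave is simple or short on the right, not on the left. Your induction, because it follows the recursive definition step by step, does not stumble over this. So once the in-trail/mid-trail asymmetry is made explicit, your argument is not only correct but sidesteps a subtlety the paper's one-line characterization elides.
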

\begin{proof}
  It is not necessary to give an argument for \textsf{(i)}.
  Since \textsf{(ii)} and \textsf{(iii)} are symmetric, it suffices to prove \textsf{(ii)}.
  By the first step of the construction of a banana tree in Section~\ref{sec:3.2}, a node $b$ belongs to the left spine iff $f(b) > f(q)$ for all items $q < b$.
  It follows that the window spanned by $b$ is not constrained on the left, so it extends to the beginning of the list.
  In other words, the window is with short wave, and the wave is short on the left.

  \smallskip
  The condition $f(b) > f(q)$ for all $q < b$ is also necessary to have a short wave on the left, so all such windows are spanned by nodes in the left spine.
  Symmetrically, all windows with short wave on the right are spanned by nodes in the right spine.
  Hence, all other internal nodes span windows with simple wave, which proves \textsf{(iv)}.
\end{proof}

According Lemma~\ref{lem:spines_and_windows}, the special root spans the global window, the remaining nodes in the spine span windows with short wave, and the remaining internal nodes span windows with simple wave.
This holds for $\UpTree{f}$ as well as for $\UpTree{-f} = \DnTree{f}$.
Since a window with short wave of $f$ is not a window of $-f$, and vice versa \cite[Theorem~4.2]{BCES21}, this implies that the min-max pairs with the maximum in the spine (other than the special root) are distinct in the two trees.
In contrast, the min-max pairs with the maximum not in the spine are the same in $\UpTree{f}$ and $\DnTree{f}$.

\smallskip
The algorithms for splitting and gluing banana trees in Section~\ref{sec:4.3} need to recognize nodes in the spine.
We thus label these nodes and maintain the labeling when changes occur.

\section{Algorithms}
\label{sec:4}

Call the banana trees of a map and its negative the \emph{up-tree} and \emph{down-tree} of $f$, denoted $\UpTree{f}$ and $\DnTree{f}$.
We describe the construction of both trees, the extraction of the augmented persistence diagram from these trees, and operations that maintain the trees subject to local changes of the data.
We begin with the construction of the up-tree, which takes time linear in the number of items.
Together with the extraction, this gives a linear-time algorithm for the augmented persistence diagram; compare with the algorithm of Glisse \cite{Gli23}, which constructs the persistence diagram in linear time but not the augmentation.
We prove the correctness of the local and topological maintenance operations in \cref{sec:local-maintenance-correctness,sec:topo-correctness}.

\subsection{Construction.}
\label{sec:4.1}

We explain the construction of the up-tree for a generic piecewise linear map defined by a list of $m$ items: $f(i) = c_i$ for $1 \leq i \leq m$.
For convenience, we add items $0$ and $m+1$ at the two ends, with $f(0) = \infty$ and $f(m+1) = \infty - 1$.
The algorithm processes the map from left to right and uses a stack to store a subset of the minima and maxima so far encountered.
Specifically, while processing $j$, the stack stores pairs $(a_0, b_0), (a_1, b_1), \ldots, (a_k, b_k)$ such that
\smallskip \begin{itemize}
   \item $0 = a_0 = b_0 < a_1 < b_1 < a_2 < \ldots < a_k < b_k < j$,
   \item $f(i) \leq f(b_\ell)$ for all $1 \leq \ell \leq k$ and $b_{\ell-1} < i < j$,
   \item $f(i) \geq f(a_\ell)$ for all $1 \leq \ell \leq k$ and $b_{\ell-1} < i \leq b_\ell$.
\end{itemize} \smallskip
The first two properties imply that the points $(b_\ell, f(b_\ell))$ form a descending staircase in the plane, and all points $(i, f(i))$ with $i < j$ that are not steps of the staircase lie below the staircase.
The third property says that item $a_\ell$ minimizes the value among all items $i$ vertically below the step of $b_{\ell}$.

\smallskip
The only relevant items are the critical points, so we eliminate non-critical items in a preliminary scan and connect the remaining items with prv- and nxt-pointers.
Here we consider the artificially added items as maxima, so $0$ is the first item in this list, followed by $\Nxt{0}$, $\Nxt{\Nxt{0}}$, etc., until we reach item $m+1$.
For later use, each remaining item in the list is provided with the appropriate subset of initially \texttt{null} in-, mid-, up-, dn-, low-, and dth-pointers.
During the main scan the algorithm maintains unfinished bananas, each corresponding to a pair on the stack, as well as a value $A$, which, after the current item has been processed, is the item with the minimum value to the right of the top-most item on the stack. 
Whenever we pass a  minimum, $j$, we set $A=j$, as the immediately preceding item must have been a maximum.
Whenever we pass a maximum, the stack is accessed through the standard functions \textsc{Push} and \textsc{Pop}, as well as through function \textsc{Top}, which returns the item $b$ at the top of the stack, but without removing the pair $(a,b)$ from the stack.

\smallskip
Suppose $(a, b)$ is the pair at the top of the stack, and we pop it off because $f(b) < f(j)$.
If $f(A) < f(a)$, we now know that the unfinished banana spanned by $a,b$ is correct, so we finalize it in $\textsc{FixBanana}$.
Otherwise, $A, b$ span a banana, which must belong to a right trail as $b < A$. 
We, thus, detach $b$ on the left, attach it on the right, and finalize the banana in $\textsc{FixBanana}$.
Then we set $A = a$ and push $(A,j)$ on the stack after creating its temporary banana and attaching it on the left:
\begin{tabbing}
  m\=m\=m\=m\=m\=m\=m\=m\=m\=\kill
  \> \> $\Dn{0} = 1$; $\textsc{Push} (a_0=0, b_0=0)$; $j = 0$; \\*
  \> \> \texttt{repeat} \= $j = \Nxt{j}$; \\*
  \> \> \> \texttt{if} $j$ is minimum \texttt{then} $A = j$ \texttt{endif}; \\*
  \> \> \> \texttt{if} \= $j$ is maximum \texttt{then} \\*
  \> \> \> \> \texttt{while} \= $f(j) > f(\textsc{Top})$ \texttt{do}
  $(a, b) = \textsc{Pop}$; \\*
  \> \> \> \> \> \texttt{if} \= $f(A) < f(a)$ \texttt{then} $\textsc{FixBanana} (a, b)$ \\*
  \> \> \> \> \> \> \texttt{else} \= attach $b$ below $j$ on the right; \\*
  \> \> \> \> \> \> \> $\textsc{FixBanana} (A, b)$;
                       $A = a$ \\*
  \> \> \> \> \> \texttt{endif} \\*
  \> \> \> \> \texttt{endwhile}; \\*
  \> \> \> \> attach $j$ below $b$ on the left; $\textsc{Push} (A, j)$; \\*
  \> \> \> \> \texttt{if} $j = m+1$ \= \texttt{then} $\textsc{FixBanana} (A, j)$ \texttt{endif} \\*
  \> \> \> \texttt{endif} \\*
  \> \> \texttt{until} $j = m+1$.
\end{tabbing}
To ``attach $j$ below $b$ on the left'', where  $j$ is the currently processed item and $b$ is a past item,
we create an unfinished banana with upper end $j$ and temporary in- and mid-pointers, and set the relevant pointers of $j$, as illustrated in Figure~\ref{fig:attach} on the left:
\begin{tabbing}
  m\=m\=m\=m\=m\=m\=m\=m\=m\=\kill
  \> \> $\Up{j} = b$; $\In{j} = \Dn{b}$;
  $\Mid{j} = \Prv{j}$; $\Dn{j} = \Nxt{j}$; \\*
  \> \> $\Dn{b} = \Up{\In{j}} = \Up{\Mid{j}} = j$.
\end{tabbing}
Whenever the pair containing $j$ is later popped off the stack, and a comparison of the values shows that $j$ belongs to a left trail, then these pointers remain unchanged in the final banana.

\begin{figure}[htb]
  \centering \vspace{-0.0in}
  \resizebox{!}{1.4in}{\input{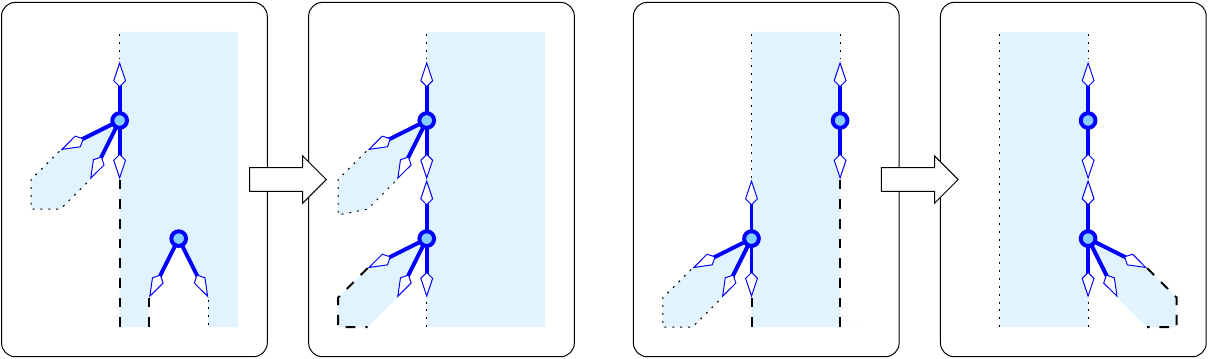_t}}
  \caption{\emph{Left:} item $j$ it temporarily attached on the left, which creates a banana whose upper end, $j$, is interior to a left trail.
  \emph{Right:} the temporary attachment of item $b$ on the left is resolved, and $b$ is attached on the right, which creates a banana whose upper end, $b$, is interior to a right trail.}
  \label{fig:attach}
\end{figure}
\noindent
If, however, this is not the case, we execute ``attach $b$ below $j$ on the right'', which suitably updates these values to reflect the fact that $j$ belongs to a right trail.
Specifically, we detach $b$ on the left and attach it on the right, as illustrated in Figure~\ref{fig:attach} on the right:
\begin{align*}
  &\Dn{\Up{b}} = \In{b}; \Up{\In{b}} = \Up{b}; \\*
  &\Up{b} = j; \In{b} = \Prv{j}; \textsf{aux} = \Dn{b}; \Dn{b} = \Mid{b}; \Mid{b} = \textsf{aux}; \\*
  &\Prv{j} = \Up{\In{b}} = b.
\end{align*}
To fix a banana, we set the low-pointers of its interior nodes and the minimum, as well as the in-, mid-, and dth-pointers of the minimum:
\begin{tabbing}
  m\=m\=m\=m\=m\=m\=m\=m\=m\=\kill
  \> \texttt{function} $\textsc{FixBanana} (a, b)$: \\*
  \> \> $q = b$; $p = \In{b}$; \= \texttt{while} $p \neq a$ \texttt{do} $\Low{p} = a$; $q=p$; $p = \Dn{p}$ \texttt{endwhile}; $\In{a} = q$; \\*
  \> \> $q = b$; $p = \Mid{b}$; \texttt{while} $p \neq a$ \texttt{do} $\Low{p} = a$; $q=p$; $p = \Dn{p}$ \texttt{endwhile}; $\Mid{a} = q$; \\*
  \> \> $\Low{a} = a$; $\Dth{a} = b$.
\end{tabbing}

\noindent
To argue the correctness of the algorithm, we formulate three invariants maintained by the algorithm.
Let $j$ denote the current item, and distinguish between \emph{past items}, $b < j$, and \emph{future items}, $b > j$.
\smallskip \begin{enumerate}[(i)]
  \item The prv-pointers of the future items are unchanged, except possibly the first one, which points to $j$.
  In contrast, all nxt-pointers remain unchanged throughout.
  \item If a past item, $b$, is on the stack, then the banana rooted at $b$ is temporary and unfinished.
  The latter means that the in- and mid-pointers of $b$ and the up- and dn-pointers of $b$ and all interior nodes are in place, but not necessarily the remaining pointers that define the banana.
  In contrast, all its sub-bananas are complete.
  \item Right before taking $b$ off the stack, it satisfies (ii) except that the banana rooted at $b$ is no longer temporary.
  After taking $b$ off the stack, its banana is complete, which means that all pointers of its nodes are in place.
\end{enumerate} \smallskip
It is easy to see that \textsf{(i)} is maintained, as there is only one place where a prv-pointer is altered, and there is no place where a nxt-pointer is altered.
Invariant \textsf{(ii)} holds because all interior nodes of the banana rooted at $b$ have been taken off the stack in the past, and their bananas are complete by Invariant \textsf{(iii)}.
In particular, this implies that each unfinished banana spanned by $a, b$ has a path from $\In{b}$ along dn-pointers to $a$ defining its in-trail, and similarly from $\Mid{b}$ along dn-pointers to $a$ defining its mid-trail.
Assuming \textsf{(ii)}, function \textsc{FixBanana} adds the necessary pointers to complete the banana, which implies \textsf{(iii)}.

\medskip \noindent \textbf{Extraction.}
We next discuss how to compute the augmented persistence diagram from the up- and the down-tree of $f$.
Each banana in $\UpTree{f}$ corresponds to a point in $\Ord{}{f}$, with the exception of the banana of the special root, which corresponds to a point in $\Ess{}{f}$.
We find these points and the nesting relation by recursively walking along the trails from bottom to top.
We are handed the upper end of each pair of trails, so first we jump to the lower end.
The recursive function that enumerates all points and arrows is called with two parameters: the upper end of two parallel trails, and the point in the persistence diagram that corresponds to the pair of trails that contain that upper end as an interior node.
Initially, the upper end is the special root, and the point is empty:
\begin{tabbing}
  m\=m\=m\=m\=m\=m\=m\=m\=m\=\kill
  \> \> \texttt{function} $\textsc{Walk} (b, pnt)$: \\*
  \> \> \> $a = \Birth{b}$;
           output $\Point{a}{b} = (f(a), f(b))$ and $\Arrow{a}{b} = (\Point{a}{b}, {pnt})$; \\*
  \> \> \> $x = \In{a}$;
  \texttt{while} $x \neq b$ \texttt{do} $\textsc{Walk} (x, \Point{a}{b})$; $x = \Up{x}$ \texttt{endwhile}; \\*
  \> \> \> $x = \Mid{a}$;
  \texttt{while} $x \neq b$ \texttt{do} $\textsc{Walk} (x, \Point{a}{b})$; $x = \Up{x}$ \texttt{endwhile}.
\end{tabbing}
To complete the augmented persistence diagram, we also construct $\DnTree{f}$ and apply the recursive function to its parallel trails, with the only difference that the banana of the special root does not correspond to any point in $\Dgm{}{f}$.

\medskip \noindent \textbf{Summary.}
After removing all non-critical items, which takes $O(m)$ time, the construction of the two banana trees as well as the extraction of the augmented persistence diagram takes only $O(n)$ time.
Indeed, the main scan of the construction algorithm completes each banana only once, and altogether touches each item only a constant number of times.
We summarize our findings for later reference.
\begin{theorem}[Time to Construct]
  \label{thm:time_to_construct}
  Let $f \colon [0,m+1] \to \Rspace$ be a generic piecewise linear map with $n$ maxima.
  After removing all non-critical items in $O(m)$ time, $\UpTree{f}$ and $\DnTree{f}$ can be constructed in $O(n)$ time, and the augmented persistence diagram of $f$ can be computed from these trees in $O(n)$ time.
\end{theorem}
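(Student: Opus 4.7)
The plan is to prove the three time bounds separately, in the order they appear in the theorem: the $O(m)$ preprocessing, the $O(n)$ construction of $\UpTree{f}$ and $\DnTree{f}$, and the $O(n)$ extraction of the augmented persistence diagram.

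For the preprocessing, I would make one left-to-right pass through the $m$ items. An item is critical iff its value is a local extremum relative to its two neighbors (with the hooks taking the two endpoints into account), which can be checked in $O(1)$ by looking at the previous and next items. Non-critical items are spliced out of the doubly-linked list in $O(1)$ per item, and the surviving items get their \textsf{prv} and \textsf{nxt} pointers rewired and their in-, mid-, up-, dn-, low-, and dth-pointers initialized to \texttt{null}. This totals $O(m)$. After this step, only the $\Theta(n)$ critical items remain in the list that the main scan processes.

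For the construction of $\UpTree{f}$, the accounting is an amortized charging argument against the $n$ critical items. The main scan handles each $j$ with $O(1)$ work outside of the inner \texttt{while} loop and outside of calls to \textsc{FixBanana}. Every item is pushed on the stack at most once and popped at most once, so the total number of iterations of the inner \texttt{while} across the whole run is at most $n$, and each iteration, apart from \textsc{FixBanana}, does only $O(1)$ pointer surgery (either finalizing or performing the left-to-right reattachment of $b$ shown in the right of Figure~\ref{fig:attach}). The remaining cost is \textsc{FixBanana}. The key observation here, and what I expect to be the main obstacle in writing the proof cleanly, is that although a single call to \textsc{FixBanana}$(a,b)$ walks the two trails of the banana spanned by $a,b$ and touches every interior node once, by Invariants \textsf{(ii)} and \textsf{(iii)} these interior nodes are exactly the upper ends of the sub-bananas already completed and popped off the stack, and each critical item is an interior node of at most one banana. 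Summing over all bananas gives at most $n$ interior-node touches across the entire algorithm, so the total \textsc{FixBanana} cost is $O(n)$. The construction of $\DnTree{f}$ is the same algorithm applied to $-f$, so it is also $O(n)$.

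For the extraction, I would argue that the recursive procedure \textsc{Walk} visits each node of $\UpTree{f}$ only a constant number of times: each banana $(a,b)$ triggers exactly one call \textsc{Walk}$(b,\cdot)$, which does $O(1)$ work plus one \texttt{while}-loop iteration per interior node of its in- and mid-trails, and each such iteration recurses into exactly one sub-banana. Since each critical item is the upper end of at most one banana and an interior node of at most one trail, the total work is proportional to the number of nodes in $\UpTree{f}$, hence $O(n)$. The same reasoning applies to $\DnTree{f}$, and the arrows of the augmented diagram are produced inside these same calls via the $pnt$ parameter, so no extra traversal is needed. Combining the three bounds yields the theorem.
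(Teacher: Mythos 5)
Your proposal is correct and follows essentially the same approach as the paper: the paper's proof is the terse observation that "the main scan of the construction algorithm completes each banana only once, and altogether touches each item only a constant number of times," and your argument is simply a more detailed unpacking of that — the stack push/pop accounting bounds the outer work, and the fact that each maximum is an interior node of at most one banana bounds the total cost of \textsc{FixBanana} across the whole run. The extraction argument likewise matches the paper's reasoning.
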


\subsection{Local Maintenance.}
\label{sec:4.2}

Recall that the banana trees store all critical items but none of the non-critical items.
As a temporary exception, %
we \emph{delete} items by first adjusting their values until they become non-critical.
To formalize the operation that \emph{adjusts} the value of an item, $j$, from $c_j$ to $d_j$, we write $f, g \colon [0,m+1] \to \Rspace$ for the maps before and after adjustment, so $g(i) = f(i)$ for $0 \leq i \neq j \leq m+1$, and $f(j) = c_j$, $g(j) = d_j$.
To avoid complications near the endpoints, assume $3 \leq j \leq m-2$. We give details on how to treat endpoints in \cref{sec:scenario-correctness}.
We modify the trees while following the \emph{straight-line homotopy} from $f$ to $g$, which is the $1$-parameter family of maps $h_\lambda \colon [0,m+1] \to \Rspace$
defined by $h_\lambda (x) = (1-\lambda) f(x) + \lambda g(x)$ for $0 \leq \lambda \leq 1$.
Clearly, $h_0 = f$ and $h_1 = g$.

\smallskip
The homotopy reduces to a sequence of \emph{interchanges}---of two maxima or two minima---followed or preceded by a \emph{cancellation} or an \emph{anti-cancellation}, which reflect the disappearance or appearance of a point into or from the diagonal of the persistence diagram, or a \emph{slide}, which occurs when a critical item becomes non-critical due to a non-critical neighbor becoming critical.
We will discuss these operations in detail below. 
Among the adjustments, we consider two scenarios:
\smallskip \begin{description}
  \item[{\sc A:}] item $j$ is non-critical in $f$ and increases its value;
  \item[{\sc B:}] item $j$ decreases its value until it becomes non-critical in $g$.
\end{description} \smallskip
The scenario in which $j$ is non-critical and decreases its value is symmetric to A, and either of the two is applied after we insert a new item.
The scenario in which $j$ increases its value until it becomes non-critical is symmetric to B, and either of the two is needed before we delete an item.
Other adjustments are subsequences or compositions of Scenarios~A and B or their symmetric versions.

\medskip \noindent \textbf{Scenario~A.}
We increase the value of a non-critical item, $j$, and we assume that it becomes critical, else up-tree and down-tree would not be affected.
We consider the case in which $f(j-1) < f(j) < f(j+1)$ and $f(j-1) < g(j) > f(j+1)$, i.e., the item $j$ becomes a maximum in $g$.
If $j+1$ is non-critical in $f$, it becomes a minimum in $g$ and the number of critical items increases by two.
If $j+1$ is critical (a maximum), then it becomes non-critical in $g$ and item $j$ replaces it as maximum; the number of critical items does not change.
In the former case we perform an anti-cancellation to introduce the banana spanned by $j+1$ and $j$;
in the latter case we perform a slide.
Afterwards we fix the position of $j$ in the up-tree and down-tree by performing a sequence of interchanges:
\begin{tabbing}
  m\=m\=m\=m\=m\=m\=m\=m\=m\=\kill
  \> \> \texttt{if} $g(j) > f(j+1)$ \texttt{then} \\*
  \> \> \> \texttt{if} $f(j+1) < f(j+2)$ \= \texttt{then} anti-cancel $j$ and $j+1$ in $\UpTree{f}$ and $\DnTree{f}$ \\*
  \> \> \>                                    \> \texttt{else} slide $j+1$ to $j$ in $\UpTree{f}$ and $\DnTree{f}$ \\*
  \> \> \> \texttt{endif}; \\*
  \> \> \> set $q = \Up{j}$ in $\UpTree{f}$; \\*
  \> \> \> \texttt{while} $f(q) < g(j)$ \texttt{do} \=
    interchange $j$ and $q$ in $\UpTree{f}$ and $\DnTree{f}$; \\*
  \> \> \> \> $q = \Up{j}$ in $\UpTree{f}$ \\*
  \> \> \> \texttt{endwhile} \\
  \> \> \texttt{endif}.
\end{tabbing}
Note that we iterate with $q = \Up{j}$ and not with $q = \Up{q}$.
This is not a mistake because the interchange in $\UpTree{f}$ is of two maxima, $j$ and $q$, which swaps the two nodes (see below).
In contrast, the interchange in $\DnTree{f}$ is of two minima, albeit they are the same two items, $j$ and $q$.
Performing these interchanges simultaneously, we save the effort of independently finding the next relevant interchange of minima, which would be costly.
The correctness of this strategy is guaranteed by Lemma~\ref{lem:coupling_of_interchanges}, which we will state and prove after formalizing the notion of an interchange.

\medskip \noindent \textbf{Scenario~B.}
The symmetric version of Scenario~A---in which $j$ decreases its value---is implemented accordingly, by switching the roles of the up-tree and the down-tree.
We use the simultaneous interchange of maxima and minima also in the implementation of this inverse of Scenario~A.
There are again symmetric cases, and we consider the case in which $f(j-1) < f(j+1)$ and $f(j-1) < f(j) > f(j+1)$.
Furthermore, we assume that item $j$ is interior to a left trail.
The operation begins with a sequence of interchanges of maxima in the up-tree and of minima in the down-tree, followed by a cancellation or by a slide:
\begin{tabbing}
  m\=m\=m\=m\=m\=m\=m\=m\=m\=\kill
  \> \> \texttt{loop} $q = \arg\max\{f(\Dn{j}), f(\In{j}), f(\Mid{j})\}$ in $\UpTree{f}$; \\*
  \> \> \> \texttt{if} $f(q) > f(j+1)$ \= \texttt{then}
  interchange $q$ and $j$ in $\UpTree{f}$ and $\DnTree{f}$ \\*
  \> \> \>                             \> \texttt{else} \texttt{exit}
  \texttt{endif} \\*
  \> \> \texttt{forever}; \\
  \> \> \texttt{if} $f(j+1) < f(j+2)$ \= \texttt{then}
         cancel $j$ with $j+1$ in $\UpTree{f}$ and $\DnTree{f}$ \\*
  \> \>                               \> \texttt{else}
         slide $j$ to $j+1$ in $\UpTree{f}$ and $\DnTree{f}$ \\*
  \> \> \texttt{endif}.
\end{tabbing}
Note that $q$ is either a maximum or its value is less than $f(j+1)$ as $f(j-1) < f(j+1)$, so we will not attempt to interchange a maximum, $j$, with a minimum, $q$, which would be impossible indeed.
We continue with the details for the interchanges and (anti-)cancellations.

\medskip \noindent \textbf{Interchange of maxima.}
An interchange between two maxima with $f(j) < f(q)$
has structural consequences only if $q = \Up{j}$, as this is the only case in which a uniqueness condition, namely {\sf III.1} or {\sf III.2}, is violated.
We will see that the algorithm does not ever get into the situation of attempting any other interchange.
Assume that $q$ is on a left trail, which is the situation depicted in Figure~\ref{fig:maxinterchange}.
Let $i$ and $p$ be such that $j = \Dth{i}$ and $q = \Dth{p}$.
If $j = \Dn{q}$, we distinguish two cases depending on the order of $f(i)$ and $f(p)$.
\smallskip \begin{description}
  \item[{\sc Case 1:}] $f(i) < f(p)$.
    Remove $q$ from its trail and add it right below $j$ in $j$'s in-trail, as illustrated in Figure~\ref{fig:maxinterchange} in the top left.
    Adjust the pointers of the involved nodes accordingly.
  \item[{\sc Case 2:}] $f(i) > f(p)$.
    Exchange $j$ and $q$ as upper ends of their respective bananas, remove $q$ from its trail and add it right below $j$ in $j$'s mid-trail, as illustrated in Figure~\ref{fig:maxinterchange} in the top right.
    Adjust pointers, and in particular set $\Dth{i} = q$ and $\Dth{p} = j$.
    The in-trail of $i$ becomes its mid-trail and vice versa.
\end{description} \smallskip
In both cases, $j$ joins the left spine of the up-tree iff $q$ is a node of the left spine already before the operation.
The cases where $j = \Mid{q}$ or $j = \In{q}$ are similar to the reverse of the cases with $j = \Dn{q}$, and are illustrated in the bottom row of \cref{fig:maxinterchange}.
There are also the inverse operations (reading Figure~\ref{fig:maxinterchange} from right to left), which apply when $j = \Up{q}$,
i.e., the case $f(j) > f(q)$, and the symmetric cases where $q$ is on a right trail.
\begin{figure}[htb]
  \centering \vspace{0.0in}
  \resizebox{!}{2.9in}{\input{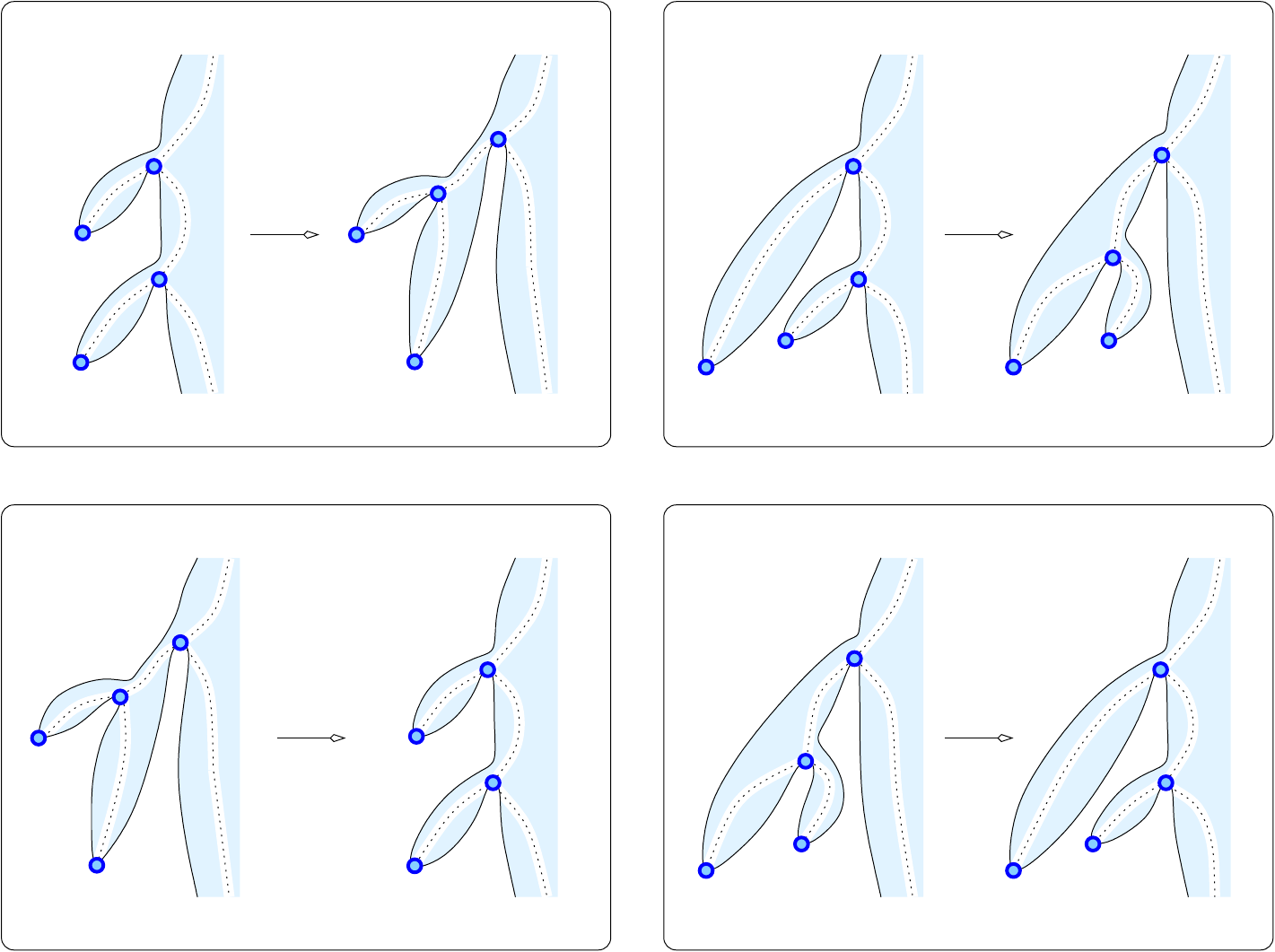_t}}
  \caption{The interchange of two maxima, $j$ and $q$.
  In the \emph{dotted} underlying tree, the operation corresponds to a rotation.
  Before the interchange, the pairs are $i, j$ and $p, q$.
  \emph{Top left:} $f(i) < f(p)$ which preserves the pairs.
  \emph{Top right:} $f(i) > f(p)$ and the pairs change to $i, q$ and $p, j$.
  \emph{Bottom left:} $j = \In{q}$ which preserves the pairs.
  \emph{Bottom right:} $j = \Mid{q}$ and the pairs change to $i, q$ and $p, j$.}
  \label{fig:maxinterchange}
\end{figure}

\medskip \noindent \textbf{Interchange of minima.}
An interchange of maxima in $\UpTree{f}$ is always done in parallel with an interchange of minima in $\DnTree{f}$.
It can, however, happen that the interchange of $j$ and $q$ has a structural effect on $\UpTree{f}$ but not on $\DnTree{f}$.
An example is the interchange of nodes $i$ and $g$ in Figure~\ref{fig:updowntrees} and \ref{fig:conversion}.
They are internal nodes in the up-tree, so the effect of the interchange is as depicted in Figure~\ref{fig:maxinterchange} on the left.
The two nodes are leaves in the down-tree whose bananas do not meet, so the interchange of minima has no effect.

\smallskip
In general, the interchange of two minima with $f(j) < f(q)$ has structural consequences only if $p = \Dth{q}$ is interior to the banana spanned by $j$ and $i = \Dth{j}$, as this is the only case in which $g(j) > g(q)$ leads to a violation of the uniqueness conditions, namely of {\sf II}.
To describe how the trails are updated, we assume that $p$ is interior to the left trail.
We split this trail into the upper part above $p$ and the lower part below $p$ (with neither part including $p$).
Similarly, we split the parallel right trail connecting $i$ to $j$ into upper and lower, with values larger and smaller than $f(p)$, respectively.
Then we concatenate the upper part of the left trail with the left trail connecting $p$ to $q$ (but without the upper end, which is $p$), and we concatenate the upper part of the right trail with the right trail connecting $p$ to $q$ (this time including $p$).
Finally, we join the two lower parts to form parallel trails connecting $p$ to $j$.

\smallskip
Observe that splitting the left trail is easy because it contains $p$ as an interior node.
We split the right trail by traversing it one node at a time from the upper end until we reach the first node with value less than $f(p)$. Note that all the travsersed node on the right trail will have a change in their low-pointer and we will use this fact
when we analyze the running-time of the operation at the end of this subsection.

\medskip \noindent \textbf{Simultaneity of interchanges.}
Next, we prove the correctness of coupling interchanges as described in Scenarios~A and B.
The moment two maxima of $f$ interchange is of course also the moment at which they interchange as minima of $-f$.
However, many interchanges are irrelevant, in the sense that they cause no structural changes to the banana trees.
Many of these irrelevant interchanges go unnoticed by our algorithm (and fortunately so), but it is important that no relevant interchange is overlooked.
We claim that every relevant interchange of minima corresponds to a relevant interchange of maxima.

\smallskip
To formalize this claim, let $F \colon [0,m+1] \to \Rspace$ be a piecewise linear map determined by its values at the integers, and assume that these values are distinct, with the exception of $F(j) = F(q)$ at maxima $j \neq q$ of $F$.
Let $f, g \colon [0,m+1] \to \Rspace$ be piecewise linear maps defined by the same values at the integers, except that $f(j) = F(j) - \ee$ and $g(j) = F(j) + \ee$ for a sufficiently small $\ee > 0$.
The straight-line homotopy from $f$ to $g$ is an interchange of maxima, namely of $j$ and $q$, and that from $-f$ to $-g$ is an interchange of minima, again of $j$ and $q$.
We call the former \emph{relevant} if $\UpTree{f}$ and $\UpTree{g}$ differ by a rotation, namely of $j$ and $q$, and we call the latter \emph{relevant} if $\DnTree{f}$ and $\DnTree{g}$ differ by a change in the pairing.
\begin{lemma}[Coupling of Interchanges]
  \label{lem:coupling_of_interchanges}
  Let $f, g \colon [0,m+1] \to \Rspace$ as introduced above.
  If the straight-line homotopy from $-f$ to $-g$, which is an interchange of minima, is relevant, then so is the straight-line homotopy from $f$ to $g$, which is an interchange of maxima.
\end{lemma}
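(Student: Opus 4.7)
My plan is to argue directly: assuming the min interchange of $j$ and $q$ in $\DnTree{f}$ is relevant (some pairing changes from $f$ to $g$), I derive $q = \Up{j}$ in $\UpTree{f}$, so that the max interchange is relevant. The first observation is that, since $j$ and $q$ are leaves of $\DnTree{f}$, the underlying binary tree (the Cartesian tree of the $-f$-values) is unaffected by swapping their values, so only the path decomposition from Condition \textsf{II} can change. A local analysis based on the fact that $\ee$ is arbitrarily small---so that no item $c \neq j, q$ has $f(c)$ in the open interval $(f(j), f(q))$---shows that the only way a pairing can change is if the LCA $\ell$ of $j$ and $q$ in $\DnTree{f}$ equals $\Dth{j}$, $q$ is the unique witness for $j$ at $\ell$, and $q$ attains the maximum $f$-value among the leaves in the subtree of $\ell$; in this case both $\Dth{j}$ and $\Dth{q}$ change simultaneously.

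I next bridge back to $\UpTree{f}$ through windows. The positional range $[x_\ell, y_\ell]$ corresponding to $\ell$'s subtree in $\DnTree{f}$ is the maximal interval on which $\ell$ attains $\min f$; by the hypothesis, $q$ also attains $\max f$ on the same interval, so $\Window{\ell}{q}$ is a triple-panel or global window of $f$. \Cref{prop:persistence_in_terms_of_windows} combined with uniqueness of the window's spanning pair then forces $\ell = \Birth{q}$, placing $\ell$ at the bottom of $q$'s banana in $\UpTree{f}$. The ``no lower witness'' property further rules out any maximum $r$ with $f(r) > f(j)$ strictly between $j$ and $\ell$ positionally, so the subtree range of $j$ in $\UpTree{f}$ extends to include $\ell$, making $j$ an ancestor of $\ell$ in $\UpTree{f}$'s binary tree, and hence an interior node on one of the two trails of $q$'s banana.

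The concluding step is to show $j$ and $q$ are consecutive on that trail, which is exactly $q = \Up{j}$. Suppose for contradiction that some maximum $r$ with $f(j) < f(r) < f(q)$ lay on the same trail strictly between $j$ and $q$. By Conditions \textsf{III.1} and \textsf{III.2}, $r$ lies on the same positional side of $\ell$ as $j$, so the LCA of $j$ and $r$ in $\DnTree{f}$ is the $f$-minimum on the positional interval between them---a strict descendant of $\ell$ in $\DnTree{f}$. Then $r$ witnesses $j$'s non-minimality in the subtree of this descendant, contradicting $\Dth{j} = \ell$. Hence $q = \Up{j}$, and the max interchange is relevant. The main obstacle is the spine case: if $q$ lies in the spine of $\UpTree{f}$, then $\Window{\ell}{q}$ has short rather than simple wave and $[x_\ell, y_\ell]$ reaches an endpoint of $[0, m+1]$, but \cref{lem:spines_and_windows} ensures \cref{prop:persistence_in_terms_of_windows} still yields $\ell = \Birth{q}$ and the trail-positional argument carries through.
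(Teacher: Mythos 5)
Your plan attempts a direct argument (relevant min-interchange implies $q = \Up{j}$ in $\UpTree{f}$), whereas the paper proves the contrapositive: if the max-interchange is irrelevant then $j$ and $q$ lie on opposite trails of a banana in $\UpTree{f}$ spanned by some $a,b$, and the separating minimum $a$ (a maximum of $-f$ with $-f(a) > -f(i)$, $-f(a) > -f(p)$) then also separates the two bananas of $-f$ rooted at $j$ and $q$, so their pairing in $\DnTree{f}$ cannot change. A direct route is legitimate in principle, but your version has a genuine gap.

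The error is the bridging step, where you assert that since $\ell$ attains $\min f$ and $q$ attains $\max f$ on the positional range $[x_\ell, y_\ell]$ of $\ell$'s subtree in $\DnTree{f}$, the pair $(\ell,q)$ spans a triple-panel or global window of $f$, and hence $\ell = \Birth{q}$. This is false. The subtree range $[x_q, y_q]$ of $q$ in $\UpTree{f}$ strictly contains $[x_\ell, y_\ell]$ whenever $\ell$ is not the global minimum: the item at $x_\ell - 1$ or $y_\ell + 1$ has $f$-value below $f(\ell) < f(q)$, so $q$ stays maximal on the enlarged interval, while $\ell$ does not stay minimal. Consequently $\Birth{q}$, the minimum of $[x_q, y_q]$, lies outside $[x_\ell, y_\ell]$, and $\Birth{q} \neq \ell$. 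A concrete instance: $f = (0, 10, 3, 6, 5, 10-\varepsilon, 0.5)$ on positions $1,\dots,7$ with $q=2$, $j=6$. The LCA of $2$ and $6$ in $\DnTree{f}$ is $\ell = 3$ and $\Dth{6} = 3$ holds, yet in $\UpTree{f}$ one has $\Birth{2} = 7$, not $3$; the ordinary diagram contains $(3,10-\varepsilon)$ (the component born at $3$ dies at $6$) rather than $(3,10)$.

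The structural fact you actually need is $\ell = \Birth{j}$ (for the non-spine case): $\Window{j}{\ell}$ is a window of $-f$ by construction; if it has simple wave, \cref{lem:spines_and_windows} and the remark following it imply that $\Window{\ell}{j}$ is also a window of $f$, hence $\Birth{j} = \ell$ by \cref{lem:bananas_and_windows}. But this places $\ell$ at the bottom of \emph{$j$'s} banana, not $q$'s, so your subsequent steps---$j$ an interior node on a trail of $q$'s banana and consecutive with $q$---do not follow as written, and the conclusion $q = \Up{j}$ remains unestablished. Your spine-case remark is also aimed at the wrong node: what matters is whether $\ell$ lies on a spine of $\DnTree{f}$ (so that $\Window{j}{\ell}$ has short wave and does not transfer to $f$), not whether $q$ lies on a spine of $\UpTree{f}$. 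The contrapositive proof in the paper sidesteps all of this.
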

\begin{proof}
  We prove the contrapositive: that irrelevant interchanges of maxima imply irrelevant interchanges of minima.
  The interchange of the maxima $j \neq q$ of $f$ is irrelevant if there is a banana in $\UpTree{f}$ so that $j$ and $q$ belong to opposite trails, or to sub-bananas rooted on opposite trails of this banana.
  Let this banana be spanned by $a, b$.
  Assuming $j < q$, this implies $j < a < q$ and $b$ is either to the left or the right of the three items.
  Letting $i$ and $p$ be the lower ends of the bananas spanned by $j$ and $q$, respectively, we observe that $i < a < p$, $f(a) < f(i)$,  and $f(a) < f(p)$.

  \smallskip
  For the negated function, $j, q$ are minima and $i,a,p$ are maxima satisfying $i,j < a < p,q$, $-f(a) > -f(i)$,  and $-f(a) > -f(p)$.
  If $\Window{i}{j}$ and $\Window{p}{q}$ are both with simple waves, then $\Window{j}{i}$ and $\Window{q}{p}$ are triple-panel windows of $-f$ that are separated by $a$.
  It follows that the interchange of minima is irrelevant.
  Similarly, if one or both of these windows are with short wave, then the separation by $a$ implies that the pairing stays constant, so again the interchange is irrelevant.
\end{proof}

There is a special case to consider when the separating banana is spanned by the special root, and $j, q$ are the maxima with the two largest values.  For $f$, we have $\beta = q$ and for $g$ we have $\beta = j$.
In words, the interchange of the maxima $j$ and $q$ does not change the pairing but it causes a replacement of the special root.

\medskip \noindent \textbf{Cancellations.}
The last step in Scenario~B is the cancellation of items $j$ and $j+1$, which is implemented by removing the two nodes from the up-tree and down-tree.
We argue that the implementation is really this easy.

\smallskip
Recall that the value of $j$ during the homotopy from $f$ to $g$ is $h_\lambda (j) = (1-\lambda) f(j) + \lambda g(j)$.
By assumption, $f(j-1) < f(j+1)$, so the cancellation happens at $f(j-1) < g(j) < f(j+1)$.
Since $f(j) > f(j+1) > g(j)$ and $f(j+2) > f(j+1)$, there exists $0 \leq \mu \leq 1$ such that $f(j+2) > h_\mu (j) > f(j+1)$.
At this stage of the homotopy, $j+1$ is a child of $j$ and $\Dth{j+1} = j$ in $\UpTree{f}$.
In other words, $j+1$ and $j$ are the upper and lower ends of a pair of empty trails.
We can therefore simply remove this pair of trails, while forming a direct link between $\Dn{j}$ and $\Up{j}$.
The situation is symmetric in $\DnTree{f}$.

\medskip \noindent \textbf{Anti-cancellations.}
The first step in Scenario~A is the anti-cancellation of items $j$ and $j+1$.
We describe how to perform the anti-cancellation in the up-tree; the anti-cancellation in the down-tree is symmetric.
By assumption, $f(j-1) < g(j) > f(j+1) < f(j+2)$ and $g(j)$ is such that there is no item with value between $f(j+1)$ and $g(j)$.
We first identify the maximum $b$ closest to $j$ such that the new minimum $j+1$ lies between $j$ and $b$, i.e., $j < j+1 < b$.
Note that there can be no other critical point between $j+1$ and $b$.

\smallskip
To insert $j$, we walk along the path from $b$ to the closest minimum $a$ with $a < j < j+1 < b$,
until we find a node $q$ with smaller value than $j$.
The node $j$ is then inserted as a parent of $q$:
\begin{tabbing}
    m\=m\=m\=m\=m\=m\=m\=m\=m\=m\=m\=\kill
    \> \> \texttt{if} $\Birth{b} < j < b$ \texttt{then} $q = \Mid{b}$ \texttt{else} $q = \Dn{b}$ \texttt{endif}; \\*
    \> \> \texttt{while} $f(q) > g(j)$ \texttt{do} $q = \In{q}$ \texttt{endwhile}; \\*
    \> \> \texttt{if} $q$ is a leaf \= \texttt{then} 
      \= \texttt{if} $q = \Birth{b}$
                \= \texttt{then} insert $j$ as $\Mid{q}$ \\*
    \> \> \> \> \> \texttt{else if} \= $q$ is the global minimum \texttt{and} $q < j$
                        \texttt{then} \\*
    \> \> \> \> \> \>       insert $j$ as $\Mid{q}$ \\*
    \> \> \> \> \> \texttt{else} insert $j$ as $\In{q}$ \\*
    \> \> \> \> \texttt{endif} \\*
    \> \> \> \texttt{else} insert $j$ as $\Up{q}$ \\*
    \> \> \texttt{endif.}
\end{tabbing}
After inserting $j$, we construct a banana spanned by $j+1$ and $j$, which includes setting $\In{j+1} = \Mid{j+1} = \Dth{j+1} = j$ and $\In{j} = \Mid{j} = j+1$.

\medskip \noindent \textbf{Slides.}
Consider the case in which the value of a maximum, $j$, is decreased, and assume that $f(j-1) < f(j+1) > f(j+2)$. 
Note that $j+2$ is non-critical.
If $f(j-1) < g(j) < f(j+1)$, then $j+1$ becomes a maximum and $j$ becomes non-critical.
The number of critical items remains the same in $f$ and $g$, but criticality is transferred from one item in $f$ to a neighboring item in $g$.
As mentioned earlier, this is what we call a \emph{slide}.

\smallskip
The same situation occurs when
(1) the value of a minimum increases above the value of a non-critical neighbor,
(2) the value of a non-critical item increases above the value of a neighboring maximum, or
(3) the value of a non-critical item decreases below the value of a neighboring minimum.
A slide has no impact on the structure of the banana-tree and only requires to update the association between items and nodes.

\medskip \noindent \textbf{Summary.}
We summarize the findings in this subsection by stating the running-time for adjusting the value of an item in terms of the number of critical points and the number of changes caused to the augmented persistence diagram.
To this end, we define the \emph{accumulated change} along a straight-line homotopy as the total number of changes to the points and arrows in the augmented persistence diagram that occur during the homotopy.

\begin{theorem}[Time to Adjust]
  \label{thm:time_to_adjust}
  Let $f \colon [0,m+1] \to \Rspace$ be a generic piecewise linear map with $n$ maxima.
  The time to adjust the value of an item is $O(\log n +k)$, in which $k$ is the accumulated change during the adjustment, plus $O(k')$ if the adjustment includes an anti-cancellation, in which $k'$ is the difference in the transitive closures of the nesting relation before and after the anti-cancellation.
\end{theorem}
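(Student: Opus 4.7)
The plan is to dissect the adjustment into its atomic sub-operations---slides, cancellations, anti-cancellations, and interchanges---and bound the cost of each in terms of the structural changes it causes to $\DgmR{}{f}$. Summing these bounds over the sequence produced by the straight-line homotopy will give the claimed $O(\log n + k)$, with the separate $O(k')$ term only incurred if an anti-cancellation occurs.

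First I would handle the initial setup. Locating the item $j$ and its neighbors in the doubly-linked list is $O(1)$ given a direct pointer, while finding the first relevant maximum $b$ needed by Scenario~A (for the anti-cancellation) uses the maxima dictionary and costs $O(\log n)$. Similarly, the first interchange partner in either scenario is accessed through an $\UpOp$- or $\DnOp$-pointer, hence in $O(1)$. So the only $O(\log n)$ contribution is the dictionary lookups at the very beginning and end of the adjustment.

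Next I would argue, sub-operation by sub-operation, that every other step can be charged either to $k$ or to $k'$. A \emph{slide} only re-labels one item and is $O(1)$; it is paired with the trivial persistence change of a coordinate (one arrow endpoint), so it contributes $O(1)$ to both time and $k$. A \emph{cancellation} merely removes a pair of empty parallel trails linking $j$ and $j+1$, which is $O(1)$ and removes one point from the diagram, so again it is absorbed in $k$. An \emph{interchange of maxima} $j$ and $q=\UpOp(j)$ is the cases of \cref{fig:maxinterchange}: each case rewires a bounded number of pointers in $O(1)$ time and either preserves the pairing (shifting one node between trails) or swaps the pairing of two bananas, each producing at least one change to $\DgmR{}{f}$ (a coordinate or an arrow), hence charges to $k$. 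By \cref{lem:coupling_of_interchanges}, every relevant interchange of minima of $-f$ is witnessed by a relevant interchange of maxima of $f$, so coupling the two does not miss any structural update; the cost in $\DnTree{f}$ of splitting and concatenating the two trails at the witness $p$ is proportional to the number of nodes whose $\LowOp$-pointer changes along the lower part of the right trail, and each such change is the reversal of an arrow in the extended persistence diagram, hence charges to $k$.

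The main obstacle, as anticipated in the technical overview, is the \emph{anti-cancellation}. Here I would argue that apart from the one-time $O(\log n)$ dictionary query for $b$, the operation reduces to the \texttt{while}-loop walking down from $b$ along $\InOp$-pointers until a node $q$ with $f(q) > g(j)$ is found. Each node $q$ traversed on this walk is the upper end of the in-trail of a banana whose window is now pierced by the support of the new window spanned by $j+1$ and $j$, so after the operation $\Window{j+1}{j}$ is immediately nested inside the innermost such banana and transitively nested inside all the ones above it. Each such transitive nesting is a new pair in the transitive closure of the nesting relation, so the number of traversed nodes equals $k'$ up to a constant, and the insertion of the new banana itself is $O(1)$. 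Collecting everything, the total time is $O(\log n)$ for the dictionary work, $O(1)$ per interchange/slide/cancellation amortized against one change to the diagram, and an additional $O(k')$ for the single anti-cancellation if present, yielding the stated bound. The symmetric cases in Scenario~B and the endpoint cases deferred to \cref{sec:scenario-correctness} are handled identically.
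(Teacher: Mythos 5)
Your proposal is correct and follows essentially the same charging scheme as the paper's own proof: $O(\log n)$ for the dictionary accesses and updates, $O(1)$ per interchange/slide/cancellation charged to the corresponding change in the augmented persistence diagram, with the coupling lemma justifying that irrelevant min-interchanges incur only $O(1)$ cost chargeable to the parallel max-interchange, and the anti-cancellation's while-loop charged to the growth $k'$ of the transitive closure of the nesting relation. The only nit is that the loop condition is stated backwards (the walk continues while $f(q) > g(j)$ and stops when $f(q) < g(j)$), but this slip does not affect the argument.
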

\begin{proof}
  We prove the claimed bound on the running-time by charging most steps to the change in augmented persistence diagrams they cause.

  \smallskip
  An \emph{interchange of maxima} reduces to a rotation plus possibly a swap in the pairing.
  The rotation takes $O(1)$ time, which we charge to the changing arrow, and the swap takes $O(1)$ time, which we charge to the two points in the diagram that exchange coordinates.
  An \emph{interchange of minima} reduces to a swap in the pairing followed by resetting the low-pointers along the path connecting the two maxima involved in the swap.
  As before, the swap is charged to the two points that exchange coordinates, and each resetting of a low-pointer is charged to the corresponding change in the arrow.
  It is also possible that the interchange of minima has no effect on the binary tree, namely when the corresponding paths are disjoint.
  This is detected in $O(1)$ time, and this time is charged to the changes caused by the simultaneous interchange of maxima in the other banana tree.
  A \emph{cancellation} takes $O(1)$ time, and there are at most two cancellations per value adjustment of an item.
  In contrast, an \emph{anti-cancellation} takes $O(\log n)$ time to find the maximum, $b$, closest to the new pair of critical items.
  The while-loop in the anti-cancellation takes as many iterations as there are nodes on the path from $b$ to the newly inserted node.
  For each of those nodes an arrow appears in the transitive closure of the nesting relation.
  Thus, the time to insert the new banana is $O(k')$.
  The insertion itself takes constant time.

  \smallskip
  In addition, we take $O(\log n)$ time to update the dictionaries whenever an item changes from critical to non-critical, or the other way round, and there are only $O(1)$ such changes per value adjustment of an item.
  All this adds up to $O(\log n + k)$ time, if the adjustment does not requires any anti-cancellation, and to $O(\log n + k + k')$ time, if it does.
\end{proof}

\subsection{Topological Maintenance.}
\label{sec:4.3}

Given a collection of linear lists, we next study the maintenance of the augmented persistence diagram subject to cutting and concatenating the lists.
Recall that a list $c_1, c_2, \ldots, c_m$ induces a piecewise linear map, $f \colon [0,m+1] \to \Rspace$, with $f(i) = c_i$ for $1 \leq i \leq m$.
The values at $i=0, m+1$ are added to create the computationally convenient hooks introduced in Section~\ref{sec:3}.
To \emph{cut} $f$, we split the list into $c_1, c_2, \ldots, c_{\ell}$ and $c_{\ell+1}, c_{\ell+2}, \ldots, c_m$ and let $g \colon [0,\ell+1] \to \Rspace$ and $h \colon [\ell, m+1] \to \Rspace$ be the corresponding piecewise linear maps.
We need at least two items to construct the hooks, so we require $2 \leq \ell \leq m-1$.
We describe the operation for the up-tree, and consider the down-tree only to the extent it provides information to update the up-tree.
For ease of reference, we write $x$ for the midpoint between $\ell$ and $\ell+1$ and set $f(x) = \frac{1}{2} (f(\ell) + f(\ell+1))$.

\medskip \noindent \textbf{Splitting a banana tree.}
We introduce terminology before describing the splitting algorithm in three steps.
A banana suffers an \emph{injury}, \emph{fatality}, \emph{scare} if $x$ cuts through the in-, mid-, out-panel of the window, respectively; see Figure~\ref{fig:injuries}.
A triple-panel window with simple wave is shared by $f$ and $-f$, except that in- and out-panels are exchanged.
Hence, a scare in $\UpTree{f}$ is an injury in $\DnTree{f}$, so we exploit the down-tree to find the scares in the up-tree.
\begin{figure}[htb]
  \centering \vspace{0.0in}
  \resizebox{!}{0.9in}{\input{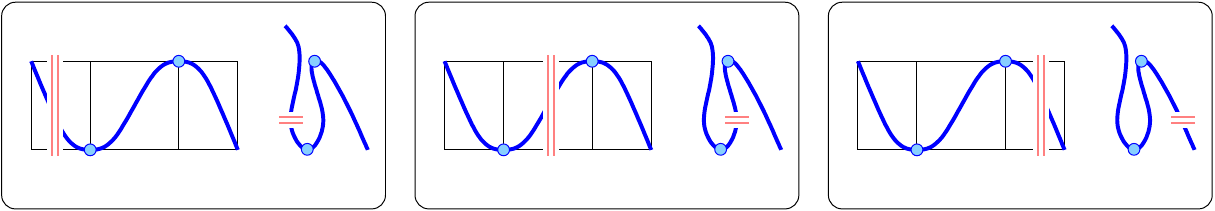_t}}
  \caption{\emph{From left to right:} an injury, fatality, and scare.
  Correspondingly, $x$ cuts the banana in its in-trail, in its mid-trail, or after the maximum but still above the value of the minimum.}
  \label{fig:injuries}
\end{figure}

\noindent \textsc{Step 1:} \textbf{Find smallest banana.}
To enumerate the bananas that suffer injuries or fatalities, we first find the smallest such banana.
To this end, we search the dictionaries to locate the smallest interval between two critical points that satisfy $a < x < b$.
Note however that neither $b$ is necessarily the upper end nor $a$ is necessarily the lower end of this banana.
Assuming $b$ is a maximum, we use it to find the smallest banana in the up-tree such that $x$ lies between its lower and upper ends.
The search uses a pair of nodes, $q, r$, maintaining that the upper end of the desired smallest banana
has an ancestor that  is an interior node of the right trail with upper end $q$, and $r$ with $f(r) > f(x)$ is a node on this right trail.
The iteration advances $q$ and $r$ toward $x$ and halts when these conditions fail.  
To repeatedly go down the right trail of $r$, which itself is interior to a right trail, we use the $\In{r}$ pointer.
The iteration is slightly different in the first case, when $b$ is interior to a right trail, and in the second case, when $b$ is interior to a left trail:
\begin{tabbing}
  m\=m\=m\=m\=m\=m\=m\=m\=m\=\kill
  \> \texttt{function} $\textsc{SmallestBanana} (x)$: \\*
  \> \> find $a < x < b$ and assume that $b$ is a maximum; \\* 
  \> \> \texttt{if} $\Dn{b} < b$ \= \texttt{then} 
    $q = \Dth{\Low{b}}$; $r = \Dn{b}$
  \texttt{else} $q = b$; $r = \Mid{b}$
  \texttt{endif}; \\*
  \> \> \texttt{while} $r \neq a$ \texttt{and} $f(x) < f(r)$ \texttt{do}
        $q = r$; $r = \In{r}$
        \texttt{endwhile}; \\*
  \> \> \texttt{return} $(\Birth{q},q)$.
\end{tabbing}
The time is $O(\log n + k)$, in which $k$ is the number of inspected bananas.
Each of these bananas causes a change in the augmented persistence diagram, which pays for the visit.

\medskip \noindent {\sc Step 2:} \textbf{Stack bananas.}
After locating the smallest banana in $\UpTree{f}$ that suffers an injury or fatality, we find the others by traversing the tree upward.
In the process, we load three initially empty stacks with the injuries and fatalities.
In doing so, we distinguish spanning min-max pair to the left of $x$, separated by $x$, to the right of $x$, and denote the corresponding stacks $\Lup$, $\Mup$, $\Rup$, respectively.
On its way up, the algorithm pushes each banana on one of the three stacks, until it encounters the first banana with maximum in the spine.
\begin{tabbing}
  m\=m\=m\=m\=m\=m\=m\=m\=m\=\kill
  \> \texttt{function} {\sc LoadStacks} $(x)$: \\*
  \> \> $(p, q) = \textsc{SmallestBanana} (x)$; \\
  \> \> \texttt{loop} \= \texttt{case} \= $p < x$ \texttt{and} $q < x$: $\textsc{Push} (\Lup, (p,q))$; \\*
  \> \>              \>            \> $p < x$ \texttt{xor} $q < x$: $\textsc{Push} (\Mup, (p,q))$;  \\*
  \> \>              \>            \> $p > x$ \texttt{and} $q > x$: $\textsc{Push} (\Rup, (p,q))$ \\*
  \> \> \> \texttt{endcase}; \\
  \> \> \> \texttt{if} $q$ is in spine of $\UpTree{f}$ \texttt{then} \texttt{exit} \texttt{endif}; \\*
  \> \> \> $p = \Low{q}$; $q = \Dth{p}$ \\*
  \> \> \texttt{forever}.
\end{tabbing}
Similarly, we load the initially empty stacks $\Ldn$, $\Mdn$, $\Rdn$ with bananas in $\DnTree{f}$.
When we split the up-tree, it will be convenient to reverse the pairs defining the bananas in the down-tree, and when we split the down-tree, we reverse the pairs defining the bananas in the up-tree.
We call a stack with bananas $(p_0, q_0)$ at the bottom to $(p_j, q_j)$ at the top \emph{sorted} if $f(p_{j}) < \ldots < f(p_0) < f(q_0) < \ldots < f(q_j)$.
The stacks satisfy the following properties:
\begin{lemma}[Sorted Stacks]
  \label{lem:sorted_stacks}
  Let $c_1, c_2, \ldots, c_{m}$ be a sequence of distinct values, $f$ the thus defined piecewise linear map, $x$ such that $c_2 < x < c_{m-1}$, and $\Lup, \Mup, \Rup, \Ldn, \Mdn, \Rdn$ the stacks as returned by {\sc LoadStacks} holding the bananas with injury, fatality, and scare in $\UpTree{f}$ and $\DnTree{f}$.
  Then
  \smallskip \begin{enumerate}[(i)]
    \item all six stacks are sorted;
    \item all bananas on the stacks are with simple wave, except for the last banana pushed onto $\Lup, \Mup, \Rup$, whose maximum belongs to the spine of $\UpTree{f}$, and the last banana pushed onto $\Ldn, \Mdn, \Rdn$, whose maximum belongs to the spine of $\DnTree{f}$;
    \item the set of simple wave bananas on $\Mup$ is the same as that on $\Mdn$;
    \item $\Lup$, $\Mup$, $\Ldn$ can be merged into a single sorted stack, and so can $\Rup$, $\Mup$, $\Rdn$ and $\Lup$, $\Mup$, $\Rup$ and $\Ldn, \Mdn, \Rdn$.
  \end{enumerate}
\end{lemma}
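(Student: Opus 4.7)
The plan is to prove the four parts in sequence, with (i)--(iii) following from structural properties already established, and (iv) following as a direct consequence.

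For (i), I would argue by induction on the iterations of the main loop of {\sc LoadStacks}. After {\sc SmallestBanana} returns $(p_0, q_0)$, each subsequent iteration sets $p_{j+1} = \Low{q_j}$ and $q_{j+1} = \Dth{p_{j+1}}$, which, by the discussion preceding \cref{lem:bananas_and_windows}, yields exactly the banana whose window immediately contains $\Window{p_j}{q_j}$ in the nesting hierarchy. Hence $f(p_{j+1}) < f(p_j) < f(q_j) < f(q_{j+1})$ for every $j$, so the full sequence of bananas pushed by the algorithm---in the order pushed---is sorted. Since $\Lup$, $\Mup$, $\Rup$ are obtained by partitioning this sequence according to the position test ($p \lessgtr x$ vs.\ $q \lessgtr x$) while preserving order, each of these three stacks is individually sorted. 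The same argument, applied to $-f$, shows that $\Ldn$, $\Mdn$, $\Rdn$ are sorted.

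For (ii), I would apply \cref{lem:spines_and_windows}: every internal node of $\UpTree{f}$ that is not in the spine spans a window with simple wave, and every node on the spine other than $\beta$ spans a short-wave window. The loop in {\sc LoadStacks} exits immediately after pushing the first banana whose maximum lies on the spine, so every earlier banana on the stack has a maximum off the spine, hence simple wave. The symmetric statement holds for $\DnTree{f}$. Part (iii) then follows from \cite[Theorem~4.2]{BCES21}: a window with simple wave is a window of both $f$ and $-f$ (with in- and out-panels exchanged), so the sets of simple-wave bananas of $\UpTree{f}$ and $\DnTree{f}$ coincide as pairs of critical items. A simple-wave banana spanned by $p, q$ lands on $\Mup$ iff $x$ separates $p$ from $q$, a condition involving only positions and therefore symmetric in $f$ and $-f$; the same condition places it on $\Mdn$. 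One needs to check in addition that both traversals actually reach all such bananas before halting on the spine, which is true because the smallest affected banana is shared and the upward walk proceeds via the same immediate-nesting relation among simple-wave windows.

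For (iv), the merge of $\Lup$, $\Mup$, $\Rup$ is immediate: these three stacks are a partition of the single sorted sequence produced by {\sc LoadStacks} on $\UpTree{f}$, so reinterleaving them in the traversal order yields a sorted stack; the case of $\Ldn, \Mdn, \Rdn$ is identical. For $\Lup$, $\Mup$, $\Ldn$, I would use (iii) to identify the shared simple-wave bananas of $\Mup$ and $\Mdn$ and then observe that both the up-tree traversal (restricted to bananas with support not entirely right of $x$, i.e., contributing to $\Lup \cup \Mup$) and the down-tree traversal (restricted to $\Ldn \cup \Mdn$, intersected with $\Ldn \cup \Mup$) proceed via the immediate-nesting relation of simple-wave windows, which is the same relation in $\UpTree{f}$ and $\DnTree{f}$. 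Hence all three stacks are sorted subsequences of a common sorted sequence and can be merged. The case $\Rup$, $\Mup$, $\Rdn$ is symmetric.

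The main obstacle I expect is the compatibility of orderings needed in (iv) when merging a stack from $\UpTree{f}$ with one from $\DnTree{f}$, namely verifying that simple-wave bananas nested in $\UpTree{f}$ remain nested in the same order in $\DnTree{f}$. This reduces to the fact that simple-wave windows of $f$ and $-f$ have identical supports and the same minimum/maximum values, so their nesting relation is invariant under the sign flip, but the bookkeeping of which bananas are or are not reached before the respective spines are hit requires a careful case analysis using (ii) on both trees.
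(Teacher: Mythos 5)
Your arguments for (i) and (ii) follow the same route as the paper and are fine. For (iii), the conclusion is right, but your justification of traversal coverage is shaky: the smallest banana returned by {\sc SmallestBanana} in $\UpTree{f}$ and in $\DnTree{f}$ are not generally the same banana, so ``the smallest affected banana is shared'' is not a valid starting point. The paper's argument is cleaner and local: a banana lands on $\Mup$ (resp.\ $\Mdn$) iff $x$ cuts its mid-trail, a simple-wave banana has the \emph{same} mid-trail in both trees, and every banana whose mid-trail is cut lies on the chain walked by {\sc LoadStacks} in its tree, so the two sets of simple-wave fatality bananas coincide.

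The real gap is in (iv). Your claim that ``the immediate-nesting relation of simple-wave windows, which is the same relation in $\UpTree{f}$ and $\DnTree{f}$'' is false. A simple-wave window $\Window{p}{q}$ immediately nested in the \emph{in-panel} of $\Window{a}{b}$ in $f$ sits in the \emph{out-panel} of $\Window{b}{a}$ in $-f$ (the in- and out-panels swap under negation), hence it is not immediately nested in the corresponding double-panel window of $-f$ at all. Consequently the two traversal chains $\Lup\cup\Mup\cup\Rup$ and $\Ldn\cup\Mdn\cup\Rdn$ are genuinely different nested chains that only agree on the $\Mup=\Mdn$ portion; they are not ``sorted subsequences of a common sorted sequence,'' and so $\Lup\cup\Mup\cup\Ldn$ being totally nested does not follow from your argument. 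The paper instead proves (iv) via a position-based observation that you do not make: any two triple-panel windows both cut by $x$ that overlap without being nested must have their spanning critical items \emph{separated} by $x$ (one pair entirely left, the other entirely right). Since every banana on $\Lup$ or $\Ldn$ has both spanning items left of $x$, and every banana on $\Mup$ has one on each side, no two bananas on $\Lup\cup\Mup\cup\Ldn$ can be so separated; combined with the fact that they all contain $x$ in their support (so cannot be disjoint), they must be pairwise nested and hence mergeable. This is the missing step, and it is also what distinguishes the four permissible triplets from, say, $\Lup$ together with $\Rdn$, which \emph{can} contain overlapping-but-not-nested windows.
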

\begin{proof}
  Property~{\sf (i)} is implied by Condition~{\sf II} on the path-decomposition and the fact that the algorithm visits the bananas from bottom to top.
  Property~{\sf (ii)} holds because all bananas are with simple wave, other than the ones with maxima in the spine, and except for the respective last ones, no banana pushed onto the stacks have their maxima in the spine.
  To see Property~{\sf (iii)} recall that a simple wave of $f$ is also a simple wave of $-f$.
  Hence, simple wave bananas belong to both trees, except that they share the mid-trail but not the in-trail.
  Each banana on $\Mup$ and $\Mdn$ suffers a fatality, which cuts
  cuts the mid-trail and therefore also the corresponding banana on the other stack.
  Property~{\sf (iv)} follows from the fact that all bananas on one of these stacks must correspond to nested windows.
  Note, however, that it is not true that all six stacks can be merged to a sorted stack.
  The reason is that triple-panel windows of bananas in $\Lup$ and $\Rdn$ can overlap without being nested, and so can triple-panel windows of bananas in $\Ldn$ and $\Rup$; see the windows spanned by $\T{f}, \T{g}$ and $\T{h}, \T{i}$ in Figure~\ref{fig:maponly}.
  However, if $x$ cuts through any two such overlapping and not nested windows, then it also separates the critical points that span one from those that span the other.
  Such pairs neither exist for $\Lup, \Mup, \Ldn$, nor for any of the other three triplets of stacks.
\end{proof}

There are up to two windows with short wave which need special treatment:
First, the window of $-f$ with its maximum on the spine of $\DnTree{f}$ and $x$ in its in-panel;
second, the window of $f$ with its maximum on the spine of $\UpTree{f}$ and $x$ in neither its in-panel or mid-panel.
The former appears in $\Ldn$ or $\Rdn$, but is not a window of $\UpTree{f}$ and we ignore it when splitting $\UpTree{f}$.
The latter is not loaded into $\Lup$, $\Mup$ or $\Rup$, is not a window in $\DnTree{f}$ and is thus not loaded onto any stack.
However, it may have $x$ in its out-panel, in which case it suffers a scare and should be pushed onto $\Ldn$ or $\Rdn$.
We can identify this case by examining $q' = \In{q_j}$, where $(p_j, q_j)$ is the topmost banana.
If $(\Birth{q'}, q')$ pushed onto $\Ldn$ or $\Rdn$ preserves the properties of \cref{lem:sorted_stacks} and $f(\Birth{q'}) < f(x) < f(q')$,
then indeed $x$ is in the out-panel of this window and we place it on $\Ldn$ or $\Rdn$ as appropriate when splitting $\UpTree{f}$.
Otherwise, no window with $x$ in its out-panel is missing from the stacks.
Symmetrically, when splitting $\DnTree{f}$ we ignore in $\Lup \cup \Rup$ the window with maximum on the spine of $\UpTree{f}$
and push the missing window in $\DnTree{f}$ onto $\Lup$ or $\Rup$.

After pushing the bananas onto the stacks, we remove them from the top.
Write $(p,q) = \textsc{Top} (\Lup)$ for the topmost banana on $\Lup$, with $(p,q) = (\texttt{nil}, \texttt{nil})$ if $\Lup$ is empty, and similarly for the other stacks.
We set $f(\texttt{nil}) = - \infty$.
The overall top banana is the one whose maximum has the largest value and is returned by function  \textsc{TopBanana}:
\begin{tabbing}
  m\=m\=m\=m\=m\=m\=m\=m\=m\=\kill
  \> \texttt{function} \textsc{TopBanana}: \\*
  \> \> $\Stack = \texttt{nil}$; $(p,q) = (\texttt{nil}, \texttt{nil})$; \\*
  \> \> \texttt{for} \= \texttt{each} $S \in \{\Lup, \Rup, \Mup, \Ldn, \Rdn\}$ \texttt{do}
           $(p',q') = \textsc{Top}(S)$; \\*
  \> \> \> \texttt{if} $f(q') > f(q)$ \texttt{then} $\Stack = S$; $(p,q) = (p',q')$ \texttt{endif} \\* 
  \> \> \texttt{endfor};  \texttt{return} $(\Stack, (p,q))$.
\end{tabbing}

\noindent \textsc{Step 3:}
\textbf{Split up-tree.}
This operation splits the up-tree into two and in the process adds at most four new nodes:  a second special root, a minimum and a maximum if $\ell$ and $\ell+1$ are non-critical points prior to the cut, and the up-type endpoint of the two new hooks on both sides of the cut.
The algorithm uses the loaded stacks to visit the nodes that need repair from top to bottom, i.e.\ from the largest to the smallest banana as determined by the stacks.
The iteration ends when the stacks are empty:
\begin{tabbing}
  m\=m\=m\=m\=m\=m\=m\=m\=m\=\kill
  \> \texttt{function} $\textsc{Split} (x)$: \\*
  \> \> $\textsc{LoadStacks} (x)$; \\*
  \> \> \texttt{loop} \= $(\Stack, (p,q)) = \textsc{TopBanana}$; \\*
  \> \> \> \texttt{if} $(p,q) = (\texttt{nil}, \texttt{nil})$ \= \texttt{then} \texttt{exit} \texttt{endif}; \\ 
  \> \> \> \texttt{case} \= $\Stack \in \{ \Lup, \Rup\}$: $\textsc{DoInjury} (p,q)$; \\*
  \> \> \>               \> $\Stack = \Mup$: $\textsc{DoFatality} (p,q)$; \\*
  \> \> \>               \> $\Stack \in \{ \Ldn, \Rdn \}$: $\textsc{DoScare} (p,q)$ \\*
  \> \> \> \texttt{endcase}; $\textsc{Pop} (\Stack)$ \\*
  \> \> \texttt{forever}.
\end{tabbing}
Finally, we add $\ell$ and $\ell+1$ as new nodes if they become critical in the process, and we do the final adjustment to the values of the new hooks.
In each iteration, we have two banana trees, one on the left and the other on the right.
An injury transfers part of one tree to the other, and so does a fatality.
The latter also adjusts the pairing between the critical points, while a scare only adjusts the pairing.
To initialize this setting, we construct a second banana tree consisting of a single banana with two empty trails connecting its special root with a dummy leaf, $\alpha$.
If we cut the original banana tree on the left spine or the special banana, then this new tree becomes the left tree and we define its special root to be at $-\infty$ along the interval.
Otherwise, the new tree becomes the right tree and we define the special root of the original tree to be at $-\infty$, swapping in- and mid-trails of the special banana to satisfy the uniqueness conditions.
After splitting is complete, we reset the special roots to $-\infty$ and swap the in- and mid-trails of the special banana as needed.%
For reasons that will become clear shortly, we set $f(\alpha) = f(p_j) - \varepsilon$, in which $p_j, q_j$ are the nodes that span the top banana on the stacks, and $\varepsilon > 0$ is smaller than the difference between the values of any two items.
We observe that the top banana in the first iteration either suffers an injury or a fatality.
Indeed, if it suffered a scare, then $x$ would cut through its out-panel and therefore lie between $q_j$ and $\Low{q_j}$.
But then $x$ cuts through the in- or mid-panel of the banana spanned by $\Low{q_j}$ and $\Dth{\Low{q_j}}$, and this banana would have been the top banana on the stacks, which is a contradiction.
To see the correctness of the splitting operation, we note that Function \textsc{Split} maintains the following invariants:
\smallskip \begin{enumerate}[(i)]
    \item both banana trees satisfy the uniqueness conditions \textsf{I.1}, \textsf{I.2}, \textsf{II}, \textsf{III.1}, and \textsf{III.2};
    \item any node $u$ that is not on any stack and does not have an ancestor on the stacks is in the left tree, if $u < x$, and in the right tree, if $u > x$.
\end{enumerate} \smallskip
Invariant \textsf{(ii)} implies that once the stacks are empty, all nodes to the left of $x$ are in the left tree and all nodes to the right of $x$ are in the right tree.
By invariant \textsf{(i)} the right and left trees are the unique trees representing the maps $g$ and $h$.
Assuming $x < b$ for the root of the banana tree, the first transfer will be from right to left (as in Figure~\ref{fig:healing}), so we let the existing banana tree be the right tree and the banana with the two empty trails the left tree.
We are now ready to discuss the actions specific to the injury, fatality, and scare of a banana.

\medskip
An \emph{injury} occurs when $x$ cuts the in-trail of the banana spanned by $p, q$.
To simplify the discussion, assume $\Stack = \Rup$, so $x < p < q$, as illustrated in Figure~\ref{fig:injuries}, left, and Figure~\ref{fig:healing}, middle and left.
The case $\Stack = \Lup$ and $q < p < x$ is symmetric.

\smallskip
The injury causes a possibly empty portion of the in-trail to split off and append to the rightmost banana spanned by a spine node of the tree on the left; see again Figure~\ref{fig:healing}.
The cut off portion consists of all interior nodes $j < x$ and their sub-bananas.
Note that it is quite possible that $x$ cuts the string somewhere in a sub-banana with upper end on the in-trail.
In this case, the sub-banana does not transfer as its upper end is to the right of $x$, and it is instead subject to a later repair operation.
Because the bananas taken from the stacks are sorted, we have $f(i) > f(\alpha)$ for all transferred nodes $i$.
We maintain $\alpha$ as an up-type endpoint that spans a banana with $b$, as before the operation.

\begin{figure}[htb]
  \centering \vspace{0.0in}
  \resizebox{!}{1.4in}{\input{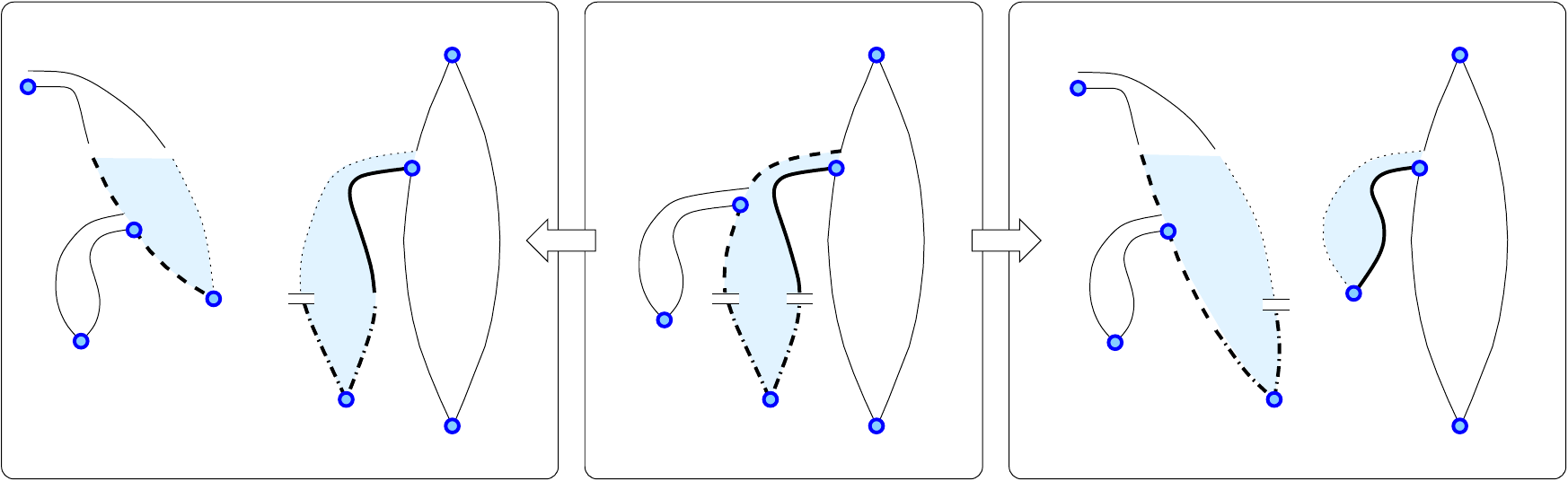_t}}
  \caption{\emph{Middle:} the banana spanned by $p, q$ suffers an injury (cut in in-trail) or a fatality (cut in mid-trail).
  \emph{Left:} to process the injury, we move the \emph{dashed} portion and append it to the rightmost spine banana of the left tree.
  The up-type node, $\alpha$, is paired with the upper end, $b$, of the expanded banana.
  \emph{Right:} to process the fatality, we move the entire in-trail together with the \emph{dash-dotted} portion of the mid-trail.
  The new up-type endpoint, $\alpha$, is paired with $q$ in the right tree, and $p$ is paired with the upper end, $b$, of the expanded banana.
  Dotted portions of trails are empty.}
  \label{fig:healing}
\end{figure}

\medskip
A \emph{fatality} occurs when $x$ cuts the mid-trail of the banana spanned by $p, q$.
Hence, $\Stack = \Mup$, and we assume $p < x < q$, which is the case illustrated in Figure~\ref{fig:injuries}, middle, and Figure~\ref{fig:healing}, middle and right.
The case $q < x < p$ is symmetric.

The fatality causes the entire in-trail and a possibly empty portion of the mid-trail to split off and append to the rightmost banana spanned by a spine node of the tree on the left; see again Figure~\ref{fig:healing}.
The transfer includes the minimum, $p$, and all interior nodes, $j$, with $j < x$.
As in the case of an injury, it is possible that $x$ cuts a sub-banana with upper end on the mid-trail.
This sub-banana is not transferred and instead subject to later repair operation.
Since $p$ moves from the right to the left, we move $\alpha$ from the left to the right and adjust the pairing accordingly: $p$ spans a banana with $b$ in the left tree, and $\alpha$ spans a banana with $q$ in the right tree.
To ensure that the path-decomposition remains correct, we set $f(\alpha) = f(p) - \varepsilon$.

\medskip
A \emph{scare} occurs when $x$ lies in the out-panel of the triple-panel window spanned by $p, q$, so $x$ cuts neither trail of the corresponding banana.
However, there is a banana spanned by $q, p$ in $\DnTree{f}$, and $x$ cuts the in-trail of that banana.
Assume $\Stack = \Ldn$, so $p < q < x$, which is the case illustrated in Figure~\ref{fig:injuries}, right.
The case $\Stack = \Rdn$ and $x < q < p$ is symmetric.

\smallskip
The scare does not affect the right up-tree, and it affects the left up-tree only indirectly, namely by triggering a change in the pairing.
In other words, it preserves the underlying ordered binary tree (First Step of the banana tree construction), but it changes the path-decomposition (Second Step), and therefore also the organization of the bananas (Third Step).
This is done by executing an interchange of two minima, namely of $p$ and $\alpha = \Low{\Dth{p}}$.
To justify this interchange, we adjust the value of $\alpha$ to $f(\alpha) = f(p) + \varepsilon$.

\medskip \noindent \textbf{Gluing two banana trees.}
Concatenating the lists $c_1, c_2, \ldots, c_{\ell}$ and $c_{\ell+1}, c_{\ell+2}. \ldots, c_{m}$ is the inverse of cutting $c_1, c_2, \ldots, c_{m}$ into these two lists.
Equivalently, we can think of concatenating $g \colon [0, \ell+1] \to \Rspace$ and $h \colon [\ell, m+1] \to \Rspace$ to get $f = g \cdot h \colon [0,m+1] \to \Rspace$.
A triple-panel window with simple wave of $g$ is still a triple-panel window with simple wave of $f$, and similarly for $h$ and $f$.
Also a triple-panel window whose wave is short at the left end of the domain of $g$ is still a triple-panel window with short wave of $f$, and similarly for $h$ and the right end of the domain of $h$.
It follows that the only windows that need repair are the global windows of $g$ and $h$ and the windows whose waves are short at the right end of the domain of $g$ or the left end of the domain of $h$.
These windows correspond to the bananas rooted at nodes of the right spine of $\UpTree{g}$ and the left spine of $\UpTree{h}$.

\smallskip
In a nutshell, the algorithm visits the nodes in the relevant portions of the two spines from bottom to top.
For $\UpTree{g}$, we get a sequence of nested short wave windows ending with the global window, and we list the corresponding critical points from right to left as $a_0, b_0, a_1, b_1, \ldots, a_i, b_i = \beta_g$.
Symmetrically, we list the critical points we get from $\UpTree{h}$ from left to right as $a_0', b_0', a_1', b_1', \ldots. a_j', b_j' = \beta_h$.
To have a specific setting, we assume that $a_0$ is a proper minimum of $g$ (an up-type endpoint after removing the hook), while $a_0'$ is a hook that is an up-type endpoint of $h$.
Because the windows are nested, we have
\begin{align}
  f(a_i) < \ldots < f(a_0) < f(x) < f(b_0) < \ldots < f(b_i); \\*
  f(a_j') < \ldots < f(a_1') < f(x) < f(b_0') < \ldots < f(b_j'),
\end{align}
in which the hook, $a_0'$, has been removed from the second list; see Figure~\ref{fig:simplified}.
The two orderings imply that the minimum with largest value is either $a_0$ or $a_1'$, and the maximum with smallest value is either $b_0$ or $b_0'$.
To get a sorted list of nested windows, the algorithm pairs up the lowest maximum with the highest minimum, removes the two, and iterates.
\begin{figure}[htb]
  \centering \vspace{0.0in}
  \resizebox{!}{0.9in}{\input{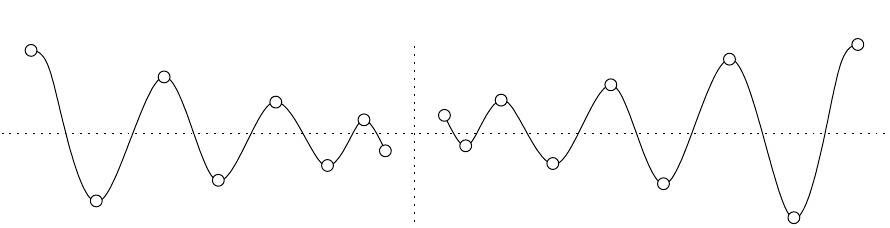_t}}
  \vspace{-0.05in}
  \caption{The simplified graph of $g$ to the \emph{left} of $x$, which connects the critical points that span bananas in the right spine of $\UpTree{g}$.
  The symmetrically simplified graph of $h$ to the \emph{right} of $x$, which connects the critical points that span bananas in the left spine of $\UpTree{h}$.}
  \label{fig:simplified}
\end{figure}

Given $a_0$, the next maximum and then the next minimum to its left are $b_0 = \Dth{a_0}$ and $a_1 = \Low{b_0}$.
Symmetrically, $a_1' = \Low{b_0'}$ and $b_1' = \Dth{a_1'}$.
Assuming the configuration in Figure~\ref{fig:simplified}, namely $f(a_0) < f(x) < f(b_0')$, $a_0$ remains a minimum and $b_0'$ remains a maximum after connecting the two items monotonically.
In other cases, $a_0$ and $b_0'$ may become non-critical.
As a pre-processing step to gluing, we remove up-type and down-type endpoints that become non-critical when we connect the two lists.
Whatever the case, the algorithm assumes that $a_0$ and $b_0'$ are the first critical points to the left and right of $x$, and that $a_0$ is a minimum and $b_0'$ is a maximum.
We write $\alpha$ for the hook at $b_0'$ and use it as a dummy leaf, like in cutting.
For the purposes of this algorithm we define $f(\beta_h) < f(\beta_g)$ for tie-breaking.
We also define $\Low{\beta_g} = \texttt{nil}$ and $\Low{\beta_h} = \texttt{nil}$ with $f(\texttt{nil}) = -\infty$.
Again, we define the special root of the left tree, $\beta_g$, to be at $-\infty$ along the interval and swap the in- and mid-trails of the special banana of the left tree.
This is reset after the gluing is complete.%
In each iteration, the maximum with lower value is paired with a minimum to its left or its right.
If the maximum in the left tree, $q$, is processed, then the candidate minima are $\glueplow = \Low{q}$, which is on the left of $q$, and $\gluepbth$, which is on the same side of $q$ as $\Birth{q}$.
If the maximum in the right tree is processed, then the situation is symmetric.
The iteration ends when one of the two trees is empty, i.e., consists only of a special root and the dummy $\alpha$. 
\begin{tabbing}
  m\=m\=m\=m\=m\=m\=m\=m\=m\=\kill
  \> \texttt{function} $\textsc{Glue} (a_0, b_0', \beta_g, \beta_h)$: \\*
  \> \> set $q = b_0 = \Dth{a_0}$; $q' = b_0'$; \\*
  \> \> \texttt{repeat} \= \texttt{if} \= $f(q) < f(q')$ \texttt{then} \\*
  \> \> \> \> $\glueplow = \Low{q}$; \texttt{if} $\gluepbth \neq \alpha$ \texttt{then} $\gluepbth = \Birth{q}$ \texttt{else} $\gluepbth = \Birth{q'}$ \texttt{endif}; \\
  \> \> \> \> \texttt{case} \= $f(\gluepbth) > f(\glueplow)$ \texttt{and} $\gluepbth = \Birth{q}$: $\textsc{UndoInjury}(\gluepbth,q)$; \\*
  \> \> \> \>               \> $f(\gluepbth) > f(\glueplow)$ \texttt{and} $\gluepbth = \Birth{q'}$: $\textsc{UndoFatality}(\gluepbth,q)$; \\*
  \> \> \> \>               \> $f(\glueplow) > f(\gluepbth)$: $\textsc{UndoScare}(\glueplow,q)$ \\*
  \> \> \> \> \texttt{endcase};
              $q = \Dth{\Low{q}}$ \\
  \> \> \> ~~\texttt{else} symmetric cases for $f(q) > f(q')$ \\*
  \> \> \> \texttt{endif} \\
  \> \> \texttt{until} $\In{\beta_g} = \Mid{\beta_g} = \alpha$ or $\In{\beta_h} = \Mid{\beta_h} = \alpha$; \\*
  \> \> \texttt{if} $\In{\beta_g} = \Mid{\beta_g} = \alpha$ \= \texttt{then} discard $\beta_g, \alpha$ \texttt{else} discard $\beta_h, \alpha$ \texttt{endif}.
\end{tabbing}
The main three subroutines are inverses of the earlier functions and undo injuries, fatalities, and scares.
Function \textsc{UndoInjury} extends the short wave banana beginning at $q$ to a simple wave banana by inserting maxima into the in-trail beginning at $q$; see Figure~\ref{fig:healing} but read the change backward, from the left to the middle.
Function \textsc{UndoFatality} extends the short wave banana beginning at $q$ to a simple wave banana while also changing the pairing of $q$; see again Figure~\ref{fig:healing} now reading the change from the right to the middle.
Finally, function \textsc{UndoScare} adjust the value of the up-type endpoint, $\alpha = \Birth{q}$, to $f(\alpha) = f(\glueplow) - \varepsilon$,
thereby extending the out-panel of the window associated with $q$ and changing the pairing.

\medskip \noindent \textbf{Summary.}
We summarize by stating the running-time for cutting and concatenating lists in terms of the number of critical points and the number of changes to the augmented persistence diagrams.
\begin{theorem}[Time to Cut and Concatenate]
  \label{thm:time_to_cut_and_concatenate}
  Let $f \colon [0,m+1] \to \Rspace$ be a generic piecewise linear map with $n$ maxima, and let $g$ and $h$ such that $f = g \cdot h$.
  The time to cut $f$ into $g$ and $h$ is $O(\log n + k)$, in which $k$ is the size of the symmetric difference between $\DgmR{}{f}$ and $\DgmR{}{g} \sqcup \DgmR{}{h}$, and so is the time to concatenate $g$ and $h$ to form $f$.
\end{theorem}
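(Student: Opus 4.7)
The plan is to prove correctness and the running time separately for cutting, and then observe that concatenation is the exact inverse operation and its analysis is symmetric.

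For correctness of \textsc{Split}, I invoke \cref{lem:uniqueness_of_banana_tree}: it suffices to show that the two output trees satisfy Conditions \textsf{I.1}, \textsf{I.2}, \textsf{II}, \textsf{III.1}, and \textsf{III.2} with respect to $g$ and $h$. This reduces to verifying the two loop invariants claimed for \textsc{Split} by induction on iterations. The key enabling fact is \cref{lem:sorted_stacks}: since \textsc{TopBanana} always returns the currently highest affected banana, whenever \textsc{DoInjury}, \textsc{DoFatality}, or \textsc{DoScare} is executed, the portion of the trail being transferred already consists of fully-settled sub-bananas, so the required pointer surgery can be performed in $O(1)$ time using the layout shown in \cref{fig:healing}. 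For \textsc{Glue}, I would verify that each \textsc{UndoX} is the exact inverse of the corresponding \textsc{DoX}; combined with \cref{lem:uniqueness_of_banana_tree} applied to $f = g \cdot h$, this establishes correctness.

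For the running time, the plan is to charge every step either to the initial $O(\log n)$ setup or to a distinct entry of the symmetric difference between $\DgmR{}{f}$ and $\DgmR{}{g} \sqcup \DgmR{}{h}$. The dictionary lookup locating the two critical points neighboring $x$ costs $O(\log n)$. The walk inside \textsc{SmallestBanana} does $O(1)$ work per inspected banana, and each such banana has $x$ in its out-panel and therefore suffers a scare, which re-pairs its spanning critical items and hence contributes a change to the symmetric difference. \textsc{LoadStacks} performs $O(1)$ work per pushed banana, and every pushed banana suffers an injury, fatality, or scare: injuries change the wave type of the window from simple to short, while fatalities and scares re-pair the spanning critical items; in each case at least one point or arrow of the augmented persistence diagram is modified. \textsc{Split} then performs $O(1)$ pointer work per popped banana. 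Cutting the two dictionaries takes $O(\log n)$ and cutting the doubly-linked list takes $O(1)$. Summing gives the claimed $O(\log n + k)$ bound, and the analysis of \textsc{Glue} proceeds identically with each \textsc{UndoX} charged to its corresponding diagram change.

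The main obstacle, and where I expect most of the effort to go, is making the accounting rigorous -- specifically, constructing an injection from the bananas touched by the algorithm into the symmetric difference of the diagrams. The subtle case involves the short-wave bananas on the spines of $\UpTree{f}$ and $\DnTree{f}$: each appears in only one of the two banana trees, yet may be affected by the cut even when $x$ lies in its out-panel with respect to the tree in which it is absent. The special-case inspection of $q' = \In{q_j}$ and the possibly additional push onto $\Ldn$ or $\Rdn$ (and symmetrically onto $\Lup$ or $\Rup$) described immediately after \cref{lem:sorted_stacks} is introduced precisely to cover this scenario. Verifying that this exceptional push fires if and only if the corresponding spine window genuinely changes across the cut, so that no banana is counted twice and none is missed, is the delicate step that I expect to dominate the technical work of the proof.
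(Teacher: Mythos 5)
Your proposal follows essentially the same approach as the paper: charge the $O(\log n)$ dictionary lookups and cuts directly to the operation, and charge each banana the split visits to a change in the augmented persistence diagram, with the spine of the tree delimiting how far the traversal goes. One simplification worth noting: the paper does not attempt a strict injection and instead allows each change to be charged at most twice (once by a node in $\UpTree{f}$ and once by a node in $\DnTree{f}$), and it handles the spine case you flag as ``delicate'' by a direct observation — once the traversal encounters a short-wave window cut in its in-panel it has reached the spine, and by Lemma~\ref{lem:spines_and_windows} all of $q$'s ancestors are of the same type, so the traversal stops and contributes only $O(1)$ extra work charged to the operation directly; no ``iff'' about the exceptional push needs to be established for the time bound.
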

\begin{proof}
  We focus on the algorithm for cutting $f$ at $x$, and omit the argument for glueing, which is symmetric.
  It takes $O(\log n)$ time to find the two consecutive critical points that sandwich $x$ between then, and it takes $O(\log n)$ time to cut the dictionaries accordingly.
  The time needed to split $\UpTree{f}$ and $\DnTree{f}$ is proportional to a constant plus the number of nodes visited during the iteration.
  
  \smallskip
  We argue that we can charge each such node, $q$, to a change in the augmented persistence diagram in such a way that each change is charged at most twice, once by a node in $\UpTree{f}$ and once by a node in $\DnTree{f}$.
  Let $q$ be a visited maximum in $\UpTree{f}$, and let $p$ be the minimum with $\Dth{p} = q$.
  Then $\Window{p}{q}$ is a window with simple or short wave cut by $x$.
  If $x$ cuts through the mid-panel or the out-panel, then $\Window{p}{q}$ is neither a window of $g$ nor of $h$, and we charge $q$ to the disappearance of the point $(f(p), f(p))$ from the ordinary subdiagram.
  So suppose $x$ cuts through the in-panel.
  If $\Window{p}{q}$ is with simple wave, $\Window{q}{p}$ is a window of $-f$ and $x$ cuts through its out-panel, so we charge $q$ to the disappearance of $(f(q), f(p))$ from the relative subdiagram.
  Finally, if $\Window{p}{q}$ is a window with short wave and $x$ cuts through its in-panel, then the algorithm has reached the spine of $\UpTree{f}$.
  By Lemma~\ref{lem:spines_and_windows}, all ancestors of $q$ are of the same type, so we can end the traversal here. Thus, this case causes only constant work which is charged to the cutting operation directly.
\end{proof}

We remark that the $O(\log n + k)$ time bound for cutting and concatenating would not hold if in splitting and gluing we traversed the banana trees all the way to the respective special roots.
Let $q$ be the node on the spine where the algorithm halts.
The path connecting $q$ to the special root may be arbitrarily long, and none of these nodes corresponds to a change in the augmented persistence diagram.
A particular map for which this happens is the \emph{damped sine function}, defined by $f(x) = x \sin x$; see the right half of Figure~\ref{fig:simplified} for a sketch.
We get a triple-panel window for each min-max pair, each with short wave on the left.
If we cut $f$ close to $0$, then all these windows get insubstantially smaller but remain to be spanned by the same min-max pairs.
These changes are not visible in the augmented persistence diagrams of $f$ and the functions created by cutting $f$ at $x$.

\section{Discussion}
\label{sec:5}

The main contribution of this paper is a dynamic data structure for maintaining the augmented persistence diagram of linear lists.
The data structure starts with the Cartesian tree of Vuillemin \cite{Vui80} (see also Aragon and Seidel \cite{ArSe96}), which it decomposes into paths and then splits into pairs of parallel trails, arriving at an unconvential representation referred to as banana tree.
For the efficiency of the data structure, it is essential to maintain a pair of banana trees, which store complementary information about the list.
The most important next step is the implementation of the data structure and its algorithms.
With such an implementation, we can deepen our understanding of the persistence of random lists, which in turn can be used as a baseline for our understanding of non-random time series.

\smallskip
The theoretical foundations for our dynamic algorithms are described in \cite{BCES21}, where maps on $1$-dimensional domains, which include but go beyond intervals are described.
Can the banana trees be extended to geometric trees (which allow for bifurcations) and geometric networks (which allow for bifurcations as well as loops) without deterioration of the asymptotic running time?
Because of the unconventional representation of trees in terms of bananas (pairs of parallel trails), we finally ask whether there are other dynamic data structure questions that can benefit from this representation.

\clearpage

\appendix

\section{Correctness of Local Maintenance}
\label{sec:local-maintenance-correctness}

In this section, we prove the correctness of the local operations and of the algorithms for Scenarios~A and B (see \cref{lem:scenario-A,lem:scenario-B} in \cref{sec:scenario-correctness}).
We also give details on how to treat changes in value that lead to an up-type item becoming down-type or vice versa.
We make extensive use of the homotopy $h_\lambda \colon [0,m+1] \to \Rspace$ defined in \cref{sec:4.2}
as $h_\lambda (x) = (1-\lambda) f(x) + \lambda g(x)$ for $0 \leq \lambda \leq 1$, where $f$ and $g$ differ in the value of a single item.
The change in function value from $f$ to $g$ can be arbitrary and the transformation of $\UpTree{f}$ into $\UpTree{g}$ is achieved by a sequence of local operations,
which we define in terms of small changes in the function value of a single item.
To make the notion of small change more precise, we define \emph{contiguity} of items in terms of function value.
\begin{Definition}[Contiguity]
    \label{def:contiguous-value}
    Items $p$ and $q$ are \emph{contiguous in $f$} if there exists no item $u \neq \{p,q\}$ such that $f(u) \in [\min \{f(p), f(q)\}, \max \{f(p), f(q)\}]$.
\end{Definition}
Each local operation involving items $p$ and $q$ is then defined as the transformation of $\UpTree{h_\sigma}$ into $\UpTree{h_\theta}$,
where $p$ and $q$ are contiguous in $h_\sigma$ and $h_\theta$.
We prove the correctness of local operations in terms of such a small change,
then show how to extend this to larger changes where items are not necessarily contiguous.
This allows us to combine the local operations to form the Scenarios~A and B.
It is easy to see that when items remain contiguous upon exchanging values then they swap places in the order of items by function value.
This is stated in the following lemma.

\begin{lemma}[Local Changes]
    \label{lem:stable-value-order}
    Let $0 \leq \sigma < \theta \leq 1$.
    Let $p$ and $q$ be items that are contiguous in both $h_\sigma$ and $h_\theta$ with $h_\sigma(p) < h_\sigma(q)$ and $h_\theta(p) > h_\theta(q)$.
    The order of items by function value is the same in $h_\sigma$ and $h_\theta$ except that $p$ and $q$ are swapped.
\end{lemma}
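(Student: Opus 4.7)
The plan is to exploit the fact that $h_\lambda$ changes the value of a single item (call it $j$) linearly in $\lambda$, while all other items keep a constant value throughout the homotopy. Since $h_\sigma(p) < h_\sigma(q)$ but $h_\theta(p) > h_\theta(q)$, and the values of items other than $j$ agree in $h_\sigma$ and $h_\theta$, at least one of $p, q$ must coincide with $j$. By symmetry I would assume $p = j$, in which case $h_\sigma(q) = h_\theta(q) = f(q)$, and monotonicity of the straight-line homotopy forces $h_\sigma(p) < f(q) < h_\theta(p)$.

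The core of the argument is then to show that for every third item $u \notin \{p,q\}$, the common value $f(u) = h_\sigma(u) = h_\theta(u)$ lies outside the interval $[h_\sigma(p), h_\theta(p)]$. This is extracted from the two contiguity hypotheses: contiguity in $h_\sigma$ yields $f(u) \notin [h_\sigma(p), f(q)]$, and contiguity in $h_\theta$ yields $f(u) \notin [f(q), h_\theta(p)]$. A short case split on the two options allowed by the first exclusion, combined with the corresponding constraint from the second, leaves exactly $f(u) < h_\sigma(p)$ or $f(u) > h_\theta(p)$.

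From here the conclusion is immediate. For any pair of items both different from $p$, their values are identical in $h_\sigma$ and $h_\theta$, so their relative order is preserved. For a pair of the form $(u, p)$ with $u \notin \{p, q\}$, the bound just established together with $h_\sigma(p) \le h_\theta(p)$ places $u$ strictly below $p$ in both maps or strictly above $p$ in both maps. For the pair $(p, q)$ itself, the hypothesis explicitly records the swap. Hence the orderings by value at $\sigma$ and at $\theta$ differ precisely in the transposition of $p$ and $q$.

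The only minor subtlety is the reduction by symmetry to $p = j$ (the case $q = j$ is handled by reversing the parameter $\lambda \mapsto 1-\lambda$ and interchanging the roles of $p$ and $q$); beyond that, the proof is a direct manipulation of the two contiguity hypotheses, and I do not anticipate any substantive obstacle.
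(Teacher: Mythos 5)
Your proof is correct. Note, however, that the paper does not actually supply a proof of \cref{lem:stable-value-order}: it is stated immediately after the remark that ``it is easy to see,'' and no proof environment follows. So there is nothing in the paper to compare against; your argument is a valid fill-in of an omission. The key steps all check out: since $h_\lambda$ varies only in item $j$, the sign flip forces $j \in \{p,q\}$; assuming $p=j$ gives $h_\sigma(p) < h_\sigma(q) = h_\theta(q) < h_\theta(p)$; and the two contiguity hypotheses exclude every third value $c_u = h_\sigma(u) = h_\theta(u)$ from $[h_\sigma(p), h_\sigma(q)]$ and from $[h_\theta(q), h_\theta(p)]$, whose union is $[h_\sigma(p), h_\theta(p)]$, so each $c_u$ lies strictly below $h_\sigma(p)$ or strictly above $h_\theta(p)$ and its order relative to $p$ is unchanged. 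The symmetry reduction via $\lambda \mapsto 1-\lambda$ with $p \leftrightarrow q$ also holds, since the reparametrized family is again a straight-line homotopy between two maps differing only in the value of $j$.
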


In some of the correctness proofs, we analyze how a change in the function value of an item affects the structure of windows
and leverage the correspondence between windows and bananas established in \cref{lem:bananas_and_windows}.
In other correctness proofs, we use the fact that there is a unique banana tree satisfying the conditions 
proved in \cref{lem:uniqueness_of_banana_tree}.
To this end we reformulate the uniqueness conditions in terms of individual nodes and their pointers. This requires the following definitions.
\begin{Definition}[Ancestor and Descendant]
    A node $p$ is an \emph{ancestor} of an internal node $q$ if $p = \UpOp^*(q)$
    and of a leaf $r$ if $p = \UpOp^*(\In{r})$ or $p = \UpOp^*(\Mid{r})$, where $\UpOp^*$ denotes a sequence of $\Up{\cdot}$-pointers.
    A node $s$ is a \emph{descendant} of a node $t$ if $t$ is an ancestor of $s$.
\end{Definition}
\begin{Definition}[Banana subtree]
    The \emph{banana subtree rooted at a maximum $q$} is the banana tree consisting of the banana spanned by $\Birth{q}$ and $q$
    and all bananas whose maximum has an ancestor other than $q$ on the in-trail or mid-trail starting at $\Birth{q}$ and ending at $q$.
\end{Definition}
We now state the new uniqueness conditions as \cref{inv:condition-I,inv:condition-II,inv:condition-III-max}.
\begin{invariant}
    \label{inv:condition-I}
   For each maximum $q$ of $f$, except for the special root, it holds that
    \begin{enumerate}
        \item if $\Birth{q} < q$, then all nodes $u \neq q$ in the banana subtree rooted at $q$ satisfy $u < q$
            and all descendants $v$ of $\Dn{q}$, including $\Dn{q}$, satisfy $v > q$;
        \item if $\Birth{q} > q$, then all nodes $u \neq q$ in the banana subtree rooted at $q$ satisfy $u > q$
            and all descendants $v$ of $\Dn{q}$, including $\Dn{q}$, satisfy $v < q$.
    \end{enumerate}
\end{invariant}
\begin{invariant}
    \label{inv:condition-II}
    For each minimum $p$ of $f$, except for the global minimum, it holds that $f(p) > f(\Low{\Dth{p}})$.
\end{invariant}
\begin{invariant}
    \label{inv:condition-III-max}
    For each maximum $q$ of $f$, except for the special root, it holds that
    \begin{enumerate}
        \item $f(\Up{q}) > f(q) > f(\Dn{q})$;
        \item if $q \neq \In{\Up{q}}$, then $\Up{q} < q < \Dn{q}$ or $\Dn{q} < q < \Up{q}$;
        \item if $q = \In{\Up{q}}$, then
            either $\Up{q} < q$ and $\Dn{q} < q$ or $\Up{q} > q$ and $\Dn{q} > q$.
    \end{enumerate}
\end{invariant}

We show that there is a unique banana tree satisfying \cref{inv:condition-I,inv:condition-II,inv:condition-III-max} for a given map $f$.
We first show that a banana tree satisfying \cref{inv:condition-I,inv:condition-III-max} also satisfies Conditions {\sf III.1} and {\sf III.2}.
\begin{lemma}[Invariants imply Banana Conditions]
    \label{lem:inv-1-and-3-are-cond-3}
    A banana tree $\mathcal{B}$ that satisfies \cref{inv:condition-I,inv:condition-III-max}
    also satisfies Conditions {\sf III.1} and {\sf III.2}.
\end{lemma}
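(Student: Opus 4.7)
The plan is to verify Conditions \textsf{III.1} and \textsf{III.2} one banana at a time. Fix a banana of $\mathcal{B}$ with lower end $a$ (a minimum) and upper end $b$ (a maximum), and write its in-trail as $a = u_0, u_1, \ldots, u_j = b$ and its mid-trail as $a = v_0, v_1, \ldots, v_k = b$. I focus on the case $a < b$; the case $a > b$ will follow by a wholly symmetric argument using part~(2) of \cref{inv:condition-I} in place of part~(1). For every interior $u_i$ and $v_i$ the pointer semantics give $\Up{u_i} = u_{i+1}$ and $\Dn{u_i} = u_{i-1}$, a fact used throughout. The function-value halves of both conditions are then immediate from part~(1) of \cref{inv:condition-III-max}, which states $f(\Up{q}) > f(q) > f(\Dn{q})$: chaining this along either trail at once yields $f(u_0) < \cdots < f(u_j)$ and $f(v_0) < \cdots < f(v_k)$.

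The substance lies in the position inequalities, which I would prove by a downward induction from the top of each trail. For the in-trail, the base step is as follows: part~(1) of \cref{inv:condition-I}, applied to $b$ with $\Birth{b} = a < b$, forces every node of the banana subtree of $b$ other than $b$ to lie to the left of $b$, so $u_{j-1} < b$ in particular. Since $u_{j-1} = \In{b} = \In{\Up{u_{j-1}}}$, part~(3) of \cref{inv:condition-III-max} triggers and places $\Dn{u_{j-1}}$ on the same side of $u_{j-1}$ as $b$, giving $u_{j-2} > u_{j-1}$. For the inductive step, for each interior $u_i$ with $1 \leq i \leq j-2$, the pointer $\In{u_{i+1}}$ descends into the sub-banana rooted at $u_{i+1}$ and therefore does not coincide with the trail-neighbor $u_i$; hence $u_i \neq \In{\Up{u_i}}$, and part~(2) of \cref{inv:condition-III-max} applies, placing $\Up{u_i} = u_{i+1}$ and $\Dn{u_i} = u_{i-1}$ on opposite sides of $u_i$. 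Combined with the inductive hypothesis $u_{i+1} < u_i$, this forces $u_{i-1} > u_i$, and iterating down to $i = 1$ (where $\Dn{u_1} = u_0 = a$) yields $a > u_1 > \cdots > u_{j-1}$, as required. The degenerate cases $j \leq 2$ are absorbed by the base step.

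The mid-trail argument is the same downward induction with one change at the top: because $v_{k-1} = \Mid{b} \neq \In{b}$ whenever $k \geq 2$, the base step now goes through part~(2) of \cref{inv:condition-III-max} rather than part~(3), placing $v_{k-2} = \Dn{v_{k-1}}$ and $b = \Up{v_{k-1}}$ on opposite sides of $v_{k-1}$; with $v_{k-1} < b$ from part~(1) of \cref{inv:condition-I}, this gives $v_{k-2} < v_{k-1}$, and the induction propagates to $v_0 < v_1 < \cdots < v_k$. The main subtlety I expect to be an obstacle is keeping the pointer semantics straight throughout: the up- and dn-pointers traverse the parent trail on which the node is interior, while the in- and mid-pointers descend into the node's own child banana, and it is exactly this separation that decides whether part~(2) or part~(3) of \cref{inv:condition-III-max} is the one that fires at each step of the two inductions.
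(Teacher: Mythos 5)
Your proof is correct and takes essentially the same route as the paper: a top-down induction along each trail, anchored at the upper end of the banana by \cref{inv:condition-I} and propagated downward via \cref{inv:condition-III-max}. You are somewhat more explicit than the paper about which clause of \cref{inv:condition-III-max} fires at each node (the $\In{\cdot}$-pointer case at the top of the in-trail versus the generic case elsewhere), but the underlying argument is the same.
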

\begin{proof} 
    Let $p$ and $q$ be a minimum and a maximum spanning a banana in $\mathcal{B}$.
    Assume $p < q$; the other case is symmetric.
    The requirement on function values increasing along the trails from $p$ to $q$ follows immediately from $f(\Dn{u}) < f(u) < f(\Up{u})$.
    We now show that the trails are also ordered along the interval.
    By \cref{inv:condition-I}, $\Mid{q} < q$ and $\In{q} < q$.
    Write $p=v_0,v_1,\dots,v_\ell=q$ for the nodes along the mid-trail with $v_1 = \Mid{p}$, $v_{\ell-1} = \Mid{q}$
    and $v_i = \Up{v_{i-1}}$ for $2 \leq i \leq \ell$ and $v_i = \Dn{v_{i+1}}$ for $0 \leq i \leq \ell - 2$.
    If $v_{\ell-1} = \Mid{q}$ is a maximum,
    then by \cref{inv:condition-III-max} and $\Mid{q} < q$ it satisfies $\Dn{v_{\ell-1}} < v_{\ell-1} < \Up{v_{\ell-1}} = q$.
    If $v_{\ell-2}$ is a maximum,
    then it in turn satisfies $\Dn{v_{\ell-2}} < v_{\ell-2} < \Up{v_{\ell-2}} = v_{\ell-1}$ by \cref{inv:condition-III-max}.
    Repeating this argument for all $v_i$ for $1 \leq i \leq {\ell - 3}$ it follows that
    $p < v_1 < \dots < v_\ell$.

    \smallskip
    Write $p = u_0,u_1,\dots,u_k$ for the nodes along the in-trail with $u_1 = \In{p}$, $u_{k-1} = \In{q}$
    and $u_i = \Up{u_{i-1}}$ for $2 \leq i \leq k$ and $u_i = \Dn{u_{i+1}}$ for $0 \leq i \leq k - 2$.
    By \cref{inv:condition-III-max} and $\In{q} < q$ we have $u_{k-1} < \Dn{u_{k-1}}$.
    Following a similar argument as for the $v_i$ we get $\Dn{u_i} > u_i > \Up{u_i}$ for all $1 \leq i \leq k-2$
    and it follows that $p = u_0 > u_1 > \dots > u_{k-1}$.
    Thus, as required by Conditions {\sf III.1} and {\sf III.2}, the trails are ordered along the interval and by function value,
    i.e., $\mathcal{B}$ satisfies Conditions {\sf III.1} and {\sf III.2}.
\end{proof}

The next lemma states that a path-decomposed binary tree can be transformed into a banana tree satisfying the invariants.
\begin{lemma}[Conditions imply Banana Invariants]
    \label{lem:pdbt-to-banana-tree}
    If a path-decomposed binary tree $T$ fulfills the Conditions {\sf I.1}, {\sf I.2} and {\sf II},
    then the banana tree obtained from $T$ as described in \cref{sec:3.2} satisfies \cref{inv:condition-I,inv:condition-II,inv:condition-III-max}.
\end{lemma}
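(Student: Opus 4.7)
The proof verifies the three invariants separately, exploiting the explicit construction of the banana tree from the path-decomposed binary tree $T$ in \cref{sec:3.2}. The main tools are Condition~{\sf I.1} (in-order traversal equals interval order), Condition~{\sf I.2} (values strictly increase along leaf-to-root paths), Condition~{\sf II} (characterization of path endpoints), and the rule assigning a path node to the in-trail exactly when its path-child is its right child in $T$.

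For \cref{inv:condition-III-max}, the pointers $\Up{q}$ and $\Dn{q}$ identify consecutive nodes of the trail in which $q$ is interior, so the value inequality $f(\Up{q}) > f(q) > f(\Dn{q})$ is immediate from Condition~{\sf I.2}. The interval-position parts split on whether $q$ is the topmost interior node of its trail: if so, $\Up{q}$ is the upper end of the parent banana and $q$ equals either $\In{\Up{q}}$ or $\Mid{\Up{q}}$, while otherwise both $q$ and $\Up{q}$ are interior nodes of the same trail. In each subcase the ordering of $\Dn{q}, q, \Up{q}$ along the interval follows from the in-trail versus mid-trail rule combined with Condition~{\sf I.1}. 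For \cref{inv:condition-II}, let $p$ be a non-global minimum. Condition~{\sf II} implies that some leaf in $\Dth{p}$'s subtree has value smaller than $f(p)$, and the one of minimum value must lie on the path that continues through $\Dth{p}$ upward---this leaf is exactly $\Low{\Dth{p}}$---so $f(\Low{\Dth{p}}) < f(p)$.

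The main work is \cref{inv:condition-I}. The plan is to first establish a correspondence: for a maximum $q$, the banana subtree rooted at $q$ equals $\{q\}$ together with the $T$-subtree of the child $c$ of $q$ through which $q$'s own path descends to $\Birth{q}$. The forward inclusion follows by induction on nested sub-bananas, since each sub-banana is rooted at an interior maximum of $q$'s trails and thus sits on the path from $\Birth{q}$ to $q$ in $T$, which lies entirely inside $c$'s $T$-subtree. The reverse inclusion follows because the $\Up$-chain from any maximum in $c$'s $T$-subtree mirrors the path decomposition upward and reaches $q$ only by first passing through an interior maximum of $q$'s trails. Granting the correspondence, Condition~{\sf I.1} immediately yields the position bound, as $c$'s $T$-subtree lies entirely on one side of $q$ in in-order; the side is determined by whether $\Birth{q}$, a leaf of $c$'s $T$-subtree, is below or above $q$. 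For the statement about $\Dn{q}$, the key observation is that $\Dn{q}$ lies on the parent banana's trail, strictly below $q$, and therefore descends in $T$ through $q$'s \emph{other} child $c'$, which is distinct from $c$ by edge-disjointness of the path decomposition. Applying the correspondence at $\Dn{q}$, or treating $\Dn{q}$ directly if it is the parent banana's lower end, confines the descendants of $\Dn{q}$ in the banana tree to $c'$'s $T$-subtree, which by Condition~{\sf I.1} sits on the opposite side of $q$ from $c$.

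\textbf{Main obstacle.} The crux is the correspondence between the banana subtree rooted at $q$ and the $T$-subtree of its path-child. This step couples the recursive, trail-based definition of the banana subtree with the structural fact that $\Up$-chains in the banana tree mirror the path decomposition in $T$; once it is in place, each invariant reduces to a direct application of Conditions~{\sf I.1}, {\sf I.2}, and~{\sf II}.
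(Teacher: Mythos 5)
Your proposal is correct and follows essentially the same strategy as the paper: each invariant is checked separately from the tree structure of $T$, with \cref{inv:condition-III-max} reduced to Conditions~{\sf I.1}/{\sf I.2} via the trail-assignment rule, \cref{inv:condition-II} to Condition~{\sf II}, and \cref{inv:condition-I} to the identification of the banana subtree rooted at a maximum $q$ with the $T$-subtree hanging off the child through which $q$'s own path descends, a correspondence the paper invokes more tersely (``the construction of $\mathcal{B}$ is such that this subtree becomes the banana subtree rooted at $q$''). One small point to tighten when you write it out: ``applying the correspondence at $\Dn{q}$'' only bounds the banana subtree rooted at $\Dn{q}$, whereas the descendants of $\Dn{q}$ under the paper's $\UpOp^*$-based ancestor relation also include the lower part of $\Dn{q}$'s trail and the banana subtrees hanging off it; you need the complementary observation you already gesture at---that every $\Up{\cdot}$ step is a strict $T$-ancestry step, so banana-tree descendants of $\Dn{q}$ are $T$-descendants of $\Dn{q}$ and hence live in $T(c')$---which is also how the paper (via Conditions~{\sf III.1}/{\sf III.2}) closes this case.
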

\begin{proof}
    Let $T$ be a path-decomposed binary tree satisfying Conditions {\sf I.1}, {\sf I.2} and {\sf II}
    and let $\mathcal{B}$ be the banana tree obtained from $T$ as described in \cref{sec:3.2}.
    We begin by proving that the banana tree $\mathcal{B}$ satisfies \cref{inv:condition-III-max}.
    Recall that $\mathcal{B}$ satisfies Conditions {\sf III.1} and {\sf III.2}
    and that by \cref{lem:uniqueness_of_banana_tree} it is unique.
    To see that this implies \cref{inv:condition-III-max} consider any banana spanned by a minimum $a$ and a maximum $b$.
    Assume $a < b$; the argument for $a > b$ is symmetric.
    Let $a = u_0,u_1,\dots,u_j = b$ be the nodes along the in-trail
    and $a = v_0,v_1,\dots,v_k = b$ be the nodes along the mid-trail.
    By Conditions {\sf III.1}, {\sf III.2} we have
    \smallskip \begin{enumerate}
        \item $u_i > u_{i+1}$ for all $0 \leq i \leq j-2$ and $f(u_i) < f(u_{i+1})$ for all $0 \leq i \leq j-1$,
        \item $v_i < v_{i+1}$ for all $0 \leq i \leq k-1$ and $f(v_i) < f(v_{i+1})$ for all $0 \leq i \leq k-1$.
    \end{enumerate} \smallskip
    The pointers $\Dn{\cdot}$ and $\Up{\cdot}$ are defined such that
        $\Up{u_i} = u_{i+1}$, $\Dn{u_i} = u_{i-1}$ for all $1 \leq i \leq j-1$
    and $\Up{v_i} = v_{i+1}$, $\Dn{v_i} = v_{i-1}$ for all $1 \leq i \leq k-1$.
    Together with Conditions {\sf III.1} and {\sf III.2} this implies that
    all maxima $q$ on a trail between $a$ and $b$ satisfy
    $f(\Dn{q}) < f(q) < f(\Up{q})$ 
    and, except $u_{j-1}$, they satisfy
    $\Dn{q} < q < \Up{q}$ or $\Up{q} < q < \Dn{q}$.
    By Condition {\sf III.1} it holds that $\Dn{u_{j-1}} = u_{j-2} > u_{j-1}$,
    and by the assumption that $a < b$ it holds that $b = \Up{u_{j-1}} > u_{j-1}$.
    This shows that \cref{inv:condition-III-max} is satisfied for
    all $u_i$ with $1 \leq i \leq j-1$ and $v_i$ with $1 \leq i \leq k-1$,
    i.e, for all nodes internal on a trail from $a$ to $b$.
    Since all maxima except the special root are internal to some trail
    it follows that $\mathcal{B}$ satisfies \cref{inv:condition-III-max}.

    \smallskip
    We now prove that $\mathcal{B}$ satisfies \cref{inv:condition-II}.
    We need to show that for every minimum $p$ it holds that $f(p) > f(\Low{\Dth{p}})$.
    Assume by contradiction that for some minimum $p$ this is not the case and let $a = \Low{\Dth{p}}$, i.e., we assume $f(a) = f(\Low{\Dth{p}}) > f(p)$.
    The inequality $f(\Low{\Dth{p}}) > f(p)$ implies that $\Dth{p} = q$ is an ancestor of $a$ for which $a$ does not have the smallest value in the subtree of $q$.
    Condition~{\sf II} requires that there is a path $\Path{a}{q}$ and this is a contradiction as $q \neq \Dth{a}$.
    It follows that $f(p) > f(\Low{\Dth{p}})$ for every minimum $p$ and that $\mathcal{B}$ satisfies \cref{inv:condition-II}.

    \smallskip
    It remains to show that the banana tree $\mathcal{B}$ also satisfies \cref{inv:condition-I}.
    Let $q$ be some maximum except the special root, let $Q$ be the path ending at $q$ and let $p$ be the leaf at the other end of $Q$.
    Assume $p < q$; the other case is symmetric.
    The path $Q$ is contained entirely in a subtree $S_1$ of $q$ and the construction of $\mathcal{B}$ is such that this subtree becomes the banana subtree rooted at $q$.
    By Condition {\sf I.1} either for all $s \in S_1 \colon s < q$ or for all $s \in S_1 \colon s > q$.
    Since $p \in S_1$, it follows that all $s \in S_1$ satisfy $s < q$. 
    The subtree $S_1$ becomes the banana subtree rooted at $q$ in $\mathcal{B}$
    and thus the condition that $u < q$ for all $u$ in the banana subtree rooted at $q$ is satisfied.
    Let $S_2$ be the other subtree of $q$ in $T$.
    By Condition {\sf I.1} for all $t \in S_2$ we have $t > q$.
    The node $\Dn{q}$ in $\mathcal{B}$ is one of the nodes in $S_2$ and thus $\Dn{q} > q$.
    By Conditions {\sf III.1} and {\sf III.2} all descendants $v$ of $\Dn{q}$ on the same trail as $\Dn{q}$ satisfy $v > \Dn{q}$ and thus $v > q$.
    All these nodes $v$ are also nodes in $S_2$, since they are on the same path as $\Dn{q}$ and have smaller value.
    The banana subtree of each $v$ is obtained from the subtree of $v$ in $T$ and thus these banana subtrees consist of nodes in subtrees of $S_2$.
    It follows that every descendant $t$ of $\Dn{q}$ including $\Dn{q}$ satisfies $t > q$, as required by \cref{inv:condition-I}.
    This concludes the proof that $\mathcal{B}$ satisfies \cref{inv:condition-I}.
\end{proof}

We now give an algorithm that turns a banana tree into path-decomposed binary tree.
\begin{lemma}[Merging Trails]
    \label{lem:banana-tree-to-pdbt}
    There exists a deterministic algorithm to merge the trails of each banana in a banana tree satisfying \cref{inv:condition-I,inv:condition-II,inv:condition-III-max},
    such that the resulting path-decomposed binary tree satisfies Conditions {\sf I.1}, {\sf I.2} and {\sf II}.
\end{lemma}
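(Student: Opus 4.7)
The plan is to merge the two trails of each banana via a standard two-way merge on function values, assemble the resulting paths into a binary tree from the leaves upward, and verify Conditions~{\sf I.1}, {\sf I.2}, and {\sf II} one at a time.

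First I would describe the algorithm. Processing the bananas of the input tree in bottom-up order, for each banana with lower end $a$ and upper end $b$ let $a = u_0, u_1, \ldots, u_j = b$ be its in-trail and $a = v_0, v_1, \ldots, v_k = b$ its mid-trail. By \cref{lem:inv-1-and-3-are-cond-3} both trails are already sorted by function value, so a deterministic two-way merge produces a single path $a = w_0, w_1, \ldots, w_{j+k-1} = b$ with strictly increasing $f$-values. Declare $w_{i-1}$ to be a child of $w_i$ for each $1 \leq i \leq j+k-1$, left or right according to whether $w_{i-1} < w_i$ or $w_{i-1} > w_i$ in position. Every interior maximum $w_i$ is also the upper end of a smaller sub-banana, so its remaining child is already supplied by the recursive call that processed that sub-banana; each maximum therefore ends up with exactly two children, each minimum remains a leaf, and the path beginning at $a$ in the resulting path-decomposition is exactly the merged sequence from $a$ to $b$.

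Next I would check Conditions~{\sf I.2} and {\sf II}, both of which fall out almost for free. A root-to-leaf path in the constructed tree is a concatenation of consecutive merged paths: starting at a leaf $a$ it climbs to $\Dth{a}$ along the merge of its own banana, then continues upward from $\Dth{a}$ in the merge of the parent banana in which $\Dth{a}$ is an interior node, and so on up to the special root. Within each stretch values increase by construction of the merge, and the transition between two stretches is monotone because the node immediately above $\Dth{a}$ in the parent's merged sequence has strictly larger value than $\Dth{a}$. This gives~{\sf I.2}. For~{\sf II}, the path starting at leaf $a$ ends at $b = \Dth{a}$, and \cref{inv:condition-II} yields $f(a) > f(\Low{b})$, so $a$ is not the minimum in the subtree rooted at $b$; on the other hand the merge guarantees that $a$ remains the minimum in every subtree rooted at a proper ancestor of $a$ strictly below $b$.

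The heart of the argument, and where I anticipate the main obstacle, is Condition~{\sf I.1}, which I would establish by induction on the height of the banana hierarchy. Fix a banana $(a,b)$ with $a < b$; the reverse case is symmetric. Conditions~{\sf III.1} and {\sf III.2} force the interior in-trail nodes to lie strictly to the left of $a$ and the interior mid-trail nodes strictly between $a$ and $b$, while \cref{inv:condition-I} pins every sub-banana hanging off the in-trail entirely to the left of $a$, every sub-banana hanging off the mid-trail entirely between $a$ and $b$, and the dn-side of $b$ entirely to the right of $b$. Coupled with the inductive hypothesis, a case analysis at each $w_i$ on whether $w_{i-1}$ lies on the in-trail, lies on the mid-trail, or equals the leaf $a$, shows that every descendant in the left subtree of $w_i$ has position less than $w_i$ and every descendant in the right subtree has position greater than $w_i$---which is exactly what in-order traversal needs. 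The difficulty is bookkeeping: the case analysis has to juggle the two trails, the dn-side of each interior maximum, and the recursively attached sub-bananas simultaneously; once the invariants have been translated into positional statements about disjoint node sets, the remaining work is mechanical.
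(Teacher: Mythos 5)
Your proposal is essentially the paper's proof: merge the two trails of each banana by function value, assemble the subtrees bottom-up by induction on the height of the banana hierarchy, and verify Conditions~{\sf I.2} and {\sf II} directly from the sorted merge and \cref{inv:condition-II}, reserving the positional case analysis for Condition~{\sf I.1}. The one piece you defer to ``mechanical bookkeeping'' that is not purely mechanical is a strengthening of the induction hypothesis that the paper states explicitly: for the recursively merged subtree $T(q)$ rooted at a maximum $q$, the node $q$ has a free child slot precisely on the side opposite the position of $\Birth{q}$ relative to $q$ (no right child if $\Birth{q} < q$, no left child if $\Birth{q} > q$). Without carrying this clause through the induction, the instruction ``declare $w_{i-1}$ to be the left (resp.\ right) child of $w_i$ according to position'' is not obviously well-formed, since the slot you want might already be occupied by the subtree produced by the recursive call on $w_i$'s own banana; once the clause is added, your case analysis (whether $w_i$ lies on the in-trail or the mid-trail of $(a,b)$, combined with Conditions~{\sf III.1}/{\sf III.2} and \cref{inv:condition-I}) shows that the open slot and the side dictated by $w_{i-1}$'s position always coincide, and the inductive step for {\sf I.1} closes exactly as you anticipate. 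One small misattribution worth noting: the claim that ``$a$ remains the minimum in every subtree rooted at a proper ancestor of $a$ strictly below $b$'' is not guaranteed by the merge itself but by \cref{inv:condition-II} applied to the minima of the sub-bananas hanging off the merged path.
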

\begin{proof}
    Let $\mathcal{B}$ be a banana tree satisfying \cref{inv:condition-I,inv:condition-II,inv:condition-III-max}.
    Define the height of a banana subtree to be the longest simple path from the root of the banana subtree to a leaf following only $\In{\cdot}$, $\Mid{\cdot}$ and $\Dn{\cdot}$ pointers and with function value decreasing along the path.
    We show by induction over the height
    that any banana subtree of $\mathcal{B}$ satisfying \cref{inv:condition-I,inv:condition-II,inv:condition-III-max}
    can be merged into a path-decomposed binary tree satisfying Conditions {\sf I.1}, {\sf I.2} and {\sf II}.
    Then, since the banana subtree rooted at the special root is equivalent to the banana tree $\mathcal{B}$,
    it follows that there is a deterministic algorithm to obtain a path-decomposed binary tree satisfying Conditions {\sf I.1}, {\sf I.2} and {\sf II} from $\mathcal{B}$.

    \smallskip
    We now give the algorithm for merging banana trees into path-decomposed binary trees.
    By induction over the height $h$ we show that a banana subtree of $\mathcal{B}$ of height $h$ with root $q$
    can be merged into a path-decomposed binary tree $T(q)$ such that Conditions {\sf I.1}, {\sf I.2} and {\sf II} are satisfied.
    To achieve this we also show that $\Birth{q} < q$ implies that $q$ has no right child in $T(q)$ and $\Birth{q} > q$ implies that $q$ has no left child in $T(q)$.
    This is needed to ensure that after assembling the subtrees of smaller height on a path, the nodes in the resulting binary tree are ordered correctly.
    Note that a banana tree consists of at least two nodes spanning a banana with empty trails, i.e., trails without interior nodes, so the base case is $h = 1$.

    \medskip
    \noindent\textbf{Base case:} A banana tree of height $1$ consists of a single banana spanned by a minimum $p$ and a maximum $q$ with $f(p) < f(q)$.
    We obtain a path-decomposed binary tree as follows: $q$ becomes the tree with $p$ as left child of $q$ if $p<q$ and right child otherwise.
    There is a single path between $p$ and $q$.
    This tree clearly satisfies Conditions {\sf I.1}, {\sf I.2} and {\sf II}.
    Furthermore, since $\Birth{q} = p$ if $p < q$, then $q$ has no right child and if $p > q$ then $q$ has no left child, so the claim holds.

    \medskip
    \noindent\textbf{Induction hypothesis:}
    For some $h \geq 1$, for all $1 \leq j \leq h$ a banana subtree of $\mathcal{B}$ of height $j$ can be merged into a path-decomposed binary tree $T$ satisfying Conditions {\sf I.1}, {\sf I.2} and {\sf II}.
    Furthermore, $\Birth{q} < q$ implies that $q$ has no right child in $T$ and $\Birth{q} > q$ implies that $q$ has no left child in $T$.

    \medskip
    \noindent\textbf{Induction step:} Assume the induction hypothesis for some $h \geq 1$.
    We show that the claim holds for banana subtrees of height $h+1$.
    Let $q$ be the root of a banana subtree of $\mathcal{B}$ of height $h+1$ and $p = \Birth{q}$.
    Banana subtrees rooted at a node on the banana spanned by $p$ and $q$ have height at most $h$,
    and by the induction hypothesis these banana subtrees can be merged into path-decomposed binary trees satisfying Conditions {\sf I.1}, {\sf I.2} and {\sf II}.
    Write $T(u)$ for the path-decomposed binary tree obtained from the banana subtree rooted at node $u$.

    Let $s_0 = p, s_1,\dots,s_\ell = q$ be the nodes on the trails between $p$ and $q$ ordered by function value.
    This ordering is unique since function values are distinct.
    Consider a node $s_i$ for $1 \leq i \leq \ell-1$. We show that for all $0 \leq j \leq i-1$, if $\Birth{s_i} < s_i$ then $s_i < s_j$ and $s_i < u$ for any descendant $u$ of $s_j$,
    and if $\Birth{s_i} > s_i$ then $s_i > s_j$ and $s_i > u$ for any descendant $u$ of $s_j$.
    Assume $\Birth{s_i} < s_j$; the other case is symmetric.
    By \cref{inv:condition-I} we have $s_i < \Dn{s_i}$ and for all descendants $u$ of $\Dn{s_i}$ it holds that $s_i < u$.
    Note that the nodes $s_j$ for $0 \leq i-1$ that are on the same trail as $s_i$ are descendants of $\Dn{s_i}$ or $\Dn{s_i}$.
    By Conditions~{\sf III.1} and {\sf III.2} the nodes $s_k$ for $k \leq i-1$ that are on the other trail than $s_i$ satisfy $s_i < s_k$ and $\Dn{s_k} < s_j$.
    For these nodes $s_k$ \cref{inv:condition-I} implies $s_k < \Birth{s_k}$ and all descendants $v$ of $s_k$ satisfy $s_k < v$.
    This proves that for all $0 \leq j \leq i-1$ both $s_i < s_j$ and $s_i < u$ for any descendant of $u$.
    
    We iteratively assemble the path-decomposed binary trees rooted at the $s_i$ for $1 \leq i \leq \ell-1$ into a path-decomposed binary tree rooted at $q$ by linking $T(s_i)$ to the tree assembled from the $T(s_j)$ for $j \leq i - 1$.
    
    We begin with $i = 1$.
    Note that $\Dn{s_1} = s_0$, since $s_1$ must be the bottom-most node of the in- or mid-trail between $p$ and $b$.
    By \cref{inv:condition-I}, if $\Birth{s_1} < s_1$ then $\Dn{s_1} > s_1$ and if $\Birth{s_1} > s_1$ then $\Dn{s_1} < s_1$.
    Assume $\Birth{s_1} < s_1$; the other case is symmetric.
    By the induction hypothesis the root $s_1$ of the tree $T(s_1)$ has no right child in this tree.
    We make $s_0 = \Dn{s_1}$ the right child of $s_1$ and since $s_0 = \Dn{s_1} > s_1$ this ensures that the resulting tree satisfies Condition {\sf I.1}.
    By \cref{inv:condition-III-max}, $f(\Dn{s_1}) < f(s_1)$, so Condition {\sf I.2} is also satisfied.
    
    Now consider any $i$ with $1 \leq i \leq \ell-1$.
    In iteration $i-1$ we have already linked the $T(s_j)$ for $1 \leq j \leq i$ into a tree $T_{i-1}$ satisfying Conditions {\sf I.1} and {\sf I.2}.
    The process to combine $T_{i-1}$ with $T(s_i)$ is similar to that for $i = 1$.
    By \cref{inv:condition-I} we have either $\Birth{s_i} < s_i < \Dn{s_i}$ or $\Birth{s_i} > s_i > \Dn{s_i}$.
    Assume again that $\Birth{s_i} < s_i < \Dn{s_i}$; the other case is symmetric.
    The nodes in $T_{i-1}$ are exactly the nodes $s_j$ for $j \leq i-1$ and their descendants,
    so as discussed above, for all $t \in T_{i-1}$ it holds that $\Birth{s_i} < s_i < t$.
    By the induction hypothesis $s_i$ has no right child in $T(s_i)$.
    We can thus attach $T_{i-1}$ as the right subtree of $s_i$ in $T(s_i)$ and the resulting tree $T_i$ satisfies Condition {\sf I.1}.
    The tree $T_i$ also satisfies Condition {\sf I.2}, as $T_{i-1}$ satisfies Condition {\sf I.2} and $f(s_i) > f(s_{i-1})$.

    The algorithm finishes by making $T_{\ell-1}$ the left subtree of $q$ if $p < q$ and the right subtree otherwise,
    which yields the tree $T(q)$.
    Furthermore, we make $p = s_0,s_1,\dots,s_\ell=q$ a path in $T(q)$.
    Since all nodes $u$ in $T_{\ell-1}$ are in the banana subtree rooted at $q$ they satisfy $u < q$ if $p < q$ and $u > q$ if $p > q$.
    Thus $T(q)$ satisfies Condition {\sf I.1}.
    Condition {\sf I.2} is also satisfied since $T_{\ell-1}$ satisfies this condition and $f(q) > f(s_{\ell-1})$.
    Note also that $T(q)$ has no right child if $p := \Birth{q} < q$ and no left child if $p :=\Birth{q} > q$,
    since the banana subtree rooted at $q$ satisfies \cref{inv:condition-I}.

    It remains to show that $T(q)$ satisfies Condition {\sf II}.
    There can be no minimum $a \neq p$ in the banana subtree rooted at $q$ with $f(a) < f(p)$,
    as otherwise there would exist a minimum $c$ in the banana subtree rooted at $q$ violating $f(\Low{\Dth{c}} < f(c)$ required by \cref{inv:condition-II}.
    This implies that $p$ has the smallest value in the banana subtree rooted at $q$.
    It is connected to the root $q$ by a path, as required by Condition {\sf II}.
    Since all $T(s_i)$ for $1 \leq i \leq \ell-1$ satisfy Condition {\sf II} by the induction hypothesis
    we now only need to show that the path ending at each $s_i$ satisfies Condition {\sf II}.
    Write $t_i$ for the other end of the path ending at $s_i$.
    We need to show that $s_i$ is the nearest ancestor to $t_i$ such that $s_i$ has a descendant with smaller value than $t_i$ in $T(q)$.
    Since $f(p) < f(t_i)$, as discussed above, $s_i$ is indeed such an ancestor, and since $t_i$ is connected to the root of $T(s_i)$ by a path
    there exists no other such ancestor in $T(s_i)$.
    Thus, Condition {\sf II} holds.
    This concludes the induction and the proof of the lemma.
\end{proof}

Finally, we state our result on uniqueness of banana trees.
\begin{corollary}[Invariants Determine the Banana Tree]
    \label{cor:unique-bananas-local}
    Given a linear list of distinct values there is a unique banana tree satisfying \cref{inv:condition-I,inv:condition-II,inv:condition-III-max}.
\end{corollary}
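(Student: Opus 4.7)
The plan is to reduce the corollary to the already-established machinery: \cref{lem:uniqueness_of_banana_tree} (uniqueness of the path-decomposed binary tree), \cref{lem:inv-1-and-3-are-cond-3} (Invariants I and III imply Conditions III.1 and III.2), \cref{lem:pdbt-to-banana-tree} (a path-decomposed binary tree yields a banana tree satisfying the Invariants), and \cref{lem:banana-tree-to-pdbt} (a deterministic procedure merges a banana tree satisfying the Invariants into a path-decomposed binary tree).

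For \emph{existence}, I would take the unique path-decomposed binary tree $T$ satisfying Conditions \textsf{I.1}, \textsf{I.2}, and \textsf{II}, whose uniqueness was established inside the proof of \cref{lem:uniqueness_of_banana_tree}. Applying \cref{lem:pdbt-to-banana-tree} to $T$ produces a banana tree $B$ satisfying \cref{inv:condition-I,inv:condition-II,inv:condition-III-max}, which settles existence.

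For \emph{uniqueness}, I would argue as follows. Let $B$ and $B'$ be banana trees satisfying the three invariants. By \cref{lem:inv-1-and-3-are-cond-3} both also satisfy Conditions \textsf{III.1} and \textsf{III.2}. Apply the merging procedure of \cref{lem:banana-tree-to-pdbt} to each; because that procedure is deterministic, it produces path-decomposed binary trees $T_B$ and $T_{B'}$ that both satisfy \textsf{I.1}, \textsf{I.2}, and \textsf{II}. The uniqueness portion of \cref{lem:uniqueness_of_banana_tree} (which is independent of the banana-tree construction and rests only on the recursive definition of the underlying ordered binary tree and its decomposition into paths via Condition \textsf{II}) forces $T_B = T_{B'}$. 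Finally, the construction of \cref{sec:3.2} that converts a path-decomposed binary tree into a banana tree is itself deterministic: once $T_B = T_{B'}$ is fixed, each maximum on a path is assigned to the in-trail or the mid-trail according to whether its predecessor on the path is its right or left child, so the resulting banana tree is uniquely determined. Hence $B = B'$.

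The main obstacle, which the preceding lemmas have already dispatched, is ensuring that the round trip ``banana tree $\to$ path-decomposed binary tree $\to$ banana tree'' is the identity under the invariants; once \cref{lem:inv-1-and-3-are-cond-3,lem:banana-tree-to-pdbt} are in hand this is essentially bookkeeping, and the corollary follows in a few lines with no further case analysis.
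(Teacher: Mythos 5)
Your proposal is correct and follows essentially the same route as the paper: existence via \cref{lem:pdbt-to-banana-tree} applied to the unique path-decomposed binary tree from \cref{lem:uniqueness_of_banana_tree}, and uniqueness via \cref{lem:banana-tree-to-pdbt} together with the same uniqueness lemma. The paper's proof is terser—it does not explicitly invoke \cref{lem:inv-1-and-3-are-cond-3} nor flag the ``round trip is the identity'' subtlety—but the underlying argument is identical, and your added remarks only make explicit what the paper leaves implicit.
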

\begin{proof}
    By \cref{lem:uniqueness_of_banana_tree} there exists a unique path-decomposed binary tree for the linear list that satisfies Conditions {\sf I.1}, {\sf I.2} and {\sf II}.
    By \cref{lem:pdbt-to-banana-tree,lem:banana-tree-to-pdbt} it follows that there is also a unique banana tree that satisfies \cref{inv:condition-I,inv:condition-II,inv:condition-III-max}.
\end{proof}

\subsection{Slides.}
A \emph{slide} occurs when an internal critical item becomes non-critical and its non-critical neighbor becomes critical.
We distinguish two cases, based on whether the critical item is a maximum or a minimum.
\smallskip \begin{description}
    \item[Max Slide:] Let $p$ be a non-critical item, $q$ a neighbor of $p$ and a maximum.
        The value of $p$ increases above the value of $q$ or the value of $q$ decreases below the value of $p$.
    \item[Min Slide:] Let $p$ be a non-critical item, $q$ a neighbor of $p$ and a minimum.
        The value of $p$ decreases below the value of $q$ or the value of $q$ increases above the value of $p$.
\end{description} \smallskip
In both cases, the number of critical items remains unchanged.
If the order of critical items by function value is unaffected by the slide apart from $q$ being replaced by $p$,
then the banana tree can be updated by replacing the formerly critical item with the new critical item in the tree,
which changes the label of the node associated with $q$ to refer to $p$ instead.
We now prove that this correctly maintains the up-tree if the slide is caused by a sufficiently small change in value.

\begin{lemma}[Max-Slide]
    \label{lem:max-slide}
    Let $p$ and $q$ be neighboring items.
    Let $\sigma \in [0,1)$ be such that in $h_\sigma$ the item $p$ is non-critical, $q$ is a maximum and $p$, $q$ are contiguous.
    Let $\theta \in (\sigma, 1]$ be such that in $h_\theta$ the item $p$ is a maximum, $q$ is non-critical and $p$, $q$ are contiguous.
    The tree $\UpTree{h_\theta}$ is obtained from $\UpTree{h_\sigma}$ by replacing $q$ with $p$.
\end{lemma}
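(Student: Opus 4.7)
The plan is to invoke the uniqueness statement in Corollary~\ref{cor:unique-bananas-local}. Let $T$ be the tree obtained from $\UpTree{h_\sigma}$ by taking the node whose label is $q$ and whose value is $h_\sigma(q)$, and replacing it by a node with label $p$ and value $h_\theta(p)$, while leaving all in-, mid-, up-, dn-, low-, and dth-pointers of every node unchanged. Since $\UpTree{h_\sigma}$ satisfies Invariants~\ref{inv:condition-I}, \ref{inv:condition-II}, and \ref{inv:condition-III-max} for $h_\sigma$, and these invariants together determine a unique banana tree of any given map, it suffices to verify that $T$ satisfies the same three invariants for $h_\theta$. By Corollary~\ref{cor:unique-bananas-local} it then follows that $T = \UpTree{h_\theta}$.

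Two elementary observations drive the verification. First, since $p$ and $q$ are neighbors in the list, for every other item $r$ we have $r < p \iff r < q$ and $r > p \iff r > q$, so every positional comparison between the replaced node and any other node is preserved by the relabeling. Second, by the contiguity of $p$ and $q$ in both $h_\sigma$ and $h_\theta$, no item other than $p$ and $q$ has a value strictly between $h_\sigma(q)$ (the old value at the replaced node) and $h_\theta(p)$ (the new value). Combined with the fact that $h_\sigma(u) = h_\theta(u)$ for every item $u \notin \{p, q\}$, this means that every strict inequality of the form $h_\sigma(u) > h_\sigma(q)$ or $h_\sigma(u) < h_\sigma(q)$ translates into the corresponding strict inequality $h_\theta(u) > h_\theta(p)$ or $h_\theta(u) < h_\theta(p)$.

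With these two observations, the invariants can be checked locally. Invariant~\ref{inv:condition-II} is immediate: no minimum's value changes between $h_\sigma$ and $h_\theta$, and the $\Low{\cdot}$ and $\Dth{\cdot}$ pointers leading from minima to their paired maxima are not affected by merely relabeling a maximum. For Invariants~\ref{inv:condition-I} and \ref{inv:condition-III-max}, the only inequalities whose validity could potentially change are those involving the relabeled node, either as the maximum in question or as the target of some neighbor's $\Up{\cdot}$ or $\Dn{\cdot}$ pointer; the positional inequalities are handled by the first observation and the value inequalities by the second. In particular, the ordering on the in- and mid-trails passing through the relabeled node is preserved, as is its membership in the banana subtree structure.

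The main obstacle is bookkeeping: a slide has symmetric sub-cases depending on whether it is $p$ whose value rises or $q$ whose value falls during the homotopy, and whether $p < q$ or $p > q$ in the list. In each sub-case, however, the verification reduces to the same two uniform observations above, so no genuinely new argument is needed beyond enumerating the cases. The conclusion then follows from the uniqueness corollary.
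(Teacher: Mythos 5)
Your proof is correct, but it takes a genuinely different route from the paper's. The paper proves this lemma \emph{semantically}: it invokes Lemma~\ref{lem:stable-value-order} to note that the order of critical items by value is unchanged apart from $q$ being replaced by $p$, observes that all minima retain their values, concludes that the windows of $h_\theta$ are exactly those of $h_\sigma$ with $q$ relabeled as $p$ (including the nesting structure), and then pushes this through Lemma~\ref{lem:bananas_and_windows} to get the identical statement about bananas. You instead proceed \emph{syntactically}: you form the candidate tree $T$ by relabeling, verify Invariants~\ref{inv:condition-I}, \ref{inv:condition-II}, and \ref{inv:condition-III-max} node by node, and appeal to Corollary~\ref{cor:unique-bananas-local}. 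Both are sound. Your route has the virtue of uniformity with what the paper does for the harder operations (interchanges and anti-cancellations all verify the three invariants and invoke the uniqueness corollary), whereas the paper's own window-based argument is shorter here precisely because relabeling a single item obviously leaves the window structure intact, and the detour through the invariants is arguably overkill for so small a change. One small point worth making explicit in your write-up: your second observation (that no third item has a value between $h_\sigma(q)$ and $h_\theta(p)$) relies on the fact that exactly one of $p$ and $q$ changes value between $h_\sigma$ and $h_\theta$, so that the relevant interval coincides with the contiguity interval of whichever map saw the change. As written you appeal to contiguity in both $h_\sigma$ and $h_\theta$ simultaneously, which is correct but slightly glosses over the case split on which of $p,q$ is the item whose value moves.
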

\begin{proof}
    By \cref{lem:stable-value-order}, the order of maxima by function value is the same in $h_\sigma$ and $h_\theta$, with $q$ replaced by $p$.
    The maps $h_\lambda$ are defined such that they differ in exactly one fixed item $j$, so either $j=p$ or $j=q$.
    Neither $p$ nor $q$ is a minimum in $h_\sigma$ or $h_\theta$, so all minima have equal value in $h_\sigma$ and $h_\theta$.
    It then follows that the minimum $s$ paired with $q$ in $h_\sigma$ is paired with $p$ in $h_\theta$.
    That is, if $\Window{s}{q}$ is a window in $h_\sigma$, then $\Window{s}{p}$ is a window in $h_\theta$.
    Furthermore, $\Window{s}{p}$ is nested into the same window in $h_\theta$ as $\Window{s}{q}$ is in $h_\sigma$.
    No other window is affected by the change in value, since the order of critical items by function value is the same other than $q$ being replaced by $p$.
    By \cref{lem:bananas_and_windows}, $s$ and $q$ span a banana in $h_\sigma$ and $s$ and $p$ span a banana in $h_\theta$.
    Thus, replacing $q$ with $p$ in the up-tree for $h_\sigma$ yields the up-tree for $h_\theta$, as claimed.
\end{proof}

\begin{lemma}[Min-Slide]
    \label{lem:min-slide}
    Let $p$ and $q$ be neighboring items.
    Let $\sigma \in [0,1)$ be such that in $h_\sigma$ the item $p$ is non-critical, $q$ is a minimum and $p$ and $q$ are contiguous.
    Let $\theta \in (\sigma, 1]$ be such that in $h_\theta$ the item $p$ is a minimum, $q$ is non-critical and $p$ and $q$ are contiguous.
    The tree $\UpTree{h_\theta}$ is obtained from $\UpTree{h_\sigma}$ by replacing $q$ with $p$.
\end{lemma}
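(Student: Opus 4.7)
The plan is to mirror the proof of \cref{lem:max-slide}, exchanging the roles of maxima and minima. First I would invoke \cref{lem:stable-value-order} to conclude that the ordering of items by function value in $h_\theta$ differs from that in $h_\sigma$ only by the swap of $p$ and $q$. Since the straight-line homotopy changes the value of only one item $j \in \{p,q\}$, and neither $p$ nor $q$ is a maximum in either $h_\sigma$ or $h_\theta$, every maximum keeps its function value throughout the homotopy, and the sequence of minima ordered by value in $h_\theta$ is obtained from that in $h_\sigma$ by replacing $q$ with $p$.

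Next I would examine windows. Let $b$ be the maximum paired with $q$ in $\UpTree{h_\sigma}$, so by \cref{lem:bananas_and_windows} the pair spans a window $\Window{q}{b}$ of $h_\sigma$; write $[x,y]$ for its support. Because $p$ and $q$ are neighbors and no item outside $\{p,q\}$ changes value, the restriction of $h_\theta$ to $[x,y]$ has the same homological critical points and the same values as the restriction of $h_\sigma$ to $[x,y]$, except that the minimum of value $h_\sigma(q)$ at position $q$ is replaced by a minimum of value $h_\theta(p)$ at the neighboring position $p$. Hence $\Window{p}{b}$ is a window of $h_\theta$ with the same support $[x,y]$. Every other frame of $h_\theta$ has the same support, the same spanning items, and the same endpoint values as the corresponding frame of $h_\sigma$, so the collection of windows of $h_\theta$ is obtained from that of $h_\sigma$ by exchanging $\Window{q}{b}$ for $\Window{p}{b}$; in particular, the nesting relations between all windows are preserved.

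Finally, applying \cref{lem:bananas_and_windows} to $h_\theta$, the banana tree $\UpTree{h_\theta}$ contains a banana spanned by $p$ and $b$ in place of the banana spanned by $q$ and $b$ in $\UpTree{h_\sigma}$, while all other bananas and their nesting are unchanged. Together with \cref{lem:uniqueness_of_banana_tree}, which says that $\UpTree{h_\theta}$ is uniquely determined by $h_\theta$, this shows that relabeling the leaf formerly marked $q$ as $p$ in $\UpTree{h_\sigma}$ yields precisely $\UpTree{h_\theta}$.

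The main point to verify is that no other window is disturbed by the transition from $h_\sigma$ to $h_\theta$; this is exactly what the contiguity of $p$ and $q$ in both maps ensures, for otherwise some item whose value lies between $h_\sigma(p)$ and $h_\sigma(q)$ could cross one of them during the homotopy and force a restructuring of the tree beyond a mere relabeling. I do not expect any further obstacle, as this case is strictly parallel in structure to \cref{lem:max-slide}.
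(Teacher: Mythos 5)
Your proof is correct and follows essentially the same approach as the paper's: invoke Lemma~\ref{lem:stable-value-order} to fix the value ordering, observe that all maxima retain their values so the pairing partner of the slid minimum simply transfers from $q$ to $p$, and conclude via Lemma~\ref{lem:bananas_and_windows} that the banana tree changes only by relabeling one leaf. The extra passage about supports $[x,y]$ and the explicit appeal to Lemma~\ref{lem:uniqueness_of_banana_tree} add detail the paper leaves implicit but do not change the route.
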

\begin{proof}
    By \cref{lem:stable-value-order}, the order of minima by function value is the same in $h_\sigma$ and $h_\theta$, with $q$ replaced by $p$.
    Since the values of maxima are equal in $h_\sigma$ and $h_\theta$ it then follows that the maximum $s$ paired with $q$ in $h_\sigma$ is paired with $p$ in $h_\theta$.
    That is, if $\Window{q}{s}$ is a window in $h_\sigma$, then $\Window{p}{s}$ is a window in $h_\theta$.
    Furthermore, $\Window{p}{s}$ is nested into the same window in $h_\theta$ as $\Window{s}{q}$ is in $h_\sigma$.
    No other window is affected by the change in value, since the order of critical items by function value is the same other than $q$ being replaced by $p$.
    By \cref{lem:bananas_and_windows}, $q$, $s$ span a banana in $h_\sigma$ and $p$, $s$ span a banana in $h_\theta$.
    Thus, replacing $q$ with $p$ in the up-tree for $h_\sigma$ yields the up-tree for $h_\theta$, as claimed.
\end{proof}

\subsection{Cancellations.}
Let $p$ be a minimum, $q$ be a maximum and a neighbor of $p$, such that there is a window $\Window{p}{q}$ without any windows nested into it,
i.e., such that $p$ and $q$ span a banana in the banana tree whose trails contain no other critical items.
A cancellation occurs when the value of $p$ increases above the value of $q$ or the value of $q$ decreases below the value of $p$.
Then, both $p$ and $q$ become non-critical and the window spanned by $p$ and $q$ disappears.
To update the up-tree we remove the banana spanned by $p$ and $q$ by connecting $\Up{q}$ to $\Dn{q}$ and discarding $p$ and $q$.

\begin{lemma}[Cancellation]
    \label{lem:cancellation}
    Let $p$, $q$ be neighboring items.
    Let $\sigma \in [0,1)$ be such that in $h_\sigma$ the item $p$ is a minimum, $q$ is a maximum and $p$ and $q$ are contiguous and paired in  $h_\sigma$.
    Let $\theta \in (\sigma, 1]$ be such that in $h_\theta$ the items $p$ and $q$ are non-critical.
    Then nodes $p$ and $q$ span a banana in $\UpTree{h_\sigma}$ that does not contain any other critical items and removing this banana yields $\UpTree{h_\theta}$.
\end{lemma}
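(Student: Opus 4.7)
The plan is to prove the two assertions of the lemma in sequence and then conclude via the uniqueness result for banana trees in \cref{cor:unique-bananas-local}.

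First, I would show that $p$ and $q$ span a banana in $\UpTree{h_\sigma}$ whose two trails have no interior nodes. Since $p$ and $q$ are paired in $h_\sigma$, the pair $\Window{p}{q}$ is a window of $h_\sigma$ and, by \cref{lem:bananas_and_windows}, the two items span a banana. Any interior node of that banana would, again by \cref{lem:bananas_and_windows}, be the maximum of a window immediately nested inside $\Window{p}{q}$ and hence a critical item with function value strictly between $h_\sigma(p)$ and $h_\sigma(q)$. This contradicts the contiguity of $p$ and $q$ in $h_\sigma$, so both trails are empty.

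Second, I would argue that removing this empty banana from $\UpTree{h_\sigma}$ yields a banana tree which, by \cref{cor:unique-bananas-local}, must equal $\UpTree{h_\theta}$ as soon as it represents the right items and satisfies \cref{inv:condition-I,inv:condition-II,inv:condition-III-max} for $h_\theta$. By hypothesis the critical items of $h_\theta$ are exactly those of $h_\sigma$ minus $\{p,q\}$, matching the set of nodes in the modified tree. By \cref{lem:stable-value-order} the maps $h_\sigma$ and $h_\theta$ induce the same ordering of the remaining items by function value, and the positions of all items are fixed throughout. Consequently \cref{inv:condition-I,inv:condition-II,inv:condition-III-max} at every node except the two neighbors of the excised banana are inherited verbatim from $\UpTree{h_\sigma}$.

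The main obstacle, and the only place where genuine checking is needed, is \cref{inv:condition-III-max} at the pair of nodes that gets relinked, namely the former $\Up{q}$ and $\Dn{q}$. I would dispatch it by a case split on whether $q$ was $\In{\Up{q}}$ or $\Mid{\Up{q}}$ in $\UpTree{h_\sigma}$. In each case the value inequality is inherited from the chain $h_\sigma(\Dn{q}) < h_\sigma(q) < h_\sigma(\Up{q})$, while the position relation is inherited from the fact that, by Conditions~{\sf III.1} and {\sf III.2} applied to the trail containing $q$ as an interior node, $\Dn{q}$ already lies on the same side of $\Up{q}$ as $q$ did and on the same trail, so its in- versus mid-status is unchanged by the direct relinking. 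Once this is verified, uniqueness forces the modified tree to coincide with $\UpTree{h_\theta}$, completing the argument.
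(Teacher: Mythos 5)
Your proof is correct, but it takes a genuinely different route from the paper's. The paper argues semantically, through the window characterization: it introduces an intermediate map $h_\gamma$ in which $p$, $q$ are non-critical and still contiguous, uses \cref{lem:stable-value-order} on the $(\sigma,\gamma)$ pair to establish that the critical items retain their value order, deduces from \cref{lem:bananas_and_windows} that $\Window{p}{q}$ is a window with nothing nested inside and that every other window of $h_\sigma$ is also a window of $h_\theta$, and then concludes directly that removing the banana gives $\UpTree{h_\theta}$. You instead argue syntactically: you remove the banana, verify that the result satisfies \cref{inv:condition-I,inv:condition-II,inv:condition-III-max} for $h_\theta$, and invoke \cref{cor:unique-bananas-local}. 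This is the machinery the paper itself deploys for the anti-cancellation and interchange lemmas, so your approach is arguably more uniform and more careful — the paper's final sentence (``Thus, removing the banana \dots yields $\UpTree{h_\theta}$'') leaves the re-linking details implicit, whereas you check \cref{inv:condition-III-max} at the two re-linked neighbors explicitly, correctly using the position ordering of interior trail nodes from {\sf III.1}/{\sf III.2} to show the in/mid classification is preserved.

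One small imprecision: you invoke \cref{lem:stable-value-order} directly on the pair $(h_\sigma, h_\theta)$, but that lemma requires $p$ and $q$ to be contiguous in \emph{both} maps, and contiguity in $h_\theta$ is not given (the value of $j$ may have moved past several other items by time $\theta$). The paper sidesteps this exactly by introducing $h_\gamma$. The fix is trivial in your framework and does not even require \cref{lem:stable-value-order}: since $h_\sigma$ and $h_\theta$ differ only in the value of $j\in\{p,q\}$, and $p$, $q$ are removed from the tree, every node remaining after the excision has identical function value under $h_\sigma$ and $h_\theta$, so the value comparisons in \cref{inv:condition-II,inv:condition-III-max} are literally unchanged. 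You should make that observation explicit rather than citing the lemma. With that repair, the argument is complete.
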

\begin{proof}
    Let $\gamma \in (\sigma, \theta]$ be such that in $h_\gamma$ the items $p$ and $q$ are non-critical and contiguous.
    By \cref{lem:stable-value-order}, the order of items by function value is the same in $h_\sigma$ and $h_\gamma$, except that $p$ and $q$ are swapped.
    In particular, the order of critical items by function value is the same in $h_\sigma$ and $h_\gamma$.
    Since $p$, $q$ are non-critical in $h_\gamma$ and in $h_\theta$ it follows that the order of critical items remains the same. 
    Thus, the order of critical items in $h_\sigma$ and $h_\theta$ is identical.

    Since $p$ and $q$ are contiguous in $h_\sigma$, there is no critical item with value between $h_\sigma(p)$ and $h_\sigma(q)$.
    As $p$ and $q$ are neighbors this implies that $p$ and $q$ are paired in $h_\sigma$:
    consider the process to obtain the pairing described in \cref{sec:2.2}; the component born at $p$ in $h_\sigma$ dies at $q$, since the component on the other side of $q$ must be born at a lower value.
    Furthermore, there are no windows nested into $\Window{p}{q}$, since there is no critical item with value in $[h_\sigma(p), h_\sigma(q)]$.
    All other windows in $h_\sigma$ are also windows in $h_{\theta}$, since the critical items have the same order by function value.

    As $\Window{p}{q}$ is a window of $h_\sigma$, by \cref{lem:bananas_and_windows} the nodes $p$ and $q$ span a banana in $\UpTree{h_\sigma}$.
    There are no windows nested into $\Window{p}{q}$ and thus the trails between $p$ and $q$ are empty.
    Since $p$ and $q$ are non-critical in $h_\theta$ they are not present in $\UpTree{h_\theta}$.
    Thus, removing the banana spanned by $p$ and $q$ from $\UpTree{h_\sigma}$ yields $\UpTree{h_\theta}$, as claimed.
\end{proof}

\subsection{Slides and Cancellations Involving Endpoints.}
\label{sec:slide-cancel-endpoint}
When the criticality of the neighbor of an endpoint changes it can also change the endpoint from up-type to down-type or vice versa.
Likewise, a change in the criticality of an endpoint leads to a change in the criticality of its neighbor.
The operations needed to maintain the up-tree in these cases are similar to cancellations and slides.
Let $q \in \{1,m\}$ be an endpoint and $a$ be its neighbor.
In the following we assume $a < q$, i.e., that $q$ is the right endpoint of the interval. The case $q < a$ is symmetric.
Let $0 \leq \sigma < \theta \leq 1$ such that
$h_\sigma(q) > h_\sigma(a),\ h_\theta(q) < h_\theta(a)$ or $h_\sigma(q) < h_\sigma(a),\ h_\theta(q) > h_\theta(a)$.
There are four cases:
\smallskip \begin{enumerate}
    \item $h_\sigma(q) > h_\sigma(a)$, $h_\theta(q) < h_\theta(a)$, $a$ is a minimum in $h_\sigma$ and non-critical in $h_\theta$,
    \item $h_\sigma(q) > h_\sigma(a)$, $h_\theta(q) < h_\theta(a)$, $a$ is non-critical in $h_\sigma$ and a maximum in $h_\theta$,
    \item $h_\sigma(q) < h_\sigma(a)$, $h_\theta(q) > h_\theta(a)$, $a$ is a maximum in $h_\sigma$ and non-critical in $h_\theta$,
    \item $h_\sigma(q) < h_\sigma(a)$, $h_\theta(q) > h_\theta(a)$, $a$ is non-critical in $h_\sigma$ and a minimum in $h_\theta$.
\end{enumerate} \smallskip
We now give the algorithm for the case $a < q$.
\begin{tabbing}
    m\=m\=m\=m\=m\=m\=m\=m\=m\=m\=m\=\kill
    \clreset
    \clbegin                 \> \algcancelendpoint $(a,q)$: \\*
    \clbegin                 \> \> \texttt{if $q$} changes from down-type to up-type \texttt{then} \\*
    \clbegin                 \> \> \> \texttt{if} $a$ becomes non-critical \texttt{then} \\*
    \clbeginlbl{ln:e-case-1} \> \> \> \> remove $q$ and $\Birth{q}$; replace $a$ by $q$ \\*
    \clbegin                 \> \> \> \texttt{else if} $a$ becomes a maximum \texttt{then} \\*
    \clbeginlbl{ln:e-case-2} \> \> \> \> let $p = \Birth{q}$; replace $q$ by $a$; replace $p$ by $q$ \\*
    \clbegin                 \> \> \> \texttt{endif} \\
    \clbegin                 \> \> \texttt{else if} $q$ changes from up-type to down-type \texttt{then} \\*
    \clbegin                 \> \> \> \texttt{if} $a$ becomes non-critical \texttt{then} \\*
    \clbeginlbl{ln:e-case-3} \> \> \> \> replace $q$ by the hook; replace $a$ by $q$ \\*
    \clbegin                 \> \> \> \texttt{else if} $a$ becomes a minimum \texttt{then} \\*
    \clbeginlbl{ln:e-case-4} \> \> \> \> replace $q$ by $a$; insert $q$ as $\In{a}$; add a banana between $q$ and its hook \\*
    \clbegin                 \> \> \> \texttt{endif}; \\*
    \clbegin                 \> \> \texttt{endif}. \\*
\end{tabbing}

\begin{lemma}[Changes Involving Endpoints]
    \label{lem:slide-cancel-endpoint}
    Let $q$ be an endpoint, $a$ be its neighbor and let $\check{q}$ be the hook neighboring $q$.
    Let $0 \leq \sigma < \theta \leq 1$ such that
    \begin{enumerate}
        \item either $h_\sigma(q) > h_\sigma(\check{q}) > h_\sigma(a)$, $h_\theta(q) < h_\theta(\check{q}) < h_\theta(a)$,
        \item or $h_\sigma(q) < h_\sigma(\check{q}) < h_\sigma(a)$, $h_\theta(q) > h_\theta(\check{q}) > h_\theta(a)$,
    \end{enumerate}
    where $q$ and $\check{q}$ are contiguous and $\check{q}$ and $a$ are contiguous in $h_\sigma$ and $h_\theta$.
    Applying the algorithm \algcancelendpoint$(a,q)$ to the up-tree $\UpTree{h_\sigma}$ yields the up-tree $\UpTree{h_\theta}$.
\end{lemma}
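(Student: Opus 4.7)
The plan is to invoke \cref{cor:unique-bananas-local}: it suffices to show that the tree produced by \algcancelendpoint$(a,q)$ is the unique banana tree of $h_\theta$, i.e., that it has the critical items of $h_\theta$ as nodes and satisfies \cref{inv:condition-I,inv:condition-II,inv:condition-III-max}. I would proceed by case analysis matching the four branches of the algorithm, split along (i) whether $q$ transitions down-type $\to$ up-type (hypothesis~(1) of the lemma, Cases~1--2) or up-type $\to$ down-type (hypothesis~(2), Cases~3--4), and (ii) how the criticality type of $a$ changes. For each case I would first catalog the critical items in $h_\sigma$ and $h_\theta$, recalling that the hook $\check{q}$ appears in the banana tree exactly when $q$ is a down-type endpoint.

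In Cases~1 and~2, the trivial banana spanned by $\check{q}$ and $q$ in $h_\sigma$ must disappear in $h_\theta$ since $q$ becomes up-type and its hook is dropped from the tree. In Case~1 ($a$ becomes non-critical), the role of the minimum formerly at $a$ is taken over by $q$; the algorithm removes the trivial banana and relabels $a$ to $q$, which I expect to yield the correct tree. In Case~2 ($a$ becomes a maximum) the tree's local structure is preserved by a pure relabeling: the minimum $\check{q}$ becomes $q$ and the maximum $q$ becomes $a$. Cases~3 and~4 are the inverses: the hook is reintroduced into the tree, and in Case~4 a fresh trivial banana spanned by $\check{q}$ and $q$ is created, with $q$ inserted as $\In{a}$ directly above the new minimum $a$.

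The contiguity hypothesis of the lemma is crucial: since no item other than $a, \check{q}, q$ has a value in the relevant range in either $h_\sigma$ or $h_\theta$, by \cref{lem:bananas_and_windows} no window spanned by items outside $\{a, \check{q}, q\}$ changes structurally between $h_\sigma$ and $h_\theta$. Hence the invariants hold throughout the tree outside the small neighborhood modified by the algorithm, and local verification reduces to checking the $\Up{\cdot}$, $\Dn{\cdot}$, $\In{\cdot}$, $\Mid{\cdot}$, $\Low{\cdot}$, $\Dth{\cdot}$ pointers at the affected nodes. This local check amounts to confirming that the algorithm's relabeling and attachment operations leave these pointers consistent with the value and positional orderings required in $h_\theta$.

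The main obstacle I anticipate is Case~4, where the hook transitions from being absent to being included in the tree and an entirely new banana must be grafted inside an existing in-trail. Here I must verify that $f(q)$ falls strictly between the values of $a$ and the old $\Up{a}$ (so that inserting $q$ as $\In{a}$ respects \textsf{III.1}), that the positional relation $a < q$ matches the orientation of the in-trail, and that the new trivial banana is set up consistently with $\Dth{\check{q}} = q$ and $\Low{q} = \check{q}$. All of these follow from the hypothesis $h_\theta(q) > h_\theta(\check{q}) > h_\theta(a)$ together with the fact that $a$ and $q$ are neighbors in the list and that no other item lies between them in value, after which \cref{cor:unique-bananas-local} yields the claim.
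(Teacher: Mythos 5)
Your plan is correct, but it takes a genuinely different route from the paper. The paper proves this lemma by tracing how the \emph{windows} change between $h_\sigma$ and $h_\theta$: it tracks the birth and death of connected components, observes that contiguity preserves the value-order of all items outside $\{a,\check q,q\}$, and in each of the four cases describes the window set of $h_\theta$ directly (in Cases~2 and~3 phrasing the change informally as a relabeling of items), then invokes \cref{lem:bananas_and_windows} to conclude that the algorithm's pointer updates produce $\UpTree{h_\theta}$. You instead propose an invariant-driven argument: show the output tree has the critical items of $h_\theta$ and satisfies \cref{inv:condition-I,inv:condition-II,inv:condition-III-max}, then invoke \cref{cor:unique-bananas-local}. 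That mirrors how the paper proves \cref{lem:anticancel} and \cref{lem:interchange-of-maxima}, and would give a more uniform and, in Cases~2--3, a more rigorous treatment than the paper's ``can be seen as relabeling'' phrasing, at the cost of more pointer-level bookkeeping. Both approaches rest on the same crucial contiguity observation: no item other than $a,\check q,q$ has value between $h_\theta(a)$ and $h_\theta(q)$, which is exactly what forces $h_\theta(q)$ below the value of $\Up{a}$ in your Case~4 (in the paper's language, what forces $\Window{\check q}{q}$ to nest inside the in-panel of $\Window{a}{b}$). One minor note: the inverse pairing you want is Cases~$1\leftrightarrow 4$ and $2\leftrightarrow 3$, not ``Cases~3 and~4 are the inverses'' as a block, but this does not affect your argument.
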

\begin{proof}
    Assume $a < q$; the other case is symmetric.
    We show how the change in function value from $h_\sigma$ to $h_\theta$ affects the windows in each of the cases
    and that the algorithm correctly updates the bananas to match the windows of $h_\theta$.
    Recall that the maps $h_\lambda$ are defined such that they differ in a single fixed item.
    Thus, we can assume that either $h_\theta(q) = h_\sigma(q)$ or $h_\theta(a) = h_\sigma(a)$
    and for all $u \neq q,a$ it holds that $h_\theta(u) = h_\sigma(u)$.
    \smallskip \begin{description}
        \item[$h_\sigma(q) > h_\sigma(a)$, $h_\theta(q) < h_\theta(a)$, $a$ is non-critical in $h_\theta$ (\cref{ln:e-case-1}):]
            There is a component in $h_\sigma$ born at $\check{q}$ that dies at $q$ and a component born at $a$ that dies at an item $b$.
            The component born at $\check{q}$ merges into the component born at $a$.
            These components correspond to two windows $\Window{\check{q}}{q}$ and $\Window{a}{b}$, where the former is nested into the latter.
            In $h_\theta$ there are no components born at $\check{q}$ or $a$, as they are no longer homological critical points.
            Instead, there is a component born at $q$, which corresponds to a window $\Window{q}{\cdot}$.
            As $q$, $\check{q}$ and $\check{q}$, $a$ are contiguous in $h_\sigma$ and $h_\theta$
            the order of items by function value in $h_\sigma$ and $h_\theta$ differs only in that $q$, $\check{q}$ and $a$ are reversed by \cref{lem:stable-value-order}.
            Since $q$ neighbors $a$ it follows that the component born at $q$ in $h_\theta$ dies at $b$, i.e., there is a window $\Window{q}{b}$ in $h_\theta$.
            The algorithm reflects this change in windows:
            the banana spanned by $\check{q}$ and $q$ corresponding to $\Window{\check{q}}{q}$ is removed;
            the banana spanned by $a$ and $b$ corresponding to $\Window{a}{b}$ is replaced by a banana spanned by $q$ and $b$ corresponding to $\Window{q}{b}$.
        \item[$h_\sigma(q) > h_\sigma(a)$, $h_\theta(q) < h_\theta(a)$, $a$ is a maximum in $h_\theta$ (\cref{ln:e-case-2}):]
            Going from $h_\sigma$ to $h_\theta$ can be seen as
            relabeling $q$ such that it represents item $a$, relabeling $\check{q}$ such that it represents $q$ and removing the item formerly associated with $a$.
            Changing the function values to reflect the values in $h_\theta$ does not affect the ordering of items, due to contiguity of $q$, $\check{q}$ and $\check{q}$, $a$.
            The corresponding change in the up-tree is to relabel the node representing $q$ to represent item $a$,
            and relabeling the node representing the hook to represent item $q$,
            as the algorithm does in \cref{ln:e-case-2}.
        \item[$h_\sigma(q) < h_\sigma(a)$, $h_\theta(q) > h_\theta(a)$, $a$ is non-critical in $h_\theta$ (\cref{ln:e-case-3}):]
            Similarly to the previous case the change from $h_\sigma$ to $h_\theta$ can be seen as
            insertion of an item $a^* < a$ next to $a$ and contiguous with $q$ in $h_\sigma$,
            then relabeling $q$ to represent $\check{q}$, $a$ to represent $q$ and $a^*$ to represent $a$.
            Adjusting the values of the items to reflect the values in $h_\theta$ does not affect the ordering of items, due to contiguity of $q$, $\check{q}$ and $\check{q}$, $a$.
            The corresponding change in the up-tree is to relabel the node representing $q$ to represent $\check{q}$,
            and relabel the node representing $a$ to represent $q$,
            as the algorithm does in \cref{ln:e-case-3}.
        \item[$h_\sigma(q) < h_\sigma(a)$, $h_\theta(q) > h_\theta(a)$, $a$ is a minimum in $h_\theta$ (\cref{ln:e-case-4}):]
            This is the reverse of the first case: there is a component born at $q$ in $h_\theta$ and it is replaced by a component born at $a$ and another component born at $\check{q}$ in $h_\theta$.
            Reverse to the case above the window $\Window{q}{b}$ is replaced by the window $\Window{a}{b}$ and a new window $\Window{\check{q}}{q}$ appears.
            The change from $\Window{q}{b}$ to $\Window{a}{b}$ is reflected by replacing $q$ by $a$ in the up-tree.
            The new window $\Window{\check{q}}{q}$ must be nested into $\Window{a}{b}$ as $h_\theta(a) < h_\theta(q) < h_\theta(b)$,
            since $a$, $\check{q}$ and $q$ are contiguous in $h_\theta$.
            Furthermore, $\Window{\check{q}}{q}$ must be nested into the in-panel, as $b < a$.
            The nesting of $\Window{\check{q}}{q}$ into the in-panel of $\Window{a}{b}$ is reflected in the algorithm by adding a banana spanned by $\check{q}$ and $q$ into the in-trail of the banana spanned by $a$ and $b$.
    \end{description} \smallskip
    The algorithm updates the bananas to reflect the windows and their nesting hierarchy in $h_\theta$, which implies by \cref{lem:bananas_and_windows} that the new up-tree is $\UpTree{h_\theta}$.
\end{proof}

\subsection{Anti-cancellations.}
Let $p$ and $q$ be neighboring non-critical items.
If the value of one changes such that both become critical, a new window is introduced.
This is the reverse of a cancellation and we call it an \emph{anti-cancellation}.

Denote by $h_\sigma$ and $h_\theta$ the map before and after the anti-cancellation, respectively, and assume that in both $h_\sigma$ and $h_\theta$ the items $p$ and $q$ are contiguous.
Furthermore, assume that $p$ becomes a minimum and $q$ a maximum.
To change $\UpTree{h_\sigma}$ into $\UpTree{h_\theta}$ we need to introduce a banana spanned by $p$ and $q$.
We recall the algorithm for the case $p < q$; the case $p > q$ is symmetric. The algorithm first identifies the maximum $b$ closest to $p$ such that $b < p < q$, i.e., such that $p$ becomes the critical item next to $b$.
It then walks down the path from $b$ to its successor until it reaches the first node $t$ such that $h_\theta(t) < h_\theta(q)$.
The node $q$ is then inserted into this path above $t$, and a new banana spanned by $p$ and $q$ is attached at $q$.
To insert a node $q$ between some nodes $a$ and $b$, where $a$ and $b$ are neighbors on the same trail and $f(a) < f(b)$,
we set $\Up{q} = b$, $\Dn{q} = a$ and make the pointer from $a$ to $b$ and the pointer from $b$ to $a$ point to $q$ instead.

\begin{tabbing}
    m\=m\=m\=m\=m\=m\=m\=m\=m\=m\=m\=\kill
    \clreset
    \clbegin \> \alganticancel $(p,q)$: \\*
    \clbegin \> \> \> find the maximum $b$ closest to $p$ such that $b < p < q$ \\*
    \clbegin \> \> \> \texttt{if} $b < \Birth{b}$ \texttt{then} $t = \Mid{b}$ \texttt{else} $t = \Dn{b}$ \texttt{endif}; \\*
    \clbegin \> \> \> \texttt{while} $h_\theta(t) > h_\theta(q)$ \texttt{do} $t = \In{t}$ \texttt{endwhile}; \\*
    \clbegin \> \> \> \texttt{if} $t$ is a leaf \= \texttt{then} \= \texttt{if} $t = \Birth{b}$ \= \texttt{then} insert $q$ between $t$ and $\Mid{t}$ \\*
    \clbegin \> \> \>                           \>               \>                             \> \texttt{else if} \= $t$ is the global minimum \texttt{and} $t < q$ \texttt{then} \\*
    \clbegin \> \> \>                           \>               \>                             \>                  \> insert $q$ between $t$ and $\Mid{t}$ \\*
    \clbegin \> \> \>                           \>               \>                             \> \texttt{else} insert $q$ between $t$ and $\In{t}$ \\*
    \clbegin \> \> \>                           \>               \> \texttt{endif}; \\*
    \clbegin \> \> \>                           \> \texttt{else} insert $q$ between $t$ and $\Up{t}$ into the trail of $t$ \\*
    \clbegin \> \> \> \texttt{endif}. \\*
    \clbegin \> \> \> Set $\In{p} = \Mid{p} = q$, $\In{q} = \Mid{q} = p$, $\Dth{p} = q$, $\Low{q} = \Low{\Dn{q}}$.
\end{tabbing}

\begin{lemma}[Anti-Cancellation]
    \label{lem:anticancel}
    Let $p$ and $q$ be neighboring items.
    Let $\sigma \in [0,1)$ such that in $h_\sigma$ both $p$ and $q$ are non-critical, $h_\sigma(p) > h_\sigma(q)$, and $p$ and $q$ are contiguous.
    Let $\theta \in (\sigma, 1]$ such that in $h_\theta$ the item $p$ is a minimum, $q$ is a maximum, and $p$ and $q$ are contiguous.
    Algorithm \alganticancel$(p,q)$ applied to the up-tree $\UpTree{h_\sigma}$ yields the up-tree $\UpTree{h_\theta}$.
\end{lemma}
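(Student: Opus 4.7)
The plan is to prove correctness by (i) identifying the minimal change in the window structure from $h_\sigma$ to $h_\theta$, (ii) showing that the algorithm inserts the new banana at the corresponding position in $\UpTree{h_\sigma}$, and (iii) invoking uniqueness of the banana tree. In the first step, I will apply \cref{lem:stable-value-order} to conclude that, since $p$ and $q$ are contiguous in both $h_\sigma$ and $h_\theta$, the sorted order of items by function value in $h_\theta$ equals that of $h_\sigma$ with $p$ and $q$ inserted between the positions of $h_\sigma(q)$ and $h_\sigma(p)$. As no other item changes criticality, every window of $h_\sigma$ persists as a window of $h_\theta$ with the same spanning pair, and a single new window $\Window{p}{q}$ appears. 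Hence by \cref{lem:bananas_and_windows} it suffices to insert exactly one banana, spanned by $p$ and $q$, at the location reflecting the immediate nesting of this new window; uniqueness of the result then follows from \cref{cor:unique-bananas-local}.

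In the second step, I will locate this insertion point in $\UpTree{h_\sigma}$. Assume $p < q$; the case $p > q$ is symmetric. The maximum $b$ identified by the algorithm is the first critical item encountered while moving leftward from $p$, so the immediately enclosing window $\Window{a^\ast}{b^\ast}$ of $\Window{p}{q}$ is spanned by some ancestor pair accessible by descending from $b$ into the region containing $p$ and $q$. The initial choice of $t = \Mid{b}$ when $b < \Birth{b}$ and $t = \Dn{b}$ otherwise enters the side of the banana tree lying to the right of $b$, which contains $p$ and $q$. The while-loop then descends through $\In{\cdot}$-pointers, maintaining the invariant that $t$ is a descendant whose value exceeds $h_\theta(q)$ and whose subtree contains positions beyond $b$ and up to $q$; by Conditions~{\sf III.1} and {\sf III.2} the value strictly decreases along this descent, so termination at the first $t$ with $f(t) < h_\theta(q)$ pinpoints the unique node above which $q$ must be inserted as $\Up{t}$, making the new banana immediately nested in the enclosing banana containing $t$.

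In the third step, I will verify that the resulting pointer updates restore \cref{inv:condition-I,inv:condition-II,inv:condition-III-max}. When $t$ is an internal node, inserting $q$ between $t$ and $\Up{t}$ on $t$'s trail preserves both the value- and position-ordering: $f(\Dn{q}) = f(t) < h_\theta(q) < f(\Up{q})$ holds by the loop termination, and the positional side of $q$ on the trail is consistent with Conditions~{\sf III.1}, {\sf III.2}. When $t$ is a leaf, the algorithm decides between attaching $q$ via $\In{t}$ or $\Mid{t}$ using the predicates $t = \Birth{b}$ and the global-minimum corner case, which encode whether $p$ and $q$ fall on the in- or mid-side of $t$ inside the enclosing window. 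Finally, setting $\In{p}=\Mid{p}=q$, $\In{q}=\Mid{q}=p$, $\Dth{p}=q$, and $\Low{q}=\Low{\Dn{q}}$ creates a banana with empty trails between $p$ and $q$ that satisfies all three invariants at the new nodes. By \cref{cor:unique-bananas-local}, the resulting tree equals $\UpTree{h_\theta}$. \textbf{The main obstacle} is the leaf-case analysis, where one must verify that the two special branches of the algorithm correctly identify the trail of $t$ to which $q$ attaches; this requires a careful case distinction on the relative positions of $\Birth{b}$, $t$, $p$, $q$, and the global minimum, since the choice determines on which side of $t$ the new banana ends up and hence whether \cref{inv:condition-I} is maintained.
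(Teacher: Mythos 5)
Your proposal follows the same overall strategy as the paper's proof: show the output satisfies \cref{inv:condition-I,inv:condition-II,inv:condition-III-max} for $h_\theta$, then invoke \cref{cor:unique-bananas-local}. The window-level discussion in your step one is not needed under this strategy (uniqueness of the invariant-satisfying tree does all the work), but it is harmless.

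There are, however, two substantive gaps, and they are exactly where the paper's proof spends its effort. First, you assert that the while-loop terminates ``at the first $t$ with $f(t) < h_\theta(q)$'' because values strictly decrease along the $\In{\cdot}$-chain, but a decreasing chain need not cross below $h_\theta(q)$ unless one shows that the leaf eventually reached has value below $h_\theta(q)$. The paper handles this in its \cref{clm:anticancel-t-order}: since $q$ is non-critical in $h_\sigma$ with $h_\sigma(p) > h_\sigma(q)$, there is a minimum $a$ just to the right of $q$ with $h_\sigma(a) < h_\sigma(q)$; one then argues that $a$ is the leftmost leaf under $t_0$ and is precisely the endpoint of the $\In{\cdot}$-chain, and $h_\theta(a) = h_\sigma(a) < h_\sigma(q) \leq h_\theta(q)$. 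You would need this argument, or something equivalent, to make the termination claim rigorous. Second, you explicitly flag the leaf-case analysis as ``the main obstacle'' and leave it open; but this is not an optional refinement --- it is part of the required invariant verification. The paper resolves it by proving, again in \cref{clm:anticancel-t-order}, that all $t_i$ except possibly the last are non-leaves with $\Birth{t_i} < t_i$ and $t_{i+1} < t_i$, so that $b < p < q < t_i$ for all $i$, and then uses these positional facts to verify \cref{inv:condition-I,inv:condition-III-max} at each modified node, splitting into the cases $t_I = t_0$ and $t_I \neq t_0$. Until those two steps are carried out, the proposal is a correct outline but not a complete proof.
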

\begin{proof}
    We prove the lemma for the case where $p < q$; the other case is symmetric.
    The tree $\UpTree{h_\sigma}$ satisfies \cref{inv:condition-I,inv:condition-II,inv:condition-III-max}.
    We show that the tree obtained by \alganticancel also satisfies these invariants for $h_\theta$,
    which by \cref{cor:unique-bananas-local} implies that it is the unique up-tree for $h_\theta$.

    The algorithm first finds the maximum $b$ closest to $p$ such that $b < p < q$.
    Since $p$ and $q$ are non-critical in $h_\sigma$, the item $p$ cannot be an endpoint.
    Furthermore, $h_\sigma(p) > h_\sigma(q)$ and thus this maximum $b$ exists.
    After the while-loop terminates $t$ satisfies $h_\theta(t) \leq h_\theta(q)$.
    Since $p$ and $q$ are contiguous in $h_\theta$ it holds that $h_\theta(t) \neq h_\theta(q)$
    and thus $h_\theta(t) < h_\theta(q)$ and $h_\theta(t) < h_\theta(p)$.
    
    The node $q$ is inserted above $t$ such that $\Dn{q} = t$.
    As $\Low{q}$ is set to $\Low{\Dn{q}} = \Low{t}$, we have $h_\theta(\Low{q}) < h_\theta(p)$ 
    and it follows that $h_\theta(\Low{q}) = h_\theta(\Low{\Dth{p}}) < h_\theta(p)$.
    Since the $\Low{\cdot}$ and $\Dth{\cdot}$ pointers of all other nodes are unchanged between $\UpTree{h_\sigma}$ and the new up-tree,
    the new up-tree satisfies \cref{inv:condition-II}.

    We now show that \cref{inv:condition-I} holds in the new up-tree.
    Let $t_0$ be the node $t$ as initialized in the first if-statement and $t_i$ be the node $t$ after the $i$-th iteration of the loop.
    Denote by $I$ the last iteration, such that $t_I$ is the node $t$ after the loop terminates.
    Note that all $t_i$ for $i \geq 0$ are present in $\UpTree{h_\sigma}$.
    As \cref{inv:condition-I} holds for $\UpTree{h_\sigma}$, the initialization of $t$ guarantees $b < t_0$.
    Furthermore, all $t_i$ for $i \geq 1$ are descendants of $t_0$ and thus $b < t_i$ for all $i \geq 0$.
    By definition of $b$ and since $p$ and $q$ are neighbors it also holds that $q < t_i$ for all $i \geq 0$
    and thus $b < p < q < t_i$ for all $i \geq 0$.

    To continue the proof of \cref{inv:condition-I} we need the following claim:
    \begin{claim}
        \label{clm:anticancel-t-order}
        The while-loop terminates.
        For all $0 \leq i \leq I-1$ it holds that $t_i$ is not a leaf, $\Birth{t_i} < t_i$ and $t_{i+1} < t_i$.
    \end{claim}
    \begin{proof}
        There exists a minimum $a$ such that $b < p < q < a$ and such that $h_\sigma(a) < h_\sigma(q)$ and such that there are no critical items $u$ with $q < u < a$.
        This follows from a similar argument as for the existence of $b$:
        $q$ is non-critical in $h_\sigma$ and thus cannot be an endpoint and $h_\sigma(q) < h_\sigma(p)$, so the first critical item greater than $q$ must be a minimum.
        Note that $a$ must be either $t_0$ or a descendant of $t_0$ and that in the latter case it is the leftmost descendant of $t_0$ by \cref{inv:condition-I}.
        As the algorithm modifies the up-tree after the while-loop terminated, we only argue about $\UpTree{h_\sigma}$ in this proof.
        Recall that this up-tree satisfies \cref{inv:condition-I,inv:condition-II,inv:condition-III-max}
        and that by \cref{lem:inv-1-and-3-are-cond-3} it also satisfies Conditions {\sf III.1} and {\sf III.2}.

        By \cref{inv:condition-I} $b = \Up{t_0} < t_0$ and by \cref{inv:condition-I,inv:condition-III-max} $\In{t_0} < t_0 < \Dn{t_0}$, unless $t_0$ is a leaf.
        The while-loop assigns $t_1 = \In{t_0}$ and by Conditions {\sf III.1}, {\sf III.2} and \cref{inv:condition-I}
        this is the leftmost node among the other nodes on the banana with maximum and their descendants.
        \Cref{inv:condition-I,inv:condition-III-max} again $\In{t_1} < t_1$, unless $t_1$ is a leaf.
        This argument can be repeated for $\In{t_1}, \In{\In{t_1}}, \dots$ until we reach a leaf.
        Thus, by following $\In{\cdot}$-pointers from $t_0$ we reach the leftmost leaf that is a descendant of $t_0$, which is $a$.
        As $h_\theta(a) = h_\sigma(a) < h_\sigma(q) \leq h_\theta(q)$, there exists a node $\InOp^*(t_0)$ with $h_\theta(\InOp^*(t_0)) < h_\theta(q)$ and hence the while-loop terminates.
        If the loop terminates in a leaf, then $t_I$ is a leaf and all $t_i$ with $0 \leq i \leq I-1$ are not leaves.
        If the loop terminates at an internal node, then all $t_i$ for $0 \leq i \leq I$ are not leaves.
        We have also seen that $\In{t_i} < t_i$ for all $0 \leq i \leq I$ and this implies $\Birth{t_i} < t_i$ by \cref{inv:condition-I}.
        This proves the claim.
    \end{proof}

    We resume the proof of \cref{lem:anticancel}.
    \Cref{clm:anticancel-t-order} implies that the $t_i$ are ordered such that $t_0 > t_1 > \dots > t_I$.
    For all $0 \leq i \leq I-1$ the nodes $p$ and $q$ are inserted into the banana subtree rooted at $t_i$,
    as either $q = \In{t_i}$ or $q$ is a descendant of $\In{t_i}$.
    Since $\Birth{t_i} < t_i$ and $p<q<t_i$ for $0 \leq i \leq I-1$, it follows that no $t_i$ for $0 \leq i \leq I-1$ violates \cref{inv:condition-I}.
    Recall that $q$ becomes $\Up{t_I}$; this implies that neither the banana subtree rooted at $t_I$ nor the descendants of $\Dn{t_I}$ change
    and thus $t_I$ satisfies the condition of \cref{inv:condition-I} in the new tree.
    Finally, $b$ also satisfies the condition of \cref{inv:condition-I}, as $b < p < q < t_0$ and 
    $p$ and $q$ are inserted into the subtree of $b$ that contains $t_0$.
    It follows that there is no node that violates the condition of \cref{inv:condition-I} and thus the new up-tree satisfies \cref{inv:condition-I}.

    We use the same claim to show that \cref{inv:condition-III-max} is satisfied after inserting $q$ and $p$.
    There are two cases: $t_I \neq t_0$ and $t_I = t_0$.
    In the first case, $q = \In{t_{I-1}}$, $t_I = \Dn{q}$ and $t_{I-1} = \Up{q}$.
    As shown above $t_I > q$ and $t_{I-1} > q$.
    Furthermore, since the while-loop did not terminate for $t = t_{I-1}$ we have $h_\theta(q) < h_\theta(t_{I-1})$.
    It also holds that $h_\theta(t_I) < h_\theta(q)$, as the while-loop terminated for $t = t_I$.
    Thus, $q$ satisfies the conditions in \cref{inv:condition-III-max}.
    In the case $t_I = t_0$, $t_I = \Dn{q}$, $\Up{q} = b$ and either $q = \Mid{b}$ or $q = \Dn{b}$.
    We have $b < q < t_I$, i.e., $\Up{q} < q < \Dn{q}$.
    As in the other case $h_\theta(t_I) < h_\theta(q)$.
    It is also true that $h_\theta(q) < h_\theta(b)$, which follows from $p$ and $q$ being contiguous in $h_\theta$ and the definition of $b$.
    Thus, $q$ again satisfies the conditions in \cref{inv:condition-III-max}.
    For $t_I$ and $t_{I-1}$ the conditions in \cref{inv:condition-III-max} follow from \cref{clm:anticancel-t-order} and the fact that $h_\theta(t_I) < h_\theta(q) < h_\theta(t_{I-1})$.
    No other node can violate \cref{inv:condition-III-max} and thus the new up-tree satisfies \cref{inv:condition-III-max}.

    We have shown that the tree constructed by applying \alganticancel$(p,q)$ to $\UpTree{h_\sigma}$ yields an up-tree that satisfies
    \cref{inv:condition-I,inv:condition-II,inv:condition-III-max} for $h_\theta$.
    By \cref{cor:unique-bananas-local} this is the unique up-tree for $h_\theta$.
\end{proof}

\subsection{Interchanges.}
\paragraph{Interchanges of Maxima.}
An interchange of maxima occurs when the value of a maximum $j$ increases above the value of a maximum $q$
or the value of $q$ decreases below the value of $j$.
This only has structural consequences on the up-tree if $q = \Up{j}$, as this is the only case where one of the invariants, namely \cref{inv:condition-III-max}, is violated.

Denote by $h_\sigma$ and $h_\theta$ the map before and after the interchange,
such that $h_\sigma(j) < h_\sigma(q)$ and $h_\theta(j) > h_\theta(q)$.
Assume $q < j$, which is the situation depicted in Figure~\ref{fig:maxinterchange}.
The case $j < q$ is symmetric.
Let $i$ and $p$ be such that $j = \Dth{i}$ and $q = \Dth{p}$ in $\UpTree{h_\sigma}$.
There are three cases: $j = \Dn{q}$, $j = \In{q}$ and $j = \Mid{q}$.
If $j = \In{q}$ or $j = \Mid{q}$, then $p = \Low{j}$ and by \cref{inv:condition-II} $h_\sigma(i) > h_\sigma(p)$.
If $j = \Dn{q}$ we further distinguish two cases depending on the order of $h_\sigma(i)$ and $h_\sigma(p)$.
We now give the algorithm for processing an interchange of maximum $j$ with maximum $q = \Up{j}$ for the case $q < j$.
\begin{tabbing}
    m\=m\=m\=m\=m\=m\=m\=m\=m\=m\=m\=\kill
    \clreset
    \clbegin \> \texttt{max-interchange}$(j,q)$: \\*
    \clbegin \> \> \> Let $i = \Birth{j}$, $p = \Birth{q}$; \\*
    \clbegin \> \> \> \texttt{if} $j = \In{q}$ \texttt{then} \\*
    \clbegin \> \> \> \> remove $j$ from its trail; insert $j$ as $\Up{q}$ \\*
    \clbegin \> \> \> \texttt{else if} $j = \Mid{q}$ \texttt{then} \\*
    \clbegin \> \> \> \> exchange $j$ and $q$; \\*
    \clbegin \> \> \> \> remove $q$ from its trail; insert $q$ as $\Dn{j}$; \\*
    \clbegin \> \> \> \> swap in- and mid-trail of $i$ and $q$ \\
    \clbegin \> \> \> \texttt{else if} $j = \Dn{q}$ \texttt{then} \\*
    \clbegin \> \> \> \> \texttt{if} $h_\sigma(i) < h_\sigma(p)$ \texttt{then} \\*
    \clbegin \> \> \> \> \> remove $q$ from its trail; insert $q$ as $\In{j}$ \\*
    \clbegin \> \> \> \> \texttt{else} \\*
    \clbegin \> \> \> \> \> exchange $j$ and $q$; \\*
    \clbegin \> \> \> \> \> remove $q$ from its trail; insert $q$ as $\Mid{j}$; \\*
    \clbegin \> \> \> \> \> swap in- and mid-trail of $i$ and $q$ \\*
    \clbegin \> \> \> \> \texttt{endif}; \\*
    \clbegin \> \> \> \texttt{endif}.
\end{tabbing}

The next lemma states that this algorithm correctly maintains the up-tree.
\begin{lemma}[Max-interchange]
    \label{lem:interchange-of-maxima}
    Let $j$ and $q$ be two items.
    Let $\sigma \in [0,1)$ such that in $h_\sigma$ the items $j$ and $q$ are maxima with $q = \Up{j}$
    and such that $j$ and $q$ are contiguous in $h_\sigma$.
    Let $\theta \in (\sigma, 1]$ such that in $h_\theta$ the items $j$ and $q$ are maxima with $h_\theta(j) > h_\theta(q)$
    and such that $j$ and $q$ are contiguous in $h_\theta$.
    Applying \algmaxinterchange$(j,q)$ to the up-tree $\UpTree{h_\sigma}$ yields the up-tree $\UpTree{h_\theta}$.
\end{lemma}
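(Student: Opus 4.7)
The plan is to invoke \cref{cor:unique-bananas-local}: it suffices to show the tree produced by \algmaxinterchange$(j,q)$ satisfies \cref{inv:condition-I,inv:condition-II,inv:condition-III-max} with respect to $h_\theta$, since $\UpTree{h_\theta}$ is the unique such tree. Because $j$ and $q$ are contiguous in both $h_\sigma$ and $h_\theta$, \cref{lem:stable-value-order} implies that the value ordering of items is identical save for the transposition of $j$ and $q$, while positions are unchanged throughout. Consequently, \cref{inv:condition-III-max} can fail only at the edge $q = \Up{j}$ of $\UpTree{h_\sigma}$ viewed as a tree for $h_\theta$; \cref{inv:condition-II} can fail only if the pair $(i,j),(p,q)$ is re-coupled; and \cref{inv:condition-I} at any node untouched by the local surgery is inherited from $\UpTree{h_\sigma}$. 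I treat only the case $q < j$ in detail; the case $j < q$ follows by reflection.

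Next I would dispatch the two pair-preserving branches of the algorithm. When $j = \In{q}$, the sub-banana $(i,j)$ is nested directly in $(p,q)$, so $\Low{j} = p$ and \cref{inv:condition-II} automatically gives $h_\sigma(i) > h_\sigma(p)$; a single rotation installs $j$ as the new $\Up{q}$, leaving both pairs intact because $i$ and $p$ sit on the sides of the rearranged bananas required by Conditions~{\sf III.1} and~{\sf III.2}. When $j = \Dn{q}$ and $h_\sigma(i) < h_\sigma(p)$, $(i,j)$ and $(p,q)$ are sibling sub-bananas of a common containing banana, and relocating $(p,q)$ into the in-trail of $(i,j)$ preserves both pairs; the hypothesis $h_\sigma(i) < h_\sigma(p)$ makes $i$ the smaller minimum of the new containing banana $(i,j)$, so \cref{inv:condition-II} at $p$ reads $h_\theta(p) > h_\theta(\Low{q}) = h_\theta(i)$, which is precisely our hypothesis. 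In both branches, \cref{lem:inv-1-and-3-are-cond-3} then yields \cref{inv:condition-I} locally from a direct check of positions and trail orderings.

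The main obstacle is the two pair-swapping branches, $j = \Mid{q}$ and $j = \Dn{q}$ with $h_\sigma(i) > h_\sigma(p)$, in which the algorithm exchanges $j$ and $q$ as upper ends and swaps the in- and mid-trails at $i$ and at $q$; the new pairs are $(i,q)$ and $(p,j)$. In the $j = \Mid{q}$ branch the inequality $h_\sigma(p) < h_\sigma(i)$ holds automatically, because $(i,j)$ is nested in $(p,q)$ and hence $h_\sigma(i) > h_\sigma(\Low{j}) = h_\sigma(p)$ by \cref{inv:condition-II}; in the $\Dn{q}$ branch it is the hypothesis. Together with $h_\theta(j) > h_\theta(q)$, this yields \cref{inv:condition-II} for the new pairing at both leaves. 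The trail swap at $i$ and at $q$ is dictated by Conditions~{\sf III.1} and~{\sf III.2}: after the upper-end exchange, the in-panel and mid-panel of each banana lie on opposite sides of the spanning minimum relative to $h_\sigma$, so the trail labels at $i$ and at $q$ must flip. Finally, \cref{inv:condition-I} at the new upper end $j$ of $(p,j)$ is inherited from the subtree of $q$ in $\UpTree{h_\sigma}$ minus the relocated sub-banana, and at the new upper end $q$ of $(i,q)$ from the moved sub-banana itself; the one-sidedness is preserved because no positions change. This completes the verification in every case.
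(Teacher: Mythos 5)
Your proof follows essentially the same route as the paper's: it verifies that the tree produced by \algmaxinterchange\ satisfies \cref{inv:condition-I,inv:condition-II,inv:condition-III-max} for $h_\theta$ and then invokes \cref{cor:unique-bananas-local} for uniqueness, with the same four-way case split (pair-preserving $j=\In{q}$ and $j=\Dn{q}$ with $h_\sigma(i)<h_\sigma(p)$; pair-swapping $j=\Mid{q}$ and $j=\Dn{q}$ with $h_\sigma(i)>h_\sigma(p)$) and the same observation that \cref{inv:condition-II} reduces, via $\Low{\Dth{\cdot}}$, to the inequality that drives the branch. One small wording slip: \cref{lem:inv-1-and-3-are-cond-3} proves that \cref{inv:condition-I,inv:condition-III-max} imply Conditions {\sf III.1} and {\sf III.2}, not the converse, so it cannot ``yield'' \cref{inv:condition-I}; what you (and the paper) actually do is apply that lemma to $\UpTree{h_\sigma}$ to obtain Conditions {\sf III.1}, {\sf III.2} for the old tree and then verify \cref{inv:condition-I} for the new tree by a direct positional argument.
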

\begin{proof}
    The tree $\UpTree{h_\sigma}$ satisfies \cref{inv:condition-I,inv:condition-II,inv:condition-III-max}.
    Recall that by \cref{lem:inv-1-and-3-are-cond-3} this up-tree also satisfies Conditions~{\sf III.1} and {\sf III.2}.
    We show that the tree obtained by \algmaxinterchange also satisfies these invariants for $h_\theta$,
    which by \cref{cor:unique-bananas-local} implies that it is the unique up-tree for $h_\theta$.

    Write $I_q$ for the nodes internal to the in-trail beginning at $q$
    and $M_q$ for the nodes internal to the mid-trail beginning at $q$ in $\UpTree{h_\sigma}$;
    similarly, write $I_p$, $I_i$, $I_j$ for the nodes internal to the in-trails
    and $M_p$, $M_i$, $M_j$ for the nodes internal to the mid-trails at $p$, $i$, $j$.

    There are three cases in \algmaxinterchange: $j = \In{q}$, $j = \Mid{q}$ and $j = \Dn{q}$.
    As $q = \Up{j}$ there is no other case to consider.
    The case $j = \Dn{q}$ is further divided into two cases based on the values of $\Birth{j}$ and $\Birth{q}$.
    We describe for each case how the trails change from $\UpTree{h_\sigma}$ to the new tree and show that \cref{inv:condition-I,inv:condition-III-max} hold in the new tree.
    Afterwards we prove \cref{inv:condition-II}.
    \smallskip \begin{description}
        \item[Case 1 ($j = \In{q}$):]
            Assume $j < q$; the other case is symmetric.
            Note that $i = \Birth{j} < j$ and $p = \Birth{q} < q$.
            The in-trail and mid-trail between $i$ and $j$ remain the in-trail and mid-trail between $i$ and $j$,
            and the in-trail and mid-trail between $p$ and $q$ remain the in-trail and mid-trail between $p$ and $q$,
            with the exception that $j$ is removed from the in-trail between $p$ and $q$.
            In the new tree the nodes internal to the in-trail between $i$ and $j$ are thus $I_j$ and the nodes internal to the mid-trail between $i$ and $j$ are $M_j$.
            Similarly, the nodes internal to the in-trail between $p$ and $q$ are $I_q \setminus \{j\}$ and the nodes internal to the mid-trail between $p$ and $q$ are $M_q$.

            To see that \cref{inv:condition-I} holds we analyze the subtrees of $j$ and $q$ to show that they satisfy the condition of \cref{inv:condition-I} in the new up-tree.
            For any other node the condition holds in the new tree as the nodes in the respective subtrees do not change.
            First, consider the node $q$.
            The node $\Dn{q}$ is the same in $\UpTree{h_\sigma}$ and in the new tree. The algorithm also does not modify the descendants of $\Dn{q}$.
            Thus, for all descendants $u$ of $\Dn{q}$ including $\Dn{q}$ we have $q < u$ by \cref{inv:condition-I} for $\UpTree{h_\sigma}$, as $p = \Birth{q} < q$.
            The banana subtree rooted at $q$ in the new tree differs from that in $\UpTree{h_\sigma}$ only by the banana subtree rooted at $j$, which the algorithm removes.
            Thus, for all nodes $v$ in the banana subtree rooted at $q$ it holds that $v < q$ by \cref{inv:condition-I} for $\UpTree{h_\sigma}$, as $p = \Birth{q} < q$.
            It follows that $q$ satisfies the condition for \cref{inv:condition-I}.
            Now consider the node $j$.
            The algorithm does not change the banana subtree rooted at $j$ and thus for all $v$ in this subtree we have $v < j$ by \cref{inv:condition-I} for $\UpTree{h_\sigma}$, as $i = \Birth{j} < j$.
            The other subtree of $j$, i.e., the descendants of $\Dn{j}$ differ significantly:
            in $\UpTree{h_\sigma}$ these were the descendants of nodes on the in-trail below $j$;
            in the new tree they are the descendants of $q$.
            For the descendants $u_1$ of $\Dn{q}$ it is easy to see that $j < u_1$, as $j < q < u_1$, which follows from the discussion for $q$ and the assumption that $j < q$.
            For the descendants $u_2$ of nodes on the banana spanned by $p$ and $q$ in the new tree it follows from Conditions {\sf III.1} and {\sf III.2} and \cref{inv:condition-I} that $j < u_2$:
            the node $j$ is the topmost node on the in-trail, and with $j < q$ this implies that the other maxima $t$ on the banana satisfy $j < t$;
            nodes $u_3$ in the banana subtrees rooted at each $t$ are either descendants of $\Dn{j}$, which implies $j < u_3$ by \cref{inv:condition-I} for $\UpTree{h_\sigma}$,
            or $t$ is on the mid-trail between $p$ and $q$, which by $p < q$ and \cref{inv:condition-I} also implies $j < t < u_3$.
            It follows that all descendants $u$ of $\Dn{j} = q$, including $\Dn{j}$ satisfy $j < u$.
            Thus, the condition of \cref{inv:condition-I} holds for $j$.

            We now prove that \cref{inv:condition-III-max} holds by again analyzing $j$ and $q$; for the remaining nodes the conditions of \cref{inv:condition-III-max} follow directly from \cref{inv:condition-III-max} for $\UpTree{h_\sigma}$.
            Let $u_q = \Up{q}$ and $d_q = \Dn{q}$ in $\UpTree{h_\sigma}$.
            In the new tree $d_q = \Dn{q}$, $u_q = \Up{j}$ and $j = \Up{q}$.
            By the assumption $j < q$ and \cref{inv:condition-III-max} for $\UpTree{h_\sigma}$ we have $j < q < d_q$ in the new tree, i.e., $\Up{q} < q < \Dn{q}$.
            If $q = \In{u_q}$ in $\UpTree{h_\sigma}$ then $q < u_q$ and in the new tree $j = \In{u_q}$ and $j < u_q$.
            Thus, if $j = \In{\Up{j}} = \In{u_q}$ in the new tree then $j < \Up{j}$ and $j < \Dn{j} = q$.
            Otherwise, if $q \neq \In{u_q}$ in $\UpTree{h_\sigma}$ then $u_q < q$ and $u_q < j$ by \cref{inv:condition-I} for $\UpTree{h_\sigma}$.
            In the new tree this implies $j \neq \In{\Up{j}} = \In{u_q}$ and $u_q = \Up{j} < j < \Dn{j} = q$.
            The inequalities $h_\theta(\Up{j}) > h_\theta(j) > h_\theta(\Dn{j}) = h_\theta(q)$ and $h_\theta(j) = h_\theta(\Up{q}) > h_\theta(q) > h_\theta(\Dn{q})$ follow directly from the fact that $j$ and $q$ are contiguous in $h_\theta$ and from \cref{inv:condition-III-max} for $\UpTree{h_\sigma}$.
            Thus, $j$ and $q$ satisfy the conditions of \cref{inv:condition-III-max} in the new tree and it follows that \cref{inv:condition-III-max} holds in the new tree.
        \item[Case 2 ($j = \Mid{q}$):]
            Assume $j < q$; the other case is symmetric.
            Note that $i = \Birth{j} > j$ and $p = \Birth{q} < q$.
            The nodes $q$ and $j$ exchange their partners, i.e., in the new up-tree $\Birth{q} = i$ and $\Birth{j} = p$.
            The in-trail and mid-trail between $i$ and $j$ become the mid- and in-trail between $i$ and $q$ (notice that the trails change from in to mid and vice versa);
            the in-trail and mid-trail between $p$ and $q$ become the in- and mid-trail between $p$ and $q$, with $j$ removed from the mid-trail.
            That is, in the new tree the nodes internal to the in-trail between $i$ and $q$ are the nodes $M_j$ and the nodes internal to the mid-trail between $i$ and $q$ are the nodes $I_j$.
            The nodes internal to the in-trail between $p$ and $j$ are the nodes $I_q$
            and the nodes internal to the mid-trail between $p$ and $j$ are the nodes $M_q \setminus \{j\}$.

            The proof of \cref{inv:condition-III-max} is identical to that of Case 1.
            To see that \cref{inv:condition-I} holds we again analyze the subtrees of $j$ and $q$ to show that they satisfy the condition of \cref{inv:condition-I}.
            As in the previous case the remaining nodes satisfy the condition in the new tree, as their subtrees are unchanged by the algorithm.
            For the node $q$ observe that $\Dn{q}$ and its descendants are the same in $\UpTree{h_\sigma}$ and the new tree.
            Furthermore, the nodes in the banana subtree rooted at $q$ in the new tree are also in the banana subtree rooted at $q$ in $\UpTree{h_\sigma}$.
            The condition of \cref{inv:condition-I} thus holds for $q$ by \cref{inv:condition-I} for $\UpTree{h_\sigma}$.
            Now we consider the node $j$.
            The descendants of $\Dn{j}$ are (1) the descendants of $\Dn{q}$, including $\Dn{q}$, and (2) the node $q$ along with the banana subtree rooted at $q$.
            For the descendants $u_1$ of $\Dn{q}$, including $\Dn{q}$, it holds that $j < u_1$ by \cref{inv:condition-I} for $\UpTree{h_\sigma}$.
            The banana subtree rooted at $q$ in the new tree is the banana subtree rooted at $j$ in $\UpTree{h_\sigma}$,
            and for nodes $u_2$ in this banana subtree $j < u_2$ by \cref{inv:condition-I} for $\UpTree{h_\sigma}$, since $j < i = \Birth{j}$.
            With the assumption $j < q$ it follows that the descendants $u$ of $\Dn{j}$ including $\Dn{j}$ satisfy $j < u$ in the new tree.
            The nodes $v$ in the banana subtree $j$ in the new tree are descendants of maxima $t \neq j$ on a trail between $p$ and $q$.
            By Conditions {\sf III.1}, {\sf III.2} and \cref{inv:condition-I} for $\UpTree{h_\sigma}$ these nodes satisfy $v < j$:
            the node $j$ is the topmost node of the right trail of the banana spanned by $p$ and $q$;
            the nodes $t_1$ on the left trail satisfy $t_1 < j$ by Conditions {\sf III.1} and {\sf III.2} and the nodes $v_1$ in the banana subtrees rooted at $t_1$ satisfy $v_1 < t_1 < j$ by \cref{inv:condition-I};
            the nodes $t_2$ on the right trail below $j$ are descendants of $\Dn{j}$, and they along with their descendants $v_2$ satisfy $t_2 < j$ and $v_2 < j$ by \cref{inv:condition-I}.
            It follows that $j$ satisfies the condition of \cref{inv:condition-I}.
            Since $j$ and $q$ satisfy the condition of \cref{inv:condition-I} it follows that \cref{inv:condition-I} holds in the new tree.
        \item[Case 3.1 ($j = \Dn{q}$ and $h_\sigma(i) < h_\sigma(p)$):]
            Assume $q < j$; the other case is symmetric.
            Note that $i = \Birth{j} < j$ and $p = \Birth{q} < q$.
            The in-trails and mid-trails remain the same, with the exception that $q$ is added to the in-trail between $i$ and $j$.
            That is, in the new tree the nodes internal to the in-trail between $i$ and $j$ are $I_j \cup \{q\}$,
                the nodes internal to the mid-trail between $i$ and $j$ are $M_j$,
                the nodes internal to the in-trail between $p$ and $q$ are $I_q$
                and the nodes internal to the mid-trail between $p$ and $q$ are $M_q$. 

            To see that \cref{inv:condition-I} holds we again analyze the subtrees of $j$ and $q$, as in the previous cases.
            The banana subtree rooted at $q$ is the same in $\UpTree{h_\sigma}$ and the new tree.
            The set of descendants of $\Dn{q}$ including $\Dn{q}$ in the new tree is a subset of that in $\UpTree{h_\sigma}$.
            Thus, the condition of \cref{inv:condition-I} holds for $q$ in the new tree by \cref{inv:condition-I} for $\UpTree{h_\sigma}$.
            For the node $j$ the $\Dn{j}$ and its descendants do not change.
            The banana subtree rooted at $j$ in the new tree is equal to that in $\UpTree{h_\sigma}$, with the addition of $q$ and the banana subtree rooted at $q$.
            By assumption $q < j$ and by \cref{inv:condition-I} for $\UpTree{h_\sigma}$ the nodes $v_1$ in the banana subtree rooted at $q$ satisfy $v_1 < q$ and thus $v_1 < j$.
            The remaining nodes $v_2$ in the banana subtree of $j$ satisfy $v_2 < j$ by \cref{inv:condition-I} for $\UpTree{h_\sigma}$.
            Thus, for all nodes $v$ in the banana subtree rooted at $j$ we have $v < j$.
            It follows that $j$ satisfies the condition of \cref{inv:condition-I} and that the new up-tree satisfies \cref{inv:condition-I}.

            We now show that \cref{inv:condition-III-max} holds.
            Let $i_j = \In{j}$, $u_q = \Up{q}$ in $\UpTree{h_\sigma}$.
            Observe that $q$ is inserted at the top of the left trail between $i$ and $j$, i.e., $q = \In{j}$ in the new tree.
            By \cref{inv:condition-I} and the assumption that $q < j$ it follows that $q < \Up{q} = j$ and $q < \Dn{q}$.
            Note that $u_q = \Up{j}$ in the new tree.
            The inequality $j < \Dn{j}$ holds in $\UpTree{h_\sigma}$ since $j$ is on a left trail, and holds in the new tree since $\Dn{j}$ does not change.
            If $q = \In{u_q}$ in $\UpTree{h_\sigma}$, then $q < u_q$ and by \cref{inv:condition-I} $j < u_q$.
            In the new tree $j = \In{u_q}$ and thus $j < \Up{j} = u_q$ and $j < \Dn{j}$.
            Otherwise, if $q \neq \In{u_q}$ in $\UpTree{h_\sigma}$, then $u_q < q < j$.
            Thus, in the new tree $u_q = \Up{j} < j < \Dn{j}$.
            The inequalities $h_\theta(\Up{q}) > h_\theta(q) > h_\theta(\Dn{q})$ and $h_\theta(\Up{j}) > h_\theta(j) > h_\theta(\Dn{j})$
            follow directly from \cref{inv:condition-III-max} for $\UpTree{h_\sigma}$ and the fact that $j$ and $q$ are contiguous in $h_\theta$.
        \item[Case 3.2 ($j = \Dn{q}$ and $h_\sigma(i) > h_\sigma(p)$):]
            Assume $q < j$; the other case is symmetric.
            Note that $i = \Birth{j} < j$ and $p = \Birth{q} < q$.
            The nodes $q$ and $j$ exchange their partners, i.e., in the new up-tree $\Birth{q} = i$ and $\Birth{j} = p$.
            The in-trail and mid-trail between $i$ and $j$ become the mid-trail and in-trail between $i$ and $p$ (notice that the trails change from in to mid and vice versa);
            the in-trail and mid-trail between $p$ and $q$ become the in-trail and mid-trail between $p$ and $j$.
            Note that $q$ is added to the mid-trail between $p$ and $j$.
            Thus, in the new tree the nodes internal to the in-trail between $p$ and $j$ are $I_q$,
                the nodes internal to the mid-trail between $p$ and $j$ are $M_q \cup \{q\}$,
                the nodes internal to the in-trail between $i$ and $q$ are $M_j$
                and the nodes internal to the mid-trail between $i$ and $q$ are $I_q$.

            The proof for \cref{inv:condition-III-max} is identical to that of Case 3.1.
            The proof for \cref{inv:condition-I} is similar to that of Case 3.1 and we point out the differences.
            The banana subtree rooted at $q$ in the new tree is the banana subtree rooted at $j$ in $\UpTree{h_\sigma}$.
            Since the nodes $u$ in this subtree were descendants of $\Dn{q}$ in $\UpTree{h_\sigma}$ it follows from \cref{inv:condition-I} for $\UpTree{h_\sigma}$ that $q < u$.
            The descendants of $\Dn{q}$ including $\Dn{q}$ are nodes formerly in the banana subtree rooted at $q$, and these nodes $v$ satisfy $v < q$ by \cref{inv:condition-I} for $\UpTree{h_\sigma}$.
            As in the previous case, the descendants of $\Dn{j}$ including $\Dn{j}$ are unchanged.
            The banana subtree rooted at $j$ in the new tree is the banana subtree rooted at $j$ in $\UpTree{h_\sigma}$ combined with the banana subtree rooted at $q$ in $\UpTree{h_\sigma}$.
            \Cref{inv:condition-I} follows by the same argument as above.
    \end{description} \smallskip
    It remains to prove that \cref{inv:condition-II} holds in the new tree.
    Observe that $\Low{\cdot}$ pointers only change for $q$ and $j$.
    Consequently, we need to show that $p$ and $i$ satisfy $h_\theta(\Low{\Dth{p}} < h_\theta(p)$ and $h_\theta(\Low{\Dth{i}}) < h_\theta(i)$.
    We consider the three cases above:
    \smallskip \begin{description}
        \item[Case 1 ($j = \In{q}$):] We have $\Low{j} = p$ in $\UpTree{h_\sigma}$ and $\Low{j} = \Low{q}$ in the new tree.
            Recall that $j = \Dth{i}$ and $q = \Dth{p}$ in both $\UpTree{h_\sigma}$ and in the new tree.
            Since $\UpTree{h_\sigma}$ satisfies \cref{inv:condition-II}, we know that
            $h_\sigma(i) > h_\sigma(\Low{j}) = h_\sigma(p)  > h_\sigma(\Low{q})$.
            As $h_\sigma(u) = h_\theta(u)$ for $u \neq j,q$, it holds that $h_\theta(i) > h_\theta(\Low{j}) = h_\theta(\Low{q})$ and $h_\theta(p) > h_\theta(\Low{q})$.
        \item[Case 2 ($j = \Mid{q}$):] We have $\Low{j} = p$ in $\UpTree{h_\sigma}$ and $\Low{j} = \Low{q}$ in the new tree.
            In the new tree $\Dth{i} = q$ and $\Dth{p} = j$.
            It holds that $h_\sigma(i) > h_\sigma(p) > h_\sigma(\Low{q})$ by \cref{inv:condition-II} for $\UpTree{h_\sigma}$ and thus $h_\theta(i) > h_\theta(\Low{q})$.
            Similarly $h_\theta(p) = h_\sigma(p) > h_\sigma(\Low{q}) = h_\theta(\Low{q}) = h_\theta(\Low{j})$, i.e., $h_\theta(p) > h_\theta(\Low{j})$.
        \item[Case 3 ($j = \Dn{q}$):] We distinguish the two cases $h_\sigma(i) < h_\sigma(p)$ and $h_\sigma(i) > h_\sigma(p)$.
            \begin{description}
                \item[$h_\sigma(i) < h_\sigma(p)$:] We have $\Low{q} = \Low{j}$ in $\UpTree{h_\sigma}$ and $\Low{j} \neq \Low{q} = i$ in the new tree.
                    Note that $\Low{j}$ is the same in both trees.
                    In the new tree it holds that $h_\theta(p) = h_\sigma(p) > h_\sigma(\Low{q}) = h_\sigma(i) = h_\theta(i)$, i.e., $h_\theta(p) > h_\theta(i)$.
                    As $\Low{j}$ does not change and $\Dth{i} = j$ in both trees, $h_\theta(i) > h_\theta(\Low{j})$ in the new tree.
                \item[$h_\sigma(i) > h_\sigma(p)$:] We have $\Low{q} = \Low{j}$ in $\UpTree{h_\sigma}$ and $\Low{j} \neq \Low{q} = p$ in the new tree.
                    In the new tree we also have $\Dth{i} = q$ and $\Dth{p} = j$.
                    Note again that $\Low{j}$ is the same in both trees.
                    In the new tree it holds that $h_\theta(p) = h_\sigma(p) > h_\sigma(\Low{j}) = h_\theta(\Low{j})$,
                    since in $\UpTree{h_\sigma}$ $h_\sigma(p) > h_\sigma(\Low{q})$ and $\Low{j} = \Low{q}$.
                    Furthermore, in the new tree $h_\theta(i) = h_\sigma(i) > h_\sigma(p) = h_\theta(p)$.
            \end{description}
    \end{description} \smallskip
    In all cases $p$ and $i$ satisfy the required condition and thus \cref{inv:condition-II} is satisfied in the new tree.
    We have shown that the new up-tree obtained by applying \algmaxinterchange to $\UpTree{h_\sigma}$ satisfies \cref{inv:condition-I,inv:condition-II,inv:condition-III-max}, which implies that it is indeed the unique up-tree for $h_\theta$.
\end{proof}

\paragraph{Interchanges of Minima.}
Let $i$ be a minimum and $p = \Low{\Dth{i}}$.
In an up-tree satisfying \cref{inv:condition-II} the function value of $i$ is greater than the function value of $p$.
If this relation changes such that the function value of $p$ becomes greater than the function value of $i$, \cref{inv:condition-II} is violated.
To fix this violation we perform an \emph{interchange of minima}.

Let $\sigma \in [0,1)$ such that $h_\sigma(i) > h_\sigma(\Low{\Dth{i}}) = h_\sigma(p)$ and such that $i$ and $p$ are contiguous.
Let $\theta \in (\sigma, 1]$ such that $h_\theta(i) < h_\theta(p)$ and such that $i$ and $p$ are contiguous.
Write $j = \Dth{i}$ and $q = \Dth{p}$.
We give the detailed algorithm for the case $q < p$; the other case is symmetric.
\begin{tabbing}
    m\=m\=m\=m\=m\=m\=m\=m\=m\=m\=m\=\kill
    \clreset
    \clbegin                  \> \algmininterchange$(i,p)$: \\*
    \clbegin                  \> \> \texttt{if} $\Low{\Dth{i}} \neq p$ \texttt{then exit endif}. \\*
    \clbegin                  \> \> Let $j = \Dth{i}$, $q = \Dth{p}$. \\*
    \clbeginlbl{ln:j-nabors}  \> \> Let $u_j = \Up{j}$, $d_j = \Dn{j}$, $i_j = \In{j}$, $m_j = \Mid{j}$. \\*
    \clbeginlbl{ln:split}     \> \> From $q$ down along the trail not containing $j$ find \\*
    \clbegin                  \> \> \> 1. first node $s^-$ such that $h_\sigma(s^-) < h_\sigma(j)$ \\*
    \clbegin                  \> \> \> 2. last node $s^+$ such that $h_\sigma(s^+) > h_\sigma(j)$. \\*
    \clbeginlbl{ln:j-ptrs}    \> \> Set $\Dn{j} \gets m_j$, $\Mid{j} \gets d_j$, $\In{j} \gets s^-$, $\Up{j} \gets s^+$. \\*
    \clbeginlbl{ln:mid-vs-in} \> \> \texttt{if} $q < j < p$ \texttt{then} \\*
    \clbegin                  \> \> \> Swap $\In{i}$ and $\Mid{i}$; \\*
    \clbegin                  \> \> \> \texttt{if} $i_j = i$ \texttt{then} $\Mid{i_j} = u_j$ \texttt{else} $\Up{i_j} = u_j$ \texttt{endif}; \\*
    \clbegin                  \> \> \> \texttt{if} $u_j = q$ \texttt{then} $\Mid{u_j} = i_j$ \texttt{else} $\Dn{u_j} = i_j$ \texttt{endif}; \\*
    \clbeginlbl{ln:splus-m}   \> \> \> \texttt{if} $s^+ = q$ \texttt{then} $\In{s^+} = j$ \texttt{else} $\Dn{s^+} = j$ \texttt{endif}; \\*
    \clbegin                  \> \> \> \texttt{if} $s^- = p$ \texttt{then} $\In{s^-} = j$ \texttt{else} $\Up{s^-} = j$ \texttt{endif} \\*
    \clbegin                  \> \> \texttt{else if} $q < p < j$ \\
    \clbegin                  \> \> \> Swap $\In{p}$ and $\Mid{p}$; \\*
    \clbegin                  \> \> \> \texttt{if} $i_j = i$ \texttt{then} $\In{i_j} = u_j$ \texttt{else} $\Up{i_j} = u_j$ \texttt{endif}; \\*
    \clbegin                  \> \> \> \texttt{if} $u_j = q$ \texttt{then} $\In{u_j} = i_j$ \texttt{else} $\Dn{u_j} = i_j$ \texttt{endif}; \\*
    \clbeginlbl{ln:splus-i}   \> \> \> \texttt{if} $s^+ = q$ \texttt{then} $\Mid{s^+} = j$ \texttt{else} $\Dn{s^+} = j$ \texttt{endif}; \\*
    \clbegin                  \> \> \> \texttt{if} $s^- = p$ \texttt{then} $\In{s^-} = j$ \texttt{else} $\Up{s^-} = j$ \texttt{endif} \\*
    \clbegin                  \> \> \texttt{endif}. \\*
    \clbeginlbl{ln:ass-dth}   \> \> Set $\Dth{i} = q$, $\Dth{p} = j$. \\*
    \clbegin                  \> \> Set $\Low{u} = i$ for nodes $u \neq q$ between $u_j$ and $q$ and $j$ and $q$, including $u_j$ and $j$. \\*
\end{tabbing}

\begin{lemma}[Min-interchange]
    Let $p$ and $i$ be items.
    Let $\sigma \in [0,1)$ such that $p$ and $i$ are minima and contiguous in $h_\sigma$, $h_\sigma(i) > h_\sigma(p)$ and $p = \Low{\Dth{i}}$ in $\UpTree{h_\sigma}$.
    Let $\theta \in (\sigma, 1]$ such that $p$ and $i$ are minima and contiguous in $h_\theta$, and $h_\theta(i) < h_\theta(p)$.
    Applying \algmininterchange($i, p$) to $\UpTree{h_\sigma}$ yields the up-tree for $h_\theta$.
\end{lemma}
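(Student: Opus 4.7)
The plan is to apply \cref{cor:unique-bananas-local}: I will show that the banana tree produced by \algmininterchange$(i,p)$ satisfies \cref{inv:condition-I}, \cref{inv:condition-II}, and \cref{inv:condition-III-max} with respect to $h_\theta$, whence uniqueness identifies it with $\UpTree{h_\theta}$. By \cref{lem:stable-value-order}, the ordering of items by value in $h_\theta$ agrees with that in $h_\sigma$ except that $p$ and $i$ are transposed, and every maximum retains its value and status. Hence the underlying ordered binary tree satisfying Conditions~\textsf{I.1} and \textsf{I.2} is the same in $h_\sigma$ and $h_\theta$, and the only change forced by Condition~\textsf{II} is at $j$: since $i$ (still the minimum of $j$'s subtree) becomes strictly smaller than $p$, the nearest ancestor of $i$ in whose subtree $i$ is not minimum moves from $j$ up to $q$, so the path previously ending at $j$ must now start at $p$, and the path previously from $p$ to $q$ must now start at $i$ and pass through $j$. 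This is precisely what $\Dth{i} = q$ and $\Dth{p} = j$ on \cref{ln:ass-dth} encode.

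I would then verify that the pointer rewiring at \cref{ln:j-nabors}--\cref{ln:splus-i} realises this rerouting on the banana tree. The split point $(s^-, s^+)$ located in \cref{ln:split} is the unique consecutive pair on the trail of the $(p,q)$-banana not containing $j$ whose values bracket $f(j)$; its existence follows from $h_\sigma(p) < h_\sigma(j) < h_\sigma(q)$ together with Conditions~\textsf{III.1} and \textsf{III.2}. Splicing $j$ between $s^-$ and $s^+$, repurposing the old trails between $i$ and $j$ as the new trails between $p$ and $j$, and performing the local $\In{\cdot}/\Mid{\cdot}$ swap at $i$ or $p$ dictated by whether $j$ originally lay on the in- or mid-trail of the $(p,q)$-banana (captured by the subcases $q < j < p$ and $q < p < j$ in \cref{ln:mid-vs-in}) produces a new $(i,q)$-banana and a new $(p,j)$-banana whose trails are sorted both along the interval and by value; a local check of the rewired pointers then gives \cref{inv:condition-III-max} at every touched node, with all remaining nodes inheriting the invariant from $\UpTree{h_\sigma}$. \Cref{inv:condition-I} at every affected maximum is preserved because each displaced banana subtree migrates as a whole to the same interval-side of its new parent as it had of its old one---the subtree formerly under $j$ hangs on the same side of $j$ as its new spanning leaf $p$, and the subtrees hanging off nodes of the old $(p,q)$-banana are only locally rewired at $j$'s slot. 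Finally, \cref{inv:condition-II} is enforced by the closing $\Low$-update: every node strictly interior to the new $(i,q)$-banana, i.e.\ every $u \neq q$ in the specified range, has $\Low$ reset to $i$, while every node strictly interior to the new $(p,j)$-banana retains $\Low = p$ inherited from the old $(i,j)$-banana; no other minimum's $\Low{\Dth{\cdot}}$ pointer changes.

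The main obstacle will be the case bookkeeping. In each of the two subcases of \cref{ln:mid-vs-in}, plus their mirror counterparts when $q > p$, the pointers $\Up{\cdot}$, $\Dn{\cdot}$, $\In{\cdot}$, and $\Mid{\cdot}$ at the nodes $j$, $u_j$, $i_j$, $m_j$, $s^-$, $s^+$, $i$, and $p$ must be kept simultaneously consistent with \cref{inv:condition-III-max} on the banana tree and with the in-order traversal of the underlying ordered binary tree; the degenerate configurations $s^+ = q$, $s^- = p$, $u_j = q$, and $i_j = i$ that the inner conditionals in \cref{ln:splus-m} and \cref{ln:splus-i} address must each be verified separately.
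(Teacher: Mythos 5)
Your plan is correct and matches the paper's proof strategy: you verify that the tree returned by \algmininterchange\ satisfies \cref{inv:condition-I,inv:condition-II,inv:condition-III-max} for $h_\theta$ at each node whose pointers change, and then invoke \cref{cor:unique-bananas-local}. Your opening observation---that the underlying ordered binary tree (Conditions~\textsf{I.1}, \textsf{I.2}) is unchanged by a min-interchange and only the path decomposition (Condition~\textsf{II}) is rerouted at $j$---is a nice conceptual framing that the paper leaves implicit, but the backbone (node-by-node verification of the three invariants at $j$, $i_j$, $u_j$, $s^+$, $s^-$, plus the $\Low$-pointer update for \cref{inv:condition-II}) is the same work the paper carries out in detail, and your proposal correctly identifies this case bookkeeping as the substance of the remaining argument.
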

\begin{proof}
    The algorithm defines $j = \Dth{i}$, $q = \Dth{p}$.
    Assume $q < p$; the other case is symmetric.
    We show that by applying \algmininterchange to $\UpTree{h_\sigma}$, which satisfies \cref{inv:condition-I,inv:condition-II,inv:condition-III-max},
    we obtain an up-tree satisfying \cref{inv:condition-I,inv:condition-II,inv:condition-III-max} for $h_\theta$.
    By \cref{cor:unique-bananas-local} this is the unique up-tree for $h_\theta$, $\UpTree{h_\theta}$.
    Note that by definition of $h_\lambda$ the maps $h_\sigma$ and $h_\theta$ differ in either the value of $i$ or the value of $p$,
    and are equal in the values of all other items.

    In \cref{ln:split} finds two nodes on the trail between $p$ and $q$ that does not contain $j$:
    $s^-$ with $h_\sigma(s^-) < h_\sigma(j)$ and $s^+$ with $h_\sigma(s^+) > h_\sigma(j)$,
    with $s^-$ being the highest such node and $s^+$ being the lowest such node along the trail.
    Since by \cref{inv:condition-III-max} nodes are ordered by function value along trails, $s^-$ and $s^+$ are neighbors along the trail.
    
    It may be the case that $s^-$ is a node internal to the trail or that $s^- = p$,
    and similarly $s^+$ is either internal to the trail or $s^+ = q$.
    Similar statements hold for the nodes defined in \cref{ln:j-nabors}:
    \smallskip \begin{itemize}
        \item $u_j = \Up{j}$ is either internal to the trail between $p$ and $q$ or $u_j = q$,
        \item $d_j = \Dn{j}$ is either internal to the trail between $p$ and $q$ or $d_j = p$,
        \item $i_j = \In{j}$ is either internal to the in-trail between $i$ and $j$ or $i_j = i$,
        \item $m_j = \Mid{j}$ is either internal to the mid-trail between $i$ and $j$ or $m_j = i$.
    \end{itemize} \smallskip
    The if-statement in \cref{ln:mid-vs-in} distinguishes two cases: $q < j < p$ and $q < p < j$.
    The cases correspond to $j$ being on the mid-trail between $p$ and $q$ and $j$ being on the in-trail between $p$ and $q$.
    As $j = \Dth{i}$ and $p = \Low{\Dth{i}}$ the node $j$ must be in one of the trails between $p$ and $q$.
    Thus, these two cases are the only cases that can occur.

    We first prove \cref{inv:condition-III-max}.
    Recall that by \cref{lem:inv-1-and-3-are-cond-3}, $\UpTree{h_\sigma}$ satisfies Conditions {\sf III.1} and {\sf III.2}.
    We consider each node for which the $\Up{\cdot}$ or $\Dn{\cdot}$ pointer changes individually.
    \smallskip \begin{description}
        \item[Node $j$:] in the new tree $\Up{j} = s^+$ and $\Dn{j} = m_j$.
            By definition of $h_\lambda$ it holds that $h_\sigma(s^+) = h_\theta(s^+)$ and $h_\sigma(j) = h_\theta(j)$.
            As $h_\sigma(s^+) > h_\sigma(j)$ it follows that $h_\theta(\Up{j}) > h_\theta(j)$.
            If $m_j \neq i$ then $h_\sigma(m_j) = h_\theta(m_j)$; if $m_j = i$ then $h_\sigma(m_j) \geq h_\theta(m_j)$ as either $h_\sigma(i) = h_\theta(i)$ or $h_\theta(i) < h_\theta(p) = h_\sigma(p)$.
            Thus, $h_\theta(m_j) < h_\theta(j)$.
            It follows that $h_\theta(\Dn{j}) < h_\theta(j) < h_\theta(\Up{j})$.
            For the other conditions of \cref{inv:condition-III-max} we consider the two cases of \cref{ln:mid-vs-in} separately:
            \begin{description}
                \item[$q < j < p$:]
                    the node $j$ is on the mid-trail between $p$ and $q$, which by Conditions {\sf III.1} and {\sf III.2} implies that $j < \Dn{j}$ in $\UpTree{h_\sigma}$.
                    It follows that $m_j < j$ by \cref{inv:condition-I} for $\UpTree{h_\sigma}$, since in that tree $m_j$ is in the banana tree rooted at $j$.
                    Thus, in the new tree $m_j = \Dn{j} < j$.
                    If $s^+ = q$, then $q = s^+ = \Up{j} < j$. Since in this case $j = \In{s^+}$ (see \cref{ln:splus-m}) in the new tree $j$ satisfies point 3 of \cref{inv:condition-III-max}.
                    If $s^+ \neq q$, then $j < s^+$ by Conditions {\sf III.1} and {\sf III.2}, as $s^+$ is in the in-trail between $p$ and $q$.
                    Thus $\Dn{j} < j < \Up{j}$ in the new tree and hence $j$ satisfies point 2 of \cref{inv:condition-III-max}.
                \item[$q < p < j$:]
                    the node $j$ is on the in-trail between $p$ and $q$, which by Conditions {\sf III.1} and {\sf III.2} implies that $\Dn{j} < j$ in $\UpTree{h_\sigma}$.
                    It follows that $j < m_j$ by \cref{inv:condition-I} for $\UpTree{h_\sigma}$.
                    Thus, in the new tree $j < \Dn{j} = m_j$.
                    If $s^+ = q$, then $q = s^+ = \Up{j} < j$. Since in this case $j = \Mid{s^+}$ (see \cref{ln:splus-i}) in the new tree $j$ satisfies point 2 of \cref{inv:condition-III-max}.
                    If $s^+ \neq q$, then $s^+ < j$ by Conditions {\sf III.1} and {\sf III.2}, as $s^+$ is in the mid-trail between $p$ and $q$.
                    Thus, $\Up{j} < j < \Dn{j}$ in the new tree and hence $j$ satisfies point 2 of \cref{inv:condition-III-max}.
            \end{description}
        \item[Node $i_j$:] we only need to consider the case where $i_j \neq i$, as $i$ is a minimum and does not affect \cref{inv:condition-III-max}.
            Since $\Dn{i_j}$ is the same in $\UpTree{h_\sigma}$ and in the new tree, by \cref{inv:condition-III-max} it holds that $h_\sigma(\Dn{i_j}) < h_\sigma(i_j)$.
            If $\Dn{i_j} \neq i$ then $h_\sigma(\Dn{i_j}) = h_\theta(\Dn{i_j})$; otherwise if $\Dn{i_j} = i$ then $h_\theta(\Dn{i_j}) \leq h_\sigma(\Dn{i_j})$ by the assumptions on $i$ and $p$.
            Hence, we have $h_\theta(\Dn{i_j}) < h_\sigma(i_j) = h_\theta(i_j)$.
            In the new tree $\Up{i_j} = u_j$.
            By \cref{inv:condition-III-max} for $\UpTree{h_\sigma}$ and the definition of $h_\lambda$ it holds that
            $h_\theta(u_j) = h_\sigma(u_j) > h_\sigma(j) > h_\sigma(i_j) = h_\theta(i_j)$.
            Thus, $h_\theta(\Dn{i_j}) < h_\theta(i_j) < h_\theta(\Up{i_j})$.
            For the other conditions of \cref{inv:condition-III-max} we consider the two cases of \cref{ln:mid-vs-in} separately:
            \begin{description}
                \item[$q < j < p$:] $j$ is on the mid-trail and thus by Conditions {\sf III.1} and {\sf III.2} $j < \Dn{j}$ in $\UpTree{h_\sigma}$.
                    By \cref{inv:condition-I} it follows that in $\UpTree{h_\sigma}$ $i_j = \In{j} < j$ and by \cref{inv:condition-III-max} $i_j < \Dn{i_j}$.
                    Also by \cref{inv:condition-III-max} $u_j = \Up{j} < j$ in $\UpTree{h_\sigma}$.
                    As $i_j$ and $j$ are in the same subtree of $u_j$ in $\UpTree{h_\sigma}$ \cref{inv:condition-I} implies that $u_j < i_j$ since $u_j < j$.
                    Thus, in the new tree $u_j = \Up{i_j} < i_j < \Dn{i_j}$.
                \item[$q < p < j$:] $j$ is on the in-trail and thus by Conditions {\sf III.1} and {\sf III.2} $\Dn{j} < j$ in $\UpTree{h_\sigma}$.
                    By \cref{inv:condition-I} and \cref{inv:condition-III-max} it follows that in $\UpTree{h_\sigma}$ $j = \Up{i_j} < i_j$ and $\Dn{i_j} < i_j$.
                    If $u_j = q$ then by \cref{inv:condition-III-max} $u_j = \Up{j} < j$ in $\UpTree{h_\sigma}$ and thus in the new tree $u_j = \Up{j} < i_j$;
                        in this case $i_j = \In{\Up{i_j}}$ in the new tree and hence $i_j$ satisfies point 3 of \cref{inv:condition-III-max}.
                    If $u_j \neq q$ then by \cref{inv:condition-III-max} $u_j = \Up{j} > j$ in $\UpTree{h_\sigma}$.
                        As $j$ and $i_j$ are in the same subtree of $u_j$ in $\UpTree{h_\sigma}$ \cref{inv:condition-I} implies that $u_j > i_j$.
                        It follows that in the new $\Dn{i_j} < i_j < \Up{i_j} = u_j$ and hence $i_j$ satisfies point 2 of \cref{inv:condition-III-max}.
            \end{description}
        \item[Node $u_j$:] we only need to consider the case where $u_j \neq q$, as in the other case $\Up{u_j}$ and $\Dn{u_j}$ are not changed.
            Note that only $\Dn{u_j}$ changes and that $\Up{u_j}$ is the same in $\UpTree{h_\sigma}$ and the new tree.
            By \cref{inv:condition-I} and the definition of $h_\lambda$ we have $h_\theta(u_j) = h_\sigma(u_j) < h_\sigma(\Up{u_j}) = h_\theta(\Up{u_j})$.
            As shown for $i_j$ we have $h_\theta(i_j) = h_\theta(\Dn{u_j}) < h_\theta(u_j)$ in the new tree and thus
            $h_\theta(\Dn{u_j}) < h_\theta(u_j) < h_\theta(\Up{u_j})$.
            We now consider the two cases of \cref{ln:mid-vs-in}:
            \begin{description}
                \item[$q < j < p$:] as shown for node $i_j$ it holds that $u_j < \Dn{u_j} = i_j$ in the new tree.
                    As $j$ is on the mid-trail between $p$ and $q$ in $\UpTree{h_\sigma}$ so is $u_j$ and thus $\Up{u_j} < u_j$.
                    Hence, $\Up{u_j} < u_j < \Dn{u_j}$ in the new tree.
                \item[$q < p < j$:] recall that $u_j \neq q$. We have shown for node $i_j$ that $\Dn{u_j} = i_j < u_j$.
                    As $j$ is on the in-trail between $p$ and $q$ in $\UpTree{h_\sigma}$ so is $u_j$ and thus either $q = \Up{u_j} < u_j$ or $\Up{u_j} > u_j$.
                    In the first case $u_j$ satisfies point 3 of \cref{inv:condition-III-max} and point 2 of \cref{inv:condition-III-max} in the new tree.
            \end{description}
        \item[Node $s^+$:] we only need to consider the case where $s^+ \neq q$, as in the other case $\Up{s^+}$ and $\Dn{s^+}$ are not changed.
            This implies that in the new tree $\Dn{s^+} = j$.
            The details are similar to those for node $j$.
            We have seen there that $h_\theta(s^+) > h_\theta(j)$ and thus in the new tree $h_\theta(s^+) > h_\theta(\Dn{s^+})$.
            As the $\Up{s^+}$ is the same in the new tree as in $\UpTree{h_\sigma}$ it follows that $h_\theta(\Up{s^+}) > h_\theta(s^+)$ also in the new tree.
            Now consider the two cases of \cref{ln:mid-vs-in}:
            \begin{description}
                \item[$q < j < p$:] as above this implies $j = \Dn{s^+} < s^+$. 
                    If $\Up{s^+} = q$ then $s^+ = \In{\Up{s^+}}$, $\Up{s^+} < s^+$ and hence $s^+$ satisfies point 3 of \cref{inv:condition-III-max}.
                    If $\Up{s^+} \neq q$ then $s^+ < \Up{s^+}$ by Conditions {\sf III.1} and {\sf III.2} and hence $s^+$ satisfies point 2 of \cref{inv:condition-III-max}.
                \item[$q < p < j$:] as above this implies $j = \Dn{s^+} > s^+$.
                    As $s^+$ is on the mid-trail from $p$ to $q$ in $\UpTree{h_\sigma}$ it follows that $s^+ > \Up{s^+}$.
                    Thus, $s^+$ satisfies point 2 of \cref{inv:condition-III-max}.
            \end{description}
        \item[Node $s^-$:] we only need to consider the case where $s^- \neq p$, as $p$ has a minimum and does not affect \cref{inv:condition-III-max}.
            This implies that in the new tree $\Up{s^-} = j$.
            The pointer $\Dn{s^-}$ remains unchanged and thus $h_\theta(\Dn{s^-}) < h_\theta(s^-)$.
            As $h_\theta(s^-) = h_\sigma(s^-) < h_\sigma(j) = h_\theta(j)$ we also have $h_\theta(s^-) < h_\theta(\Up{s^-})$.
            Since $s^- = \In{j} = \In{\Up{s^-}}$ in the new tree it remains to show that $\Up{s^-} = j$ and $\Dn{s^-}$ are on the same side of $s^-$ in both cases of \cref{ln:mid-vs-in}.
            \begin{description}
                \item[$q < j < p$:] $j$ is on the mid-trail and $s^-$ is on the in-trail between $p$ and $q$, and thus by Conditions {\sf III.1} and {\sf III.2} $j < s^-$.
                    These conditions also imply that $\Dn{s^-} < s^-$.
                \item[$q < p < j$:] $j$ is on the in-trail and $s^-$ is on the mid-trail between $p$ and $q$, and thus by Conditions {\sf III.1} and {\sf III.2} $s^- < j$.
                    These conditions also imply that $s^- < \Dn{s^-}$
            \end{description}
            In both cases $s^-$ satisfies point 3 of \cref{inv:condition-III-max}.
    \end{description} \smallskip
    All nodes for which the $\Up{\cdot}$ or $\Dn{\cdot}$ pointer changed satisfy the conditions of \cref{inv:condition-III-max} in the new tree,
    and thus the new tree satisfies \cref{inv:condition-III-max}.

    \smallskip
    We now show that the new tree also satisfies \cref{inv:condition-I}.
    For any maximum, if the contents of the subtrees of that maximum are unchanged, then the condition of \cref{inv:condition-I} continues to hold for this maximum.
    That is, for a maximum $b$, if the set of descendants of $\Dn{b}$ is the same in $\UpTree{h_\sigma}$ and in the new tree,
    and if the set of nodes in the banana subtree rooted at $b$ is the same in $\UpTree{h_\sigma}$ and in the new tree,
    and if $\Birth{b} < b$ (or $\Birth{b} > b$) in $\UpTree{h_\sigma}$ and in the new tree,
    then $b$ satisfies the condition of \cref{inv:condition-I} in the new tree, since it also satisfied it in $\UpTree{h_\sigma}$.
    This is the case for most of the nodes:
    \smallskip \begin{itemize}
        \item $\Birth{q} = p$ in $\UpTree{h_\sigma}$ and $\Birth{q} = i$ in the new tree, but both $q < p$ and $q < i$.
            The set of nodes in either subtree of $q$ remains unchanged.
        \item For any node on the trail from $u_j$ to $q$ (except $u_j$ and $q$) and any node on the trail from $s^+$ to $q$ (excluding $s^+$ and $q$)
            the set of nodes in either subtree remains unchanged, and so does the $\Birth{\cdot}$ of these nodes.
        \item The same holds for any descendant of $j$ in $\UpTree{h_\sigma}$, for $s^-$ and any descendant of $s^-$.
        \item For any node not in the banana subtree of $q$ the contents of the subtrees and $\Birth{\cdot}$ is unchanged.
    \end{itemize} \smallskip
    Thus, we only need to show that $u_j$, $j$ and $s^+$ do not violate the condition of \cref{inv:condition-I}.
    Furthermore, we only need to consider the case where $u_j \neq q$ and $s^+ \neq q$.
    \medskip \begin{description}
        \item[Node $u_j$:] the banana subtree of $u_j$ is the same in $\UpTree{h_\sigma}$ and in the new tree, as is $\Birth{u_j}$.
            The node $\Dn{u_j} = i_j$ in the new tree is a descendant of $j$ in $\UpTree{h_\sigma}$ and $j = \Dn{u_j}$ in $\UpTree{h_\sigma}$.
            The descendants of $i_j$ are descendants of $\Dn{u_j}$ in both $\UpTree{h_\sigma}$ and the new tree.
            As $u_j$ does not gain any descendants, it follows that $u_j$ satisfies the condition of \cref{inv:condition-I} in the new tree.
        \item[Node $s^+$:] the banana subtree of $s^+$ is the same in $\UpTree{h_\sigma}$ and in the new tree, as is $\Birth{s^+}$.
            Through the change of $\Dn{s^+}$ to $j$, $\Dn{s^+}$ gains new descendants in the new tree.
            These are the nodes on the trail between $p$ and $j$, along with their banana subtrees, and the nodes on the trail from $i$ to $j$.
            By \cref{inv:condition-I} and Conditions {\sf III.1} and {\sf III.2} for $\UpTree{h_\sigma}$ these are all on the same side of $s^+$ as $\Dn{s^+}$ in $\UpTree{h_\sigma}$.
            It follows that $s^+$ satisfies the condition of \cref{inv:condition-I} in the new tree.
        \item[Node $j$:] The two cases $q < j < p$ and $q < p < j$ are symmetric. We give the proof for the first case and assume $q < j < p$.
            The pointers $\Mid{j}$ and $\Dn{j}$ are swapped going from $\UpTree{h_\sigma}$ to the new tree (see the assignment in \cref{ln:j-ptrs}).
            The banana subtree rooted at $j$ in the new tree consists of descendants of $s^+$ and the descendants of $\Dn{j}$ in $\UpTree{h_\sigma}$.
            By \cref{inv:condition-I} and Conditions {\sf III.1} and {\sf III.2} for $\UpTree{h_\sigma}$, each node $v$ in this set satisfies $j < v$.
            Note that $\Birth{j} = p$ in the new tree and thus $j < \Birth{j}$ by assumption.
            The descendants of $\Dn{j}$ in the new tree are descendants of nodes on the mid-trail of $j$ in $\UpTree{h_\sigma}$.
            Each node $t$ in this set satisfies $t < j$ by \cref{inv:condition-I} for $\UpTree{h_\sigma}$.
            Thus, in the new tree $j$ satisfies the condition of \cref{inv:condition-I}.
    \end{description} \medskip
    We have shown that all nodes satisfy the condition of \cref{inv:condition-I} and thus the new up-tree satisfies \cref{inv:condition-I}.

    \smallskip
    It remains to show that the new up-tree also satisfies \cref{inv:condition-II}.
    \Cref{inv:condition-II} can be violated only by minima for which $\Low{\Dth{\cdot}}$ changes.
    This changes for $i$, $p$ and any node $u$ with $\Low{\Dth{\cdot}} = p$ in $\UpTree{h_\sigma}$.
    \smallskip \begin{itemize}
        \item In the new tree $\Low{\Dth{p}} = i$, and $h_\theta(p) > h_\theta(i)$ by definition of $h_\theta$.
        \item As the algorithm sets $\Dth{i} = q$ (see \cref{ln:ass-dth}) in the new tree $\Low{\Dth{i}} = \Low{q}$.
            Since either $h_\sigma(i) = h_\theta(i)$ or $h_\sigma(p) = h_\theta(p)$ and since $i$ and $p$ are contiguous in both $h_\sigma$ and $h_\theta$,
            $h_\sigma(p) > h_\sigma(\Low{q})$ implies that $h_\theta(i) > h_\sigma(\Low{q}) = h_\theta(\Low{q}) = h_\theta(\Low{\Dth{i}})$.
        \item The nodes $u$ for which we set $\Low{\Dth{u}} = i$ have $\Low{\Dth{u}} = p$ in $\UpTree{h_\sigma}$.
            They satisfy $h_\sigma(u) = h_\theta(u) > h_\sigma(\Low{\Dth{p}}$,
            and contiguity of $i$ and $p$ in $h_\sigma$ they satisfy $h_\theta(u) > h_\sigma(i)$.
            As $h_\theta(i) \leq h_\sigma(i)$, it follows that $h_\theta(u) > h_\theta(i) = h_\theta(\Low{\Dth{u}}$.
    \end{itemize} \smallskip
    All minima satisfy the condition of \cref{inv:condition-II} and thus the new tree satisfies \cref{inv:condition-II}.
    We have shown that the new tree satisfies \cref{inv:condition-I,inv:condition-II,inv:condition-III-max} for $h_\theta$,
    which concludes the proof that it is the unique up-tree for $h_\theta$.
\end{proof}

\subsection{Correctness of Scenarios~A and B.}
\label{sec:scenario-correctness}
So far, we have described the local operations in terms of small changes in function value, where the involved items are contiguous in function value before and after the operation.
However, Scenarios~A and B described in \cref{sec:4.2} also need to deal with larger changes in value, involving not necessarily contiguous items.
We start by describing the conditions under which a change in the value of a maximum (or a minimum by \cref{lem:coupling_of_interchanges}) does not affect the corresponding banana tree.
This is stated formally in the following lemma.

\begin{lemma}[Unaffected Banana Tree]
    \label{lem:change-of-max-is-local}
    Let $j$ be a maximum.
    Let $h_\sigma$ and $h_\theta$ be maps that differ only in the value of $j$ such that all other items have the same criticality.
    If it holds that $\max\{h_\sigma(\In{j}), h_\sigma(\Mid{j}), h_\sigma(\Dn{j})\} < h_\theta(j) < h_\sigma(\Up{j})$ then $\UpTree{h_\sigma}$ and $\UpTree{h_\theta}$ are identical.
\end{lemma}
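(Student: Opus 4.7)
The plan is to argue by uniqueness: I will verify that the existing tree $\UpTree{h_\sigma}$ already satisfies \cref{inv:condition-I,inv:condition-II,inv:condition-III-max} with respect to $h_\theta$, and then invoke \cref{cor:unique-bananas-local} to conclude $\UpTree{h_\theta} = \UpTree{h_\sigma}$. Since $h_\sigma$ and $h_\theta$ agree on every item except $j$ and the criticality of every item is preserved, both trees have the same node set; moreover, the positional ordering of the items along the interval is identical in $h_\sigma$ and $h_\theta$, so every positional clause in the invariants that holds for $h_\sigma$ also holds for $h_\theta$.

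Next I would dispatch the invariants one at a time. \cref{inv:condition-I} is purely positional and is therefore immediate. \cref{inv:condition-II} compares values of minima only; since $j$ is a maximum, neither side of $h_\theta(p) > h_\theta(\Low{\Dth{p}})$ involves $j$, so the inequality is unchanged from $h_\sigma$. The only nontrivial invariant is \cref{inv:condition-III-max}, and the only nodes for which the value-based clause could possibly change are $j$ itself and the maxima $u$ that occur as one of $j$'s trail-neighbors, i.e.\ $u \in \{\In{j}, \Mid{j}, \Up{j}, \Dn{j}\}$ whenever these happen to be maxima (when $\In{j}$ or $\Mid{j}$ equals $\Birth{j}$, that node is a minimum and requires no check).

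For $j$ itself the required inequality is $h_\theta(\Up{j}) > h_\theta(j) > h_\theta(\Dn{j})$, which reads as
\[
  h_\sigma(\Up{j}) \;>\; h_\theta(j) \;>\; h_\sigma(\Dn{j})
\]
after substituting $h_\theta = h_\sigma$ on everything other than $j$; this is exactly what the hypothesis of the lemma provides. For a maximum $u = \Up{j}$ the only clause that is sensitive to $h_\theta(j)$ is $h_\theta(j) = h_\theta(\Dn{u}) < h_\theta(u)$, and again this follows from $h_\theta(j) < h_\sigma(\Up{j}) = h_\theta(u)$. For a maximum $u \in \{\In{j}, \Mid{j}, \Dn{j}\}$ (with the first two checked only when they are maxima), the sensitive clause is $h_\theta(u) < h_\theta(j)$, which follows from $h_\theta(u) = h_\sigma(u) \le \max\{h_\sigma(\In{j}), h_\sigma(\Mid{j}), h_\sigma(\Dn{j})\} < h_\theta(j)$.

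The main obstacle I anticipate is purely bookkeeping: correctly enumerating which pointer-level inequalities are affected by changing a single value and confirming that none lies outside the set $\{\In{j}, \Mid{j}, \Up{j}, \Dn{j}\}$. Once this enumeration is in hand, every required inequality for $h_\theta$ is a direct consequence of the hypothesis together with the fact that $h_\theta$ and $h_\sigma$ coincide away from $j$, and \cref{cor:unique-bananas-local} closes the argument.
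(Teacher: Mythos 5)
Your plan is correct and takes essentially the same approach as the paper's proof: verify that $\UpTree{h_\sigma}$ already satisfies \cref{inv:condition-I,inv:condition-II,inv:condition-III-max} for the map $h_\theta$ (Invariant 1 being positional, Invariant 2 involving only minima, and Invariant 3 being supplied exactly by the hypothesis at $j$ and its neighbors $\In{j},\Mid{j},\Dn{j},\Up{j}$), then invoke \cref{cor:unique-bananas-local}. The paper states the verification of \cref{inv:condition-III-max} more tersely, but your explicit enumeration of the affected nodes fills in exactly the right cases and nothing else.
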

\begin{proof}
    By \cref{inv:condition-I,inv:condition-II,inv:condition-III-max}, we know that  $\max\{h_\sigma(\In{j}), h_\sigma(\Mid{j}), h_\sigma(\Dn{j})\} < h_\sigma(j) < h_\sigma(\Up{j})$.
    The up-tree $\UpTree{h_\sigma}$ satisfies \cref{inv:condition-I} for the map $h_\theta$, since the order of items along the interval is independent of $h_\sigma$ and $h_\theta$.
    It also satisfies \cref{inv:condition-II} for $h_\theta$, as $h_\sigma(u) = h_\theta(u)$ for all minima $u$ in $h_\sigma$ and $h_\theta$,
    and the set of minima is the same in $h_\sigma$ and $h_\theta$.
    Finally, it also satisfies \cref{inv:condition-III-max}, which follows from $\max\{h_\sigma(\In{j}), h_\sigma(\Mid{j}), h_\sigma(\Dn{j})\} < h_\theta(j) < h_\sigma(\Up{j})$ and the fact that the order of items along the interval is independent of $h_\sigma$ and $h_\theta$.
    By \cref{cor:unique-bananas-local} there is a unique banana tree for $h_\theta$ that satisfies \cref{inv:condition-I,inv:condition-II,inv:condition-III-max},
    and thus $\UpTree{h_\sigma} = \UpTree{h_\theta}$.
\end{proof}

We now show that the algorithms given for Scenarios~A and B indeed transform $\UpTree{f}$ into $\UpTree{g}$, where the maps $f$ and $g$ differ in the value of item $j$.
Recall that in Scenario~A item $j$ is non-critical in $f$ and $f(j) < g(j)$,
and that in Scenario~B item $j$ is a maximum in $f$ and decreases its value until it is non-critical in $g$.
For Scenarios~A and B we assume that $j$ is not an endpoint, i.e., $j \in [2,m-1]$ and we give an additional algorithm for updating the value of an endpoint below.

We restate the algorithm for Scenario~A, where we now also account for updates to items that lead to an endpoint changing from down-type to up-type (see \cref{ln:A-endpoint}).
\begin{tabbing}
  m\=m\=m\=m\=m\=m\=m\=m\=m\=\kill
  \clreset
  \clbegin                   \> \> \texttt{if} $g(j) > f(j+1)$ \texttt{then} \\*
  \clbeginlbl{ln:A-endpoint} \> \> \> \texttt{if} $j+2$ is a hook \texttt{then} cancel or slide endpoint $j+1$ to $j$ in $\UpTree{f}$ and $\DnTree{f}$ \\*
  \clbeginlbl{ln:A-ac-slide} \> \> \> \texttt{else if} $f(j+1) < f(j+2)$ \= \texttt{then} anti-cancel $j$ and $j+1$ in $\UpTree{f}$ and $\DnTree{f}$ \\*
  \clbegin                   \> \> \>                                    \> \texttt{else} slide $j+1$ to $j$ in $\UpTree{f}$ and $\DnTree{f}$ \\*
  \clbegin                   \> \> \> \texttt{endif}; \\*
  \clbegin                   \> \> \> set $q = \Up{j}$ in $\UpTree{f}$; \\*
  \clbeginlbl{ln:A-while}    \> \> \> \texttt{while} $f(q) < g(j)$ \texttt{do} \=
                              interchange $j$ and $q$ in $\UpTree{f}$ and $\DnTree{f}$; \\*
  \clbegin                   \> \> \> \> $q = \Up{j}$ in $\UpTree{f}$ \\*
  \clbegin                   \> \> \> \texttt{endwhile}; \\
  \clbegin                   \> \> \texttt{endif}.
\end{tabbing}
The following lemma states that this algorithm correctly updates $\UpTree{f}$ to obtain $\UpTree{g}$.
\begin{lemma}[Correctness of Scenario A]
    \label{lem:scenario-A}
    Let $f$ and $g$ be maps with $f(j) < g(j)$ and $f(i) = g(i)$ for all $i \neq j$,
    and $f(j-1) < f(j) < f(j+1)$.
    Applying the algorithm for Scenario A to $\UpTree{f}$ yields $\UpTree{g}$.
\end{lemma}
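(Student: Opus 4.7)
The plan is to split on whether the value change crosses the threshold $f(j+1)$, and in the latter case to follow the straight-line homotopy $h_\lambda = (1-\lambda)f + \lambda g$, showing that each line of the algorithm corresponds to the topological event happening at the correct parameter $\lambda$.

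First I would dispatch the case $g(j) \le f(j+1)$. Since $f(j-1) < f(j) < f(j+1) = g(j+1)$ and $f(j-1) < g(j) < f(j+1)$, the item $j$ is non-critical in both $f$ and $g$, and the set of critical items together with their values agrees on $f$ and $g$. Hence $\UpTree{f}$ and $\UpTree{g}$ satisfy \cref{inv:condition-I,inv:condition-II,inv:condition-III-max} for the same data, and \cref{cor:unique-bananas-local} gives $\UpTree{f} = \UpTree{g}$, matching that the algorithm does nothing.

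Next, for $g(j) > f(j+1)$, I would pick $\lambda_0 \in (0,1)$ at which $h_{\lambda_0}(j)$ crosses $f(j+1)$. At $h_{\lambda_0 - \epsilon}$ the items $j$ and $j+1$ are contiguous with $h_{\lambda_0-\epsilon}(j) < h_{\lambda_0-\epsilon}(j+1)$, and at $h_{\lambda_0+\epsilon}$ they are still contiguous but the inequality is reversed. Depending on whether $j+2$ is a hook, whether $f(j+1) < f(j+2)$, or $f(j+1) > f(j+2)$, exactly one of \cref{lem:slide-cancel-endpoint}, \cref{lem:anticancel}, or \cref{lem:max-slide} applies and justifies \cref{ln:A-endpoint,ln:A-ac-slide} as producing $\UpTree{h_{\lambda_0+\epsilon}}$ from $\UpTree{f}$. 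From this point on, $j$ is a maximum in every subsequent $h_\lambda$.

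For the while loop on \cref{ln:A-while}, I would identify the finite sequence $\lambda_0 < \lambda_1 < \ldots < \lambda_r \le 1$ of parameters at which $h_\lambda(j)$ crosses the value of some other maximum. For $\lambda \in (\lambda_i, \lambda_{i+1})$, the only item whose value changes is $j$, and by \cref{lem:change-of-max-is-local} applied to the current $\Up{j}$, the up-tree is constant on this open interval. At each $\lambda_{i+1}$ the crossing is between $j$ and the then-current $\Up{j}$, because this is the unique maximum $q$ with $h_{\lambda_i}(j) < h_{\lambda_i}(q)$ whose value is minimal in the subtree of the banana containing $j$; any other $q'$ satisfying $h_{\lambda}(j) = h_{\lambda}(q')$ with $q' \neq \Up{j}$ fails to violate any invariant, again by \cref{lem:change-of-max-is-local}. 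Hence by \cref{lem:interchange-of-maxima} each loop iteration transforms $\UpTree{h_{\lambda_i + \epsilon}}$ into $\UpTree{h_{\lambda_{i+1} + \epsilon}}$, and by \cref{lem:coupling_of_interchanges} the coupled interchange on $\DnTree{h_\lambda}$ is correct or has no structural effect, so performing it in lockstep is safe. The loop halts exactly when $f(\Up{j}) \geq g(j)$, that is, at $\lambda = 1$.

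The main obstacle I anticipate is the last paragraph: arguing that every relevant crossing during the homotopy is precisely a crossing with the \emph{current} $\Up{j}$, and that between consecutive relevant crossings the up-tree really is invariant. \cref{lem:change-of-max-is-local} is tailor-made for this, but one has to be careful that after each interchange the pointer $\Up{j}$ is the fresh one in the new tree, and that the next crossing involves this new node; this is where the structural analysis of \cref{lem:interchange-of-maxima} (which tells us who the new $\Up{j}$ is, and in particular that its value exceeds the value of the old $\Up{j}$) closes the induction.
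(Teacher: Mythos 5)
Your proposal is correct and follows essentially the same route as the paper's proof: dispatch the case $g(j) \le f(j+1)$ by uniqueness of the banana tree, handle the first crossing with $j+1$ via \cref{lem:slide-cancel-endpoint}, \cref{lem:anticancel}, or \cref{lem:max-slide}, and then induct over the while loop using \cref{lem:change-of-max-is-local} to skip irrelevant crossings and \cref{lem:interchange-of-maxima} for each relevant interchange. Your extra appeal to \cref{lem:coupling_of_interchanges} is harmless but not needed here (the statement concerns only $\UpTree{g}$), and your indexing of the $\lambda_i$ over all crossings rather than just the relevant ones would need to be restricted to a subsequence to line up with loop iterations, but these are cosmetic matters and the underlying argument is the paper's.
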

\begin{proof}
    We show how the up-tree $\UpTree{f}$ is modified throughout the algorithm following the homotopy $h_\lambda$ until the up-tree becomes $\UpTree{g}$.
    If $g(j) < f(j+1)$ then item $j$ is non-critical in $f$ and $g$, $\UpTree{f} = \UpTree{g}$, and the algorithm does not change the up-tree.
    Assume $g(j) > f(j+1)$ for the remainder of the proof.
    By this assumption there exists an $h_{\theta_0}$ with $0 < \theta_0 \leq 1$ such that $j$ and $j+1$ are contiguous in $h_{\theta_0}$.
    If $j+2$ is a hook, then $j+1$ is down-type in $f$ and up-type in $h_{\theta_0}$ and by \cref{lem:slide-cancel-endpoint} applying $\algcancelendpoint(j,j+1)$ yields $\UpTree{h_{\theta_0}}$.
    If $j+2$ is not a hook and $f(j+1) < f(j+2)$, then $j$ and $j+1$ are non-critical in $f$ and a maximum and minimum, respectively, in $h_{\theta_0}$.
    By \cref{lem:anticancel} executing an anti-cancellation between $j$ and $j+1$ yields the up-tree $\UpTree{h_{\theta_0}}$.
    If $j+2$ is not a hook and $f(j+1) > f(j+2)$, then $j+1$ is a maximum in $f$, non-critical in $h_{\theta_0}$ and $j$ is a maximum in $h_{\theta_0}$.
    By \cref{lem:max-slide} executing a slide from $j+1$ to $j$ yields the up-tree $\UpTree{h_{\theta_0}}$.

    Let $q_i$ be $\Up{j}$ at the beginning of the $i$-th iteration of the loop in \cref{ln:A-while}.
    We show by induction that after every iteration $i$ of the loop either (1) the current up-tree corresponds to a map $h_{\theta_{i+1}}$ where $q_i$ and $j$ are contiguous
    and $h_{\theta_{i+1}}(q_i) < h_{\theta_{i+1}}(j)$
    or (2) the current up-tree is $\UpTree{g}$.
    For the base case $i = 1$, if $f(q_1) > g(j)$, then the up-tree $\UpTree{h_{\theta_0}} = \UpTree{g}$ by \cref{lem:change-of-max-is-local}.
    Otherwise, there exists a $\theta_1 \in (\theta_0, 1]$ such that $h_{\theta_1}(q_1) < h_{\theta_1}(j)$ and $q_1$ and $j$ are contiguous in $h_{\theta_1}$.
    By \cref{lem:change-of-max-is-local} and \cref{lem:interchange-of-maxima} interchanging $q_1$ and $j$ yields $\UpTree{h_{\theta_1}}$.
    Now assume that at the beginning of some iteration $i$ the current up-tree is $\UpTree{h_{\theta_{i-1}}}$, with $h_{\theta_{i-1}}(j) \in [\max \{ f(\In{j}), f(\Mid{j}), f(\Dn{j}) \}, f(q_i)]$.
    If $f(q_i) > g(j)$, then the up-tree $\UpTree{h_{\theta_i}} = \UpTree{g}$ by \cref{lem:change-of-max-is-local}.
    Else, there exists a $\theta_{i+1} \in (\theta_i, 1]$ such that $h_{\theta_{i+1}}(q_i) < h_{\theta_{i+1}}(j)$ and $q_i$ and $j$ are contiguous in $h_{\theta_{i+1}}$.
    Again, interchanging $q_i$ and $j$ yields the up-tree $\UpTree{h_{\theta_{i+1}}}$.

    There exists an iteration $I$ where $f(q_I) > g(j)$, at the latest when $q_I$ is the special root, and the loop terminates.
    Then, by \cref{lem:change-of-max-is-local} the $\UpTree{h_{\theta_{I-1}}} = \UpTree{g}$.
\end{proof}

The following algorithm updates the banana trees in Scenario~B, now including updates that change up-type items to down-type items (see \cref{ln:B-endpoint}).
\begin{tabbing}
  m\=m\=m\=m\=m\=m\=m\=m\=m\=\kill
  \clreset
  \clbegin                    \> \> \texttt{loop} $q = \arg\max\{f(\Dn{j}), f(\In{j}), f(\Mid{j})\}$ in $\UpTree{f}$; \\*
  \clbegin                    \> \> \> \texttt{if} $f(q) > f(j+1)$ \= \texttt{then}
                              interchange $q$ and $j$ in $\UpTree{f}$ and $\DnTree{f}$ \\*
  \clbegin                    \> \> \>                             \> \texttt{else} \texttt{exit} \texttt{endif} \\*
  \clbegin                    \> \> \texttt{forever}; \\
  \clbeginlbl{ln:B-endpoint}  \> \> \texttt{if} $j+2$ is a hook \texttt{then} cancel or slide endpoint $j+1$ to $j$ in $\UpTree{f}$ and $\DnTree{f}$ \\*
  \clbegin                    \> \> \texttt{else if} $f(j+1) < f(j+2)$ \= \texttt{then}
                                     cancel $j$ with $j+1$ in $\UpTree{f}$ and $\DnTree{f}$ \\*
  \clbegin                    \> \>                               \> \texttt{else}
                                     slide $j$ to $j+1$ in $\UpTree{f}$ and $\DnTree{f}$ \\*
  \clbegin                    \> \> \texttt{endif}.
\end{tabbing}
The next lemma states that this algorithm correctly updates $\UpTree{f}$ to obtain $\UpTree{g}$.
\begin{lemma}[Correctness of Scenario B]
    \label{lem:scenario-B}
    Let $f$ and $g$ be maps with $f(j) > g(j)$ and $f(i) = g(i)$ for all $i \neq j$,
    and $f(j-1) < f(j) > f(j+1)$.
    Applying the algorithm for Scenario B to $\UpTree{f}$ yields $\UpTree{g}$.
\end{lemma}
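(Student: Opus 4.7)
The plan is to mirror the structure of the proof of \cref{lem:scenario-A}, but now tracking the value of $j$ as it decreases along the straight-line homotopy $h_\lambda$ from $f$ to $g$. I will show by induction on the iterations of the outer loop that, at the start of iteration $i$, the current banana tree equals $\UpTree{h_{\theta_i}}$ for some $\theta_i$ at which $j$ and the chosen $q_i$ are contiguous and $h_{\theta_i}(q_i) > h_{\theta_i}(j)$, and that each loop body performs exactly the interchange prescribed by \cref{lem:interchange-of-maxima}.

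First I would justify the choice $q_i = \arg\max\{f(\Dn{j}), f(\In{j}), f(\Mid{j})\}$. By \cref{inv:condition-III-max}, the three neighbours $\Dn{j}, \In{j}, \Mid{j}$ are exactly the nodes whose function values sandwich $f(j)$ from below, and by \cref{lem:change-of-max-is-local}, any decrease of $j$'s value that keeps it above $\max\{f(\Dn{j}), f(\In{j}), f(\Mid{j})\}$ leaves the up-tree unchanged. Hence the first $\lambda$ at which the up-tree must change is exactly the moment when $h_\lambda(j) = f(q_i)$, at which point $j$ and $q_i$ are contiguous and $q_i = \Up{j}$ in the current tree. Applying \cref{lem:interchange-of-maxima} to this configuration produces the up-tree for a map $h_{\theta_{i+1}}$ with $h_{\theta_{i+1}}(j) < f(q_i)$, completing the inductive step; \cref{lem:coupling_of_interchanges} and the symmetric max-interchange statement justify the parallel update of $\DnTree{f}$.

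The loop terminates in some iteration $I$ where $f(q_I) \le f(j+1)$, because each iteration strictly decreases $f(\Up{j})$ and the stack of values on the branch from $j$ is finite. At that moment, by \cref{lem:change-of-max-is-local}, no further structural change occurs while $h_\lambda(j)$ descends from its current value down to $f(j+1)$, so the tree equals $\UpTree{h_\mu}$ where $h_\mu(j)$ and $f(j+1)$ are contiguous (and $j$ still a maximum). I would then split into the three endpoint/non-endpoint cases exactly as in \cref{lem:scenario-A}: if $j+2$ is a hook apply \cref{lem:slide-cancel-endpoint}, otherwise if $f(j+1) < f(j+2)$ apply \cref{lem:cancellation} to cancel $j$ with $j+1$, and if $f(j+1) > f(j+2)$ apply \cref{lem:max-slide} to slide $j$ to $j+1$. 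Each of these lemmas takes $\UpTree{h_\mu}$ to $\UpTree{g}$, finishing the proof.

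The main obstacle I anticipate is verifying that the invariant ``$q_i = \Up{j}$ in the current tree at the critical moment $\theta_i$'' is preserved, because after an interchange the node newly sitting above $j$ in $\UpTree{h_{\theta_i}}$ may have changed its identity relative to $j$'s banana subtree (compare the four cases in \cref{fig:maxinterchange}). Handling this cleanly amounts to observing that an interchange swaps $j$ with its parent on exactly one trail and leaves the values $f(\In{j}), f(\Mid{j}), f(\Dn{j})$ in the post-interchange tree coinciding with the value set $\{f(\In{j}), f(\Mid{j}), f(\Dn{j})\}$ after removing the old $q_i$ and inserting the former $\Up{q_i}$ — so iterating $q = \arg\max\{\cdot\}$ and reading $\Up{j}$ afresh at the top of the loop is correct. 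This is essentially the same ``iterate with $q = \Up{j}$'' observation used in Scenario~A, now read in reverse.
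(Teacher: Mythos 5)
Your proof takes essentially the same route as the paper's, which argues by the same homotopy-tracking induction (deferred largely to the proof of \cref{lem:scenario-A}) followed by the identical three-way case split on whether $j+2$ is a hook, $f(j+1)<f(j+2)$ (cancellation), or $f(j+1)>f(j+2)$ (slide). One cosmetic slip to fix: before the $i$-th interchange $j$ is still \emph{above} $q_i$ on the trail, so the pre-interchange configuration is $j = \Up{q_i}$ (equivalently $q_i \in \{\Dn{j},\In{j},\Mid{j}\}$), not $q_i = \Up{j}$ --- this is exactly why Scenario~B invokes \algmaxinterchange{} with the arguments in the opposite order from Scenario~A; your closing paragraph shows you see the reversal, so just correct the transposition (and the corresponding sign in ``$h_{\theta_i}(q_i) > h_{\theta_i}(j)$'', which describes the state after crossing $q_{i-1}$, not the comparison with the freshly selected $q_i$).
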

\begin{proof}
    The proof is analogous to that of \cref{lem:scenario-A}:
    after each iteration $i$ of the loop, there is a $\theta_i$ such that $q_i$ and $j$ are contiguous in $h_{\theta_i}$ and the current up-tree is $\UpTree{h_{\theta_i}}$.
    Let $\UpTree{h_{\gamma}}$ be the up-tree after the loop,
    where in $h_\gamma$ the items $j$ and $q$ are contiguous in value.
    If $j+2$ is a hook, then $j+1$ is an up-type endpoint in $h_\gamma$ and a down-type endpoint in $g$, since $g(j) < g(j+1)$.
    Then, by \cref{lem:slide-cancel-endpoint} and the fact that non-critical items are not represented in the up-tree applying $\algcancelendpoint(j,j+1)$ to $\UpTree{h_\gamma}$ yields $\UpTree{g}$.
    Similarly, if $j+2$ is not a hook and $f(j+1) < f(j+2)$, then $j$ and $j+1$ are non-critical in $g$ as $f(j-1) < g(j) < f(j+1) < f(j+2)$, and canceling $j$ with $j+1$ yields $\UpTree{g}$ by \cref{lem:cancellation}.
    Finally, if $j+2$ is not a hook and $f(j+1) > f(j+2)$, then $j+1$ was non-critical in $f$ and is a maximum in $g$, and sliding $j$ to $j+1$ yields $\UpTree{g}$ by \cref{lem:max-slide}.
\end{proof}

To conclude, we state an algorithm for updating the banana trees under the change of value of the endpoint $m$.
The algorithm for the other endpoint of the interval is symmetric.
We assume that $m$ is up-type in $f$ and down-type in $g$, i.e., $f(m) < f(m-1) < g(m)$.
There is also the reverse case, in which $m$ is down-type in $f$ and up-type in $g$.
\begin{tabbing}
  m\=m\=m\=m\=m\=m\=m\=m\=m\=\kill
  \clreset
  \clbegin                   \> \> Set $q = \Dn{m}$ in $\DnTree{f}$; \\*
  \clbegin                   \> \> \texttt{while} $q$ is a maximum in $-f$ \texttt{and} $f(q) < g(m)$ \texttt{do} \\*
                             \> \>      \> interchange $m$ and $q$ in $\UpTree{f}$ and $\DnTree{f}$; \\*
  \clbegin                   \> \>      \> $q = \Dn{m}$ in $\DnTree{f}$ \\*
  \clbegin                   \> \> \texttt{endwhile}; \\*
  \clbegin                   \> \> \texttt{if} $f(m-1) < g(m)$ \texttt{then} cancel or slide endpoint $m$ to $m-1$ \texttt{endif}; \\*
  \clbegin                   \> \> Set $q = \Up{m}$ in $\UpTree{f}$; \\*
  \clbegin                   \> \> \texttt{while} $f(q) < g(m)$ \texttt{do}
                                        \= interchange $m$ and $q$ in $\UpTree{f}$ and $\DnTree{f}$; \\*
  \clbegin                   \> \>      \> $q = \Up{m}$ in $\UpTree{f}$ \\*
  \clbegin                   \> \> \texttt{endwhile}.
\end{tabbing}
In the first loop we exploit the coupling of interchanges between the up-tree and down-tree:
we find the maximum with which to interchange $m$ in $\DnTree{f}$,
and perform the corresponding interchange of minima in $\UpTree{f}$.
The purpose of this loop is to prepare the banana tree for switching $m$ from an up-type to a down-type item, as described in \cref{sec:slide-cancel-endpoint}.
The value of the hook $m+1$ is defined to have value $f(m+1) = m + \varepsilon$ and $g(m+1) = g(m) - \varepsilon$, i.e., its value is adjusted alongside that of $m$,
which justifies interchanging $m$ only with $\Dn{m}$ in the first loop, rather than with $\In{m}$, $\Mid{m}$ and $\Dn{m}$ as we have done in Scenario~B.
For completeness we state in the next lemma that this algorithm correctly maintains the up-tree; we omit the proof, which is analogous to that of \cref{lem:scenario-A,lem:scenario-B}.
\begin{lemma}[Local Change at an Endpoint]
    Let $f$ and $g$ be maps differing only in the value of the endpoint $m$, such that $f(m) < f(m-1) < g(m)$.
    Applying the previous algorithm to $\UpTree{f}$ yields $\UpTree{g}$.
\end{lemma}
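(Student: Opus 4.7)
The plan is to mirror the proofs of \cref{lem:scenario-A,lem:scenario-B}, interpreting the algorithm as a sequence of small steps along the straight-line homotopy $h_\lambda$ from $f$ to $g$. The homotopy splits into three consecutive phases matching the three blocks of the algorithm: (i) the first while-loop, during which $m$ is still up-type; (ii) the conditional call to \algcancelendpoint, at which $m$ switches from up-type to down-type; and (iii) the second while-loop, during which $m$ is a proper maximum in $\UpTree{\cdot}$. The overall invariant to maintain is that the current pair of banana trees equals $(\UpTree{h_{\lambda_i}}, \DnTree{h_{\lambda_i}})$ for an appropriate parameter $\lambda_i$.

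\textbf{Phase 1.} While $h_\lambda(m) < h_\lambda(m-1)$, item $m$ is a maximum in $-h_\lambda$ whose value in $-h_\lambda$ decreases as $\lambda$ grows, so from the point of view of $\DnTree{h_\lambda}$ this phase is an instance of Scenario~B running on $m$. I would argue by induction on the iteration count of the first loop that, at the start of iteration $i$, there exists $\theta_i$ such that $q_i := \Dn{m}$ in $\DnTree{h_{\theta_i}}$ and $m$ are contiguous in $h_{\theta_i}$ and the current pair of trees equals $(\UpTree{h_{\theta_i}}, \DnTree{h_{\theta_i}})$. Between interchanges, \cref{lem:change-of-max-is-local} applied to $\DnTree{\cdot}$ shows that the tree is unchanged; each interchange itself is handled by \cref{lem:interchange-of-maxima} on $\DnTree{\cdot}$ and, via \cref{lem:coupling_of_interchanges} and the min-interchange correctness lemma, by the coupled interchange on $\UpTree{\cdot}$. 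The guard ``$q$ is a maximum in $-f$'' prevents descent to the leaf of $m$'s banana in $\DnTree{\cdot}$, which is exactly the moment at which the endpoint type switch must occur.

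\textbf{Phases 2 and 3.} If $f(m-1) < g(m)$, there is a parameter $\theta^*$ at which $h_{\theta^*}(m)$ crosses $h_{\theta^*}(m-1)$, so $m$ changes from up-type to down-type. \Cref{lem:slide-cancel-endpoint} shows that \algcancelendpoint$(m-1,m)$ applied to the pair of trees produced by Phase~1 yields the pair of trees of $h_\lambda$ immediately after the switch, with $m$ now a maximum in $\UpTree{\cdot}$. The second while-loop is then literally the while-loop at the end of Scenario~A, and the inductive argument of \cref{lem:scenario-A}, invoking \cref{lem:interchange-of-maxima,lem:change-of-max-is-local}, carries over verbatim to conclude that after the loop terminates the current pair equals $(\UpTree{g},\DnTree{g})$. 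If instead $f(m-1) \ge g(m)$, no type switch occurs, the conditional is skipped, the second loop is vacuous, and Phase~1 alone already produces $\UpTree{g}$.

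\textbf{Main obstacle.} The most delicate piece is the bookkeeping of the hook at position $m+1$, whose value is coupled to $f(m)$ throughout the homotopy. Because the hook moves in tandem with $m$, exactly one item---namely $\Dn{m}$ in $\DnTree{\cdot}$---can become contiguous with $m$ next at each step of Phase~1; this is why only $\Dn{m}$ needs to be inspected, in contrast to the three neighbors used in Scenario~B. Verifying contiguity at each iteration under this convention, and verifying that at the moment of the type switch the pair of trees satisfies the precise contiguity preconditions of \cref{lem:slide-cancel-endpoint} among $m$, the hook, and $m-1$, is where the work concentrates; everything else reduces to direct application of the lemmas already established in this appendix.
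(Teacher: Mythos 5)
Your proof is correct and follows exactly the route the paper indicates: the paper states the proof is ``analogous to that of \cref{lem:scenario-A,lem:scenario-B}'' and omits it, and your argument is precisely that analogy fleshed out, with the right decomposition into the three phases of the algorithm and the right supporting lemmas (\cref{lem:change-of-max-is-local}, \cref{lem:interchange-of-maxima}, \cref{lem:coupling_of_interchanges}, the min-interchange correctness lemma, \cref{lem:slide-cancel-endpoint}). Your observation about the coupled hook---that because $f(m+1)$ tracks $f(m)$, the banana spanned by $m$ and the hook in $\DnTree{\cdot}$ has empty trails so only $\Dn{m}$ need be examined in Phase~1---is exactly the point that distinguishes this endpoint variant from Scenario~B and is worth making explicit; only the final sentence (the case $f(m-1) \ge g(m)$) is outside the lemma's hypothesis $f(m) < f(m-1) < g(m)$ and could be dropped.
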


\section{Correctness of Topological Maintenance}
\label{sec:topo-correctness}

\subsection{Splitting  Banana Trees.}
\label{sec:split-correctness}
In this section we prove that the algorithm \textsc{Split} described in \cref{sec:4.3} correctly splits a banana tree.
The proof will be by induction over the iterations of the loop in \textsc{Split},
with the base case in \cref{lem:split-iteration-1} and the induction step in \cref{lem:split-iteration-i}.
We first prove these two lemmas and then state the result in \cref{thm:correctness-of-split}.

Given a list of items, its associated map $f$ and an item $\ell$, the algorithm \texttt{Split} computes from $\UpTree{f}$ two new up-trees $\UpTree{g}$ and $\UpTree{h}$, where the former contains the items up to $\ell$ and the latter the remaining items.
We define $x$ to be the midpoint of $\ell$ and $\ell + 1$ with $f(x) = \frac{1}{2} (f(\ell) + f(\ell + 1))$.
We write $\Stack_i$, $p_i$, $q_i$ for $\Stack$, $p$, $q$ returned by \textsc{TopBanana} in the $i$-th iteration of \textsc{Split}.
The up-trees $\UpTree{g_i}$ and $\UpTree{h_i}$ denote the left and right up-trees after the $i$-th iteration,
with initial up-trees $\UpTree{g_0}$, $\UpTree{h_0}$.
One of the initial up-trees consists of the new special root and the dummy leaf $\alpha$.
We assume that this is the tree $\UpTree{h_0}$ with special root $\beta_h$ and that $\UpTree{g_0} = \UpTree{f}$.
This is equivalent to assuming that the top banana on the stacks is on the right spine of $\UpTree{f}$,
where the top banana is the banana closest to the root such that $x$ lies in some panel of the corresponding window.
The other case where the top banana is on the left spine of $\UpTree{f}$ is symmetric.
We further define $\alpha$ such that it is the rightmost item of $\UpTree{g_i}$ if it is in $\UpTree{g_i}$ and the leftmost item of $\UpTree{h_i}$ if it is in $\UpTree{h_i}$.
Finally, let $I$ be the total number of bananas on the stack, i.e., the number of iterations of \textsc{Split}.

\begin{lemma}[Base Case]
    \label{lem:split-iteration-1}
    At the beginning of the first iteration it holds that
    \begin{enumerate}
        \item for all nodes $u$ in $\UpTree{g_0}$ and all nodes $v$ in $\UpTree{h_0}$ we have $u < v$, \label{itm:split-base-clm-1}
        \item $\UpTree{g_0}$ and $\UpTree{h_0}$ satisfy \cref{inv:condition-I,inv:condition-II,inv:condition-III-max}, \label{itm:split-base-clm-2}
        \item any node $u$ that is not on any stack and does not have an ancestor on any stack is in $\UpTree{g_0}$ if $u < x$ and in $\UpTree{h_0}$ if $u > x$, \label{itm:split-base-clm-3}
        \item $f(\alpha) < f(x)$, $f(q_j) < f(\In{\alpha})
        $ and $f(q_j) < f(\Mid{\alpha})$ at the beginning of the iteration for all $j \in [1,I]$ with $(p_j,q_j) \in \Lup \cup \Rup \cup \Mup$, \label{itm:split-base-clm-4}
        \item all bananas $(p_j,q_j) \in \Lup \cup \Rup \cup \Mup$ for $j \in [1,I]$ are in $\UpTree{g_0}$ if $\alpha$ is in $\UpTree{h_0}$ and in $\UpTree{h_1}$ if $\alpha$ is in $\UpTree{g_1}$, \label{itm:split-base-clm-5}
        \item if $\Stack_1 \in \{\Ldn, \Rdn\}$ then $q_1$ is in the same tree as $\alpha$, \label{itm:split-base-clm-6}
        \item the in-trail between $\alpha$ and $\Dth{\alpha}$ is empty and $\alpha$ is on the spine. \label{itm:split-base-clm-7}
    \end{enumerate}
\end{lemma}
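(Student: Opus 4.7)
The proof is a direct verification of the seven properties from the explicit initial configuration: we have $\UpTree{g_0} = \UpTree{f}$ (with $\beta_g$ repositioned to $-\infty$ along the interval and the in- and mid-trails of the special banana swapped, since we assumed the top banana lies on the right spine), while $\UpTree{h_0}$ is the auxiliary tree consisting only of the new special root $\beta_h$, the dummy leaf $\alpha$, and two empty trails between them. By the setup of \textsc{Split}, $f(\alpha)$ is set to $f(p_I) - \varepsilon$, where $(p_I, q_I)$ is the top banana on the stacks produced by \textsc{LoadStacks}, and $\alpha$ is declared to be the leftmost node of $\UpTree{h_0}$.

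The ``local'' claims \textsf{(2)}, \textsf{(4)}, and \textsf{(7)} fall out of the construction. For \textsf{(2)}, $\UpTree{f}$ already satisfies \cref{inv:condition-I,inv:condition-II,inv:condition-III-max} by hypothesis, and $\UpTree{h_0}$ contains no internal maximum so that \cref{inv:condition-I,inv:condition-III-max} are vacuous, while $\alpha$ is the unique minimum and satisfies \cref{inv:condition-II} against $\beta_h$. For \textsf{(7)}, the banana spanned by $\alpha$ and $\beta_h$ is the sole banana of $\UpTree{h_0}$, so both trails are empty and $\alpha$ belongs to the spine of $\UpTree{h_0}$ by definition. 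For \textsf{(4)}, the choice $f(\alpha) = f(p_I) - \varepsilon$ together with the sortedness of $\Lup \cup \Rup \cup \Mup$ from \cref{lem:sorted_stacks}(i) gives $f(\alpha) < f(p_I) \leq f(p_j) < f(q_j)$ for every $j$ with $(p_j,q_j) \in \Lup \cup \Rup \cup \Mup$; the fact that $f(\In{\alpha}) = f(\Mid{\alpha}) = f(\beta_h)$ exceeds all of these follows because $\beta_h$ is, by definition, larger than every item; and $f(\alpha) < f(x)$ follows from $p_I$ being the lower end of the smallest banana straddling $x$ as identified by \textsc{SmallestBanana}, so in particular $f(p_I) < f(x)$.

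The ``placement'' claims \textsf{(1)}, \textsf{(5)}, \textsf{(6)}, and \textsf{(3)} use the structure of the initialization together with the properties of \textsc{LoadStacks}. Claim \textsf{(1)} holds because $\beta_g$ is placed at $-\infty$ along the interval, every item of $f$ occupies an integer position bounded by $m+1$, $\alpha$ is declared the leftmost node of $\UpTree{h_0}$ and thus placed positionally to the right of all items of $\UpTree{g_0}$ other than $\beta_g$, and $\beta_h$ sits further right still; the comparisons combine to give $u < v$ in every case. Claim \textsf{(5)} is immediate: every banana on $\Lup \cup \Rup \cup \Mup$ came from $\UpTree{f} = \UpTree{g_0}$, and $\alpha \in \UpTree{h_0}$. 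Claim \textsf{(6)} is vacuously true in the base case, since the remark preceding these invariants notes that the top banana in the first iteration can only suffer an injury or a fatality and hence $\Stack_1 \in \{\Lup, \Mup, \Rup\}$. For \textsf{(3)}, any node with $u < x$ that is unaccounted for lies in $\UpTree{g_0} = \UpTree{f}$ by construction; dually, we must argue that every node $u > x$ in $\UpTree{f}$ is either on some stack or has an ancestor there, since $\UpTree{h_0}$ is essentially empty.

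The main obstacle is the $u > x$ half of \textsf{(3)}. The plan is to show that the chain of bananas pushed by \textsc{LoadStacks}, together with their nested sub-bananas, exhausts every node of $\UpTree{f}$ whose position lies strictly between the lower end of the top banana $(p_I, q_I)$ and $x$, and that our assumption of cutting on the right spine forces any node with position $> x$ that is \emph{not} a descendant of $q_I$ in $\UpTree{f}$ to be the minimum of, or nested inside, a short-wave banana further up the right spine whose maximum is already an ancestor of some node on the stacks (after the repositioning of $\beta_g$ to $-\infty$). The cleanest way to carry this through is to use \cref{lem:spines_and_windows} to characterize the spine, combined with the termination criterion in \textsc{LoadStacks}, to rule out missed nodes on the right of $x$.
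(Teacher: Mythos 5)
Your treatment of claims \textsf{(1)}, \textsf{(2)}, \textsf{(4)}, \textsf{(5)}, \textsf{(6)}, and \textsf{(7)} agrees in substance with the paper's proof, though note a notational slip: the paper indexes bananas by the iteration of \textsc{Split} in which they are processed, so the \emph{top} banana is $(p_1,q_1)$ (returned first by \textsc{TopBanana}, with the smallest minimum among the stacked bananas by \cref{lem:sorted_stacks}(i)), whereas $(p_I,q_I)$ is the \emph{bottom} one; thus you should write $f(\alpha)=f(p_1)-\varepsilon$ and $f(\alpha)<f(p_1)\leq f(p_j)$, not $p_I$.

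The substantive gap is claim \textsf{(3)}. You explicitly defer it (``the plan is to show\ldots'') without carrying out the argument, and the plan you sketch is also pointed in a more complicated direction than necessary. The paper's argument is a one-liner: since the top banana $(p_1,q_1)$ has $q_1$ on the right spine of $\UpTree{g_0}=\UpTree{f}$ and $x$ in its in- or mid-panel, we have $q_1<x$; and because $q_1$ is on the right spine, every node of $\UpTree{f}$ that is not a descendant of $q_1$ lies strictly to the \emph{left} of $q_1$ (the right spine is the rightmost path in the path-decomposed tree), hence has position $<q_1<x$. Consequently every node with position $>x$ is $q_1$ itself or a proper descendant of $q_1$, so it is either on a stack or has the stack-node $q_1$ as an ancestor. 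Your sketch instead tries to characterize the nodes \emph{between $p_I$ and $x$} and to locate nodes $>x$ ``not a descendant of $q_I$'' inside higher spine bananas; this is both the wrong banana ($q_I$ rather than $q_1$) and an unnecessary detour. You should replace the sketch with the direct right-spine argument.
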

\begin{proof}
    Assume that $\UpTree{g_0} = \UpTree{f}$.
    Claim \ref{itm:split-base-clm-1} holds by definition of $\alpha$ and the special root of $\UpTree{h_0}$.
    The up-tree $\UpTree{f} = \UpTree{g_0}$ is the input to the algorithm and satisfies \cref{inv:condition-I,inv:condition-II,inv:condition-III-max} by assumption.
    The topmost banana $(p_1,q_1)$ has $q_1$ on the right spine of $\UpTree{g_0}$ and $q_1 < x$, which implies that nodes $u$ that are not descendants of $q_1$ satisfy $u < q_1 < x$.
    These are precisely the nodes in $\UpTree{g_0}$ that have no ancestor on any stack.
    The nodes in $\UpTree{h_0}$ also have no ancestor on any stack, but $\alpha > x$ by definition, and so is the special root of $\UpTree{h_0}$.
    This implies claim \ref{itm:split-base-clm-3}.
    The inequality $f(\alpha) < f(x)$ follows from $f(\alpha) < f(p_1) < f(x)$, where the last inequality holds since $x$ is in the in-panel or mid-panel of $\Window{p_1}{q_1}$.
    The pointers $\In{\alpha}$ and $\Mid{\alpha}$ both point to the special root of $\UpTree{h_0}$, which by definition has greater value than all $q_j$ for $j \in [1,I]$ with $q_j \in \Lup \cup \Rup \cup \Mup$.
    Thus, all three inequalities of claim \ref{itm:split-base-clm-4} hold.
    To see that claim \ref{itm:split-base-clm-5} holds simply note that all nodes on the stacks are in $\UpTree{g_0}$ and $\alpha$ is in $\UpTree{h_0}$.
    Claim \ref{itm:split-base-clm-6} holds trivially, as the topmost banana $(p_1,q_1)$ cannot be on $\Ldn$ or $\Rdn$:
    this would put $x$ in the in-panel or mid-panel of the banana that $(p_1,q_1)$ is nested in, which contradict $(p_1,q_1)$ being the topmost banana.
    Finally, $\alpha$ is the only other node in $\UpTree{h_0}$ besides the special root, and thus both trails between $\alpha$ and $\Dth{\alpha}$ are empty and $\alpha$ is on the spine.
\end{proof}

\begin{lemma}[Inductive Step]
    \label{lem:split-iteration-i}
    If at the beginning of any iteration $i$ of \textsc{Split} it holds that
    \smallskip \begin{enumerate}
        \item for all nodes $u$ in $\UpTree{g_{i-1}}$ and all nodes $v$ in $\UpTree{h_{i-1}}$ we have $u < v$, \label{itm:split-ass-1}
        \item $\UpTree{g_{i-1}}$ and $\UpTree{h_{i-1}}$ satisfy \cref{inv:condition-I,inv:condition-II,inv:condition-III-max}, \label{itm:split-ass-2}
        \item any node $u$ that is not on any stack and does not have an ancestor on any stack is in $g_{i-1}$ if $u < x$ and in $h_{i-1}$ if $u > x$, \label{itm:split-ass-3}
        \item $f(\alpha) < f(x)$, $f(q_j) < f(\In{\alpha}$ and $f(q_j) < f(\Mid{\alpha})$ at the beginning of the iteration for all $j \in [i,I]$ with $(p_j,q_j) \in \Lup \cup \Rup \cup \Mup$, \label{itm:split-ass-4}
        \item all bananas $(p_j,q_j) \in \Lup \cup \Rup \cup \Mup$ for $j \in [i,I]$ are in $\UpTree{g_i}$ if $\alpha$ is in $\UpTree{h_i}$ and in $\UpTree{h_i}$ if $\alpha$ is in $\UpTree{g_i}$, \label{itm:split-ass-5}
        \item if $\Stack_i \in \{\Ldn, \Rdn\}$ then $q_i$ is in the same tree as $\alpha$, \label{itm:split-ass-6}
        \item the in-trail between $\alpha$ and $\Dth{\alpha}$ is empty and $\alpha$ is on the spine, \label{itm:split-ass-7}
    \end{enumerate} \smallskip
    then after the $i$-th iteration
    \smallskip \begin{enumerate}
        \item for all nodes $u$ in $\UpTree{g_i}$ and all nodes $v$ in $\UpTree{h_i}$ we have $u < v$, \label{itm:split-clm-1}
        \item $\UpTree{g_i}$ and $\UpTree{h_i}$ satisfy \cref{inv:condition-I,inv:condition-II,inv:condition-III-max}, \label{itm:split-clm-2}
        \item any node $u$ that is not on any stack and does not have an ancestor on any stack is in $g_{i}$ if $u < x$ and in $h_i$ if $u > x$, \label{itm:split-clm-3}
        \item $f(\alpha) < f(x)$, $f(q_j) < f(\In{\alpha})$ and $f(q_j) < f(\Mid{\alpha})$ for all $j \in [i+1,I]$ with $(p_j,q_j) \in \Lup \cup \Rup \cup \Mup$, \label{itm:split-clm-4}
        \item all bananas $(p_j,q_j) \in \Lup \cup \Rup \cup \Mup$ for $j \in [i+1,I]$ are in $\UpTree{g_i}$ if $\alpha$ is in $\UpTree{h_i}$ and in $\UpTree{h_i}$ if $\alpha$ is in $\UpTree{g_i}$, \label{itm:split-clm-5}
        \item if \textsc{TopBanana} returns $\Stack_{i+1} \in \{\Ldn, \Rdn\}$ then $q_{i+1}$ is in the same tree as $\alpha$, \label{itm:split-clm-6}
        \item the in-trail between $\alpha$ and $\Dth{\alpha}$ is empty and $\alpha$ is on the spine. \label{itm:split-clm-7}
    \end{enumerate}
\end{lemma}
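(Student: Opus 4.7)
The plan is to case-analyze on the type of the top banana $(p_i, q_i) = \textsc{TopBanana}$ returned in iteration $i$, following the structure of \textsc{Split}: either $\Stack_i \in \{\Lup, \Rup\}$ (injury), $\Stack_i = \Mup$ (fatality), or $\Stack_i \in \{\Ldn, \Rdn\}$ (scare). In each case I would verify, in order, that the seven post-iteration claims follow from the seven pre-iteration assumptions. Before starting the cases I would record two global facts used throughout: first, by the Sorted Stacks Lemma and the definition of $\textsc{TopBanana}$, every banana $(p_j, q_j)$ remaining on the stacks satisfies $f(q_i) < f(q_j)$; second, by assumption~\ref{itm:split-ass-5} and~\ref{itm:split-ass-6}, the top banana is always processed in the tree not containing $\alpha$, except for scares where it is in the tree that does contain $\alpha$. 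These two facts drive all value- and position-comparisons in the case analysis.

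For the injury case I would argue as follows. The operation splits off the portion of the in-trail of $(p_i,q_i)$ lying on the $x$-side, together with the sub-bananas rooted at those interior nodes, and appends it to the mid-trail of the banana containing $\alpha$ on the spine of the opposite tree. Claim~\ref{itm:split-clm-3} holds because the transferred nodes were precisely those whose only stacked ancestor was $q_i$ and which lay on the $\alpha$-side of $x$; by sorted stacks they are the only such nodes. Claim~\ref{itm:split-clm-7} is preserved since the append happens on the mid-trail, leaving the in-trail of $\alpha$ empty, and $\alpha$'s banana is the outermost on its spine. The banana invariants of \cref{inv:condition-I,inv:condition-II,inv:condition-III-max} for both trees follow from the pre-iteration invariants plus $f(\alpha) < f(p_i) < f(\text{transferred nodes}) < f(q_i)$ (from assumption~\ref{itm:split-ass-4} and the Sorted Stacks Lemma), which in particular validates \cref{inv:condition-II} for the transferred minimum ends of sub-bananas pointing to $\alpha$'s expanded banana. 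Claims~\ref{itm:split-clm-4},~\ref{itm:split-clm-5},~\ref{itm:split-clm-6} follow because $\alpha$ did not move and the remaining stacked bananas are either in the same tree as before or ancestors of $(p_i,q_i)$ still on the appropriate side.

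The fatality case is the most delicate. Here the entire in-trail of $(p_i,q_i)$ plus the $x$-side portion of its mid-trail, including $p_i$ itself, is transferred to the opposite tree and appended to $\alpha$'s spine banana; simultaneously $\alpha$ migrates back across $x$, its value is reset to $f(p_i)-\varepsilon$, and the pairing changes so that $p_i$ now spans a banana with $\Dth{\alpha}$ (the upper end of $\alpha$'s old spine banana) while $\alpha$ newly pairs with $q_i$ in the opposite tree. I would show the banana invariants by first verifying \cref{inv:condition-III-max} at the new upper end of each re-rooted trail (using the value inequalities above), then \cref{inv:condition-I} by a position argument exploiting that $p_i$ and the transferred nodes all lie on one side of $x$ while $q_i$ and $\alpha$ lie on the other, and finally \cref{inv:condition-II} using $f(\alpha) = f(p_i)-\varepsilon < f(\Low{q_i})$, which holds since $\Low{q_i}$ is a descendant minimum of a deeper stacked banana and therefore has larger value than $p_i$ by Sorted Stacks. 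Claim~\ref{itm:split-clm-7} holds because in its new tree $\alpha$ is paired with $q_i$ by an empty pair of trails. Claims~\ref{itm:split-clm-4},~\ref{itm:split-clm-5},~\ref{itm:split-clm-6} are the principal bookkeeping items: since $\alpha$ switched sides, the required placements of the remaining stacked bananas also swap, which is consistent because they are all ancestors of $(p_i,q_i)$ and now lie opposite $\alpha$.

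The scare case appeals directly to the correctness of interchange-of-minima (Min-interchange Lemma): with $f(\alpha)$ reset to $f(p_i)+\varepsilon$, swapping $p_i$ and $\alpha$ is a legal min-interchange that leaves the underlying ordered binary tree intact but re-pairs $p_i$ with $\Dth{\alpha}$ and $\alpha$ with $\Dth{p_i} = q_i$. The binary-tree invariants transfer from the lemma, $\alpha$ remains on the spine with an empty in-trail by construction, and the stack-related claims follow since no banana in $\UpTree{g_{i-1}}$ or $\UpTree{h_{i-1}}$ moves between trees. The main obstacle across all three cases is claim~\ref{itm:split-clm-4}: one must argue that the interior pointers of $\alpha$'s (possibly new) spine banana still satisfy the required value inequalities with respect to \emph{all} remaining stacked bananas. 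This reduces to the Sorted Stacks Lemma applied to the three-way merger of $\Lup$, $\Mup$, $\Rup$ (or the $\DnOp$-variants), together with the choice of $\varepsilon$ that keeps $f(\alpha)$ strictly below the values of all remaining stacked maxima.
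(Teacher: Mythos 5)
Your overall case decomposition --- injury ($\Lup, \Rup$), fatality ($\Mup$), scare ($\Ldn, \Rdn$), verifying all seven claims in each --- matches the paper's proof structure, and your identification of the key facts driving the cases (Sorted Stacks, placement of the top banana relative to $\alpha$) is on target. However, the argument you offer for \cref{inv:condition-II} in the fatality case is wrong in a way that suggests a misreading of the invariant. You write $f(\alpha) = f(p_i)-\varepsilon < f(\Low{q_i})$, justified by claiming that $\Low{q_i}$ ``has larger value than $p_i$.'' Both are incorrect. \Cref{inv:condition-II} applied to $p_i$ in $\UpTree{g_{i-1}}$ gives exactly $f(p_i) > f(\Low{\Dth{p_i}}) = f(\Low{q_i})$, so $\Low{q_i}$ has \emph{smaller} value than $p_i$. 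What is needed for \cref{inv:condition-II} at $\alpha$ after the iteration is the opposite of what you wrote: $f(\alpha) > f(\Low{\Dth{\alpha}}) = f(\Low{q_i})$, not $<$. The correct derivation runs: $f(p_i) > f(\Low{q_i})$ by the incoming invariant, and since $\varepsilon$ is smaller than the gap between any two item values, $f(\alpha) = f(p_i) - \varepsilon > f(\Low{q_i})$. Your appeal to Sorted Stacks also points the wrong way: since the outer (earlier-processed) bananas have the \emph{smaller} minima, $\Low{q_i}$ coming from the containing banana is smaller, not larger, than $p_i$. So the justification would not hold even for the inequality as you stated it.

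A secondary inaccuracy: in the fatality case you assert that after the re-pairing, $\alpha$ is joined to $q_i$ ``by an empty pair of trails.'' Only the in-trail is empty; the mid-trail retains the interior nodes of the former mid-trail of $(p_i, q_i)$ that lie on the non-$x$ side and were not transferred. This overstatement happens not to hurt claim~\ref{itm:split-clm-7}, which requires only the in-trail to be empty, but it contradicts your own earlier description that part of the mid-trail stays behind, and glossing over it obscures the verification of \cref{inv:condition-II,inv:condition-III-max} for the nodes that remain on that mid-trail.
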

\begin{proof}
    There are in total six cases:
    \begin{enumerate}
        \item $\Stack = \Lup$,
        \item $\Stack = \Rup$,
        \item $\Stack = \Mup$ and $q < x < p$,
        \item $\Stack = \Mup$ and $p < x < q$,
        \item $\Stack = \Ldn$,
        \item $\Stack = \Rdn$.
    \end{enumerate}
    The first two cases are symmetric, as are the two cases with $\Stack = \Mup$ and the last two cases.
    We prove the claim for cases 1, 3 and 5.
    In the following we assume that the seven conditions hold.
    \begin{description}
        \item[$\Stack_i = \Lup$:]
            \textsc{Split} executes \textsc{DoInjury}. By definition of $\Lup$ $q_i < p_i < x$.
            The node $q_i$ is on a spine, as otherwise the banana it is nested in would contain $x$ in its in- or mid-panel.
            This is not possible, since this banana would then be on the stacks above $(p_i,q_i)$.
            Furthermore, $q_i < p_i$ implies that $q_i$ is on a right spine and it is thus in the left tree.
            The fourth claim follows immediately from the fourth assumption, as the value of $\alpha$ remains unchanged and $\Lup$, $\Rup$, $\Mup$ can be merged into a sorted stack by \cref{lem:sorted_stacks}.

            Nodes $j$ with $j > x$ on the in-trail between $p_i$ and $q_i$ are removed and inserted between $\alpha$ and $\Mid{\alpha}$.
            These nodes and the nodes in their banana subtrees are the rightmost nodes of $\UpTree{g_i}$, which implies claim \ref{itm:split-clm-1}.
            Since we do not modify the in-trail between $\alpha$ and $\Dth{\alpha}$ and it is assumed to be empty at the beginning of the iteration,
            this trail is also empty after the iteration, which proves claim \ref{itm:split-clm-7}.
            
            Denote by $j^-$ the lowest and highest among the moved nodes in terms of function value, respectively.
            Conditions {\sf III.1} and {\sf III.2} hold in $\UpTree{g_{i-1}}$ and $\UpTree{h_{i-1}}$, and thus the nodes $j$ form a contiguous section of the in-trail with $j^+ = \In{q_i}$.
            Write $a$ for the highest node on the in-trail between $p_i$ and $q_i$ that is not moved.
            Both the in-trail originally containing the nodes $j$ and the mid-trail between $\alpha$ and $\Dth{\alpha}$ are right trails.
            By the assumption that nodes in $\UpTree{g_i}$ are less than nodes in $\UpTree{h_i}$ it follows that $\alpha = \Dn{j^-} < j^- < \dots < j^+ < \Up{j^+}$.
            The item $x$ is in the in-panel of the window $\Window{p_i}{q_i}$ corresponding to the banana $(p_i,q_i)$ in $\UpTree{g_{i-1}}$, and by Conditions {\sf III.1}, {\sf III.2} and \cref{inv:condition-III-max} $x < j^-$ also implies $f(x) < f(j^-)$.
            This, together with the fourth assumption implies that $f(\alpha) = f(\Dn{j^-}) < f(x) < f(j^-) < \dots < f(j^+) < f(\Up{j^+})$.
            Thus, \cref{inv:condition-III-max} holds in the new trees.

            Since nodes pointed to by $\Low{\Dth{\cdot}}$ only change for $\Birth{j}$ where $j$ is one of the moved nodes, namely from $p_i$ to $\alpha$,
            \cref{inv:condition-II} holds in the new trees by the assumption that $f(\alpha) < f(p_i)$.

            \Cref{inv:condition-I} continues to hold in the tree that originally contained the nodes $j$, since nodes are only removed from subtrees.
            The nodes $j$ themselves only gain $\alpha$ as a descendant of $\Dn{j}$ or as $\Dn{j}$ itself.
            In addition, $\alpha < j$ by definition of $\alpha$ and $j < \Birth{j}$ by definition of right trail.
            Thus, the nodes $j$ also satisfy the condition of \cref{inv:condition-I}.
            The nodes $j$ are inserted as descendants of $\Dn{t}$, where $t = \Mid{\alpha}$ before the insertion.
            Since $j < \Dn{t}$ by assumption \cref{itm:split-ass-1}, the condition of \cref{inv:condition-I} continues to holds for $t$,
            and also holds for the remaining nodes of this tree.
            It follows that both trees satisfy \cref{inv:condition-I} after the $i$-th iteration.

            The nodes $u$ in $\UpTree{h_{i-1}}$ that are not on any stack and do not have an ancestor on any stack satisfy $u > x$.
            The nodes $j$ and the nodes in their banana subtrees are also greater than $x$.
            Thus, $\UpTree{h_{i-1}}$ contains no node $u$ that is not on any stack and has no ancestor on any stack with $u < x$.
            In the left tree $\UpTree{g_i}$ the node $a$ and the nodes in its subtree are the rightmost nodes of $a$, since $a$ is on the right spine.
            Other nodes in $\UpTree{g_i}$ are thus less than $x$. Note that $a < x$ by definition of $a$.
            The node $a$ is either a minimum, a maximum and not on any stack, or a maximum and on some stack.
            In the first two cases all nodes in $\UpTree{g_i}$ are less than $x$.
            In the third case, all nodes except possibly descendants of $a$ are less than $x$, and descendants of $a$ have an ancestor on the stack.
            Thus, claim \ref{itm:split-clm-3} holds.
            
            To see that claim \ref{itm:split-clm-5} holds, recall that the bananas in $\Lup$, $\Rup$ and $\Mup$ are nested into each other.
            Call the topmost banana on these stacks $(p_\ell,q_\ell)$.
            It must be nested into $(p_i,q_i)$ and is thus on the in-trail between $p_i$ and $q_i$ in $\UpTree{g_i}$, as otherwise $x$ would be in the mid-panel of $\Window{p_i}{q_i}$, which contradicts $\Stack_i = \Lup$.
            This implies $q_i < p_i$ and $q_i < x$.
            It follows that $q_i$ remains in the left tree, while $\alpha$ remains in the right tree.
            This proves claim \ref{itm:split-clm-5}.
            
            We now prove claim \ref{itm:split-clm-6}.
            Assume that $\Stack_{i+1} = \Ldn$. Then, by definition $p_{i+1} < q_{i+1}$.
            Thus, $q_{i+1}$ cannot be on the in-trail between $p_i$ and $q_i$.
            It can also not be on any other trail, since then $q_{i+1} < p_i < x$, which implies that $x$ is not in the out-panel of $\Window{p_{i+1}}{q_{i+1}}$.
            It follows that $p_{i+1}$ and $q_{i+1}$ cannot be in the left tree.
            Now assume that $\Stack_{i+1} = \Rdn$. Then, by definition $q_{i+1} < p_{i+1}$.
            If $q_{i+1}$ is in $\UpTree{g_i}$, then $q_{i+1} < x$, as all maxima greater than $x$ have been moved to $\UpTree{h_i}$.
            This contradicts $\Window{p_{i+1}}{q_{i+1}}$ having $x$ in its out-panel, which implies that $q_{i+1}$ must be in the right tree.
            In both cases, $q_{i+1}$ is in the right tree, which is the same tree as $\alpha$.
               
        \item[$\Stack_i = \Mup$ and $q_i < x < p_i$:]
            \textsc{Split} executes \textsc{DoFatality}. By definition of $\Mup$ $q_i < x < p_i$.
            This case is similar to the case $\Stack_i = \Lup$ and we mainly point out the differences.
            As above, $q_i$ is on the left spine of $\UpTree{g_{i-1}}$.
            Since $x$ is in a panel of the window $(p_i,q_i)$ it holds that $f(p_i) < f(x)$.
            The fourth claim then follows immediately from the third assumption, as $f(\alpha) < f(p_i)$ after the iteration and $\Lup$, $\Rup$, $\Mup$ can be merged into a sorted stack by \cref{lem:sorted_stacks}.

            All nodes internal to the in-trail between $p_i$ and $q_i$ are moved to the right tree, as are the nodes $j$ with $j > x$ on the mid-trail between $p_i$ and $q_i$.
            The node $\alpha$ replaces $p_i$ as $\Birth{q_i}$.
            As $q_i$ is on a spine the nodes that are moved to the right tree are the rightmost nodes of $\UpTree{g_{i-1}}$,
            and together with assumption \ref{itm:split-ass-1} this implies claim \ref{itm:split-clm-1}, i.e., that the nodes in $\UpTree{h_i}$ are all greater than the nodes in $\UpTree{g_i}$.
            The in-trail between $\alpha$ and $\Dth{\alpha}$ in $\UpTree{g_i}$ is the in-trail between $\alpha$ and $q_i$, which is empty as all nodes on the in-trail beginning at $q_i$ have been moved to $\UpTree{h_i}$.

            \Cref{inv:condition-I,inv:condition-III-max} hold in $\UpTree{g_i}$ as $f(\alpha) < f(p_i) <  f(q_i)$ by assumption \ref{itm:split-ass-4} and because the order of maxima and nodes in their subtrees is unchanged.
            By the assumption that $f(q_i) < f(\In{\alpha}))$ and $f(q_i) < f(\Mid{\alpha})$, where $\In{\alpha}$ and $\Mid{\alpha}$ are the pointers in $\UpTree{h_{i-1}}$,
            the nodes $j$ that are moved to the right tree satisfy $f(j) < f(\Up{j})$.
            Those nodes $j$ that are on a left trail in $\UpTree{g_{i-1}}$ are on a left trail in $\UpTree{h_i}$, and those on a right trail are also on a right trail in $\UpTree{h_i}$.
            It follows that $\UpTree{h_i}$ also satisfies \cref{inv:condition-I,inv:condition-III-max}.
            In $\UpTree{g_i}$, only for nodes with $\Low{\Dth{\cdot}} = p_i$ does this pointer change, namely to $\alpha$, and $f(\alpha)$ is set to $f(p_i) - \varepsilon$.
            These nodes satisfy the condition of \cref{inv:condition-II}, as does $\alpha$ itself.
            In $\UpTree{h_i}$ the node $p_i$ satisfies the condition of \cref{inv:condition-II}, as $f(p_i) > f(\alpha)$.
            The nodes that are moved to the right tree also satisfy this condition, since $\Low{\Dth{\cdot}}$ remains unchanged.
            Other nodes $u$ with $\Low{\Dth{u}} = p_i$ in $\UpTree{g_i}$ were already on the mid-trail between $\alpha$ and $\Dth{\alpha}$ in $\UpTree{h_{i-1}}$.
            If $f(\Low{\Dth{u}}) < f(p_i)$ then the windows $\Window{u}{\Dth{u}}$ would have $x$ in their out-panel, which places them in $\Ldn$ or $\Rdn$.
            But by assumption \ref{itm:split-ass-4} $f(\Dth{u}) > f(q_i)$ which implies that they would have been processed in an earlier iteration.
            This is a contradiction and thus $f(\Low{\Dth{u}}) > f(p_i)$, i.e., they satisfy the condition of \cref{inv:condition-II}.
            For no other node of $\UpTree{h_i}$ does the node $\Low{\Dth{\cdot}}$ change from what it was in $\UpTree{h_{i-1}}$,
            so \cref{inv:condition-II} holds in $\UpTree{h_i}$.
            This proves claim \ref{itm:split-clm-2}.
    
            For claim \ref{itm:split-clm-3} observe that nodes $u$ in $\UpTree{g_i}$ satisfy $u < x$, as any nodes with $u > x$ are moved to $h_i$.
            Nodes $v$ with $v < x$ in $\UpTree{h_i}$ have as ancestor the highest node that is moved from the mid-trail;
            if such nodes $v$ exist, then this highest node itself must be in either $\Rup$ or $\Mup$, and thus the only nodes in $\UpTree{h_i}$ with $v < x$ have an ancestor on some stack.
            This highest node can also be the only next node in $\Lup \cup \Rup \cup \Mup$, so claim \ref{itm:split-clm-5} follows.
            
            To see that claim \ref{itm:split-clm-6} holds note that the next banana with $x$ in the out-panel of the corresponding window can only be on the mid-trail between $\alpha$ and $q_i$ in $\UpTree{g_i}$,
            similar to the case $\Stack_i = \Lup$.
        \item[$\Stack_i = \Ldn$:]
            \textsc{Split} executes \textsc{DoScare}, which sets $f(\alpha) = f(p_i) + \varepsilon$ and then executes an interchange of minima between $\alpha$ and $p_i$.
            The nodes $q_i$ and $\alpha$ are in the same tree by assumption, and they must be in $\UpTree{g_{i-1}}$, as $x$ is in the out-panel of $\Window{p_i}{q_i}$ and the in-trail between $\alpha$ and $\Dth{\alpha}$ is assumed to be empty.
            Going from $\UpTree{g_{i-1}}$ to $\UpTree{g_i}$ and from $\UpTree{h_{i-1}}$ to $\UpTree{h_i}$ does not change the set of nodes contained in each tree.
            In fact, $\UpTree{h_{i-1}} = \UpTree{h_i}$. 
            Claims \ref{itm:split-clm-1}  and \ref{itm:split-clm-5} follow immediately from the respective assumption.
            The set of nodes internal to the in-trail between $\alpha$ and $\Dth{\alpha}$ after the interchange of minima is a subset of the nodes internal to the in-trail between $\alpha$ and $\Dth{\alpha}$ before the iteration.
            The latter is empty by assumption \ref{itm:split-ass-7}.
            Furthermore, $\alpha$ is on the spine before iteration $i$ by \ref{itm:split-ass-7} and it remains on the spine through the interchange, so the claim \ref{itm:split-clm-7} holds.

            We now show claim \ref{itm:split-clm-2}.
            Note that $q_i$ must be on the mid-trail between $\Dth{\alpha}$ and $\alpha$, i.e., $\Low{q_i} = \alpha$, as otherwise $x$ would not be in the out-panel of $\Window{p_i}{q_i}$.
            We first show that there is no other minimum with value between $f(\alpha)$ and $f(p_i)$ whose $\Dth{\cdot}$ pointer points to a maximum on the banana spanned by $\alpha$ and $\Dth{\alpha}$.
            The in-trail between $\alpha$ and $\Dth{\alpha}$ is empty by assumption.
            Let $s$ be a minimum with $f(\alpha) < f(s) < f(p_i)$ and with $\Dth{s}$ on the mid-trail between $\alpha$ and $\Dth{\alpha}$.
            The banana $(s,\Dth{s})$ has $x$ in its out-panel and $s < \Dth{s}$, and thus must be on the stack $\Ldn$.
            It must be on $\Ldn$ below $(p_i,q_i)$, as otherwise it would have been processed before and $f(\alpha) > f(s)$ at the beginning of the iteration.
            Since $\Ldn$ is sorted by \cref{lem:sorted_stacks} this implies $f(s) > f(p_i)$, which is a contradiction.
            It follows that there is no other minimum that could be interchanged with $\alpha$ in place of $p_i$.
            Setting the value of $\alpha$ to $f(p_i) + \varepsilon$ leads to a violation of \cref{inv:condition-II}, but this is fixed by the interchange of minima.
            Thus, $\UpTree{g_i}$ and $\UpTree{h_i}$ satisfy \cref{inv:condition-I,inv:condition-II,inv:condition-III-max}.

            To prove claim \ref{itm:split-ass-3} we consider the nodes which no longer have an ancestor on any stack or are no longer on any stack themselves.
            The nodes no longer on any stack are $p_i$ and $q_i$, which were taken of $\Ldn$. These satisfy $p_i < x$ and $q_i < x$ by definition of $\Ldn$.
            The remaining nodes to consider are nodes with ancestor $q_i$.
            Descendants $u$ of $q_i$ that were in the banana subtree of $q_i$ in $\UpTree{g_{i-1}}$ satisfy $u < q_i$ by \cref{inv:condition-I} for $\UpTree{g_{i-1}}$.
            The remaining descendants have an ancestor on the mid-trail between $\alpha$ and $q_i$ in the new tree $\UpTree{g_i}$.
            The window spanned by this ancestor and its birth has $x$ in its out-panel and is thus on $\Ldn$.

            The interchange of minima does not modify $\Mid{\alpha}$, but does modify $\In{\alpha}$. More precisely, $\In{\alpha} = q_i$ after the interchange.
            By assumption \ref{itm:split-ass-4} and \cref{lem:sorted_stacks} it follows for all $j \in [i+1,I]$ with $(p_j,q_j) \in \Lup \cup \Rup \cup \Mup$ that $f(q_j) < f(\In{\alpha})$ and $f(q_j) < f(\Mid{\alpha})$.
            By \cref{lem:sorted_stacks} and for all $j \in [i+1,I]$ with $(p_j,q_j) \in \Lup \cup \Rup \cup \Mup$ it holds that $f(q_j) < f(q_i)$, which implies $f(q_j) < f(\In{\alpha})$.
            The value $\varepsilon$ is defined to be smaller than the difference between the values of any two items,
            and thus there is no item with value in $[f(p_i), f(\alpha)]$.
            Because $x$ is in the out-panel of $\Window{p_i},{q_i}$ it holds that $f(p_i) < f(x)$ and thus $f(\alpha) < f(x)$.

            For the sixth claim we first prove that $\Stack_{i+1} \neq \Rdn$.
            If $\Stack_{i+1} \in \Rdn$, then $x < q_{i+1} < p_{i+1}$ and $x$ is in the out-panel of $\Window{p_{i+1}}{q_{i+1}}$ by definition of $\Rdn$.
            This implies that $q_{i+1}$ cannot be on the same trail as $q_i$ and in order for $x$ to be in the out-panel of $\Window{p_{i+1}}{q_{i+1}}$ it must in fact be either in the in-trail between $\alpha$ and $\Dth{\alpha}$,
            or in a right spine of the right subtree.
            The former is impossible, as the in-trail between $\alpha$ and $\Dth{\alpha}$ is empty.
            Now assume that $q_{i+1}$ is on a right trail in the right up-tree.
            Then $p' = \Low{q_{i+1}} < q_{i+1}$ and $\Dth{p'}$ must be on the left spine of the right tree, as otherwise $x$ would not be in the out-panel of $\Window{p_{i+1}}{q_{i+1}}$.
            This implies that $p' < x < \Dth{p'}$, which in turn implies that $(p',\Dth{p'}) \in \Mup$.
            As $\Mup$ and $\Rdn$ are sorted and $f(p') < f(p_{i+1})$ by \cref{inv:condition-II}, this implies that $\Stack_{i+1}$ cannot be $\Rdn$.
            Thus, if $\Stack_{i+1} \in \{ \Ldn, \Rdn \}$, then $\Stack_{i+1} = \Ldn$.
            In this case $p_{i+1} < q_{i+1} < x$, which implies that $q_{i+1}$ must be in the left tree and thus in the same tree as $\alpha$. %
    \end{description}
    This concludes the proof of the inductive step.
\end{proof}

\begin{theorem}[Correctness of \textsc{Split}]
    \label{thm:correctness-of-split}
    Given a list of $m$ items and corresponding map $f$, and an item $\ell$ with $2 \leq \ell \leq m - 1$,
    \textsc{Split} splits the up-tree $\UpTree{f}$ into two up-trees $\UpTree{g}$ and $\UpTree{h}$,
    where $\UpTree{g}$ contains nodes $u$ with $u \leq \ell$ and $\UpTree{h}$ contains the nodes $u \geq \ell$.
\end{theorem}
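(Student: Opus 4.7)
The plan is to prove the theorem by induction on the iterations of the main loop in \textsc{Split}, using \cref{lem:split-iteration-1} as the base case and \cref{lem:split-iteration-i} as the inductive step. Together these lemmas provide a set of seven invariants that are preserved across iterations; the work of the proof is to show that when the loop terminates these invariants, together with the final cleanup (swapping in- and mid-trails of the special banana back, resetting the position of the special roots, inserting $\ell$ and $\ell+1$ as new critical nodes if they became critical, and setting the new hook values), yield exactly $\UpTree{g}$ and $\UpTree{h}$.

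First, I would argue termination: by \cref{lem:sorted_stacks} the six stacks $\Lup,\Mup,\Rup,\Ldn,\Mdn,\Rdn$ hold only finitely many bananas, and each iteration pops one element from exactly one of them, so the loop halts after some iteration $I$ with all stacks empty. Applying \cref{lem:split-iteration-1} to establish the invariants before iteration $1$, and then iterating \cref{lem:split-iteration-i} for $i=1,\dots,I$, yields that after iteration $I$ the two trees $\UpTree{g_I}$ and $\UpTree{h_I}$ satisfy \cref{inv:condition-I,inv:condition-II,inv:condition-III-max}, that every node $u$ in $\UpTree{g_I}$ has $u<x$ and every node $v$ in $\UpTree{h_I}$ has $v>x$ (invariant (3) combined with emptiness of the stacks), and that the dummy $\alpha$ sits at the appropriate spine end with an empty in-trail.

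Next, I would perform the final cleanup and argue its correctness. After resetting the special root position and swapping the in- and mid-trails of the special banana that had been temporarily flipped, both trees have the correct orientation. If $\ell$ or $\ell+1$ becomes a proper minimum or maximum of $g$ or $h$ (due to the introduction of the hooks at the cut), this is handled by the same local operations proved correct in \cref{sec:local-maintenance-correctness}: a slide or a change involving an endpoint (\cref{lem:max-slide,lem:min-slide,lem:slide-cancel-endpoint}). The value assigned to the new hooks is chosen (as in the construction of $f$ at the beginning of \cref{sec:3}) so that these items are proper critical points whenever they should be, and the dummy $\alpha$ is discarded and replaced by the true hook with the correct $\varepsilon$-value; since $\alpha$ was always positioned as the unique smallest-value leaf on the correct spine, this substitution affects neither the tree structure nor the banana invariants.

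Finally, I would close the argument by invoking uniqueness: having produced two trees that satisfy \cref{inv:condition-I,inv:condition-II,inv:condition-III-max} for $g$ and $h$ respectively, \cref{cor:unique-bananas-local} forces them to equal $\UpTree{g}$ and $\UpTree{h}$. The main obstacle in this plan is not the induction itself, which is already packaged in \cref{lem:split-iteration-i}, but the careful bookkeeping around the spine: one must verify that the ``top'' of the traversal really can stop at the first spine node rather than climbing to the true root, and that the windows with short wave which are only implicitly represented (on the spine of the \emph{opposite} tree) are correctly transferred by the scare-handling case. This is exactly where the coupling between $\UpTree{f}$ and $\DnTree{f}$ encoded in the stacks $\Ldn,\Mdn,\Rdn$ is used, and where the subtle treatment of the window whose maximum lies on the spine (flagged right after \cref{lem:sorted_stacks}) must be checked separately so that no banana suffering a scare is missed.
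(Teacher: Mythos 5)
Your proof follows the same inductive structure as the paper's, invoking \cref{lem:split-iteration-1} as the base case and \cref{lem:split-iteration-i} as the inductive step. The paper's published proof is terser---it simply states that the induction establishes the two invariants from \cref{sec:4.3}, which in turn imply the theorem---whereas your additional discussion of termination, the final cleanup (new hooks, the dummy leaf $\alpha$, resetting the special roots), and the appeal to uniqueness via \cref{cor:unique-bananas-local} makes explicit what the paper leaves implicit, and correctly flags the spine/scare bookkeeping as the delicate point hidden inside the inductive lemmas.
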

\begin{proof}
    Recall that we defined $x$ to be the midpoint between $\ell$ and $\ell+1$.
    The proof is by induction over the iterations of the loop in \textsc{Split}, taking \cref{lem:split-iteration-1} as the base case
    and \cref{lem:split-iteration-i} as the induction step.
    This proves the invariants (i) and (ii) introduced in \cref{sec:4.3}, which imply the theorem.
\end{proof}

\subsection{Gluing Two Banana Trees.}
\label{sec:glue-correctness}

In this section we prove that the algorithm \textsc{Glue} presented in \cref{sec:4.3} correctly glues two banana trees $\UpTree{g}$ and $\UpTree{h}$ into $\UpTree{f}$, where $f = g \cdot h$. 
First, we show how to pre-process the functions $g$ and $h$ to obtain the case where $g$ ends in an up-type item on the right, $h$ begins with a down-type item on the left
and this endpoint of $g$ has lower value than the endpoint of $h$.
Write $\ell$ for the rightmost item of $g$ and $\ell'$ for the leftmost item of $h$.
We assume $f(\ell) < f(\ell')$. In the case $f(\ell) > f(\ell')$ we obtain the symmetric case to the one described in section \cref{sec:4.3} and the algorithm is symmetric.
There are four cases depending on whether $\ell$ and $\ell'$ are up-type or down-type items:
(1) $\ell$ and $\ell'$ are up-type; (2) $\ell$ is up-type and $\ell'$ is down-type; (3) $\ell$ is down-type and $\ell'$ is up-type; (4) $\ell$ and $\ell'$ are down-type.
Case (2) is the one described in \cref{sec:4.3} and we show how to reduce the other cases to this case.
\smallskip \begin{description}
    \item[(1) $\rightarrow$ (2):] $\ell'$ becomes non-critical in $g \cdot h$ and we replace it by the dummy leaf $\alpha$.
    \item[(3) $\rightarrow$ (2):] $\ell$ and $\ell'$ become non-critical in $g \cdot h$.
        Since $\ell$ is down-type it is paired with a hook in $\UpTree{g}$ and we simply remove it along with the hook.
        As in the previous case we replace $\ell'$ by the dummy leaf $\alpha$.
    \item[(4) $\rightarrow$ (2):] $\ell$ becomes non-critical in $g \cdot h$.
        As in the previous $\ell$ is paired with a hook in $\UpTree{g}$ and we remove it along with the hook.
\end{description} \smallskip
Whenever $\ell'$ is a down-type item it is paired with a hook and we replace the hook by the dummy leaf $\alpha$.
In all cases we begin with the situation described in \cref{sec:4.3}.

For the remainder of the section we assume that we are in case (2), and we now prove that the algorithm \textsc{Glue} yields the tree $\UpTree{f}$.
The proof will be by induction over the iterations of the loop, with the base case and induction step in \cref{lem:glue-base-case,lem:glue-induction-step}, respectively.
We combine the two and show that the algorithm terminates correctly in \cref{thm:correctness-of-glue}.
Write $\UpTree{g_i}$ and $\UpTree{h_i}$ for the left and right tree after iteration $i$, respectively, with $\UpTree{g_0} = \UpTree{g}$ and $\UpTree{h_0} = \UpTree{h}$.
Define $\alpha$ to be greater than the other nodes of $\UpTree{g_i}$ and less than the other nodes of $\UpTree{h_i}$.
Furthermore, we write $\gluepbth_i$, $\glueplow_i$, $q_i$, $q_i'$ for $\gluepbth$, $\glueplow$, $q$, $q'$ in the $i$-th iteration.

\begin{lemma}[Base Case]
    \label{lem:glue-base-case}
    At the beginning of the first iteration it holds that
    \begin{enumerate}
        \item $\UpTree{g_0}$ and $\UpTree{h_0}$ satisfy \cref{inv:condition-I,inv:condition-II,inv:condition-III-max}, \label{itm:glue-1-clm-inv}
        \item the in-trail between $\alpha$ and $\Dth{\alpha}$ is empty,
            $\alpha \in \{ \Birth{q_1}, \Birth{q_1'} \}$,
            $f(\alpha) > f(\glueplow_1)$,
            $f(\alpha) > f(\gluepbth_1)$, \label{itm:glue-1-clm-alpha}
        \item $\alpha$ is the last node on the right spine of $\UpTree{g_0}$ or on the left spine of $\UpTree{0}$,
            $q_1$ is on the right spine of $\UpTree{g_0}$ and $q_1'$ is on the left spine of $\UpTree{h_0}$ \label{itm:glue-1-clm-spine}
        \item for all nodes $u$ in $\UpTree{g_1}$ and all nodes $v$ in $\UpTree{h_1}$ we have $u < v$, \label{itm:glue-1-clm-separate}
        \item $\textsc{UndoInjury}(\gluepbth_1,q_1)$ being called implies $f(\Mid{\alpha}) > f(\In{q_1})$ and \\
              $\textsc{UndoInjury}(\gluepbth_1,q_1')$ being called implies $f(\Mid{\alpha}) > f(\In{q_1'})$, \label{itm:glue-1-clm-injury}
        \item $\textsc{UndoFatality}(\gluepbth_1,q_1)$ being called implies $f(\Mid{\alpha}) > f(\In{q_1'})$ and \\
              $\textsc{UndoFatality}(\gluepbth_1,q_1')$ being called implies $f(\Mid{\alpha}) > f(\In{q_1})$, \label{itm:glue-1-clm-fatality}
        \item $\textsc{UndoScare}(\glueplow_1,q_1)$ being called implies $\Birth{q_1} = \alpha$, $f(\Mid{\alpha}) > f(\In{q_1'})$ and \\
              $\textsc{UndoScare}(\glueplow_1,q_1')$ being called implies $\Birth{q_1'} = \alpha$, $f(\Mid{\alpha}) > f(\In{q_1})$. \label{itm:glue-1-clm-scare}
    \end{enumerate}
\end{lemma}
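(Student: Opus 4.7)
The plan is to verify the seven claims in order, drawing on the pre-processing that precedes the lemma statement and on the initialization lines of \textsc{Glue}. Claim~\ref{itm:glue-1-clm-inv} is immediate since $\UpTree{g_0}=\UpTree{g}$ and $\UpTree{h_0}=\UpTree{h}$ are the input trees, which are assumed to be valid banana trees and hence satisfy \cref{inv:condition-I,inv:condition-II,inv:condition-III-max} by \cref{cor:unique-bananas-local}. For claim~\ref{itm:glue-1-clm-spine}, I would note that the initialization sets $q_1=\Dth{a_0}$ where $a_0$ is the rightmost critical minimum of $g$; any ancestor of $a_0$ whose banana is not on the right spine would be paired with a minimum lying strictly to the right of $a_0$, contradicting the choice of $a_0$, so $q_1$ lies on the right spine of $\UpTree{g_0}$. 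A symmetric argument shows $q_1'$ lies on the left spine of $\UpTree{h_0}$, and by the pre-processing $\alpha$ is either the dummy leaf substituted for the hook on the $g$-side or for the hook on the $h$-side, so it is the last node on exactly one of the two spines. Claim~\ref{itm:glue-1-clm-separate} follows since the concatenation $f=g\cdot h$ places every item of $g$ strictly to the left of every item of $h$, and no iteration has yet moved any genuine node across the boundary.

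For claim~\ref{itm:glue-1-clm-alpha}, the pre-processing ensures that $\alpha$ sits at the top of its spine directly below the special root, so the in-trail between $\alpha$ and $\Dth{\alpha}$ is empty and $\alpha\in\{\Birth{q_1},\Birth{q_1'}\}$ according to which side received the dummy. The value $f(\alpha)$ is defined in the pre-processing to be larger than every other critical value of $g$ and $h$, which immediately yields $f(\alpha)>f(\glueplow_1)$ and $f(\alpha)>f(\gluepbth_1)$. For claims~\ref{itm:glue-1-clm-injury}--\ref{itm:glue-1-clm-scare}, I would do a case analysis on which of \textsc{UndoInjury}, \textsc{UndoFatality}, \textsc{UndoScare} is invoked, guided by the chained inequalities displayed before Figure~\ref{fig:simplified}, namely $f(a_i)<\dots<f(a_0)<f(x)<f(b_0)<\dots<f(b_i)$ and its primed analogue. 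These inequalities encode the monotonic sorting of minima and maxima along the right spine of $\UpTree{g_0}$ and the left spine of $\UpTree{h_0}$, respectively, and in each case directly imply that the relevant $f(\In{q_1})$ or $f(\In{q_1'})$ is strictly below $f(\Mid{\alpha})$, because $\Mid{\alpha}$ points to the special root of the tree that houses $\alpha$.

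The scare case needs one extra observation: \textsc{UndoScare} is reached only when $f(\glueplow_1)>f(\gluepbth_1)$, and since $\alpha$ is the highest-valued leaf on its spine and is always a candidate for $\gluepbth_1$, this branch forces $\gluepbth_1=\alpha$, which matches the required $\Birth{q}=\alpha$. I expect the main obstacle to be the bookkeeping across symmetric subcases: there are two possible sides for $\alpha$, a choice of whether $q_1$ or $q_1'$ is the node being processed in iteration~$1$, and the spine-side orientation determines whether ``mid'' or ``in'' plays which role. Carefully matching each subcase to the correct inequality in claims~\ref{itm:glue-1-clm-injury}--\ref{itm:glue-1-clm-scare} is the most error-prone step, but it reduces to a straightforward application of the spine-ordering identities above once the notation is pinned down.
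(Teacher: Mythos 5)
Your high-level plan — check the seven claims in order, treating claims 1 and 4 as direct consequences of the setup and handling claims 5–7 by a case split on which \textsc{Undo$\ast$} routine fires — matches the shape of the paper's proof. However, the argument rests on a factual misconception about the dummy leaf $\alpha$ in the \textsc{Glue} setting, and several of the specific inferences built on it are wrong.

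In \textsc{Glue}, $\alpha$ is \emph{not} ``at the top of its spine directly below the special root,'' nor is $f(\alpha)$ ``larger than every other critical value of $g$ and $h$.'' Those descriptions match the dummy $\alpha$ introduced in \textsc{Split}, which is initially paired with a freshly created special root by an empty banana. In \textsc{Glue}, $\alpha$ is the \emph{hook} at $b_0'$; it is a \emph{leaf} at the \emph{bottom} of the left spine of $\UpTree{h_0}$, with value $f(\alpha)=f(b_0')-\varepsilon$, i.e.\ just below the least maximum on that spine and far below $\beta_h$. Consequently $\Mid{\alpha}$ is the first interior node of the mid-trail out of $\alpha$ (or $q_1'=\Dth{\alpha}$ itself if that trail is empty), not the special root. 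The inequalities $f(\alpha)>f(\glueplow_1)$ and $f(\alpha)>f(\gluepbth_1)$ in claim~2 do hold, but because $\gluepbth_1=a_0=\Birth{q_1}$ and $\glueplow_1=\Low{q_1}$ both have value below $f(x)<f(b_0')-\varepsilon=f(\alpha)$, not because $f(\alpha)$ dominates all critical values. The paper in fact establishes claims 5--7 via the chain $\In{q_1}=a_0$ (because $a_0$ is an up-type endpoint, so its in-trail is empty), $f(\alpha)>f(a_0)$, and $f(\Mid{\alpha})>f(\alpha)$; ``$\Mid{\alpha}$ points to the special root'' is both false and unnecessary.

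Your observation for the scare case is also off: $\gluepbth_1$ is never $\alpha$ (the algorithm explicitly replaces any candidate equal to $\alpha$ by the birth of the other top maximum), so the branch does not force $\gluepbth_1=\alpha$. What claim~7 actually needs is $\Birth{q_1'}=\alpha$, which is true by construction since $\alpha$ is the hook paired with $q_1'=b_0'$; and if $f(q_1)>f(q_1')$ the routine called is \textsc{UndoScare}$(\glueplow_1,q_1')$ (on the primed node), making the required identity about the primed side. I suggest re-deriving claims 2 and 5--7 with the correct picture of $\alpha$ as the low, leftmost hook of $h$, keeping your case structure but replacing the ``$\alpha$ at the top'' reasoning with the $\In{q_1}=a_0$ argument above.
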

\begin{proof}
    The trees $\UpTree{g_0}$ and $\UpTree{h_0}$ are the input to the algorithm, i.e., $\UpTree{g}$ and $\UpTree{h}$, and thus satisfy \cref{inv:condition-I,inv:condition-II,inv:condition-III-max}.
    Since $\alpha$ is a hook, the in-trail between $\alpha$ and $\Dth{\alpha}$ is empty.
    By definition $\alpha = \Birth{b_0'} = \Birth{q_1'}$.
    The inequalities $f(\alpha) > f(\gluepbth_i)$ and $f(\alpha) > f(\glueplow_1)$ also follow from the assumptions on $g$ and $h$.
    The nodes $q_1 = b_0$ and $q_1' = b_0'$ are on the right spine of $\UpTree{g_0}$ and the left spine of $\UpTree{h_0}$, respectively.
    The node $\alpha = \Birth{q_1'}$ is thus also on the spine of $\UpTree{h_0}$ and since it is a leaf it is the last node on the spine.
    Claim \ref{itm:glue-1-clm-separate} holds by the assumption that items in $g$ are all less than items in $h$.

    We now prove claims \ref{itm:glue-1-clm-injury}, \ref{itm:glue-1-clm-fatality} and \ref{itm:glue-1-clm-scare}
    If $f(q_1) < f(q_1)'$, then $\gluepbth_1 = a_0$ and by \cref{inv:condition-II} $f(\gluepbth_1) > f(\glueplow_1)$.
    Thus $\textsc{UndoInjury}(\gluepbth_1,q_1)$ is executed.
    Since $\Birth{q_1} = a_0$ is an up-type item it also holds that $\In{q_1} = a_0$. As by definition $f(\alpha) > f(a_0)$ the inequality $f(\Mid{\alpha}) > f(\In{q_1})$ holds.
    This implies claim \ref{itm:glue-1-clm-injury}.
    If $f(q_1) > f(q_1')$, then $\gluepbth_1 = a_0$ and $\gluepbth_1 \neq \Birth{q_1'}$.
    Either $\textsc{UndoFatality}(\gluepbth_1,q_1')$ or $\textsc{UndoScare}(\glueplow_1, q_1)$ is executed.
    The inequality $f(\Mid{\alpha}) > f(\In{q_1})$ holds as we have just shown and this proves claims \ref{itm:glue-1-clm-fatality} and \ref{itm:glue-1-clm-scare}.

    We have shown that all claims hold at the beginning of the first iteration and this proves the lemma.
\end{proof}

\begin{lemma}[Inductive Step]
    \label{lem:glue-induction-step}
    If at the beginning of the $i$-th iteration the following conditions hold
    \smallskip \begin{enumerate}
        \item $\UpTree{g_{i-1}}$ and $\UpTree{h_{i-1}}$ satisfy \cref{inv:condition-I,inv:condition-II,inv:condition-III-max}, \label{itm:glue-ass-inv}
        \item the in-trail between $\alpha$ and $\Dth{\alpha}$ is empty,
            $\alpha \in \{ \Birth{q_i}, \Birth{q_i'} \}$,
            $f(\alpha) > f(\glueplow_i)$,
            $f(\alpha) > f(\gluepbth_i)$, \label{itm:glue-ass-alpha}
        \item $\alpha$ is the last node on the right spine of $\UpTree{g_{i-1}}$ or on the left spine of $\UpTree{h_{i-1}}$,
            $q_i$ is on the right spine of $\UpTree{g_{i-1}}$ and $q_i'$ is on the left spine of $\UpTree{h_{i-1}}$ \label{itm:glue-ass-spine}
        \item for all nodes $u$ in $\UpTree{g_i}$ and all nodes $v$ in $\UpTree{h_i}$ we have $u < v$, \label{itm:glue-ass-separate}
        \item $\textsc{UndoInjury}(\gluepbth_i,q_i)$ being called implies $f(\Mid{\alpha}) > f(\In{q_i})$ and \\
              $\textsc{UndoInjury}(\gluepbth_i,q_i')$ being called implies $f(\Mid{\alpha}) > f(\In{q_i'})$, \label{itm:glue-ass-injury}
        \item $\textsc{UndoFatality}(\gluepbth_i,q_i)$ being called implies $f(\Mid{\alpha}) > f(\In{q_i'})$ and \\
              $\textsc{UndoFatality}(\gluepbth_i,q_i')$ being called implies $f(\Mid{\alpha}) > f(\In{q_i})$, \label{itm:glue-ass-fatality}
        \item $\textsc{UndoScare}(\glueplow_i,q_i)$ being called implies $\Birth{q_i} = \alpha$, $f(\Mid{\alpha}) > f(\In{q_i'})$ and \\
              $\textsc{UndoScare}(\glueplow_i,q_i')$ being called implies $\Birth{q_i'} = \alpha$, $f(\Mid{\alpha}) > f(\In{q_i})$, \label{itm:glue-ass-scare}
    \end{enumerate} \smallskip
    then at the beginning of the next iteration, if it exists, it holds that
    \smallskip \begin{enumerate}
        \item $\UpTree{g_i}$ and $\UpTree{h_i}$ satisfy \cref{inv:condition-I,inv:condition-II,inv:condition-III-max}, \label{itm:glue-clm-inv}
        \item the in-trail between $\alpha$ and $\Dth{\alpha}$ is empty,
            $\alpha \in \{ \Birth{q_{i+1}}, \Birth{q_{i+1}'} \}$,
            $f(\alpha) > f(\glueplow_{i+1})$,
            $f(\alpha) > f(\gluepbth_{i+1})$, \label{itm:glue-clm-alpha}
        \item $\alpha$ is the last node on the right spine of $\UpTree{g_i}$ or on the left spine of $\UpTree{h_i}$, 
            $q_{i+1}$ is on the right spine of $\UpTree{g_i}$ and $q_{i+1}'$ is on the left spine of $\UpTree{h_i}$\label{itm:glue-clm-spine}
        \item for all nodes $u$ in $\UpTree{g_{i+1}}$ and all nodes $v$ in $\UpTree{h_{i+1}}$ we have $u < v$, \label{itm:glue-clm-separate}
        \item $\textsc{UndoInjury}(\gluepbth_{i+1},q_{i+1})$ being called implies $f(\Mid{\alpha}) > f(\In{q_{i+1}})$ and \\
              $\textsc{UndoInjury}(\gluepbth_{i+1},q_{i+1}')$ being called implies $f(\Mid{\alpha}) > f(\In{q_{i+1}'})$, \label{itm:glue-clm-injury}
        \item $\textsc{UndoFatality}(\gluepbth_{i+1},q_{i+1})$ being called implies $f(\Mid{\alpha}) > f(\In{q_{i+1}'})$ and \\
              $\textsc{UndoFatality}(\gluepbth_{i+1},q_{i+1}')$ being called implies $f(\Mid{\alpha}) > f(\In{q_{i+1}})$, \label{itm:glue-clm-fatality}
        \item $\textsc{UndoScare}(\glueplow_{i+1},q_{i+1})$ being called implies $\Birth{q_{i+1}} = \alpha$, $f(\Mid{\alpha}) > f(\In{q_{i+1}'})$ and \\
              $\textsc{UndoScare}(\glueplow_{i+1},q_{i+1}')$ being called implies $\Birth{q_{i+1}'} = \alpha$, $f(\Mid{\alpha}) > f(\In{q_{i+1}})$. \label{itm:glue-clm-scare}
    \end{enumerate}
\end{lemma}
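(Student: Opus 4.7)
I would prove the inductive step by a case analysis on which of the three subroutines \textsc{UndoInjury}, \textsc{UndoFatality}, \textsc{UndoScare} is invoked during the $i$-th iteration, combined with a subcase on whether the processed maximum is $q_i$ (from the left tree, when $f(q_i) < f(q_i')$) or $q_i'$ (from the right tree, when $f(q_i') < f(q_i)$); the latter is symmetric, so it suffices to analyze the three subroutines for $q = q_i$. The guiding observation is that each of these subroutines is the inverse of the corresponding \textsc{Do}-operation established in the splitting algorithm (Section~\ref{sec:4.3}), and its correctness was proved in \cref{sec:split-correctness}. Thus I can reuse those structural arguments and concentrate on checking that the seven conditions propagate.

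For \textsc{UndoInjury}$(\gluepbth_i, q_i)$, which is called when $f(\gluepbth_i) > f(\glueplow_i)$ and $\gluepbth_i = \Birth{q_i}$, the algorithm transfers the initial segment of the mid-trail at $\alpha$ (whose nodes all have smaller value than $\In{q_i}$'s ancestors, by assumption \ref{itm:glue-ass-injury}) into the in-trail beginning at $q_i$, turning $q_i$'s short-wave banana into a simple-wave banana. Since only spine nodes of $\UpTree{h_{i-1}}$ are moved and they all precede any node of $\UpTree{g_{i-1}}$ in position, invariant \ref{itm:glue-ass-separate} is preserved; Conditions \textsf{III.1}, \textsf{III.2}, and hence \cref{inv:condition-I,inv:condition-II,inv:condition-III-max}, follow from the sortedness along both spines together with the inequality in assumption \ref{itm:glue-ass-injury}. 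For \textsc{UndoFatality}$(\gluepbth_i, q_i)$ one additionally swaps the partner of $q_i$ so that $\alpha$ migrates between trees; assumption \ref{itm:glue-ass-fatality} guarantees the value ordering required for the mid-trail merge, and the pairing swap is justified exactly as in the fatality analysis of \cref{lem:split-iteration-i} read backwards. For \textsc{UndoScare}$(\glueplow_i, q_i)$, assumption \ref{itm:glue-ass-scare} supplies $\Birth{q_i} = \alpha$, so the adjustment $f(\alpha) = f(\glueplow_i) - \varepsilon$ followed by re-pairing is a local interchange of minima inside the tree containing $\alpha$, and its correctness is established by \cref{inv:condition-II} and the min-interchange lemma.

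To obtain conditions \ref{itm:glue-clm-alpha}–\ref{itm:glue-clm-scare} for iteration $i+1$, I would argue as follows. The new candidates $q_{i+1}$ and $q_{i+1}'$ are either the unchanged maximum on the opposite spine or the next spine maximum obtained by following $\Dth{\Low{\cdot}}$ from the node just processed; in every case they again lie on the right spine of $\UpTree{g_i}$ and the left spine of $\UpTree{h_i}$ respectively, yielding \ref{itm:glue-clm-spine}. The node $\alpha$ is still the unique node shared by both trees' spines, and its in-trail stays empty because none of the three operations adds an interior node on that in-trail---\textsc{UndoInjury} and \textsc{UndoFatality} only attach material to the mid-trail of $\alpha$, while \textsc{UndoScare} leaves $\alpha$'s trails untouched. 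The inequalities $f(\alpha) > f(\glueplow_{i+1}), f(\gluepbth_{i+1})$ and the downstream pointer comparisons needed for \ref{itm:glue-clm-injury}–\ref{itm:glue-clm-scare} follow from the fact that the residual spines are still sorted by function value with the largest values nearest the respective special roots, and that the just-processed banana had smaller-or-equal key than the one beneath it on its spine.

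The main obstacle I expect is the bookkeeping in the \textsc{UndoFatality} case, where $\alpha$ switches trees and simultaneously the in-trail that used to connect $\alpha$ to its partner becomes associated with a different maximum. Verifying invariant \ref{itm:glue-clm-alpha}---in particular that the in-trail between $\alpha$ and $\Dth{\alpha}$ in the new host tree is empty and that $\alpha \in \{\Birth{q_{i+1}}, \Birth{q_{i+1}'}\}$---requires identifying precisely which node on the left spine of $\UpTree{h_i}$ inherits $\alpha$ as its birth, and showing that its former in-trail was itself empty before the operation by assumption \ref{itm:glue-ass-alpha}. A further delicate point is establishing the strict inequality $f(\Mid{\alpha}) > f(\In{q_{i+1}'})$ when \textsc{UndoFatality} or \textsc{UndoScare} is queued next; this will follow by chaining assumption \ref{itm:glue-ass-fatality} (or \ref{itm:glue-ass-scare}) with sortedness along the spine we did not touch, but care is needed because $\Mid{\alpha}$ has just been updated.
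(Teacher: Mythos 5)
Your proposal takes essentially the same approach as the paper's own proof: a case analysis on which of the three \textsc{Undo}-subroutines is invoked in iteration $i$, with the $f(q_i') < f(q_i)$ case dispatched by symmetry, followed for each case by a verification that the seven invariants carry over and a careful check of which of the three operations can occur in iteration $i+1$. You also correctly flag the most delicate point (the \textsc{UndoFatality} case, where $\alpha$ changes trees and its in-/mid-pointers are rewired) and the need to re-verify $f(\Mid{\alpha})$ against $f(\In{q_{i+1}'})$ after $\Mid{\alpha}$ has been updated, which is exactly where the paper's proof spends most of its effort.
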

\begin{proof}
    Assume that the conditions hold.
    We prove the claims for $f(q_i) < f(q_i')$ by considering separately the three cases that the algorithm distinguishes.
    The case $f(q_i) > f(q_i')$ is symmetric.

    \smallskip\noindent\textbf{Case 1 (\boldmath$f(\gluepbth_i) < f(\glueplow_i)$ and \boldmath$\gluepbth_i = \Birth{q_i}$):}
        $\textsc{UndoInjury}(\gluepbth_i, q_i)$ is called, which removes nodes $j$ with $f(j) < f(q_i)$ from the mid-trail between $\alpha$ and $\Dth{\alpha}$ and
        inserts them into the in-trail between $\gluepbth_i$ and $q_i$, specifically between the nodes $\In{q_i}$ and $q_i$.

        We first prove that $\UpTree{g_i}$ and $\UpTree{h_i}$ satisfy \cref{inv:condition-I,inv:condition-II,inv:condition-III-max}.
        By assumption \ref{itm:glue-ass-inv} these invariants holds for $\UpTree{g_{i-1}}$ and $\UpTree{h_{i-1}}$.
        To see that they also hold for $\UpTree{h_i}$, observe that removing maxima along with its banana subtree does not affect the invariants.
        For $\UpTree{g_i}$ we first show \cref{inv:condition-III-max}.
        Write $j^-$ and $j^+$ for the lowest and highest of the nodes are moved, respectively.
        That is, in $\UpTree{g_{i-1}}$ $j^- = \Mid{\alpha}$ and $j^+$ is the highest node on the mid-trail between $\alpha$ and $\Dth{\alpha}$ such that $f(j^+) < f(q_i)$.
        Write $j_q$ for $\In{q_i}$ in $\UpTree{g_{i-1}}$.
        In $\UpTree{g_i}$ these nodes are inserted such that $j^- = \Up{j_q}$ and $j^+ = \In{q_i}$.
        By the assumption that $f(\Mid{\alpha}) > f(\In{q_i})$ (assumption \ref{itm:glue-ass-injury}) we thus have $f(\Dn{j^-}) < f(j^-)$ in $\UpTree{g_i}$.
        Furthermore, $f(j^+) < f(\Up{j^+}) = f(q_i)$.
        By assumption \ref{itm:glue-ass-separate} and \cref{inv:condition-III-max} for $\UpTree{h_{i-1}}$
        it holds in $\UpTree{g_i}$ that $\Dn{j^-} < j^- < \Up{j^-} < \dots < \Dn{j^+} < j^+$ and 
        $\Up{j^+} = q_i < j^+$.
        Thus \cref{inv:condition-III-max} holds in $\UpTree{g_i}$.
        To see that \cref{inv:condition-I} is satisfied note that the $j$ that are inserted into $\UpTree{g_i}$ satisfy $q_i < j$ by assumption \ref{itm:glue-ass-separate}.
        Since $q_i$ is on the right spine $q_i < \gluepbth_i = \Birth{q_i}$ and thus the condition of \cref{inv:condition-I} holds for $q_i$ and its ancestors.
        The nodes $j$ gain descendants $u$ in the same subtree $\Dn{j}$, and these nodes $u$ satisfy $u < j$, again by assumption \ref{itm:glue-ass-separate}.
        Thus, these nodes $j$ also satisfy the condition for \cref{inv:condition-I}.
        For no other node do the subtrees change and thus \cref{inv:condition-I} holds in $\UpTree{g_i}$.
        Finally, we prove \cref{inv:condition-II} by analyzing the nodes for which the nodes at $\Dth{\Low{\cdot}}$ changes.
        These are the nodes $v$ with $\Dth{v} = j$ for some moved node $j$. They originally have $\Dth{\Low{v}} = \alpha$ and this changes to $\Dth{\Low{v}} = \gluepbth_i$.
        Assumption \ref{itm:glue-ass-alpha} that $f(\alpha) > f(\gluepbth_i)$ thus implies $f(\Dth{\Low{v}}) > f(\alpha) > f(\gluepbth_i)$.
        This implies \cref{inv:condition-II}.
        In the remainder of the proof for this case we assume \cref{inv:condition-I,inv:condition-II,inv:condition-III-max} for $\UpTree{g_i}$ and $\UpTree{h_i}$.

        The in-trail between $\alpha$ and $\Dth{\alpha}$ is empty at the beginning of the $i$-th iteration,
        is not modified by \textsc{UndoInjury} and is thus also empty at the beginning of the next iteration.
        As $q_{i+1}' = q_i'$ and $\Dth{\alpha} = q_i'$ is unchanged it holds that $\alpha = \Birth{q_{i+1}'}$ at the beginning of the next iteration.
        If in the next iteration $f(q_{i+1}) < f(q_{i+1}')$, then $\gluepbth_{i+1} = \Birth{q_{i+1}} = \Low{q_i} = \glueplow_i$ and $\glueplow_{i+1} = \Low{q_{i+1}}$.
        By \cref{inv:condition-II} this implies $f(\glueplow_{i+1}) < f(\gluepbth_{i+1}) < f(\gluepbth_i) < \alpha$, where the last inequality follows from assumption \ref{itm:glue-ass-alpha}.
        Otherwise, if in the next iteration $f(q_{i+1}) > f(q_{i+1}')$, then $\gluepbth_{i+1} = \Birth{q_{i+1}} = \Low{q_i}$ (since $\Birth{q_{i+1}'} = \alpha$) and $\glueplow_{i+1} = \Low{q_{i+1}'} = \Low{q_i'}$.
        By \cref{inv:condition-II} and assumption \ref{itm:glue-ass-alpha} this implies $f(\glueplow_{i+1}) < f(\alpha)$ and $f(\gluepbth_{i+1}) < f(\gluepbth_i) < f(\alpha)$.
        This proves claim \ref{itm:glue-clm-alpha}.
        Claim \ref{itm:glue-clm-spine} follows immediately from the assumption \ref{itm:glue-ass-spine}:
        $\alpha$ was the last node on the left spine of $\UpTree{h_{i-1}}$ at the beginning of the iteration,
        and the removal of nodes from its mid-trail does not change that;
        $q_{i+1} = \Dth{\Low{q_i}}$, which is on the right spine of $\UpTree{g_i}$ since $q_i$ is on the right spine of $\UpTree{g_{i-1}}$
        and $q_{i+1}' = q_i'$ is on the left spine of $\UpTree{h_i}$ because $q_i'$ is on the left spine of $\UpTree{h_{i-1}}$.

        Since $\alpha$ is the last node on the left spine of $\UpTree{h_{i-1}}$ and its in-trail is empty,
        the nodes that are moved to $\UpTree{g_i}$ are the leftmost nodes of $\UpTree{h_{i-1}}$ other than $\alpha$.
        Assumption \ref{itm:glue-ass-separate} and the definition of $\alpha$ thus imply the claim \ref{itm:glue-clm-separate}.

        We now prove claims \ref{itm:glue-clm-injury}, \ref{itm:glue-clm-fatality} and \ref{itm:glue-clm-scare}.
        Recall that $q_{i+1}' = q_{i+1}$ and $q_{i+1}' = \Dth{\Low{q_i}}$.
        In total there are six cases, but only three can occur:
        \smallskip \begin{enumerate}[(i)]
            \item $f(q_{i+1}) < f(q_{i+1}')$, $f(\gluepbth_{i+1}) > f(\glueplow_{i+1})$ and $\gluepbth_{i+1} = \Birth{q_{i+1}}$ \\
                $\implies$ call to $\textsc{UndoInjury}(\gluepbth_{i+1}, q_{i+1})$
            \item $f(q_{i+1}) > f(q_{i+1}')$, $f(\gluepbth_{i+1}) > f(\glueplow_{i+1})$ and $\gluepbth_{i+1} = \Birth{q_{i+1}}$ \\
                $\implies$ call to $\textsc{UndoFatality}(\gluepbth_{i+1}, q_{i+1}')$
            \item $f(q_{i+1}) > f(q_{i+1}')$, $f(\glueplow_{i+1}) > f(\gluepbth_{i+1})$ \\
                $\implies$ call to $\textsc{UndoScare}(\glueplow_{i+1}, q_{i+1}')$
        \end{enumerate} \smallskip
        Since $\gluepbth_{i+1} = \Birth{q_{i+1}} \neq \alpha$, if $f(q_{i+1}) < f(q_{i+1})'$, then \textsc{UndoFatality} will not be called.
        By \cref{inv:condition-II} it also holds that $f(\glueplow_{i+1}) < f(\gluepbth_{i+1})$, which implies that \textsc{UndoScare} will not be called.
        Finally, $\Birth{q_{i+1}'} = \alpha$ implies that if $f(q_{i+1}) > f(q_{i+1}')$, then \textsc{UndoInjury} will not be called.
        We analyze the three possible cases separately:
        \smallskip \begin{enumerate}[(i)]
            \item $f(q_{i+1}) < f(q_{i+1}')$, $f(\gluepbth_{i+1}) > f(\glueplow_{i+1})$ and $\gluepbth_{i+1} = \Birth{q_{i+1}}$ and $\textsc{UndoInjury}(\gluepbth_{i+1}, q_{i+1})$ is called.
                We have shown above that $q_{i+1}$ is on the spine, as is $q_i$, and thus $q_{i+1} = \Dth{\Low{q_i}}$ implies $\In{q_{i+1}} = q_i$.
                Since $f(\Mid{\alpha})$ was not moved to $\UpTree{g_i}$ and $\Dth{\alpha} = f(q_i') > f(q_i)$ it follows that $f(\Mid{\alpha}) > f(q_i) = f(\In{q_{i+1}})$.
            \item $f(q_{i+1}) > f(q_{i+1}')$, $f(\gluepbth_{i+1}) > f(\glueplow_{i+1})$ and $\gluepbth_{i+1} = \Birth{q_{i+1}}$ and $\textsc{UndoFatality}(\gluepbth_{i+1}, q_{i+1}')$ is called.
                The inequality $f(\Mid{\alpha}) > f(\In{q_{i+1}})$ holds by the same argument as in the previous case.
            \item $f(q_{i+1}) > f(q_{i+1}')$, $f(\glueplow_{i+1}) > f(\gluepbth_{i+1})$ and $\textsc{UndoScare}(\glueplow_{i+1}, q_{i+1}')$ is called.
                Since $q_{i+1}' = q_i'$ and $\alpha = \Birth{q_i'}$ it holds that $\alpha = \Birth{q_{i+1}'}$.
                The inequality $f(\Mid{\alpha}) > f(\In{q_{i+1}})$ holds by the same argument as in the other cases.
        \end{enumerate} \smallskip
        In all three cases the claims hold.

    \smallskip\noindent\textbf{Case 2 (\boldmath$f(\gluepbth_i) < f(\glueplow_i)$ and \boldmath$\gluepbth_i = \Birth{q_i'}$):}
        $\textsc{UndoFatality}(\gluepbth_i,q_i)$ is called, which takes on the in-trail between $\gluepbth_i$ and $q_i'$
        and the nodes $j$ with $f(j) < f(q_i)$ on the mid-trail between $\gluepbth_i$ and $q_i'$
        and swaps them with $\alpha = \Birth{q_i}$,
        such that the nodes formerly on the in-trail extend the mid-trail of $q_i$ and the nodes $j$ extend the in-trail of $q_i$.

        We first prove that $\UpTree{g_i}$ and $\UpTree{h_i}$ satisfy \cref{inv:condition-I,inv:condition-II,inv:condition-III-max}.
        By assumption \ref{itm:glue-ass-inv} these invariants hold for $\UpTree{g_{i-1}}$ and $\UpTree{h_{i-1}}$.
        To see that they also hold for $\UpTree{h_i}$, observe that replacing a part of the banana $(\gluepbth_i,q_i')$ by $\alpha$ does not affect \cref{inv:condition-I,inv:condition-III-max}.
        By assumption \ref{itm:glue-ass-alpha} $f(\alpha) > f(\gluepbth_i)$ and thus $f(\alpha) > f(\gluepbth_i) > f(\Low{\Dth{\alpha}}$, where the second inequality follows from \cref{inv:condition-II} for $\UpTree{h_{i-1}}$.
        As $\Low{\Dth{\cdot}}$ changes for no other node \cref{inv:condition-II} also holds for $\UpTree{h_i}$.
        Write $j^+$ for the topmost node on the mid-trail between $\gluepbth_i$ and $q_i'$ in $\UpTree{h_{i-1}}$ that is moved to $\UpTree{g_i}$,
        and $k^+$ for the topmost node on the in-trail between $\gluepbth_i$ and $q_i'$ in $\UpTree{h_{i-1}}$.
        As the entire in-trail is moved to the left tree $k^+ = \In{q_i'}$ in $\UpTree{h_{i-1}}$.
        Write $m_\alpha$ for the node $\Mid{\alpha}$ in $\UpTree{g_{i-1}}$.
        In $\UpTree{g_i}$ $\Dn{m_\alpha} = k^+$.
        By assumption \ref{itm:glue-ass-fatality} we have $f(m_\alpha) > f(k^+)$,
        and by definition of $j^+$ we have $f(j^+) < f(q_i)$.
        Furthermore, by assumption \ref{itm:glue-ass-separate} it holds that $m_\alpha < k^+$ and $q_i < j^+$.
        Since these are the only node where $\Up{\cdot}$ and $\Dn{\cdot}$ pointers change it follows that \cref{inv:condition-III-max} holds in $\UpTree{g_i}$.
        To see that \cref{inv:condition-I} is satisfied note that the nodes $u$ that are inserted into $\UpTree{g_i}$ satisfy $u > v$ for any node $v$ that is already in this tree from the previous iteration.
        With $q_i < \Birth{q_i}$ as $q_i$ is on the right spine and $s < \Dn{s}$ for any node on the mid-trail beginning at $q_i$ \cref{inv:condition-I} follows.
        The nodes for which $\Low{\Dth{\cdot}}$ changes are those nodes $t$ with $\Dth{t}$ on the mid-trail beginning at $q_i$ and the node $\gluepbth_i$.
        It changes from $\alpha$ to $\gluepbth_i$. By assumption \ref{itm:glue-ass-alpha} $f(\gluepbth_i) < f(\alpha)$, so $f(\Low{\Dth{t}}) > f(\alpha) > f(\gluepbth_i)$.
        Furthermore, for $\gluepbth_i$ we have $\Low{\Dth{\gluepbth_i}} = \glueplow_i$ and they satisfy $f(\gluepbth_i) > f(\glueplow)$.
        This implies \cref{inv:condition-II} for $\UpTree{g_i}$ and concludes the proof of claim \ref{itm:glue-ass-inv}.
        In the remainder of the proof for this case we assume \cref{inv:condition-I,inv:condition-II,inv:condition-III-max} for $\UpTree{g_i}$ and $\UpTree{h_i}$.

        After the iteration $\Dth{\alpha} = q_i'$. As the nodes on the in-trail beginning at $q_i'$ are removed, this trail is empty.
        Since $q_{i+1}' = q_i'$ it holds that $\alpha = \Birth{q_{i+1}'}$.
        For the inequalities $f(\alpha) > f(\gluepbth_{i+1})$ and $f(\alpha) > f(\glueplow_{i+1})$ we refer to the proof for the previous case.
        It follows that claim \ref{itm:glue-clm-alpha} holds.
        The node $q_i'$ is on the left spine of $\UpTree{h_{i-1}}$ and thus $q_{i+1}' = q_i'$ is on the left spine of $\UpTree{h_i}$.
        Since $\alpha = \Birth{q_{i+1}'}$ in $\UpTree{h_i}$ it also follows that $\alpha$ is on the left spine of $\UpTree{h_i}$.
        Similarly, in $\UpTree{g_{i-1}}$ and $\UpTree{g_i}$ the node $q_i$ is on the right spine and thus $q_{i+1} = \Dth{\Low{q_i}}$ is also on the right spine.
        This proves claim \ref{itm:glue-clm-spine}.
        For the proof of claim \ref{itm:glue-clm-separate} we again refer to the previous case, since the proof is similar.

        We now prove claims \ref{itm:glue-clm-injury}, \ref{itm:glue-clm-fatality} and \ref{itm:glue-clm-scare}.
        Again, in the next iteration only three out of six cases can occur:
        \smallskip \begin{enumerate}[(i)]
            \item $f(q_{i+1}) < f(q_{i+1}')$, $f(\gluepbth_{i+1}) > f(\glueplow_{i+1})$ and $\gluepbth = \Birth{q_{i+1}}$ \\
                $\implies$ call to $\textsc{UndoInjury}(\gluepbth_{i+1}, q_{i+1})$
            \item $f(q_{i+1}) > f(q_{i+1}')$, $f(\gluepbth_{i+1}) > f(\glueplow_{i+1})$ and $\gluepbth = \Birth{q_{i+1}}$ \\
                $\implies$ call to $\textsc{UndoFatality}(\gluepbth_{i+1}, q_{i+1}')$
            \item $f(q_{i+1}) > f(q_{i+1}')$, $f(\glueplow_{i+1}) > f(\gluepbth_{i+1})$ \\
                $\implies$ call to $\textsc{UndoScare}(\glueplow_{i+1}, q_{i+1}')$
        \end{enumerate} \smallskip
        Note that these are the same cases as in Case 1. The proof that the remaining three cases cannot occur is identical.
        The remainder of the proof is also similar to the proof in case 1:
        \smallskip \begin{enumerate}[(i)]
            \item $f(q_{i+1}) < f(q_{i+1}')$, $f(\gluepbth_{i+1}) > f(\glueplow_{i+1})$ and $\gluepbth = \Birth{q_{i+1}}$ and $\textsc{UndoInjury}(\gluepbth_{i+1}, q_{i+1})$ is called.
                Again, $q_{i+1}$ and $q_i$ are on the spine and thus $\In{q_{i+1}} = q_i$.
                $f(\Mid{\alpha})$ is one of the nodes that have not been moved from the mid-trail, and thus $f(\Mid{\alpha}) > f(q_i) = f(\In{q_{i+1}})$.
            \item $f(q_{i+1}) > f(q_{i+1}')$, $f(\gluepbth_{i+1}) > f(\glueplow_{i+1})$ and $\gluepbth = \Birth{q_{i+1}}$ and $\textsc{UndoFatality}(\gluepbth_{i+1}, q_{i+1}')$ is called.
                The inequality $f(\Mid{\alpha}) > f(\In{q_{i+1}})$ holds by the same argument as in the previous case.
            \item $f(q_{i+1}) > f(q_{i+1}')$, $f(\glueplow_{i+1}) > f(\gluepbth_{i+1})$ and $\textsc{UndoScare}(\glueplow_{i+1}, q_{i+1}')$ is called.
                As before $\alpha = \Birth{q_i'} = \Birth{q_{i+1}'}$.
                The inequality $f(\Mid{\alpha}) > f(\In{q_{i+1}}$ holds as shown in the other cases.
        \end{enumerate} \smallskip
        This proves claims \ref{itm:glue-ass-injury}, \ref{itm:glue-ass-fatality} and \ref{itm:glue-ass-scare}.

    \smallskip\noindent\textbf{Case 3 (\boldmath$f(\glueplow_i) < f(\gluepbth_i)$):}
        $\textsc{UndoScare}(\glueplow_i,q_i)$ is called, which sets $f(\alpha) = f(\glueplow_i) - \varepsilon$ and then performs an interchange of minima.
        Note that $\alpha = \Birth{q_i}$, as $f(\gluepbth_i) < f(\glueplow_i)$, which by \cref{inv:condition-II} for $\UpTree{g_i}$ cannot be if $\gluepbth_i = \Birth{q_i}$.
        Thus, $\alpha \neq \Birth{q_i'}$ and by assumption \ref{itm:glue-ass-alpha} this implies $\alpha = \Birth{q_i}$.
        Claim \ref{itm:glue-clm-separate} follows directly from assumption \ref{itm:glue-ass-separate}, as the contents of the trees do not change.
        \Cref{inv:condition-I,inv:condition-II,inv:condition-III-max} immediately follow for $\UpTree{g_i}$ from the correctness of the interchange of minima.
        As $\UpTree{h_i} = \UpTree{h_{i-1}}$ these invariants hold $\UpTree{h_i}$ by assumption \ref{itm:glue-ass-inv}.
        In the remainder of the proof for this case we assume \cref{inv:condition-I,inv:condition-II,inv:condition-III-max} for $\UpTree{g_i}$ and $\UpTree{h_i}$.

        The in-trail between $\alpha$ and $\Dth{\alpha}$ is empty in $\UpTree{g_{i-1}}$ by assumption \ref{itm:glue-ass-alpha}.
        As $\Dth{\alpha} = q_i$ is on a spine the in-trail remains empty after the interchange of minima.
        The interchange also results in $\Dth{\alpha} = q_{i+1}$ in $\UpTree{g_i}$.
        If $f(q_{i+1}) < f(q_{i+1})'$, then $\gluepbth_{i+1} = \Birth{q_{i+1}'} = \Birth{q_i'} = \gluepbth_i$.
        As $f(\gluepbth_i) < f(\glueplow_i)$ and $\varepsilon$ is defined to be sufficiently small it follows that $f(\gluepbth_{i+1}) < f(\alpha)$.
        Furthermore, $f(\glueplow_{i+1}) < f(\alpha)$ by \cref{inv:condition-II}.
        Otherwise, if $f(q_{i+1}) > f(q_{i+1})'$, then $\gluepbth_{i+1} = \Birth{q_{i+1}'} = \Birth{q_i'} = \gluepbth_i$,
        which we have just shown to satisfy $f(\gluepbth_{i+1}) < f(\alpha)$.
        By \cref{inv:condition-II} $f(\glueplow_{i+1}) < f(\gluepbth_{i+1})$, which implies $f(\glueplow_{i+1}) < f(\alpha)$.
        This proves claim \ref{itm:glue-clm-alpha}.

        The node $q_{i+1}$ is on the spine since it is already on the spine in $\UpTree{g_{i-1}}$ and the interchange of minima does not change its ancestors.
        The node $\alpha$ remains on the spine since its in-trail remains empty and $q_{i+1} = \Dth{\alpha}$ is on the left spine in $\UpTree{g_{i+1}}$.
        Since it is also a leaf it is the last node on the spine.
        Finally, $q_{i+1}'$ is on the spine since $q_{i+1}' = q_i'$.
        This proves \ref{itm:glue-clm-spine}.

        It remains to prove claims \ref{itm:glue-clm-injury}, \ref{itm:glue-ass-fatality} and \ref{itm:glue-clm-scare}.
        Again, in the next iteration only three out of six cases can occur:
        \smallskip \begin{enumerate}[(i)]
            \item $f(q_{i+1}) < f(q_{i+1}')$, $f(\gluepbth_{i+1}) > f(\glueplow_{i+1})$ and $\gluepbth_{i+1} = \Birth{q_{i+1}'}$ \\
                $\implies$ call to $\textsc{UndoFatality}(\gluepbth_{i+1}, q_{i+1})$
            \item $f(q_{i+1}) < f(q_{i+1}')$, $f(\glueplow_{i+1}) > f(\gluepbth_{i+1})$ \\
                $\implies$ call to $\textsc{UndoScare}(\glueplow_{i+1}, q_{i+1})$
            \item $f(q_{i+1}) > f(q_{i+1}')$, $f(\gluepbth_{i+1}) > f(\glueplow_{i_1})$ and $\gluepbth_{i+1} = \Birth{q_{i+1}'}$ \\
                $\implies$ call to $\textsc{UndoInjury}(\gluepbth_{i+1}, q_{i+1}')$
        \end{enumerate} \smallskip
        Since $\Birth{q_{i+1}} = \alpha$ we cannot have $\gluepbth_{i+1} = \Birth{q_{i+1}}$ and thus if $f(q_{i+1}) < f(q_{i+1}')$ then \textsc{UndoInjury} is not called.
        If $f(q_{i+1}) > f(q_{i+1}')$ then $f(\gluepbth_{i+1}) > f(\glueplow_{i+1})$ by \cref{inv:condition-II}, and so \textsc{UndoScare} is not called.
        \textsc{UndoFatality} is not called in this case as $\gluepbth_{i+1} = \Birth{q_{i+1}'}$.
        We now analyze the three possible cases:
        \smallskip \begin{enumerate}[(i)]
            \item $f(q_{i+1}) < f(q_{i+1}')$, $f(\gluepbth_{i+1}) > f(\glueplow_{i+1})$ and $\gluepbth_{i+1} = \Birth{q_{i+1}'}$ and $\textsc{UndoFatality}(\gluepbth_{i+1}, q_{i+1})$ is called.
                The node $\Mid{\alpha}$ is the same in $\UpTree{g_{i-1}}$ and $\UpTree{g_i}$.
                Furthermore $\In{q_{i+1}}' = \In{q_i'}$, as $\UpTree{h_i} = \UpTree{h_{i-1}}$.
                The assumption that $f(\Mid{\alpha}) > f(\In{q_{i}'}$ thus implies $f(\Mid{\alpha}) > f(\In{q_{i+1}'})$.
            \item $f(q_{i+1}) < f(q_{i+1}')$, $f(\glueplow_{i+1}) > f(\gluepbth_{i+1})$ and $\textsc{UndoScare}(\glueplow_{i+1}, q_{i+1})$ is called.
                We have shown above that $\Dth{\alpha} = q_{i+1}$ which implies $\alpha = \Birth{q_{i+1}}$.
            \item $f(q_{i+1}) > f(q_{i+1}')$, $f(\gluepbth_{i+1}) > f(\glueplow_{i_1})$ and $\gluepbth_{i+1} = \Birth{q_{i+1}'}$ and $\textsc{UndoInjury}(\gluepbth_{i+1}, q_{i+1}')$ is called.
                The inequality $f(\Mid{\alpha}) > f(\In{q_{i+1}'})$ holds by the same argument as in the first case.
        \end{enumerate}
        This concludes the proof of the inductive step.
\end{proof}
\begin{theorem}[Correctness of \textsc{Glue}]
    \label{thm:correctness-of-glue}
    Given two lists of items and corresponding maps $g$ and $h$
    \textsc{Glue} applied to $\UpTree{g}$ and $\UpTree{h}$ yields the up-tree $\UpTree{f} = \UpTree{g \cdot h}$.
\end{theorem}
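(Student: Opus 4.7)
The plan is to prove \cref{thm:correctness-of-glue} by induction on the iterations of the main loop of \textsc{Glue}, using \cref{lem:glue-base-case} as the base case and \cref{lem:glue-induction-step} as the inductive step. Before starting the induction, I would dispatch the pre-processing described at the start of \cref{sec:glue-correctness}: depending on whether the endpoints $\ell$ and $\ell'$ of $g$ and $h$ are up-type or down-type, we either delete hook-paired bananas from $\UpTree{g}$ or replace the leftmost hook of $\UpTree{h}$ by the dummy leaf $\alpha$, reducing the remaining four configurations to the canonical case (2) treated by the algorithm. The symmetric case $f(\ell) > f(\ell')$ is handled by mirroring.

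Next, I would argue termination. Each iteration replaces the top banana on one of the two spines by $\textsc{UndoInjury}$, $\textsc{UndoFatality}$, or $\textsc{UndoScare}$. The first two remove a banana from the spine of the tree that does not contain $\alpha$ (merging it into the banana of $\alpha$ in the other tree), while the third removes a banana containing $\alpha$ itself by an interchange of minima that causes $\alpha$ to switch which spine it terminates. In every case, the total number of bananas on the right spine of the left tree plus the left spine of the right tree strictly decreases by one, so after finitely many iterations one of the two trees consists only of its special root and $\alpha$, at which point the loop exit condition $\In{\beta_g}=\Mid{\beta_g}=\alpha$ or $\In{\beta_h}=\Mid{\beta_h}=\alpha$ triggers.

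To conclude, I would combine the loop invariants of \cref{lem:glue-induction-step} with the final discard step. At termination the surviving tree $T$ contains every node of $\UpTree{g}$ and $\UpTree{h}$ except for the discarded $\beta_g$ (or $\beta_h$) and $\alpha$; by claim \ref{itm:glue-clm-inv} of the inductive step it satisfies \cref{inv:condition-I,inv:condition-II,inv:condition-III-max} for the values inherited from $g$ and $h$, which are exactly the values of $f = g \cdot h$. Since the pre-processing correctly identifies which endpoints remain critical in $f$, the set of items represented in $T$ is precisely the set of critical items of $f$. By \cref{cor:unique-bananas-local}, there is a unique banana tree satisfying these invariants for $f$, so $T = \UpTree{f}$.

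The main obstacle I expect is not the induction itself, which is essentially packaged by the two lemmas, but rather verifying that the termination condition is reached with the \emph{correct} node as $\alpha$'s partner: after the final \textsc{UndoScare} or \textsc{UndoFatality} the global minimum of $f$ must end up spanning the global window with the surviving special root. This requires tracking that the hypotheses on $f(\alpha)$, $f(\gluepbth_i)$, and $f(\glueplow_i)$ in claim \ref{itm:glue-clm-alpha} of \cref{lem:glue-induction-step} remain consistent with $\alpha$ being paired with the correct spine node all the way to the end, which is where the careful $\varepsilon$-adjustments inside $\textsc{UndoScare}$ become essential.
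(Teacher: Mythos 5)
Your proposal matches the paper's proof in structure: pre-processing to reduce to case (2), induction via Lemmas~\ref{lem:glue-base-case} and~\ref{lem:glue-induction-step}, a termination argument based on the finiteness of the spines, and a final appeal to uniqueness of the banana tree. The paper phrases termination by identifying the iteration at which $q$ or $q'$ reaches a special root and analyzing the three remaining cases (including the tie-breaking $f(\beta_h) < f(\beta_g)$), rather than via your monotone potential, but the underlying idea is the same, and your explicit invocation of Corollary~\ref{cor:unique-bananas-local} simply makes the paper's final step more precise.
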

\begin{proof}
    We show by induction over the iterations of \textsc{Glue} that at the beginning of the $i$-th iteration it holds that
    \smallskip \begin{enumerate}
        \item $\UpTree{g_{i-1}}$ and $\UpTree{h_{i-1}}$ satisfy \cref{inv:condition-I,inv:condition-II,inv:condition-III-max},
        \item the in-trail between $\alpha$ and $\Dth{\alpha}$ is empty,
            $\alpha \in \{ \Birth{q_i}, \Birth{q_i'} \}$,
            $f(\alpha) > f(\glueplow_i)$,
            $f(\alpha) > f(\gluepbth_i)$,
        \item $\alpha$ is the last node on the right spine of $\UpTree{g_{i-1}}$ or on the left spine of $\UpTree{h_{i-1}}$,
            $q_i$ is on the right spine of $\UpTree{g_{i-1}}$ and $q_i'$ is on the left spine of $\UpTree{h_{i-1}}$
        \item for all nodes $u$ in $\UpTree{g_i}$ and all nodes $v$ in $\UpTree{h_i}$ we have $u < v$,
        \item $\textsc{UndoInjury}(\gluepbth_i,q_i)$ being called implies $f(\Mid{\alpha}) > f(\In{q_i})$ and \\
              $\textsc{UndoInjury}(\gluepbth_i,q_i')$ being called implies $f(\Mid{\alpha}) > f(\In{q_i'})$,
        \item $\textsc{UndoFatality}(\gluepbth_i,q_i)$ being called implies $f(\Mid{\alpha}) > f(\In{q_i'})$ and \\
              $\textsc{UndoFatality}(\gluepbth_i,q_i')$ being called implies $f(\Mid{\alpha}) > f(\In{q_i})$,
        \item $\textsc{UndoScare}(\glueplow_i,q_i)$ being called implies $\Birth{q_i} = \alpha$, $f(\Mid{\alpha}) > f(\In{q_i'})$ and \\
              $\textsc{UndoScare}(\glueplow_i,q_i')$ being called implies $\Birth{q_i'} = \alpha$, $f(\Mid{\alpha}) > f(\In{q_i})$.
    \end{enumerate} \smallskip
    \Cref{lem:glue-base-case} proves the base case for the first iteration $i=1$, and the induction step is in \cref{lem:glue-induction-step}.
    We assume these claims for all iterations.

    We now show that there exists an iteration after which one tree is empty.
    Assume that the global maximum of $h$ has greater value than the global maximum of $g$ and both trees have nodes other than the special root and $\alpha$.
    The case where the global maximum of $g$ has greater value than that of $h$ is symmetric.
    Consider any iteration $I$ in which $f(q_I') > f(u)$ for any $u \neq \beta_g$ in $\UpTree{g_{I-1}}$ and $q_I = \beta_g$.
    Since in some iteration $q_I' = \beta_h$ this iteration exsists.
    In this iteration $f(q_I') < f(q_I)$, either since $q_I$ is a special root and $q_I'$ is not, or since $f(\beta_h) < f(\beta_g)$ by definition.
    We distinguish two cases: $q_I' \neq \beta_h$ and $q_I = \beta_h$.
    \smallskip \begin{description}
        \item[$q_I' \neq \beta_h$:] There are three cases to consider:
            (1) $\Birth{q_I'} \neq \alpha$ and $\gluepbth_I = \Birth{q_I'}$;
            (2) $\Birth{q_I'} = \alpha$ and $f(\gluepbth_I) > f(\glueplow_I)$;
            (3) $\Birth{q_I'} = \alpha$ and $f(\glueplow_I) > f(\gluepbth_I)$.
            In case (1) $f(\gluepbth_I) > f(\glueplow_I)$ by \cref{inv:condition-II} and thus $\textsc{UndoInjury}(\gluepbth_I,q_I')$ is called.
            In case (2) $\textsc{UndoFatality}(\gluepbth_I,q_I')$ is called.
            In case (3) $\textsc{UndoScare}(\glueplow_I,q_I')$ is called.
            If case (1) occurs, then $\alpha = \Birth{q_I} = \Birth{\beta_g}$,
            the in-trail between $\alpha$ and $\beta_g$ is empty,
            and all nodes from the mid-trail between $\alpha$ and $\beta_g$ are moved to $\UpTree{h_{I}}$.
            This leaves $\In{\beta_g} = \Mid{\beta_g} = \alpha$ and thus the loop terminates.
            If case (2) occurs, then $\gluepbth_I = \Birth{q_I} = \Birth{\beta_g}$ and all nodes on trails between $\gluepbth_I$ and $\beta_g$ except $\beta_g$ are moved to $\UpTree{h_I}$,
            and $\alpha$ is moved to $\UpTree{g_I}$ such that $\In{\beta_g} = \Mid{\beta_g} = \alpha$.
            Thus, the loop terminates.
            In case (3) $f(\alpha)$ is set to $f(\glueplow_I) - \varepsilon$ and an interchange of minima between $\alpha$ and $\glueplow_I$ is performed.
            Thus, both trees are not empty.
            However, $q_{I+1}' = \Up{q_I}$ is the next node upwards on the spine towards $\beta_h$ and the next iteration is another iteration where $f(q_I') > f(u)$ for all $u \neq \beta_g \in \UpTree{g_i}$.
        \item[$q_I' = \beta_h$:]
            This case is eventually reached, since the $q_i'$ move upwards on the spine to $\beta_h$.
            We have defined $f(\beta_h) < f(\beta_g)$.
            There are again three cases to consider:
            (1) $\Birth{q_I'} \neq \alpha$ and $\gluepbth_I = \Birth{q_I'}$;
            (2) $\Birth{q_I'} = \alpha$ and $f(\gluepbth_I) > f(\glueplow_I)$;
            (3) $\Birth{q_I'} = \alpha$ and $f(\glueplow_I) > f(\gluepbth_I)$.
            These are the same cases as above and in case (1) and (2) the loop terminates with $\In{\beta_g} = \Mid{\beta_g} = \alpha$, as above.
            The third case, where $f(\glueplow_I) > f(\gluepbth_I)$, cannot occur since we defined $f(\glueplow_I) = f(\Low{\beta_h}) = f(\texttt{nil}) = -\infty$.
            Thus, when $q_I' = \beta_h$, the algorithm terminates.
    \end{description} \smallskip
    The choice that $f(\beta_h) < f(\beta_g)$ is arbitrary and we could have also chosen $f(\beta_g) < f(\beta_h)$.
    In this case we would get three symmetric cases for $q_I' = \beta_h$ with the same outcome.
    
    Once the loop terminates after iteration $I^*$ with $\In{\beta_g} = \Mid{\beta_g} = \alpha$, the tree $\UpTree{h_{I^*}}$ contains all items from $\UpTree{g}$ and $\UpTree{h}$ except for the dummy $\alpha$.
    Thus, $\UpTree{h_{I^*}} = \UpTree{g \cdot h}$, which concludes the proof.
\end{proof}

\clearpage
\section*{Acknowledgement}
\thefunding

\end{document}